\newif\ifabstract
\newif\iffull
\newtheorem{theorem}{Theorem}[section]
\newtheorem{lemma}[theorem]{Lemma}
\newtheorem{proposition}[theorem]{Proposition}
\newtheorem{corollary}[theorem]{Corollary}
\newtheorem{definition}[theorem]{Definition}
\newcommand{\patch}{\mathsf{patch}}
\newcommand{\irrelevant}{\mathsf{irrelevant}}
\newcommand{\myapprox}{\mathsf{approx}}
\newcommand{\approxfactor}{\ensuremath{\alpha_{\myapprox}}}
\newcommand{\pruning}{\mathsf{pruning}}
\newcommand{\AlgPatch}{\ensuremath{\mathsf{Patch}}}
\newcommand{\AlgIrrelevant}{\ensuremath{\mathsf{Irrelevant}}}
\newcommand{\AlgPruning}{\ensuremath{\mathsf{Pruning}}}
\newcommand{\AlgPruningDecision}{\ensuremath{\mathsf{Pruning\text{-}Decision}}}
\newcommand{\AlgApprox}{\ensuremath{\mathsf{Approx}}}
\newcommand{\inner}{\mathsf{inner}}
\newcommand{\apex}{\mathsf{apex}}
\newcommand{\myin}{\mathsf{in}}
\newcommand{\myout}{\mathsf{out}}
\newcommand{\father}{\mathsf{father}}
\newcommand{\cFHL}{c_{\text{FHL}}}
\newcommand{\XXX}{\textcolor{red}{XXX}}
\newcommand{\len}{\mathsf{len}}
\newcommand{\crossingnumber}{\mathsf{cr}}
\newcommand{\eg}{\mathsf{eg}}
\newcommand{\mvp}{\mathsf{mvp}}
\newcommand{\tw}{\mathsf{tw}}
\newcommand{\pw}{\mathsf{pw}}
\newcommand{\dmax}{\Delta}
\newcommand{\OPT}{\mathsf{OPT}}
\newcommand{\framed}{\mathsf{framed}}
\newcommand{\nil}{\mathsf{nil}}
\newcommand{\row}{\mathsf{row}}
\newcommand{\col}{\mathsf{col}}
\newcommand{\cost}{\mathsf{cost}}
\newcommand{\width}{\mathsf{width}}
\newcommand{\cover}{\mathsf{cover}}
\newcommand{\OPTMWC}{\mathsf{OPT}_{\textsc{MWC}}}
\newcommand{\OPTNWMWC}{\mathsf{OPT}_{\textsc{NWMWC}}}
\newcommand{\NWMWC}{\textsc{Node Weighted Multiway-Cut}}
\DeclareMathOperator*{\mymod}{mod}
\newcommand{\eps}{\varepsilon}
\newcommand{\calI}{{\cal I}}
\newcommand{\hatF}{\widehat{F}}
\newcommand{\comment}[1]{}
\title{Polylogarithmic approximation for minimum planarization (almost)}
\author{Ken-ichi Kawarabayashi\thanks{National Institute of Informatics, 2-1-2, Hitotsubashi, Chiyoda-ku, Tokyo, Japan.
    \texttt{k\_keniti@nii.ac.jp}.
    Supported by JST ERATO Kawarabayashi Large Graph Project.}
\and
Anastasios Sidiropoulos\thanks{
Dept.~of Computer Science and Engineering, and Dept.~of Mathematics, The Ohio State University. Columbus, OH, 43210.
\texttt{sidiropoulos.1@osu.edu}.
Supported by NSF grant CCF 1423230 and by CAREER 1453472.
}
}
\date{}
\begin{document}

\maketitle

\begin{abstract}
In the \emph{minimum planarization} problem, given some $n$-vertex graph, the goal is to find a set of vertices of minimum cardinality whose removal leaves a planar graph.
This is a fundamental problem in topological graph theory.
We present a $\log^{O(1)} n$-approximation algorithm for this problem on general graphs with running time $n^{O(\log n/\log\log n)}$.
We also obtain a $O(n^\eps)$-approximation with running time $n^{O(1/\eps)}$ for any arbitrarily small constant $\eps > 0$.
Prior to our work, no non-trivial algorithm was known for this problem on general graphs, and the best known result even on graphs of bounded degree was a $n^{\Omega(1)}$-approximation \cite{chekuri2013approximation}.

As an immediate corollary, we also obtain improved approximation algorithms for the crossing number problem on graphs of bounded degree.
Specifically, we obtain $O(n^{1/2+\eps})$-approximation and $n^{1/2} \log^{O(1)} n$-approximation algorithms in time $n^{O(1/\eps)}$ and $n^{O(\log n/\log\log n)}$ respectively.
The previously best-known result was a polynomial-time $n^{9/10}\log^{O(1)} n$-approximation algorithm \cite{DBLP:conf/stoc/Chuzhoy11}.

Our algorithm introduces several new tools including an efficient grid-minor construction for apex graphs, and a new method for computing irrelevant vertices. Analogues of these tools were previously available only for exact algorithms.
Our work gives efficient implementations of these ideas in the setting of approximation algorithms, which could be of independent interest.
\end{abstract}

\section{Introduction}

In the \emph{minimum planarization} problem, given a graph $G$, the goal is to find a set of vertices of minimum cardinality whose removal leaves a planar graph.
This is a fundamental problem in topological graph theory, which been extensively studied over the past 40 years.
It generalizes planarity, and has connections to several other problems, such as crossing number and Euler genus.
The problem is known to be fixed-parameter tractable \cite{DBLP:conf/focs/Kawarabayashi09,DBLP:journals/algorithmica/MarxS12,DBLP:conf/soda/JansenLS14}, but very little is known about its approximability.



\subsection{Our contribution}

Prior to our work, no non-trivial approximation algorithm for minimum planarization was known for general graphs.
The only prior result was a $n^{\Omega(1)}$-approximation for graphs of bounded degree \cite{chekuri2013approximation}.
We present the first non-trivial approximation algorithms for this problem on general graphs.
Our main results can be summarized as follows: 

\begin{theorem}\label{thm:polylog}
There exists a $O(\log^{32} n)$-approximation algorithm for the minimum vertex planarization problem with running time $n^{O(\log n / \log\log n)}$.
\end{theorem}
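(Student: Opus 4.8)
The plan is to reduce, in quasi-polynomial time and at the cost of only a $\poly\log n$ factor in the objective, to instances of bounded treewidth, and then solve those essentially exactly by dynamic programming. Throughout we may assume we hold a guess $k$ for $\OPT(G)$ (trying $k=1,2,4,\dots,n$ and returning the best feasible solution loses only a factor $2$), and we fix a treewidth target $\tau=\Theta(\log n/\log\log n)$. The \emph{bottoming-out} step is: once the current graph $G'$ satisfies $\tw(G')\le\tau$, run a dynamic program over a tree decomposition of width $\tau$ that computes a minimum vertex set whose deletion leaves a planar graph. Besides the usual record of which separator vertices are deleted, each bag stores a constant-size combinatorial sketch of how the planar part of the partial solution meets the separator (its rotation system / interface), so that two partial embeddings can be tested for compatibility; this is a technical but standard extension of the Courcelle/genus-type dynamic program and runs in time $n^{O(\tau)}=n^{O(\log n/\log\log n)}$.

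The heart of the argument is getting from an arbitrary $G$ down to such a bounded-treewidth instance. Here I would iterate the following. Using an approximate treewidth/wall computation, either certify $\tw(G)\le\tau$ (and bottom out), or extract a large wall $W$ in $G$. Now exploit the hypothesis $\OPT(G)\le k$: deleting the optimal set $S$ makes $G$ planar, so inside the planar graph $G-S$ the wall $W$ contains a large flat subwall, and hence in $G$ itself $W$ contains a large subwall $W'$ that is flat up to at most $k$ apex vertices. The paper's efficient grid-minor construction for apex graphs (the \AlgEmbed tool) is exactly what lets us produce such a $W'$ together with its apex set in polynomial time. Given a subwall that is flat up to few apices and large enough as a function of $k$, I would invoke the new irrelevant-vertex procedure \AlgIrrelevant to certify a vertex $v$ deep inside $W'$ that is irrelevant, meaning $\OPT(G-v)=\OPT(G)$ and an optimal (or near-optimal) solution of $G-v$ extends to one of $G$; delete $v$ and repeat. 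If a large wall is present but no such subwall or irrelevant vertex can be produced, this witnesses $\OPT(G)>k$, and we reject the guess $k$.

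The point that needs care is that peeling irrelevant vertices one at a time out of a huge flat region is both slow and, when $\OPT$ is moderately large, insufficient, since then the flat subwall required for irrelevance must itself be large. To handle this I would interleave the irrelevant-vertex reduction with a divide-and-conquer step: when the flat region (or, more generally, the almost-planar core exposed by the embedding) is large, cut it off using a balanced vertex separator computed by an approximation algorithm, add those separator vertices to the output set, and recurse on the pieces. The charging is that a leaf piece $P$ is an induced subgraph of the planar graph $G-S$, so $\OPT(G[P])\le|P\cap S|$ and the leaf solutions sum to at most $\OPT$, while the separators are charged to the apex vertices exposed in each region and contribute only a $\poly\log n$ overhead in total; \AlgPatch and \AlgPruning then glue the recursively computed solutions back together consistently with a single global planar embedding. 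With recursion depth $O(\log n)$ and a constant-factor loss per level this yields the stated $O(\log^{32}n)$ bound, and the running time is dominated by the bounded-treewidth dynamic program at the leaves, $n^{O(\log n/\log\log n)}$.

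The main obstacle is making the flat-wall / irrelevant-vertex machinery simultaneously \emph{constructive} and \emph{efficient}: the classical irrelevant-vertex arguments for planarization are built for exact FPT algorithms and lean on non-algorithmic or super-polynomial structure theorems, whereas here we need a polynomial-time routine that, in an apex graph, either outputs a large flat subwall with a small apex set or a genuine obstruction, and we need "irrelevant" to be robust enough to survive the $\poly\log$-lossy recursion rather than only the exact regime. Lining up the quantitative bounds — how large a flat subwall suffices as a function of $k$ and the apex count, so that it is compatible with the treewidth target $\tau=\Theta(\log n/\log\log n)$ and the final $O(\log^{32}n)$ factor — is where essentially all of the technical effort goes.
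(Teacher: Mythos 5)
Your proposal captures several correct high-level ingredients (guessing $k$ up to a factor $2$, alternating irrelevant-vertex removal with balanced-separator divide-and-conquer, charging separators against the optimum), but it diverges from the paper's actual argument in two structurally essential places, and both are genuine gaps rather than cosmetic differences. The paper's recursion does \emph{not} bottom out at treewidth $\tau=\Theta(\log n/\log\log n)$ and run a DP there; it bottoms out only when the current graph is planar. Your base case is unreachable when $k>\tau$: a $k$-apex graph with a large grid attached has treewidth $\Omega(k)$ no matter how many interior grid vertices you peel as irrelevant, and the balanced separator supplied by Feige--Hajiaghayi--Lee has size $O(\tw\cdot\log^{1/2}n)$, which you are implicitly charging against $\OPT$ without justification. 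What actually makes the charging work in the paper is the semi-universal set $Y\subseteq X$ (Lemma~\ref{lem:semi-universal}): when the apex set has a large neighborhood in the grid, deleting $Y$ provably drops $\mvp$ by $(2/3)|Y|$, so its cost is paid for by the optimum. Your "apex vertices exposed in each region" never produces this, and without it the interleaving you describe has no bound on the number of separator vertices accumulated. Relatedly, the $n^{O(\log n/\log\log n)}$ running time in the paper does \emph{not} come from a bounded-treewidth DP; it comes from the recursive approximation inside the irrelevant-vertex test (Lemma~\ref{lem:computing_dominating}), which must call $\AlgApprox$ on subgrids of size up to $n/\rho$ to decide which squares are active, giving the recurrence $T(n)\le n^{O(1)}+T(n/\rho)2\rho\log n+T(n-1)$ and, with $\rho=\log n$, the stated time. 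Your outline never addresses how the irrelevance test is made effective, which is exactly where the cost hides.

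The second gap is the one you half-acknowledge in your last sentence but do not resolve. Irrelevant vertices are irrelevant only with respect to \emph{exact} minimum planarizing sets. The pruning sequence deletes a set $X'$ of size $k\cdot\log^{O(1)}n\gg\mvp(G)$, so there is no guarantee that the removed irrelevant regions re-embed into a planar drawing of $G\setminus X'$, and "$\AlgPatch$ and $\AlgPruning$ then glue \ldots\ consistently with a single global planar embedding" is precisely the step that fails in general. The paper's fix is the patch/framing device: each irrelevant region is replaced by a width-$3$ grid frame, and when $X'$ later hits a frame in $m$ vertices one pays with $O(m)$ handles, yielding (Lemma~\ref{lem:embedding}) an embedding of $G\setminus X'$ into a surface of Euler genus $O(\cost({\cal G}))$, which is then planarized by repeated noose-cutting at cost $O(g\log n+\mvp(G)\log^2 n)$ (Lemma~\ref{lem:planarize_surface}). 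This genus detour is not an optional refinement; it is where the last $\log$-factors of the $O(\log^{32}n)$ bound come from, and omitting it leaves your gluing step broken.
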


\begin{theorem}\label{thm:eps}
For any arbitrarily small constant $\eps > 0$, 
there exists a $O(n^{\eps})$-approximation algorithm for the minimum vertex planarization problem with running time $n^{O(1/\eps)}$.
\end{theorem}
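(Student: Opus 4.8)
The plan is to obtain Theorem~\ref{thm:eps} by re-instantiating, with a constant recursion parameter, the very machinery that yields Theorem~\ref{thm:polylog}, rather than by designing a separate algorithm. Concretely, I expect the algorithm behind Theorem~\ref{thm:polylog} to be a family $\{\mathcal A_k\}_{k\ge 1}$ parameterised by a ``granularity'' $k$, with the following recursive structure. Given $G$, one first tests whether $G$ has treewidth $O(k)$ or is already planar after deleting $\poly\log n$ vertices; if so, minimum planarization is solved (essentially) optimally by dynamic programming over a tree decomposition, as vertex deletion to any minor-closed class is tractable in that setting. Otherwise one runs the efficient apex grid-minor construction to produce a large flat wall, applies the new irrelevant-vertex routine $\AlgIrrelevant$ to split off a sub-instance whose relevant size parameter is a factor $n^{\Theta(1/k)}$ smaller, guesses the $O(1)$ vertices of the interface between that sub-instance and the rest of $G$, recurses on the two parts, and finally glues the returned vertex sets together with $\AlgPatch$. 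The recursion tree thus has $n^{O(1)}$ children per node and depth $O(k)$ (the size parameter drops by a factor $n^{\Theta(1/k)}$ at each step), so $\mathcal A_k$ runs in time $n^{O(k)}$; and if each level degrades the approximation ratio by a factor $\rho$, the output is a $\rho^{O(k)}$-approximation.

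Theorem~\ref{thm:polylog} is the instantiation $k=\Theta(\log n/\log\log n)$: then $n^{O(k)}=n^{O(\log n/\log\log n)}$, the per-level loss is $\rho=\log^{O(1)}n$, and $\rho^{O(k)}$ stays $\log^{O(1)}n$. For Theorem~\ref{thm:eps} I would take $k:=\Theta(1/\eps)$, a constant, so that the recursion has constant depth and the running time is $n^{O(1/\eps)}$. In this coarse regime the per-level loss can no longer be amortised against a polylogarithmic wall and is instead only $\rho=n^{\Theta(1/k^2)}=n^{\Theta(\eps^2)}$, so the final ratio is $\rho^{O(k)}=n^{\Theta(\eps)}$; rescaling $\eps$ by the hidden constant gives the stated $O(n^{\eps})$-approximation. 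Keeping $\eps$ constant also prevents the hidden constants (in the grid-minor bounds, in $\AlgIrrelevant$, and in $\AlgPatch$) from interfering with the running-time and ratio estimates.

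The main obstacle is checking that the structural ingredients still do their job at the coarse scale. The efficient apex grid-minor construction and $\AlgIrrelevant$ were designed to drive a recursion operating at the $\poly\log n$ scale; when the peeled sub-instances and exposed walls have size $n^{\Theta(\eps)}$ one must re-verify (i) that a large enough flat wall is produced whenever the current estimate of the optimum justifies recursing, (ii) that $\AlgIrrelevant$ certifies irrelevance with only the claimed $n^{O(\eps^2)}$ loss per level, and (iii) that composing $\AlgPatch$ a constant $O(1/\eps)$ number of times inflates the solution by at most an $n^{O(\eps)}$ factor overall. A secondary, more routine point is to thread a consistent estimate of the optimum through the recursion so that the ``solve exactly vs.\ recurse'' dichotomy is always invoked with thresholds for which the errors multiply out to $n^{O(\eps)}$ rather than accumulate additively.
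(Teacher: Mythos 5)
Your high-level instinct is correct and matches the paper: Theorem~\ref{thm:eps} is obtained from exactly the same machinery as Theorem~\ref{thm:polylog} by retuning a single recursion parameter. In the paper that parameter is $\rho$, the cutoff for which subgraphs $\AlgIrrelevant$ recursively approximates (those of size at most $n/\rho$); Theorem~\ref{thm:main} makes the trade-off explicit, and Theorems~\ref{thm:polylog} and~\ref{thm:eps} are the one-line instantiations $\rho=\log n$ and $\rho=n^\eps$ respectively.

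However, the mechanism you describe for how the approximation ratio degrades is not the one in the paper, and as stated it cannot be right. You posit a \emph{multiplicative} per-level loss $\rho$ with final ratio $\rho^{O(k)}$ where $k$ is the recursion depth, and then assert that for $k=\Theta(\log n/\log\log n)$ and $\rho=\log^{O(1)}n$ ``$\rho^{O(k)}$ stays $\log^{O(1)}n$.'' This is false: $(\log^{O(1)}n)^{O(\log n/\log\log n)}=n^{O(1)}$, not polylogarithmic. If the loss were genuinely multiplicative per level, the paper's polylogarithmic guarantee would collapse. The actual control is \emph{additive}: the cost of the pruning sequence is charged against the optimum, each vertex of an optimal planarizing set absorbs $O(\sqrt{\approxfactor}\log^{14}n)$ units of charge per balanced-separator level and there are $O(\log n)$ such levels, giving $\cost({\cal G})=O(\mvp(G)\sqrt{\approxfactor}\log^{15}n)$. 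After adding the $\log n$ factor from the surface-planarization step, the ratio $\approxfactor$ satisfies the fixed-point constraint $\approxfactor=O(\sqrt{\approxfactor}\log^{16}n)$, i.e.\ $\approxfactor=O(\log^{32}n)$, together with the side constraint $\approxfactor\ge 2\rho$ needed by $\AlgIrrelevant$. Hence $\approxfactor=\max\{2\rho,\,O(\log^{32}n)\}$: the parameter $\rho$ enters the ratio as a single additive term, not raised to the depth. The running time, on the other hand, \emph{is} exponential in the depth of the $\AlgIrrelevant$ recursion (subproblems shrink by a factor $\rho$, so the depth is $\log_\rho n$, giving $n^{O(\log_\rho n)}$), which is where $n^{O(1/\eps)}$ comes from.

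Two further, smaller mismatches with the paper's algorithm: there is no ``solve exactly by DP at small treewidth'' base case --- when the treewidth is small the algorithm computes a balanced separator and recurses, adding the separator to the pruning sequence (Case~2 of $\AlgPruningDecision$); and there is no step that guesses $O(1)$ interface vertices between a peeled sub-instance and the rest --- the patch/frame construction handles the interface deterministically, and the pruning sequence proceeds by removing one semi-universal set, one separator, or one patch at a time, decreasing either $|V(G)|$ or the size parameter. Your obstacle list (re-verifying grid-minor production and irrelevance at coarse scales) is a reasonable thing to worry about but is not what makes the proof go through; the crux is the charging argument above, which makes $\rho$ enter the ratio only linearly.
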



\paragraph{Applications to crossing number.}

The \emph{crossing number} of a graph $G$, denoted $\crossingnumber(G)$, is the minimum number of crossings in any drawing of $G$ into the plane (see \cite{DBLP:conf/stoc/Chuzhoy11}).
Prior to our work, the best-known approximation for the crossing number of bounded-degree graphs was due to Chuzhoy \cite{DBLP:conf/stoc/Chuzhoy11}.
Given a bounded-degree graph, her algorithm computes a drawing with $(\crossingnumber(G))^{10} \log^{O(1)} n$ crossings, which implies a $n^{9/10} \log^{O(1)} n$-approximation.
We now explain how our result on minimum planarization implies an improved approximation algorithm for crossing number on bounded-degree graphs.
It is easy to show that for any graph $G$, $\mvp(G)\leq \crossingnumber(G)$, simply by removing one endpoint of one edge involved in each crossing in some optimal drawing.
Thus, using our $\alpha$-approximation algorithm for minimum planarization, we can compute a planarizing set of size at most $\alpha \cdot \crossingnumber(G)$.
Thus, in graphs of maximum degree $\dmax$, we can compute some $F\subset E(G)$, with $|F|\leq \alpha \dmax \crossingnumber(G)$, such that $G\setminus F$ is planar.
Chimani and Hlinen\'{y} \cite{DBLP:journals/jco/ChimaniH17} (see also \cite{DBLP:conf/soda/ChuzhoyMS11}) have given a polynomial-time algorithm which given some graph $G$ and some $F\subset E(G)$, such that $G\setminus F$ is planar, computes a drawing of $G$ with at most $O(\dmax^3 \cdot |F| \cdot \crossingnumber(G) + \dmax^3 \cdot |F|^2)$ crossings.
Combining this with our result we immediately obtain an algorithm with running time $n^{O(\log n / \log\log n)}$, which given a graph $G$ of bounded degree, computes a drawing of $G$ with at most $(\crossingnumber(G))^2 \log^{O(1)} n$ crossings.
Similarly, we obtain an algorithm with running time $n^{O(1/\eps)}$, which given a graph $G$ of bounded degree, computes a drawing of $G$ with at most $(\crossingnumber(G))^2 n^{\eps} \log^{O(1)} n$ crossings, for any fixed $\eps>0$.
Combining this with existing approximation algorithms for crossing number of graphs of bounded degree that are based on balanced separators, we obtain the following (see \cite{DBLP:conf/stoc/Chuzhoy11} for details).

\begin{theorem}
There exists a $n^{1/2}\log^{O(1)} n$-approximation algorithm for the crossing number of graphs of bounded degree, with running time $n^{O(\log n / \log\log n)}$.
Furthermore there exists a $n^{1/2+\eps}$-approximation algorithm for the crossing number of graphs of bounded degree, with running time $n^{O(1/\eps)}$, for any fixed $\eps>0$.
\end{theorem}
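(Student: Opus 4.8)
The plan is to combine our minimum-planarization result with the known edge-planarization-to-crossing-number amplification and with the standard divide-and-conquer framework for crossing number based on balanced separators, exactly as in \cite{DBLP:conf/stoc/Chuzhoy11}. Recall from the discussion preceding the statement that, for a bounded-degree graph $G$, our $\alpha$-approximation for minimum vertex planarization (with $\alpha = \log^{O(1)} n$ via Theorem~\ref{thm:polylog}, or $\alpha = n^\eps$ via Theorem~\ref{thm:eps}) yields a set $F \subseteq E(G)$ with $|F| \le \alpha \dmax \crossingnumber(G)$ and $G \setminus F$ planar, and then the Chimani--Hlinen\'y algorithm \cite{DBLP:journals/jco/ChimaniH17} converts this into a drawing with $O(\dmax^3 |F| \crossingnumber(G) + \dmax^3 |F|^2) = (\crossingnumber(G))^2 \cdot \beta$ crossings, where $\beta = \log^{O(1)} n$ (resp.\ $n^{\eps}\log^{O(1)} n$). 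So the situation is: we can efficiently produce a drawing with at most $(\crossingnumber(G))^2 \beta$ crossings, and we want to upgrade this to a $\sqrt{n}\,\beta'$-approximation.

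The key step is the observation, already exploited in \cite{DBLP:conf/stoc/Chuzhoy11}, that a quadratic-in-$\OPT$ guarantee is already strong when $\OPT$ is small, and when $\OPT$ is large one can afford a crude bound. Concretely: if $\crossingnumber(G) \le \sqrt{n}$, then the drawing we produced has at most $(\crossingnumber(G))^2 \beta \le \crossingnumber(G) \cdot \sqrt{n}\,\beta$ crossings, i.e.\ it is a $\sqrt{n}\,\beta$-approximation directly. If instead $\crossingnumber(G) > \sqrt{n}$, we fall back on the fact that for any bounded-degree $n$-vertex graph one can in polynomial time find a drawing with $O(n)$ crossings (e.g.\ via a recursive balanced-separator / cycle-separator decomposition, since bounded-degree graphs with no small separator already have $\crossingnumber(G) = \Omega(n)$, and more carefully one gets $O((n + \crossingnumber(G))\log^{O(1)}n)$ by the Leighton--Rao / Even--Guha--Schieber style crossing-number divide-and-conquer). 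In that regime $O(n) = O(\sqrt n \cdot \sqrt n) \le O(\sqrt n)\cdot \crossingnumber(G)$, so again we get a $\sqrt n$-approximation up to the $\log^{O(1)} n$ factors hidden in the separator-based bound. Taking the better of the two drawings (or just running both and outputting the one with fewer crossings) gives the claimed $\sqrt n\,\log^{O(1)} n$- and $\sqrt n^{\,1+\eps}$-approximations within the stated running times $n^{O(\log n/\log\log n)}$ and $n^{O(1/\eps)}$, since the overhead beyond the planarization step is polynomial.

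The only genuinely delicate point, and the one I would spell out carefully by citing \cite{DBLP:conf/stoc/Chuzhoy11}, is the large-$\OPT$ fallback: one must verify that the separator-based recursion produces a drawing with $O((n+\crossingnumber(G))\cdot\log^{O(1)}n)$ crossings rather than something worse, and that this recursion runs in polynomial time; this is where the bounded-degree hypothesis is essential (it controls how edges crossing a separator contribute, and lets one charge separator sizes against $\crossingnumber(G)+n$). Everything else — the case split at threshold $\sqrt n$, the arithmetic $(\crossingnumber(G))^2\beta \le \sqrt n\,\beta\,\crossingnumber(G)$ when $\crossingnumber(G)\le\sqrt n$, and the composition of running times — is routine. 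I therefore expect the write-up to be short: invoke Theorem~\ref{thm:polylog} / Theorem~\ref{thm:eps}, invoke \cite{DBLP:journals/jco/ChimaniH17}, invoke the balanced-separator crossing-number algorithm from \cite{DBLP:conf/stoc/Chuzhoy11}, and combine via the threshold argument.
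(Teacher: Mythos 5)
Your proposal is correct and matches the approach the paper alludes to but does not spell out (the paper simply defers to \cite{DBLP:conf/stoc/Chuzhoy11} for the combination with balanced-separator methods). The ingredients you identify are exactly right: the planarization/Chimani--Hlinen\'y pipeline gives a drawing with $(\crossingnumber(G))^2\beta$ crossings; the Leighton--Rao / Even--Guha--Schieber divide-and-conquer gives a drawing with $O((n+\crossingnumber(G))\log^{O(1)} n)$ crossings for bounded-degree graphs; and taking the better of the two, with the threshold balancing at $\crossingnumber(G)\approx\sqrt n$, yields the $\sqrt n\,\log^{O(1)} n$ (resp.\ $n^{1/2+\eps}$) approximation. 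One small refinement worth noting for the $n^\eps$-version: with $\beta=n^\eps\log^{O(1)} n$ the optimal threshold actually sits at $\crossingnumber(G)\approx n^{(1-\eps)/2}$, giving ratio $n^{(1+\eps)/2}\log^{O(1)} n$, which is strictly better than the claimed $n^{1/2+\eps}$; your simpler $\sqrt n$ threshold gives the slightly weaker but still sufficient bound $n^{1/2+\eps}\log^{O(1)} n$, which can be absorbed into the statement by shrinking $\eps$ slightly.
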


\subsection{Related work}

 In the \emph{$\cal{F}$-deletion problem},
 the goal is to compute a minimum vertex set $S$ in an input graph $G$ such that $G-S$ is $\cal{F}$-minor-free. Characterizing graph properties for which the corresponding vertex deletion problem can be approximated within a constant factor or a polylogarithmic factor is a long standing open problem in approximation algorithms \cite{yan2,DBLP:conf/icalp/LundY93}. In spite of a long history of research, we are still far from resolving the status of this problem. Constant-factor approximation algorithms for the vertex Cover problem (i.e., ${\cal F}=C_3$) are known since 1970s \cite{vertexcover1, vertexcover2}.

  Yannakakis \cite{yan1} showed that approximating the minimum vertex set that needs to be deleted in order to obtain a connected graph with some property $P$ within factor $n^{1-\eps}$ is NP-hard, for a very broad class of properties (see \cite{yan1}). There was not much progress on approximability/non-approximability of vertex deletion problems until quite recently. Fomin et al.~\cite{fomin1} showed that for every graph property $P$ expressible by a finite set of forbidden minors ${\cal F}$ containing at least one planar graph, the vertex deletion problem for property $P$ admits a constant factor approximation algorithm. They explicitly mentioned that the most interesting case is when $\cal{F}$ contains a non-planar graph (they said that perhaps the most interesting case is when ${\cal F} =\{K_{3,3}, K_5\}$), because there is no poly-logarithmic factor approximation algorithm so far. Indeed, the planar graph case and the non-planar case for the family $\cal{F}$ may be quite different, as the graph minor theory
  suggests.
 The main result of this paper almost settles the most interesting case. We believe that our techniques can lead to further results on approximation algorithms for minor-free properties.


\section{High level description of the algorithm}

We now give a brief overview of our approach and highlight some of the main challenges.
Our approximation algorithm is inspired by fixed-parameter algorithms for the minimum planarization problem, where one assumes that the size of the optimum planarizing set is some fixed constant $k$ (see \cite{DBLP:conf/focs/Kawarabayashi09,DBLP:journals/algorithmica/MarxS12,DBLP:conf/soda/JansenLS14}).

\subsection{Overview of previous fixed-parameter algorithms.}
The known fixed-parameter algorithms for minimum planarization work as follows:
If the treewidth of the input graph $G$ is large enough (say, $k^c$, for some constant $c>0$), then 
one can efficiently compute a large grid minor $H$ in $G$ (that is, a minor of $G$ that is isomorphic to some grid).
A subgraph $J$ of $G$ is called \emph{flat} if it admits some planar drawing with outer face $F$, such that all edges between $J$ and $G\setminus J$ have one endpoint in $F$.
If some vertex $v\in H$ is surrounded by a flat subgrid of $H$ of size $\Omega(k)$, then it is \emph{irrelevant}; this means that by removing $v$, we do not change any optimal solution.
Thus, if such an irrelevant vertex $v$ exists, we recurse on $G\setminus \{v\}$, and return the optimum solution found.
We define the \emph{face cover} of some set of vertices $U$ to be the minimum number of faces of $H$ that are needed to cover $U$.
If there exists some vertex $u$ such that the neighborhood of $u$ has face cover of size $\Omega(k)$, then $u$ is \emph{universal}; that is, removing $u$ decreases the size of some optimum planarizing set by 1.
Thus, if such a universal vertex $u$ exists, we recurse on $G\setminus \{u\}$, and return the optimum solution found, together with $u$.
If the grid $H$ is large enough, then we can always find either an irrelevant or a universal vertex.
Thus, by repeatedly removing such vertices, we arrive at a graph of bounded treewidth, where the problem can be solved using standard dynamic programming techniques.

\subsection{Obtaining an approximation algorithm.}
We now discuss the main challenges towards extending the above approach to the approximate setting.
In order to simplify the exposition, we discuss the $\log^{O(1)}$-approximation algorithm.
The $n^\eps$-approximation is essentially identical, after changing some parameters.

\begin{description}
\item{\textbf{1. The small treewidth case.}}
In the above fixed-parameter algorithms, the problem is eventually reduced to the bounded-treewidth case.
That is, one has to solve the problem on a graph of treewidth $f(k)$, for some function $f$.
Since the optimum $k$ is assumed to be constant, this can be done in polynomial time (in fact, linear time).
However, in our setting, $k$ can be as large as $\Omega(n)$, and thus this approach is not applicable.
Instead, we try to find some small balanced vertex separator $S$.
If the treewidth is at most $k\log^{O(1)} n$, then we can find some separator of size $k\log^{O(1)} n$.
In this case, we recurse on all non-planar connected components of $G\setminus S$, and we add $S$ to the final solution.
It can be shown that $|S|$ can be charged to the optimum solution, so that the total increase in the cost of the solution is $k\log^{O(1)}$.

\item{\textbf{2. The large treewidth case.}}
We say that a graph is \emph{$k$-apex} if it can be made planar by the removal of at most $k$ vertices.
Since any planar graph of treewidth $t$ contains a grid minor of size $\Omega(t)\times \Omega(t)$, it easily follows that any $k$-apex graph of treewidth $t>ck$, for some universal constant $c>0$, also contains a grid minor of size $\Omega(t)\times \Omega(t)$.
To see that, first delete some planarizing set of size $k$, and then find a grid minor in the resulting planar graph, which has treewidth at least $t-k$.
However, even thought it is trivial to prove the existence of such a large grid minor, computing it in polynomial time when $k$ is not fixed turns out to be a significant challenge.
We remark that it is known how to compute a grid minor of size $\Omega(k)\times \Omega(k)$ when $t=\Omega(k^{2})$ \cite{chekuri2013approximation,DBLP:conf/stoc/KawarabayashiS15}, and this is enough to obtain a $k^{O(1)}$-approximation algorithm.
However, in order to obtain a $\log^{O(1)} n$-approximation, we need to find a grid minor when $t= k \log^{O(1)} n$.

\item{\textbf{3. Doubly-well-linked sets}}
The first main technical contribution of this work is an algorithm for computing a large grid minor in $k$-apex graphs, when $k$ is not fixed.
Suppose that the treewidth of $G$ is $t>2k$.
As a first step, we compute some separation $(U,U')$ of order $t \log^{O(1)}n$ (that is, some $U,U'\subset V(G)$ with $V(G)=U\cup U'$ and $|U\cap U'|=t\log^{O(1)} n$), and some $Y\subseteq U\cap U'$ such that $Y$ is \emph{well-linked} in both sides of the separation.
Intuitively, for a set $Y$ to be well-linked in some graph $G'$ means that $G'$ does not have any sparse cuts, w.r.t.~$Y$; in other words, contracting $G'$ into $Y$ results in an ``expander-like'' graph (see Section \ref{sec:prelim} for a formal definition).
We refer to such a set $Y$ as \emph{doubly-well-linked}.
We remark that the notion of doubly-well-linked set considered here is similar to, and inspired by, the \emph{well-linked bipartitions} introduced by Chuzhoy \cite{DBLP:conf/stoc/Chuzhoy11} in her work on the crossing number problem.
It is well-known that in any graph, such a separation can be found so that $Y$ is well-linked in at least one of the two sides.
However, as we explain below, we need $Y$ to be well-linked in both sides of the separation.

\item{\textbf{4. From a doubly-well-linked set to a grid minor.}}
There are several algorithms for computing large grid minors in planar graphs \cite{chekuri2004edge,DBLP:journals/jct/RobertsonST94}.
A key ingredient in these algorithms is the duality between cuts and cycles in embedded planar graphs.
That is, any cut of a planar graph corresponds to a collection of cycles in its dual.
The algorithms for the planar case exploit this duality by first computing a well-linked set $Z$, and then finding some disk ${\cal D}$ in the plane, that contains $Z$ and has a large fraction of $Z$ on its boundary.
Then, one can find two sets of paths ${\cal P}$ and ${\cal Q}$, with endpoints on the boundary of ${\cal D}$, such that every path in ${\cal P}$ intersects every path in ${\cal Q}$.
By planarity, this yields a grid minor.

In our case we cannot apply this idea since we don't have a planar drawing of the graph (indeed, this is precisely what we want to compute).
However, it turns out that, intuitively, any doubly-well-linked set $Y$ behaves as a Jordan curve.
That is, if we remove any optimal solution from $G$ (that is, any planarizing set of minimum cardinality), then there exists a planar drawing of $G[U]$ such that most of the vertices in $Y$ are close to the outer face.
Since $Y$ is well-linked in $G[U]$, we can route in $G[U]$, with low congestion, a multicommodity flow that routes a unit demand between every pair of vertices in $Y$.
We then sample $c$ paths from this flow, for some sufficiently large constant $c>0$.
The fact that the congestion is low, can be used to deduce that the resulting paths will avoid all vertices in some optimal solution, with some constant probability.
Thus, the union of the sampled paths admits a planar drawing.
Furthermore, since $Y$ is doubly-well-linked, we can show that, with some constant probability, the union of the sampled paths can be drawn so that their endpoints are all in the outer face.
We thus use the union of the sampled paths to construct a \emph{skeleton} graph.
Specifically, we find a suitable subgraph of $G$ which does not intersect some optimal solution.
We then sample $t/\log^{O(1)}n$ more paths from the flow, and partition them into two sets ${\cal P}$ and ${\cal Q}$, depending on the structure of their intersection with the skeleton graph.
Conditioned on the event that the skeleton graph does not intersect some optimal solution, we can find sets of paths ${\cal P}$ and ${\cal Q}$ such that every path in ${\cal P}$ intersects every path in ${\cal Q}$.

\item{\textbf{5. Computing a partially triangulated grid minor.}}
Having computed a large grid minor $H$, we wish to use $H$ to find either universal or irrelevant vertices.
To that end, we need to ensure that there are no edges between different faces of $H$.
We first compute some $X\subset V(G)$, such that $G\setminus X$ can be contracted into some grid $H'$ of size $t/\log^{O(1)} n \times t/\log^{O(1)} n$, where $H'$ is obtained by ``eliminating'' some rows and columns on $H$.

\item{\textbf{6. Computing a semi-universal set.}}
The next main technical challenge in our algorithm is the computation of universal vertices.
In the fixed-parameter algorithms described above, in order to compute a universal vertex, one needs a grid of size at least $\Omega(k^{2})$.
However, in our case, we only have a grid of size $k/\log^{O(1)} n \times k/\log^{O(1)} n$.
Thus, we cannot always find a universal vertex.
We overcome this obstacle by introducing the notion of a \emph{semi-universal} vertex:
We say that a set $A$ of vertices is semi-universal if deleting $A$ from $G$ decreases the cost of the optimum by at least $(2/3)|A|$.
We can prove that if the size of the neighborhood of $X$ in $H'$ is at least $k\log^{\Omega(1)} n$, then we can find some $A\subseteq X$ that is semi-universal.
Intuitively, the algorithm finds some $A\subseteq X$ that behaves as an expander: for every $A'\subseteq A$, the size of the face cover of $A'$ is at least $|A|/\log^{O(1)} n$.

\item{\textbf{7. Computing an irrelevant vertex.}}
The next technical difficulty is computing an irrelevant vertex, when the size of the neighborhood of $X$ in $H'$ is at most $k\log^{O(1)} n$.
As in the computation of universal vertices, this is a fairly easy task when the treewidth is at least $t=\Omega(k^{2})$.
However, here we can only find a grid minor of size $k\log^{O(1)} n \times k \log^{O(1)} n$.
We overcome this difficulty as follows:
We first partition the grid minor into subgrids, in a fashion similar to a quad-tree decomposition: 
There are $O(\log n)$ partitions of $H'$, each into subgrids of size $2^i\times 2^i$, for all $i\in \{0,\ldots,\log n\}$.
Then, for each subgrid in this collection of partitions, we compute an upper estimate on its planarization number:
If the number of vertices of $G$ in this subgrid is at least $n/\log n$, then we set the estimate to be $k$, and otherwise we recursive approximate its minimum planarization number.
Finally, we add to this estimate the number of neighbors of $X$ in this subgrid.
We say that some subgrid of $H'$ of size $2^i \times 2^i$ is \emph{active} if its upper estimate is at least $2^i/c$, for some constant $c>0$.
Since the size of the neighborhood of $X$ is small, we can show that there exists some $v \in V(H')$ that lies outsize all active subgrids; we can then show that $v$ must be irrelevant.

\item{\textbf{8. Embedding into a higher genus surface.}} 
Given the above algorithms for computing grid minors, semi-universal vertices, and irrelevant vertices, the algorithm proceeds as follows.
We iteratively compute one of the following (1) a small balanced separator, (2) a semi-universal set, or (3) a set of irrelevant vertices.
We refer to such a sequence of reductions as a \emph{pruning sequence}.
The graph obtained at the end of some pruning sequence is planar.
Let $X'$ be the set of vertices removed in Cases (1) and (2), throughout the pruning sequence.
We have $|X'|\leq k\log^{O(1)} n$.
We say that the \emph{cost} of the pruning sequence is $|X'|$.
However, the number of irrelevant vertices removed can be as large as $\Omega(n)$.
We need to add these vertices to the planar drawing of the resulting graph.
It turns out that this is not always possible. The reason is that the irrelevant vertices removed are only guaranteed to be irrelevant w.r.t.~any \emph{optimal} planarizing set; in contrast, the set $X'$ does not form an optimal planarizing set (indeed, we remove $k\log^{O(1)} n$ vertices).
We overcome this obstacle using a technique that was first introduced in \cite{chekuri2013approximation}:
When deleting a set of irrelevant vertices, we add a grid of width 3, referred to as a \emph{frame},  around the ``hole'' that is created.
The point of adding this grid is that we can inductively add the irrelevant vertices back to the graph as follows:
If $X'$ does not intersect the frame, then we can simply add the irrelevant vertices back to the graph without violating planarity.
Otherwise, if the frame intersets $m$ vertices from $X'$, then we can extend the current drawing to the irrelevant vertices corresponding to that frame by adding at most $\ell$ handles or antihandles.
This leads to an embedding into a new non-planar surface.
Repeating this process over all frames, we obtain an embedding of $G\setminus X'$ into some surface of Euler genus $k \log^{O(1)}$.

\item{\textbf{9. The final alorithm.}}
The last remaining step is to compute a planarizing set for $G\setminus X'$.
It turns out that this can be done by exploiting the embedding of $G\setminus X'$
into the surface ${\cal S}$ of Euler genus $g=k \log^{O(1)} n$, that was computed above.
Using tools from the theory of graphs of surfaces, we show that we can decrease the Euler genus of ${\cal S}$ by one, while deleting at most $O((1+k/g) \log n)$ vertices.
Repeating this process $g$ times, we obtain a planar graph after deleting a set $X''$ of at most $O(g \log n + k\log^2 n)$ vertices.
The final output of the algorithm is $X'\cup X''$, which is a planarizing set for $G$ of size $k\log^{O(1)}$.
\end{description}

\subsection{Organization}
The rest of the paper is organized as follows.
Section \ref{sec:prelim} introduces some basic definitions and results that are used throughout the paper.
Section \ref{sec:main} presents the main algorithm, by putting together the main ingredients of our approach.
Section \ref{sec:doubly} presents our algorithm for computing a doulby-well-linked set.
Section \ref{sec:pseudogrids} introduces the notion of a \emph{pseudogrid}, which are used to construct grid minors.
Section \ref{sec:grid} presents the algorithm for computing a grid minor.
Section \ref{sec:pt_grid} gives the algorithm for contracting the graph into a partially triangulated grid, with a small number of apices.
Section \ref{sec:semi-universal} shows how to compute a semi-universal set, given a partially-triangulated grid contraction, such that the apex set has a large neighborhood.
Section \ref{sec:irrelevant} shows how to compute irrelevant vertices, for the case where the apex set of the partially triangulated grid contraction has a small neighborhood.
Section \ref{sec:patch} shows how, given an algorithm for computing irrelevant vertices, we can compute a patch.
Section \ref{sec:pruning} combines the above algorithms for computing grid minors, semi-universal vertex sets, and patches, to obtain an algorithm for computing a pruning sequence of low cost.
Section \ref{sec:embedding} presents an algorithm which given given a pruning sequence of low cost, computes an embedding into a surface of low Euler genus.
Finally, Section \ref{sec:surface_planarization} gives the algorithm for planarizing a graph embedded into some non-planar surface.

\section{Definitions and preliminaries}\label{sec:prelim}

This section provides some basic notations needed in this paper.

\paragraph{Graph notation.}
Let $G$ be a graph and $X\subseteq V(G)$.
We use $d_G$ to denote the shortest path metric on $G$.
For any $v\in V(G)$ we write $d_G(X,v)=d_G(v,X)=\min_{u\in X} d_G(v,u)$.
For any $r\geq 0$ we define
$N_G(X,r)=\{v\in V(G) : d_G(v,X)\leq r\}$.
For a simple path $P$ and $u,v\in V(P)$ we denote by $P[u,v]$ the subpath of $P$ between $u$ and $v$.
We define the $(r\times \ell)$ grid to be the Cartesian product $P_r\times P_\ell$, where $P_i$ denote the path with $i$ vertices.
Let $H$ be the $(r\times \ell)$-grid.
For each $v\in V(H)$ we denote by $\row(v)\in \{1,\ldots,r\}$ and by $\col(v)\in \{1,\ldots,\ell\}$ the indexes of the row and column of $v$ respectively.
We denote by $\partial H$ the boundary cycle of $H$.
Every face in $H$ other than $\partial H$ is a cycle of length 4.
We say that some graph $H'$ is the \emph{partially triangulated $(r\times \ell)$-grid} if $H'$ is obtained from $H$ by adding for every face $F$ of $H$, with $F\neq \partial H$, at most one diagonal edge.

For some graph $G$, and some $X\subseteq V(G)$ such that $G\setminus X$ is planar, we say that $X$ is \emph{planarizing} (for $G$).
We denote by $\mvp(G)$ the minimum vertex planarization number of $G$, i.e.
\[
\mvp(G) = \min\{|X| : X\subseteq V(G) \text{ is planarizing for } G\}.
\]
We remark that deciding whether $\mvp(G)=0$ is precisely the problem of deciding whether $G$ is planar, which can be solved in linear time \cite{DBLP:journals/jacm/HopcroftT74}.
\begin{lemma}\label{lem:additivity}
Let $G$ be a graph and let $G_1,\ldots,G_{\ell}$ be a collection of pairwise vertex-disjoint subgraphs of $G$.
Then $\sum_{i=1}^{\ell} \mvp(G_i) \leq \mvp(G)$.
\end{lemma}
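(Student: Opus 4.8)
The plan is to take an optimal planarizing set of $G$ and show that restricting it to each $G_i$ already planarizes $G_i$; disjointness of the $G_i$ then lets us sum the resulting bounds. First I would fix a set $X \subseteq V(G)$ with $G \setminus X$ planar and $|X| = \mvp(G)$, and for each $i \in \{1,\ldots,\ell\}$ define $X_i = X \cap V(G_i)$.

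The key step is to observe that $G_i \setminus X_i$ is a subgraph of $G \setminus X$. Indeed, its vertex set is $V(G_i) \setminus X \subseteq V(G) \setminus X = V(G\setminus X)$, and each of its edges is an edge of $G$ with both endpoints outside $X$, hence an edge of $G \setminus X$. Since planarity is preserved under taking subgraphs and $G \setminus X$ is planar, $G_i \setminus X_i$ is planar, so $X_i$ is planarizing for $G_i$ and therefore $\mvp(G_i) \leq |X_i|$. Finally, because the subgraphs $G_1,\ldots,G_\ell$ are pairwise vertex-disjoint, the sets $X_1,\ldots,X_\ell$ are pairwise disjoint subsets of $X$, so $\sum_{i=1}^\ell |X_i| \leq |X| = \mvp(G)$. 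Chaining the two inequalities gives $\sum_{i=1}^\ell \mvp(G_i) \leq \sum_{i=1}^\ell |X_i| \leq \mvp(G)$, which is the claim.

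There is no real obstacle here: the only points that require a moment of care are that the $G_i$ need not cover $V(G)$ and need not be induced subgraphs of $G$, but neither matters, since we only use that $X_i$ is \emph{some} planarizing set of $G_i$ (giving an upper bound on $\mvp(G_i)$) and that the $X_i$ are disjoint (giving the additivity of the sizes). In particular the argument does not use minimality of the $X_i$ at all.
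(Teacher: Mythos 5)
Your proof is correct and follows the same route as the paper's: take an optimal planarizing set $X$ for $G$, restrict it to each $G_i$, note that $G_i\setminus X_i$ is a subgraph of the planar graph $G\setminus X$, and sum using disjointness of the $X_i$. The extra remarks about the $G_i$ not needing to be induced or to cover $V(G)$ are correct but not a point of divergence.
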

\begin{proof}
Let $S \subseteq V(G)$ with $|S|=\mvp(G)$ and such that $G\setminus S$ is planar.
For each $i\in \{1,\ldots,\ell\}$ let $S_i = S\cap V(G_i)$.
Then for any $i\in \{1,\ldots,\ell\}$, we have that $G_i\setminus S_i$ is a subgraph of $G\setminus S$, and thus it is planar.
Thus $S_i\geq \mvp(G_i)$.
It follows that $\sum_{i=1}^{\ell} \mvp(G_i) \leq \sum_{i=1}^{\ell} |S_i| \leq |S| = \mvp(G)$.
\end{proof}

\paragraph{Minors and contractions.}
A graph $H$ obtained via a sequence of zero or more edge contractions on a graph $G$ is called a \emph{contraction} of $G$.
Unless stated otherwise, we will replace a set of parallel edges in a contraction of a graph by a single edge.
The induced mapping $V(H) \to 2^{V(G)}$ is called the \emph{contraction mapping} (w.r.t.~$H$ and $G$).
Similarly, we say that $H$ is a \emph{minor} of $G$ if it can be  obtained via a sequence of zero or more edge contractions, edge deletions, and vertex deletions, performed on $G$.
The induced mapping $V(H) \to 2^{V(G)}$ is called the \emph{minor mapping} (w.r.t.~$H$ and $G$).
For any $X\subseteq V(G)$ we also write $\mu(X) = \bigcup_{v\in X} \mu(v)$.
For any $v\in V(G)$ we will abuse notation slightly and write $\mu^{-1}(v)$ to denote the unique vertex $v'\in H$ with $v\in \mu(v')$.

\paragraph{Treewidth, pathwidth, and grid minors.}

A \emph{tree decomposition} of a graph $G$ is a pair $(T,R)$, where
$T$ is a tree and $R$ is a family $\{R_t \mid t \in V(T)\}$ of
vertex sets $R_t\subseteq V(G)$, such that the following two
properties hold:

\begin{enumerate}
\item[(W1)] $\bigcup_{t \in V(T)} R_t = V(G)$, and every edge of $G$ has
both ends in some $R_t$.
\item[(W2)] If  $t,t',t''\in V(T)$ and $t'$ lies on the path in $T$
between $t$ and $t''$, then $R_t \cap R_{t''} \subseteq R_{t'}$.
\end{enumerate}
The \emph{width} of a tree decomposition $(T,R)$ is $\max\{|R_t|\mid
t\in V(T)\}-1$, and the \emph{treewidth} of $G$ is defined as
the minimum width taken over all tree decompositions of $G$.
If $T$ is a path, then we can define the \emph{pathwidth} of $G$ as
the minimum width taken over all path decompositions of $G$.
We use $\tw(G)$ and $\pw(G)$ to denote the treewidth and pathwidth of $G$ respectively.

We recall the following result on the treewidth of planar graphs.

\begin{lemma}[Robertson, Seymour and Thomas \cite{robertson1994quickly}]\label{lem:grid_planar}
Any planar graph of treewidth $t$ contains some $(r\times r)$-grid minor for some $r=\Omega(t)$.
\end{lemma}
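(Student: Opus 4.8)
The plan is to reduce the statement to the classical ``planar grid theorem'' of Robertson, Seymour and Thomas and to recall the structure of its proof. The first step is to pass from treewidth to \emph{branchwidth}, using the well-known inequalities $\mathrm{bw}(G) \le \tw(G) + 1 \le \max\{\lfloor \tfrac{3}{2}\,\mathrm{bw}(G)\rfloor, 1\}$. Hence a planar graph $G$ of treewidth $t$ has branchwidth $\beta \ge \tfrac{2}{3}(t+1) = \Omega(t)$, and it suffices to produce an $(r\times r)$-grid minor with $r = \Omega(\beta)$. The advantage of working with branchwidth is its clean duality with \emph{tangles}: since $G$ admits no branch decomposition of width $<\beta$, there is a tangle $\mathcal{T}$ of order $\beta$ in $G$, i.e.\ a consistent orientation of all separations of $G$ of order $<\beta$ that ``points to'' the essentially unique large region of $G$.

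The second step exploits planarity. Fix a planar embedding of $G$ in the sphere. The key structural fact — the heart of the Robertson--Seymour--Thomas argument — is that every tangle of a plane graph is \emph{respectful}: there is a point $p$ of the sphere not on (the drawing of) $G$ such that for every separation $(A,B)$ of order $<\beta$, the side chosen by $\mathcal{T}$ is the one whose drawing contains $p$ in its interior. Using that no separation of order $<\beta$ can isolate $p$ from a large portion of $G$, one extracts a family of $k = \Omega(\beta)$ pairwise vertex-disjoint cycles $C_1,\ldots,C_k$ that are \emph{concentric} around $p$: each $C_{i+1}$ is contained in the open disk bounded by $C_i$ that contains $p$, and $p$ lies inside all of them. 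Symmetrically, and again using that the tangle forbids small separators, one finds $m = \Omega(\beta)$ pairwise vertex-disjoint paths $P_1,\ldots,P_m$ such that every $P_j$ meets every cycle $C_i$ (a ``radial system''); their existence follows from a Menger/flow argument combined with the no-small-separator property provided by $\mathcal{T}$.

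The final step assembles the grid minor. In the plane, $\bigcup_i C_i \cup \bigcup_j P_j$ is drawn without crossings, the cycles are concentric, and each path crosses the cycles in order; after deleting all vertices and edges not on these curves, contracting each arc of a $C_i$ between consecutive paths and each arc of a $P_j$ between consecutive cycles, and suppressing degree-$2$ vertices, one obtains a contraction of $G$ that contains the $(k\times m)$-grid as a subgraph. Taking $r = \min\{k,m\} = \Omega(\beta) = \Omega(t)$ completes the argument. The main obstacle is the middle step: showing that a tangle of order $\beta$ in a plane graph yields $\Omega(\beta)$ concentric cycles together with $\Omega(\beta)$ disjoint radial paths. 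This ``combinatorial width $\Rightarrow$ explicit planar grid structure'' translation is precisely the technical content of \cite{robertson1994quickly}; everything else is bookkeeping with the linear relations between the width parameters. In the write-up I would either reproduce that argument in the plane-graph/tangle language, or, since the lemma as stated is one of the main conclusions of \cite{robertson1994quickly}, simply invoke that theorem directly.
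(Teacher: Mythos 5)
The paper does not prove this lemma at all; it is stated purely as a citation to Robertson, Seymour and Thomas \cite{robertson1994quickly}, so there is no "paper's own proof" to compare against. Your sketch is a faithful high-level reconstruction of the argument from that reference — passing to branchwidth via the $\mathrm{bw} \le \tw+1 \le \max\{\lfloor\tfrac{3}{2}\mathrm{bw}\rfloor,1\}$ sandwich, invoking tangle--branchwidth duality, using the "respectful tangle" structure of plane graphs to extract $\Omega(\beta)$ concentric cycles and $\Omega(\beta)$ radial paths, and assembling the grid minor — and you correctly identify that the real technical content lives in the middle step (respectfulness and the cycle/path extraction), which you would either reproduce or simply cite. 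Given that the lemma is stated as a black-box import, citing \cite{robertson1994quickly} directly, as the paper does and as you suggest in your closing sentence, is the right call; the sketch itself has no gaps that would mislead a reader, but it also adds nothing the paper needed.
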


From Lemma \ref{lem:grid_planar} we obtain the following.

\begin{lemma}[Existence of a large grid minor in a $k$-apex graphs]\label{lem:grid_existence}
Let $G$ be a $k$-apex graph of treewidht $t$.
Then $G$ contains some $(r\times r)$-grid minor for some $r=\Omega(t-k)$.
\end{lemma}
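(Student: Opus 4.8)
The plan is to deduce this directly from Lemma \ref{lem:grid_planar} by first removing an optimal planarizing set. First I would fix a set $A\subseteq V(G)$ with $|A|\leq k$ such that $G' = G\setminus A$ is planar; such a set exists by the definition of a $k$-apex graph. The main point is to control the treewidth of $G'$: deleting a single vertex from a graph can decrease the treewidth by at most $1$ (given a tree decomposition of $G$ of width $\tw(G)$, removing the deleted vertex from every bag yields a valid tree decomposition of the smaller graph whose width is no larger, and in fact one can check that removing $|A|$ vertices lowers treewidth by at most $|A|$). Hence $\tw(G') \geq \tw(G) - |A| \geq t - k$.

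Next I would apply Lemma \ref{lem:grid_planar} to the planar graph $G'$: since $\tw(G') \geq t-k$, $G'$ contains an $(r\times r)$-grid minor for some $r = \Omega(t-k)$. Since $G'$ is a subgraph of $G$, any minor of $G'$ is also a minor of $G$, so $G$ contains an $(r\times r)$-grid minor with $r = \Omega(t-k)$, as claimed. One minor subtlety: if $t \leq k$ then $t-k \leq 0$ and the statement is vacuous (a $0\times 0$ or $1\times 1$ grid is trivially a minor), so we may assume $t > k$, in which case $r = \Omega(t-k)$ is a genuine positive bound.

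I do not expect any real obstacle here; the only thing requiring a line of justification is the claim that treewidth drops by at most the number of deleted vertices, which is the standard fact that for any $A\subseteq V(G)$, $\tw(G\setminus A)\geq \tw(G)-|A|$ — indeed if $(T,R)$ is an optimal tree decomposition of $G\setminus A$ of width $\tw(G\setminus A)$, then adding all of $A$ to every bag gives a tree decomposition of $G$ of width at most $\tw(G\setminus A)+|A|$, so $\tw(G)\leq \tw(G\setminus A)+|A|$. Everything else is an immediate combination of Lemma \ref{lem:grid_planar} with the observation that minors are preserved under taking supergraphs.
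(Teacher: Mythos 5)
Your proof is correct and follows exactly the same route as the paper's: remove a planarizing set of size at most $k$, observe that the treewidth drops by at most $k$, and apply Lemma \ref{lem:grid_planar} to the resulting planar graph. The only difference is that you spell out the standard justification for $\tw(G\setminus A)\geq \tw(G)-|A|$, which the paper asserts without proof.
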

\begin{proof}
Let $X\subset V(G)$ with $|X|=k$ such that $H=G\setminus X$ is planar.
We have $\tw(H)\geq \tw(G)-k=t-k$.
By Lemma \ref{lem:grid_planar} we have that $H$ contains the $(r\times r)$-grid as a minor for some $r=\Omega(\tw(H)) = \Omega(t-k)$.
\end{proof}

The following is a related result needed in this paper.

\begin{lemma}[Eppstein \cite{Eppstein_grids}]\label{lem:grids_persistency}
  Let $r,f \geq 1$.  Let $G$ be the $(r \times r)$-grid, and $X\subset
  V(G)$, with $|X| = f$.  Then, $G\setminus X$ contains the $(r'
  \times r')$-grid as a minor, where $r' = \Theta(\min\{r, r^2/f\})$.
\end{lemma}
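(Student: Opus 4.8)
The plan is to prove the content of the lemma as a lower bound: $G\setminus X$ contains an $(r'\times r')$-grid minor for some $r'\ge c\min\{r,r^2/f\}$, with $c>0$ a universal constant. (The notation $\Theta$ in the statement just records that this is best possible up to constants — delete $\Theta(r/s)$ evenly spaced rows and columns of $G$ to chop it into $\Theta(s)\times\Theta(s)$ pieces — which we will not need.)

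The engine is the easy ``few deletions'' case, which I would first isolate: \emph{the $(m\times m)$-grid with any $d$ of its vertices deleted contains the $((m-d)\times(m-d))$-grid as a minor.} The point is that at most $d$ rows and at most $d$ columns contain a deleted vertex, so there are ``clean'' rows $i_1<\cdots<i_p$ and ``clean'' columns $j_1<\cdots<j_q$ with $p,q\ge m-d$. For $a\le p$ and $b\le q$, take as branch set $B_{a,b}$ the ``$\mathsf{L}$-shape'' consisting of the piece of clean row $i_a$ spanning columns $[j_b,j_{b+1})$ together with the piece of clean column $j_b$ spanning rows $[i_a,i_{a+1})$, using the convention $j_{q+1}=i_{p+1}=m+1$. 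Each $B_{a,b}$ is connected (its two pieces share the vertex $(i_a,j_b)$) and avoids all deletions (it lies inside a clean row and a clean column); the sets are pairwise disjoint; and $B_{a,b}$ is adjacent to $B_{a,b+1}$ and to $B_{a+1,b}$. Hence the $B_{a,b}$ realize a $(p\times q)$- and in particular an $((m-d)\times(m-d))$-grid minor. Verifying the disjointness and adjacency claims is a routine case analysis according to whether two branch sets share a row-index, a column-index, or neither.

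Given this sub-claim, I would split on the size of $f$. If $f\le r/2$, apply the sub-claim with $m=r$, $d=f$ to get an $((r-f)\times(r-f))$-grid minor, which suffices since $r-f\ge r/2$ and $\min\{r,r^2/f\}=r$ in this regime. If $f>r^2/16$, then $\min\{r,r^2/f\}=O(1)$ and a single surviving vertex (a $(1\times1)$-grid minor) is enough. The substantive case is $r/2<f\le r^2/16$: put $s=\lfloor r^2/(8f)\rfloor$, so that $2\le s\le r/2$ and $r/s\ge 8f/r>4$. Tile $G$ by $\lfloor r/s\rfloor^2\ge(r/(2s))^2$ pairwise vertex-disjoint $(s\times s)$-subgrids; by averaging, some subgrid contains at most $f/(r/(2s))^2=4fs^2/r^2\le s/2$ deleted vertices. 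Applying the sub-claim inside that subgrid gives an $\Omega(s)\times\Omega(s)$-grid minor of $G\setminus X$, and $s=\Theta(r^2/f)=\Theta(\min\{r,r^2/f\})$ in this regime; choosing $c$ small enough to cover all three regimes finishes the proof.

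The main obstacle I anticipate is conceptual and lives in the large-$f$ regime: routing a large grid minor ``globally'' through $G\setminus X$ breaks down as soon as $f\gtrsim r$ — one deleted vertex per row already defeats the naive ``use the clean rows'' approach — so one must instead localize to a contiguous sub-region of $G$ in which the \emph{density} of deleted vertices is low, which is exactly what the averaging over $(s\times s)$-subgrids provides, after which the ``few deletions'' sub-claim applies verbatim. The rest is just bookkeeping the floors and constants so that the three ranges of $f$ fit together.
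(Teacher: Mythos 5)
The paper cites this lemma as a black-box result of Eppstein \cite{Eppstein_grids} and gives no proof, so there is no ``paper's proof'' to compare against; I can only judge your argument on its own merits. It is correct. The two-part structure (an exact ``few-deletions'' sub-claim via clean rows/columns and L-shaped branch sets, followed by an averaging step that localizes to an $(s\times s)$-subgrid whose deletion density is low enough for the sub-claim to apply) is the standard and essentially optimal way to prove the statement, and I believe it matches Eppstein's own proof in spirit.

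Two small remarks, neither of which is a gap. First, the sub-claim's branch sets need the observation that the row-segment of $B_{a,b}$ may pass through \emph{dirty} columns and the column-segment through \emph{dirty} rows, but every vertex used still lies in the clean row $i_a$ or the clean column $j_b$, hence is undeleted; you implicitly use this but it is worth saying explicitly. Second, your Case $f>r^2/16$ needs one surviving vertex, which fails only in the fully degenerate situation $f=r^2$; the lemma is understood to exclude this, so no action needed, but a one-clause caveat (``assume $f<r^2$'') would make the write-up airtight. The floor/ceiling bookkeeping in the middle regime (that $s\ge 2$, $r/s>4$, $\lfloor r/s\rfloor\ge r/(2s)$, and $4fs^2/r^2\le s/2$) all checks out.
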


\paragraph{Well-linked sets.}

Our proof needs to handle a graph of large tree-width. It is well-known that such a graph must have a ``highly-connected'' subset, which
is often referred to as ``well-linked''.

\begin{definition}[Cut-linked set]
Let $G$ be a graph, let $X\subseteq V(G)$ and $\alpha>0$.
We say that $X$ is \emph{$\alpha$-cut-linked} (in $G$), iff for any partition of $V(G)$ into $\{A,B\}$, we have
\[
|E(A,B)|\geq \alpha \cdot \min\{|A\cap X|, |B\cap X|\}.
\]
\end{definition}

\paragraph{Graph on surfaces.}

A \emph{drawing} of a graph $G$ into a surface ${\cal S}$ is a mapping $\phi$
that sends every vertex $v\in V(G)$ into a point $\phi(v)\in {\cal
  S}$ and every edge into a simple curve connecting its endpoints, so
that the images of different edges are allowed to intersect only at
their endpoints. The \emph{Euler genus} of a surface ${\cal S}$, denoted by $\eg({\cal S})$, is defined to be $2-\chi({\cal S})$, where $\chi({\cal S})$ is the Euler characteristic of ${\cal S}$. This parameter coincides with the usual notion of genus, except that it is twice as large if the surface is orientable.
For a graph $G$, the Euler genus of $G$, denoted by $\eg(G)$, is defined to be the minimum Euler genus of a surface ${\cal S}$, such that $G$ can be embedded into ${\cal S}$.

Let $G$ be a graph and let $\phi$ be an embedding of $G$ into some surface ${\cal S}$.
A simple non-contractible loop $\gamma$ in ${\cal S}$ is called a \emph{$\phi$-noose} if it intersects the image of $G$ only on vertices.
We define the \emph{length} of $\gamma$ to be
\[
\len(\gamma) = |\{v\in V(G) : \phi(v) \in \gamma\}|.
\]

\paragraph{Sparsest-cut and the multi-commodity flow-cut gap.}

\begin{definition}[Sparsest Cut]
Consider a graph $G = (V,E)$. The \emph{sparsity} of a cut $(S, \overline{S})$ equals
\[
\phi(S) = \frac{E(S, \overline{S})}{\min\{|S|, |\overline{S}|\}}
\]
where $\overline{S} = V-S$ and $E(S, \overline{S})$ is the number of cut edges, that is, the number of edges from
$S$ to $\overline{S}$. The \emph{Sparsest Cut problem} asks to find a cut $(S, \overline{S})$ with smallest possible sparsity
$\phi(S)$.
\end{definition}

Let $\alpha\in (0,1)$.
For a graph $G$, we say that some $X\subseteq V(G)$ is a \emph{$\alpha$-balanced vertex separator} if every connected component of $G\setminus X$ contains at most $\alpha |V(G)|$ vertices.
We also say that some $F\subseteq E(G)$ is a \emph{$\alpha$-balanced edge separator} if every connected component of $G\setminus F$ contains at most $\alpha |V(G)|$ vertices.
The following is the well-known for approximating the sparsest cut. The \emph{uniform sparsest-cut} means the sparsest-cut problem with uniform weight on every edge.

\begin{theorem}[Arora, Rao and Vazirani \cite{DBLP:journals/jacm/AroraRV09}]\label{thm:ARV}
There exists a polynomial-time $O(\log^{1/2})$-approximation for uniform sparsest-cut.
There is also a polynomial-time algorithm which given some graph that contains a $2/3$-balanced edge separator of size $\ell$, computes a $3/4$-balanced edge separator of size $O(\ell \log^{1/2} n)$. 
\end{theorem}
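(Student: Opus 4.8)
The plan is to prove the two parts separately. For the first part, I would start from the standard semidefinite programming relaxation of uniform sparsest cut: assign to each vertex $i\in V$ a vector $v_i\in\mathbb{R}^n$, minimize $\sum_{(i,j)\in E}\|v_i-v_j\|^2$ subject to a normalization (e.g. $\sum_{i<j}\|v_i-v_j\|^2=1$, together with $\|v_i\|^2$ constant so that after scaling the $v_i$ sit on a sphere) and the negative-type constraints $\|v_i-v_j\|^2+\|v_j-v_k\|^2\ge\|v_i-v_k\|^2$ for all triples $i,j,k$. This is a valid relaxation because every cut $(S,\overline S)$ yields an integral feasible solution. The heart of the argument is the ARV \emph{structure theorem}: any collection of $\Omega(n)$ unit vectors of negative type whose average squared distance is $\Omega(1)$ contains two disjoint sets $S,T$, each of size $\Omega(n)$, with $\|v_i-v_j\|^2\ge\Delta$ for all $i\in S$, $j\in T$, where $\Delta=\Omega(1/\sqrt{\log n})$. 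Granting this, I would round as follows: bucket the vertices by $\|v_i\|^2$ so that within a good bucket all vectors have comparable length, apply the structure theorem inside that bucket to get a well-separated pair $(S,T)$, and then run a region-growing argument in the metric $d(i,j)=\|v_i-v_j\|^2$ (which satisfies the triangle inequality): choosing a radius $r$ uniformly in $[0,\Delta)$, the ball $B(S,r)$ separates $S$ from $T$, so both sides have $\Omega(n)$ vertices, while the expected number of cut edges is at most $\frac1\Delta\sum_{(i,j)\in E}\|v_i-v_j\|^2$, i.e. $O(1/\Delta)=O(\sqrt{\log n})$ times the SDP value, which lower bounds $\OPT$.

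The step I expect to be the main obstacle is the structure theorem itself. I would prove it by the probabilistic method: project all vectors onto a random Gaussian direction $g$; the vertices with large positive (resp. negative) projection are the candidates for $S$ (resp. $T$), and by Gaussian concentration each candidate set has size $\Omega(n)$ with constant probability. If no $\Delta$-separated pair of linear size existed, then many pairs $(i,j)\in S\times T$ would satisfy $\|v_i-v_j\|^2<\Delta$ while $\langle g,v_i-v_j\rangle$ is large; chaining such pairs produces long "stretched" sequences $i_0,i_1,\dots,i_k$ with consecutive squared distances $<\Delta$ but total projection of order $k\sqrt{\Delta}$, whereas the triangle inequality forces $\|v_{i_0}-v_{i_k}\|^2\le k\Delta$ and hence, typically, $\langle g,v_{i_0}-v_{i_k}\rangle=O(\sqrt{k\Delta})$ --- a contradiction once $k=\Omega(1/\Delta)$ and $\Delta=\Omega(1/\sqrt{\log n})$. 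Turning this into a theorem requires a careful union bound over the exponentially many candidate sequences, controlled by concentration of measure; this is the delicate part.

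For the second part I would use the textbook reduction from an $\alpha$-approximation for sparsest cut to balanced edge separators, with $\alpha=O(\sqrt{\log n})$. If $G$ has a $2/3$-balanced edge separator of size $\ell$, then for any connected component $C$ with $|C|>3n/4$, restricting that separator to $G[C]$ exhibits a cut of $G[C]$ using at most $\ell$ edges whose two sides each have $\Omega(|C|)=\Omega(n)$ vertices (here the gap between $2/3$ and $3/4$ is exactly what guarantees the smaller side is linear), so the uniform sparsest cut of $G[C]$ is $O(\ell/n)$; running the $O(\sqrt{\log n})$-approximation on $G[C]$ returns a cut of sparsity $O(\sqrt{\log n}\cdot\ell/n)$. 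Now iterate: while some component has size $>3n/4$ (there is at most one such, since only one component can exceed $n/2$), apply the algorithm to it, remove the returned cut, recurse into the larger side, and "peel off" the smaller side. The peeled-off pieces are pairwise disjoint subsets of $V(G)$, so their sizes sum to at most $n$; since the cut peeling off a piece of size $s$ removes $O(\sqrt{\log n}\cdot\ell/n)\cdot s$ edges, the total is $O(\sqrt{\log n}\cdot\ell/n)\cdot n=O(\ell\sqrt{\log n})$ edges. The process stops after polynomially many steps with every component of size at most $3n/4$, so the union of the removed cuts is the desired $3/4$-balanced edge separator.
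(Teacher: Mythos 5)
The paper does not prove this statement; it is cited as a black-box result from Arora--Rao--Vazirani, so there is no internal proof to compare against. Your sketch is a faithful reconstruction of the known ARV argument for the first part (SDP relaxation with $\ell_2^2$ constraints, the structure theorem giving $\Delta = \Omega(1/\sqrt{\log n})$-separated sets of linear size via a random Gaussian projection and a chaining contradiction, and then region growing on a random radius), and of the standard Leighton--Rao-style iterative reduction from an $\alpha$-approximation for sparsest cut to balanced edge separators for the second part; both are essentially correct at the level of detail given. One small point worth making explicit in the second part: when you restrict the hypothetical optimal $2/3$-balanced separator $F$ to the current large component $C$, each component of $G[C]\setminus F$ is contained in a component of $G\setminus F$ and so has size at most $2n/3$, which combined with $|C|>3n/4$ gives a side of size at least $n/12$; that is exactly where the $2/3$ vs.\ $3/4$ slack is spent, as you say, and the charging $\sum_i |A_i|\le n$ closes the argument. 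The only caveat is that proving the ARV structure theorem in full (the union bound over exponentially many stretched chains) is a substantial piece of work that your sketch correctly flags as the delicate part but does not carry out; since the paper itself treats the theorem as imported, that is the appropriate level of rigor here.
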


\begin{theorem}[Feige, Hajiaghayi and Lee \cite{feige2008improved}]\label{thm:vertex-separators-approx}
There exists a polynomial-time which given a graph $G$
 outputs a $3/4$-balanced vertex separator of $G$ of size at most $c_{FHL} s \log^{1/2}$, where $s$ is the size of the minimum $1/2$-balanced vertex separator, for some universal constant $c_{FHL}>0$.
\end{theorem}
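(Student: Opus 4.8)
The plan is to obtain the separator from a semidefinite programming relaxation rounded by the Arora--Rao--Vazirani machinery \cite{DBLP:journals/jacm/AroraRV09}, adapted from edges to vertices. First I would write a vector relaxation of the \emph{minimum-ratio vertex separator} problem on $G$: assign to each vertex $v\in V(G)$ a unit vector $\bar v$ together with auxiliary vectors encoding the event ``$v$ lies in the separator'', impose that the squared Euclidean distances $\|\bar u-\bar v\|^2$ form a negative-type metric (all triangle inequalities hold after the $\ell_2^2$ transformation), and encode through the edge constraints that the two endpoints of an edge cannot be placed on opposite sides unless one of them is charged to the separator. The objective is set so that its optimum is at most the size $s$ of the best $1/2$-balanced vertex separator; the point of the relaxation is only that it is polynomial-time solvable, so it suffices to round it.

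Next I would invoke the ARV structure theorem for negative-type metrics: if the metric returned by the SDP is ``spread'', i.e.\ a constant fraction of vertex pairs are at average squared distance $\Omega(1)$, then there exist $L,R\subseteq V(G)$, each of size $\Omega(|V(G)|)$, with $\|\bar u-\bar v\|^2=\Omega(1/\sqrt{\log n})$ for every $u\in L$, $v\in R$. Given such a well-separated pair, a region-growing / sweep argument around $L$ extracts a vertex set $X$ separating $L$ from $R$ whose size is $O(\sqrt{\log n})$ times the SDP value, hence at most $c\, s\sqrt{\log n}$; the $\sqrt{\log n}$ is exactly the ARV gap. This already gives a cut that is balanced up to constant factors in the number of vertices.

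To reach a genuinely $3/4$-balanced separator I would then iterate: while some connected component of $G$ minus the current separator has more than $3/4$ of the vertices of $G$, rerun the sub-routine on that component. Each call removes a constant fraction of the oversized component while paying $O(s\sqrt{\log n})$ vertices, and since distinct components of $G\setminus X$ are vertex-disjoint, a standard charging argument in the spirit of Lemma~\ref{lem:additivity} bounds the total cost over all calls by $O(s\sqrt{\log n})$ as well; choosing the constants appropriately yields a $3/4$-balanced vertex separator of size $c_{FHL}\, s\sqrt{\log n}$. (Alternatively, the balance can be built directly into a single rounding.)

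The step I expect to be the main obstacle is the rounding for \emph{vertices} rather than edges: a threshold cut of a metric naturally yields an edge cut, and turning it into a small vertex separator requires the standard device of splitting each vertex into an edge of unit capacity and then arguing that the ARV well-separated pair survives this gadget with its size and separation intact, together with the bookkeeping needed to charge boundary vertices correctly. Checking that the vector program stays a valid relaxation after the vertex-splitting gadget, and handling the ``non-spread'' case (where one reads off a small separator directly, bypassing ARV) so that the spreading hypothesis can always be assumed, is where the technical work concentrates.
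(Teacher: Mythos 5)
This statement is imported verbatim from Feige, Hajiaghayi, and Lee \cite{feige2008improved}; the paper cites it without proof, so there is no internal argument to compare your sketch against. What I can do is assess your sketch on its own terms against the known FHL argument.

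Your high-level outline is the right one: set up a vector (SDP) relaxation whose value lower-bounds the minimum-ratio vertex separator, impose the negative-type ($\ell_2^2$) constraints, invoke the ARV structure theorem to extract a well-separated pair in the spread case and a trivial cut in the non-spread case, and then recurse on the heavy component to upgrade a pseudo-balanced cut to a genuinely $3/4$-balanced one, with a Lemma~\ref{lem:additivity}-style disjointness argument keeping the total cost at $O(s\sqrt{\log n})$. You also correctly identify the crux: ARV rounding is native to edge cuts, and passing to vertex cuts is where all the work is. The part I would push back on is the proposed fix. Reducing vertex cuts to edge cuts by the standard vertex-splitting gadget is exactly what \emph{does not} compose cleanly with ARV-style rounding --- the split edges receive no demand and the resulting $\ell_2^2$ metric on the doubled vertex set does not obviously satisfy the spreading and triangle conditions you need, and the well-separated pair can get ``absorbed'' into the zero-capacity gadget edges. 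FHL avoid the gadget entirely: they write a direct three-class vector relaxation (per vertex, vectors encoding ``left'', ``right'', ``separator'') with a line-metric/$\ell_2^2$ constraint, and then prove a \emph{vertex} version of the ARV structure theorem tailored to that relaxation, from which a threshold sweep yields a small vertex separator directly. So the gap in your sketch is not the overall architecture but the assumption that the gadget carries you across the vertex/edge boundary; the actual proof replaces that step with a new relaxation and a new rounding lemma. (Two minor notational remarks: the theorem as typeset is missing the word ``algorithm'' and the factor should read $c_{\mathrm{FHL}}\, s \log^{1/2} n$, not $c_{\mathrm{FHL}}\, s \log^{1/2}$.)
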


We recall that for any graph $G$, we have $\pw(G) = O(\tw(G) \log n)$.
Combined with Theorem \ref{thm:vertex-separators-approx} one obtains the following.

\begin{corollary}[Feige, Hajiaghayi and Lee \cite{feige2008improved}]\label{thm:tw-pw-approx}
There exists a polynomial-time algorithm which given a graph $G$ of treewidth $t$ outputs a tree decomposition of $G$ of width $O(t \log^{1/2} n)$ and a path decomposition of $G$ of width  $O(t \log^{3/2} n)$.
\end{corollary}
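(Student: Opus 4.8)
The plan is to first build a tree decomposition of $G$ of width $O(t\log^{1/2}n)$ by a standard recursive balanced‑separator construction on top of Theorem~\ref{thm:vertex-separators-approx}, and then to convert it into a path decomposition using the (constructive) fact that a tree on $N$ vertices has pathwidth $O(\log N)$.

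For the tree decomposition I would use two facts. First, since every induced subgraph $G[V']$ has treewidth at most $t$, a centroid‑bag argument on a width‑$t$ tree decomposition of $G[V']$ shows that for every vertex weighting there is a set $B\subseteq V'$ with $|B|\le t+1$ whose removal leaves every component of weight at most half the total. Second, to make this algorithmic through Theorem~\ref{thm:vertex-separators-approx}, which only produces \emph{unweighted} balanced separators, I would, given a ``boundary'' set $W\subseteq V'$, form the auxiliary graph $G'$ obtained from $G[V']$ by attaching $n$ pendant vertices to each vertex of $W$. Then $|V(G')|=\poly(n)$, the minimum $1/2$‑balanced separator of $G'$ has size at most $t+1$ (apply the first fact with the weighting that charges each vertex for itself plus its pendants), and running Theorem~\ref{thm:vertex-separators-approx} on $G'$ and then replacing each selected pendant by its attachment point yields a set $S_1\subseteq V'$ of size $h:=O(t\log^{1/2}n)$ such that every component of $G[V']\setminus S_1$ contains at most (say) $\tfrac78|W|$ vertices of $W$. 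Applying Theorem~\ref{thm:vertex-separators-approx} directly to $G[V']$ also gives a $3/4$‑balanced separator $S_2$ of $G[V']$ of size $O(t\log^{1/2}n)$.

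The construction then recurses on pairs $(V',W)$ with $W\subseteq V'\subseteq V(G)$ (splitting into connected components of $G[V']$ when needed). If $|V'|=O(h)$ we output a leaf bag $V'$; otherwise we compute $S_1,S_2$ as above, set $S=S_1\cup S_2$ (so $|S|\le 2h$), output the bag $W\cup S$, and for each component $C_i$ of $G[V']\setminus S$ recurse on $\bigl(C_i\cup S,\ (W\cap C_i)\cup S\bigr)$. The new boundary has size at most $\tfrac78|W|+2h$, so unrolling the geometric series keeps $|W|=O(h)$ throughout, hence all bags have size $O(h)=O(t\log^{1/2}n)$; and since $|C_i|\le\tfrac34|V'|$, the ground set shrinks geometrically, so the recursion has depth $O(\log n)$, and a standard accounting with the potential $|V'\setminus W|$ (which does not increase in total when passing to children) shows that only $\poly(n)$ bags are created, so the running time is polynomial. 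That the output is a valid tree decomposition is the usual check: each vertex of a node's ground set is either placed in that bag or forwarded to the unique component containing it (to all children, if it lies in $S$), hence eventually lands in a bag; an edge $uv$ has $\{u,v\}\subseteq W\cup S$ or $u,v$ in a common component $C_i$, so it is covered by induction; and the bags containing a fixed vertex $v$ form a connected subtree, because $v$ travels down a single root‑to‑node path until, if ever, it joins some separator $S$, after which it occurs in every bag below.

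Finally, the decomposition tree $T$ produced above has $\poly(n)$ nodes, so by the centroid decomposition of a tree (constructive: delete a centroid, handle the resulting subtrees one at a time while retaining the centroid in all bags, paying one extra unit of width per level of recursion) it admits a path decomposition $Q_1,\dots,Q_L$ of width $O(\log n)$. Setting $R'_j:=\bigcup_{t\in Q_j}R_t$ yields a path decomposition of $G$: every edge of $G$ lies in some $R_t$ with $t\in Q_j$, hence in $R'_j$; and for each vertex $v$ the set $T_v=\{t: v\in R_t\}$ is a connected subtree of $T$, so $\{j: Q_j\cap T_v\neq\emptyset\}$ is an interval, since a union of intervals indexed by the vertices of a connected graph any two adjacent of which overlap is itself an interval. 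Each $R'_j$ has size at most $(\text{width of }T+1)\cdot|Q_j|=O(t\log^{1/2}n)\cdot O(\log n)=O(t\log^{3/2}n)$, as required. The only real work here is the bookkeeping — the weighted‑to‑unweighted reduction for the separator oracle, and verifying that the recursion keeps the boundary small, terminates, and produces polynomially many bags — all of which is standard, which is why the statement is phrased as a corollary.
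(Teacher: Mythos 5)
Your proposal is correct and follows the same route the paper intends: build a tree decomposition of width $O(t\log^{1/2}n)$ by recursing on the balanced vertex separators of Theorem~\ref{thm:vertex-separators-approx}, then apply the standard ``a tree on $N$ nodes has pathwidth $O(\log N)$'' conversion, which is exactly the paper's remark that $\pw(G)=O(\tw(G)\log n)$ combined with Theorem~\ref{thm:vertex-separators-approx} yields the corollary. The paper leaves all the bookkeeping to the reader; you have supplied it, including the pendant reduction that lets the unweighted separator oracle balance the boundary set, the geometric contraction keeping $|W|=O(t\log^{1/2}n)$, and the interval argument for the final path decomposition, all of which check out.
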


Finally, we need the following well-known theorem about the
flow-cut gap for multicommodity flows.

\begin{theorem}[Leighton and Rao \cite{leighton1999multicommodity}]\label{thm:LR}
The multi-commodity flow cut gap for product demands in $n$-vertex graphs is $O(\log n)$.
\end{theorem}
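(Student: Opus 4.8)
The plan is to prove the nontrivial direction: that for product demands $D(u,v)=\pi(u)\pi(v)$ the maximum concurrent multicommodity flow value $\lambda$ (the largest $\lambda$ for which one can simultaneously route $\lambda D(u,v)$ units between every pair $u,v$ within the capacities) and the sparsest cut ratio $\phi^\ast=\min_{S}\frac{c(S,\overline S)}{D(S,\overline S)}$ satisfy $\phi^\ast=O(\log n)\cdot\lambda$; the reverse bound $\lambda\le\phi^\ast$ is immediate, since any feasible flow must cross each cut within its capacity, so together these give the flow--cut gap $\phi^\ast/\lambda=O(\log n)$ (for uniform demands $D(u,v)=1$, $D(S,\overline S)=|S|\cdot|\overline S|$, which recovers the notion of sparsity from the preliminaries up to a factor of $n$).

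First I would write the maximum concurrent flow as a linear program and pass to its dual. The dual assigns a nonnegative length $\ell_e$ to each edge $e$ and minimizes $\sum_e c_e\ell_e$ subject to $\sum_{u,v}D(u,v)\,d_\ell(u,v)\ge 1$, where $d_\ell$ is the shortest-path metric induced by the lengths $\ell$. By LP duality,
\[
\lambda=\min_{\ell\ge 0}\ \frac{\sum_e c_e\ell_e}{\sum_{u,v}D(u,v)\,d_\ell(u,v)} .
\]
Hence it suffices to show that for every metric $d$ on $V(G)$ there is a cut $S$ with $\dfrac{c(S,\overline S)}{D(S,\overline S)}\le O(\log n)\cdot\dfrac{\sum_e c_e\, d(e)}{\sum_{u,v}D(u,v)\,d(u,v)}$: applying this to the optimal dual metric yields $\phi^\ast\le O(\log n)\cdot\lambda$. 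So the whole content is a metric-rounding statement.

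For the rounding I would use the region-growing (ball-growing) technique. After replacing $d$ by a suitable truncation — which keeps it a metric, only lowers $\sum_e c_e d(e)$, and, after scaling so that $\sum_{u,v}D(u,v)d(u,v)=1$, guarantees that a constant fraction of the $\pi\!\otimes\!\pi$-mass of pairs lies at distance $\Omega(1)$ — write $V=\sum_e c_e d(e)$, so the target is a cut of sparsity $O(V\log n)$. Think of each edge $e$ as a tube of cross-section $c_e$ and length $d(e)$ and give each vertex a base mass $V/n$, so the total mass is at most $2V$. For a source $r$ and radius $\rho$ let $\mathrm{vol}(r,\rho)$ be the base mass of the ball $B(r,\rho)$ plus the tube-volume within distance $\rho$ of $r$, and let $c(r,\rho)$ be the total capacity of edges crossing $\partial B(r,\rho)$; the key identity is $\frac{\partial}{\partial\rho}\mathrm{vol}(r,\rho)=c(r,\rho)$ for almost every $\rho$. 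Integrating over a radius interval of constant length, $\mathrm{vol}$ grows by a factor at most $2V/(V/n)=2n$, so $\int c(r,\rho)/\mathrm{vol}(r,\rho)\,d\rho\le\ln(2n)$; hence the measure of radii in that interval with $c(r,\rho)>C\log n\cdot\mathrm{vol}(r,\rho)$ is tiny for $C$ a large enough constant, and a good radius $\rho^\ast$ exists with $c(r,\rho^\ast)\le O(\log n)\cdot\mathrm{vol}(r,\rho^\ast)\le O(V\log n)$.

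It then remains to choose the source $r$ so that the cut $S=B(r,\rho^\ast)$ separates a constant fraction of the demand, i.e. $D(S,\overline S)=\pi(S)\pi(\overline S)=\Omega(1)$; combined with $c(S,\overline S)=O(V\log n)$ this gives the required sparsity. Here one uses the constant-distance guarantee from the truncation: by averaging there is a vertex from which a constant fraction of the $\pi$-mass is $\Omega(1)$-far, so a ball of small constant radius around it leaves a constant fraction of the $\pi$-mass outside. The genuinely delicate point — and the step I expect to be the main obstacle — is ensuring that the \emph{inside} of the ball also carries a non-negligible share of the $\pi$-mass (otherwise $\pi(S)$, hence $D(S,\overline S)$, could be tiny); the standard remedy is to run the region-growing iteratively, peeling off a sequence of disjoint balls, each with a good boundary-to-volume ratio, until their union captures a constant fraction of the mass, and then argue that one of them or their union is the desired sparse cut. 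An alternative, cleaner but anachronistic, route is to invoke Bourgain's theorem that every $n$-point metric embeds into $\ell_1$ with distortion $O(\log n)$: the optimal dual metric is then, up to a factor $O(\log n)$, a non-negative combination of cut metrics, and averaging extracts a single cut attaining the bound, bypassing the region-growing bookkeeping entirely — but this uses machinery postdating \cite{leighton1999multicommodity}.
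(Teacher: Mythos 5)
The paper cites this theorem to Leighton--Rao without reproducing a proof, so there is no in-paper argument to compare against. Your sketch follows the standard Leighton--Rao route — LP duality to reduce to metric rounding, then region/ball growing with a source chosen (by iterative peeling) so both sides of the ball carry constant demand mass — and this matches the original argument; the Bourgain-embedding alternative you mention is also a known, correct (post-dating) derivation.
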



\paragraph{Randomization.}
Some of the algorithms presented in this paper are randomized.
In order to simplify notation, we say that an algorithm succeeds \emph{with high probability} when the failure probability is at most $n^{-c}$.
When the target running time is polynomial, we need $c$ to be some sufficiently large constant.
Similarly, when the target running time is $n^{O(\log n /\log \log n)}$, we need $c=\Omega(\log n /\log\log n)$.
Both guarantees can be achieved by repeating some algorithm that succeeds with constant probability, either at most $O(\log n)$ or $O(\log^2 n / \log \log n)$ times respectively.

\section{The main algorithm}\label{sec:main}

In this Section we present the main algorithm for approximating minimum planarization.
We first introduce some definitions.
Intuitively, a patch is a small irrelevant subgraph, that is contained inside some disk in any optimal solution.
The framing of a patch is a new graph, that does not contain the interior of the patch, and instead contains grid of constant width attached to the boundary of the patch.
Computing a framing in this manner will allow us to extend an approximate solution to the whole graph via an embedding into a higher genus surface.

\begin{definition}[Patch]
Let $G$ be a graph, let $\Gamma\subseteq G$, and let $C\subseteq \Gamma$ be a cycle.
Suppose that there exists a planar drawing of $\Gamma$ having $C$ as the outer face.
Then we say that the ordered pair $(\Gamma,C)$ is a \emph{patch} (of $G$).
\end{definition}

\begin{definition}[Framing]
Let $G$ be a graph and let $(\Gamma,C)$ be a patch of $G$.
Suppose that $V(C)=\{v_0^0,\ldots,v_{t-1}^0\}$, and $E(C)=\{\{v_0^0,v_1^0\},\ldots,\{v_{t-2}^0,v_{t-1}^0\}, \{v_{t-1}^0,v_0^0\}\}$.
Let $G^\framed$ be the graph with
\[
V(G^\framed) = (V(G) \setminus (V(\Gamma)\setminus V(G))) \cup \bigcup_{i=1}^{t-1} \bigcup_{j=1}^{3}\{v_i^j\},
\]
where $v_0,\ldots,v_{t-1}'$ are new vertices,
and
\[
E(G^\framed) = (E(G) \setminus (E(\Gamma)\setminus E(C))) \cup \left(\bigcup_{i=0}^{t-1} \bigcup_{j=0}^2\{\{v_i^j,v_i^{j+1}\}, \{v_i^j,v_{i+1 \mymod t}^j\}\} \right).
\]
We refer to the graph $G^\framed$ as the \emph{$(\Gamma,C)$-framing of $G$} (see Figure \ref{fig:framing} for an example).
\end{definition}

\begin{figure}
\begin{center}
\scalebox{1.1}{\includegraphics{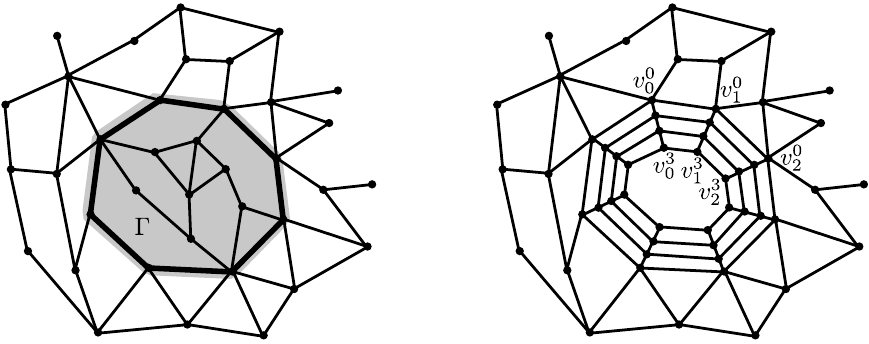}}
\caption{A patch $(\Gamma, C)$ in a graph $G$, where the shaded area contains the graph $\Gamma$, and the cycle $C$ is drawn in bold (left), and the $(\Gamma,C)$-framing of $G$ (right).\label{fig:framing}}
\end{center}
\end{figure}

Using the above definitions, we can now define the concept of a pruning sequence.
Intuitively, this consists of a sequence of operations that inductively simplify the graph until it becomes planar.

\begin{definition}[Pruning sequence]
Let $G$ be a graph.
Let ${\cal G}=(G_0,A_0),\ldots,(G_\ell,A_\ell)$ be a sequence satisfying the following properties:
\begin{description}
\item{(1)}
For all $i\in \{0,\ldots,\ell\}$, $G_i$ is a graph.
Moreover $G_0=G$ and $G_\ell$ is planar.

\item{(2)}
For all $i\in \{1,\ldots,\ell\}$, exactly one the following holds:
\begin{description}
\item{(2.1)}
$A_{i-1} \subset V(G_{i-1})$ and $G_i=G_{i-1} \setminus A_{i-1}$.
We say that $i-1$ is a \emph{deletion step} (of ${\cal G}$).
\item{(2.2)}
 $A_{i-1}=(\Gamma_{i-1}, C_{i-1})$, where $(\Gamma_{i-1}, C_{i-1})$ is a patch in $G_{i-1}$, and $G_i$ is the $(\Gamma_{i-1}, C_{i-1})$-framing of $G_{i-1}$.
We say that $i-1$ is a \emph{framing step} (of ${\cal G}$).
\end{description}
We also let $A_\ell=\emptyset$.
\end{description}
We say that ${\cal G}$ is a \emph{pruning sequence} (for $G$).
We also define the \emph{cost} of ${\cal G}$ to be
\[
\cost({\cal G}) = \sum_{i\in \{0,\ldots,\ell-1\} : i \text{ is a deletion step}} |A_{i}|
\]
\end{definition}

The next Lemma shows how to compute a pruning sequence of low cost.
We remark that the algorithm for computing a pruning sequence calls recursively the whole approximation algorithm on graphs of smaller size.
By controlling the size of these subgraphs, we obtain a trade-off between the running time and the approximation guarantee.
We remark that this is the main technical contribution of this paper.
The proof of the next Lemma uses several other results, and spans the majority of the rest of the paper.

\begin{lemma}[Computing a pruning sequence]\label{lem:pruning}
Let $G$ be an $n$-vertex graph, and let $\rho>0$.
Suppose that there exists an algorithm $\AlgApprox$ which for all $n'\in \mathbb{N}$, with $n'<n$, given an $n'$-vertex graph $G'$, outputs some $S'\subset V(G')$, such that $G'\setminus S'$ is planar, with $|S'|\leq \approxfactor \cdot \mvp(G')$,
for some $\approxfactor \geq 2\rho$,
in time $T_\myapprox(n')$, where $T_\myapprox:\mathbb{N}\to\mathbb{N}$ is increasing and convex.
Then the algorithm $\AlgPruning(G)$ returns some pruning sequence for $G$ of cost at most $O(\mvp(G) \sqrt{\approxfactor} \log^{15} n)$.
Moreover
\[
T_\pruning(n) \leq n^{O(1)} + T_\pruning(n,k) \log n,
\]
and
\[
T_\pruning(n,k) \leq n^{O(1)} + \max\{T_\pruning(n/4)+T_\pruning(3n/4), T_\myapprox(n/\rho)2\rho \log n + T_\pruning(n-1,k)\},
\]
where $T_\pruning(n)$ and $T_\pruning(n,k)$ denote the worst-case running time of $\AlgPruning(G)$ and $\AlgPruning(G,k)$ respectively on a graph of $n$ vertices.
\end{lemma}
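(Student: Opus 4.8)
The plan is to give two mutually recursive procedures, $\AlgPruning(G)$ and $\AlgPruning(G,k)$, and then check (i) that the returned object is a valid pruning sequence, (ii) the cost bound, and (iii) the two running-time recursions. $\AlgPruning(G)$ runs $\AlgPruning(G,k)$ for every guess $k\in\{2^0,2^1,\ldots,2^{\lceil\log n\rceil}\}$ (each call either returns a valid pruning sequence or reports failure) and outputs the cheapest valid sequence found; since $\AlgPruning(G,k)$ will be seen to succeed whenever $k\ge\mvp(G)$, the smallest successful guess is $\le 2\,\mvp(G)$, which yields the stated dependence on $\mvp(G)$. The procedure $\AlgPruning(G,k)$ does the following. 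If $G$ is planar (tested in linear time by \cite{DBLP:journals/jacm/HopcroftT74}) it returns the trivial one-term sequence. Otherwise it runs the algorithm of Theorem~\ref{thm:vertex-separators-approx}. If this yields a $3/4$-balanced vertex separator $S$ of size at most $c\sqrt{\approxfactor}\,k\log^{O(1)} n$ (the \textbf{small-treewidth case}), it sets $A_0=S$, $G_1=G\setminus S$, and processes the non-planar connected components $C_1,\ldots,C_m$ of $G_1$ one at a time by calling $\AlgPruning(C_j)$ and splicing the resulting pruning sequences into the sequence for $G$ (legitimate, since every deletion/framing step affects only one component and the terminal graph is a disjoint union of planar graphs). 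If no such small separator is produced then, by Theorem~\ref{thm:vertex-separators-approx}, $\tw(G)\ge\sqrt{\approxfactor}\,k\log^{\Omega(1)} n$ (the \textbf{large-treewidth case}); we then run, in order, the doubly-well-linked-set algorithm of Section~\ref{sec:doubly}, the grid-minor algorithm of Section~\ref{sec:grid}, and the partially-triangulated-grid contraction of Section~\ref{sec:pt_grid}, obtaining a sufficiently large partially-triangulated grid $H'$ together with an apex set $X$. If the neighbourhood of $X$ in $H'$ is large we obtain a semi-universal set $A$ via Section~\ref{sec:semi-universal}, set $A_0=A$, $G_1=G\setminus A$, and recurse via $\AlgPruning(G_1,k)$; otherwise we obtain a patch $(\Gamma,C)$ via Sections~\ref{sec:irrelevant} and~\ref{sec:patch}, set $A_0=(\Gamma,C)$, let $G_1$ be its framing, and recurse via $\AlgPruning(G_1,k)$.

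For validity and termination, note that every step above is either a deletion step or a framing step in the sense of the definition, so the produced sequence is a pruning sequence provided it terminates at a planar graph; it ends at a planar graph by the base case. For termination, in the small-treewidth case each recursive call is on a strictly smaller graph, while in the large-treewidth case $|V(G_1)|\le|V(G)|-1$: a semi-universal set is non-empty, and the patch produced by Section~\ref{sec:patch} is chosen large enough that its interior exceeds the $3|V(C)|$ vertices added by framing. The substantive correctness statements used here---that a semi-universal $A$ satisfies $\mvp(G\setminus A)\le\mvp(G)-\tfrac{2}{3}|A|$, that framing does not increase $\mvp$, that in the large-treewidth case with $k\ge\mvp(G)$ one of the two alternatives always applies, and that all these randomized subroutines run in $n^{O(1)}$ time and succeed with high probability---are exactly what the cited sections provide; we take a union bound over the $\poly(n)$ recursive calls, using the ``with high probability'' convention of Section~\ref{sec:prelim}.

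For the cost bound, framing steps contribute nothing to $\cost(\mathcal G)$, so it equals the total size of all deleted separators plus all deleted semi-universal sets over the recursion tree $\mathcal T$ (a small-treewidth node has the non-planar components as children; a large-treewidth node has one child). For the semi-universal part, one shows by induction over $\mathcal T$---applying Lemma~\ref{lem:additivity} at the branching nodes, $\mvp(G\setminus A)\le\mvp(G)-\tfrac{2}{3}|A|$ at the semi-universal nodes, and $\mvp(G_1)\le\mvp(G)$ at the framing nodes---that the sum of $\tfrac{2}{3}|A_u|$ over semi-universal nodes $u$ in the subtree rooted at $G$ is at most $\mvp(G)$, so these contribute $O(\mvp(G))$. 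For the separators, observe that each small-treewidth node is the unique branching node of a ``chain'' of large-treewidth steps hanging off a node $R$ that is either $G$ or a component created at a shallower branching node, that along the chain $\mvp$ is non-increasing and the budget $k$ is fixed and (in the run we care about) $O(\mvp(R))$, so its deleted separator has size $O(\sqrt{\approxfactor}\,\mvp(R)\log^{O(1)} n)$. Grouping chains by the branching-depth $d$ of $R$, and using that a $3/4$-balanced separator shrinks components by a constant factor (so $d\le O(\log n)$), iterating Lemma~\ref{lem:additivity} level by level gives $\sum_{R\text{ at depth }d}\mvp(R)\le\mvp(G)$; hence the separators at each depth contribute $O(\sqrt{\approxfactor}\,\mvp(G)\log^{O(1)} n)$, and the total over all $O(\log n)$ depths is again $O(\sqrt{\approxfactor}\,\mvp(G)\log^{O(1)} n)$. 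Tracking the polylogarithmic losses---the $\log^{1/2}n$ of Theorem~\ref{thm:vertex-separators-approx}, the polylogarithmic gap between treewidth and the side of the grid contraction needed for the dichotomy, the $O(\log n)$ quad-tree levels of the irrelevant-vertex step, and the $O(\log n)$ branching depth---yields the exponent $15$; the hypothesis $\approxfactor\ge 2\rho$ is used to absorb the factor-$\rho$ loss from the recursive $\AlgApprox$ calls on subgrids inside Section~\ref{sec:irrelevant}.

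For the running time, $\AlgPruning(G)$ makes $O(\log n)$ calls to $\AlgPruning(G,k)$ plus $n^{O(1)}$ overhead, and the worst-case running time of $\AlgPruning(G,k)$ on an $n$-vertex graph does not depend on $k$, giving $T_\pruning(n)\le n^{O(1)}+T_\pruning(n,k)\log n$. For $\AlgPruning(G,k)$ itself, everything outside the recursive calls (Theorem~\ref{thm:vertex-separators-approx} and Sections~\ref{sec:doubly}--\ref{sec:patch}) runs in $n^{O(1)}$ time; in the small-treewidth case we recurse on non-planar components of total size $\le n$, each of size $\le 3n/4$, which by convexity and monotonicity of $T_\pruning$ costs at most $T_\pruning(n/4)+T_\pruning(3n/4)$; in the large-treewidth case the irrelevant-vertex subroutine of Section~\ref{sec:irrelevant} issues at most $2\rho\log n$ calls to $\AlgApprox$, each on a graph with at most $n/\rho$ vertices (contributing $T_\myapprox(n/\rho)\,2\rho\log n$), after which we recurse on a graph with at most $n-1$ vertices and the same $k$ (contributing $T_\pruning(n-1,k)$); the maximum of the two cases gives the second recursion. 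The step I expect to be the main obstacle is the separator cost accounting: ensuring, via the chain/branching-depth decomposition and the level-by-level use of Lemma~\ref{lem:additivity}, that the separators sum to $O(\mvp(G)\sqrt{\approxfactor}\log^{O(1)} n)$ rather than incurring a spurious factor of $\mvp(G)$ or $n$---and, interlocking with that, making the many randomized subroutines succeed simultaneously across all recursive steps.
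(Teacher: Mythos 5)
Your proposal is correct and follows essentially the same approach as the paper: the $k$-guessing outer loop, the small-/large-treewidth dichotomy via Corollary~\ref{cor:k-apex_grid_minor_general_combed} and Lemmas~\ref{lem:pt_grid}, \ref{lem:semi-universal}, \ref{lem:patch}, the separator-cost accounting over the $O(\log n)$ branching levels using Lemma~\ref{lem:additivity}, the semi-universal accounting via $\mvp(G\setminus A)\le\mvp(G)-\tfrac{2}{3}|A|$, and the two running-time recursions. The only cosmetic deviations --- returning the cheapest of all guesses instead of installing an explicit cost cutoff inside $\AlgPruningDecision$ and returning on the first successful guess, and recursing on the non-planar components rather than on a fixed two-way split of $G\setminus S$ --- do not change the analysis.
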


The next Lemma shows that given a pruning sequence of low cost, we can efficiently compute an embedding into a surface of low Euler genus, after deleting a small number of vertices.

\begin{lemma}[Embedding into a higher genus surface]\label{lem:embedding}
Let $G$ be a graph and let ${\cal G}$ be a pruning sequence for $G$.
Then there exists a polynomial-time algorithm which given $G$ and ${\cal G}$ outputs some $X\subseteq V(G)$, with $|X|\leq \cost({\cal G})$, and an embedding of $G\setminus X$ into some surface of Euler genus $O(\cost({\cal G}))$.
\end{lemma}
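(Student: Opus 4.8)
The plan is to process the pruning sequence ${\cal G}=(G_0,A_0),\ldots,(G_\ell,A_\ell)$ \emph{backwards}, starting from the planar graph $G_\ell$, and to maintain inductively an embedding of (a subgraph of) the current graph $G_i$ into a surface whose Euler genus is bounded by the cost of the suffix of the sequence that has already been processed, together with a vertex set $X_i$ that we will eventually delete, of size at most the cost of that suffix restricted to deletion steps. Concretely, we build a sequence $\phi_\ell,\phi_{\ell-1},\ldots,\phi_0$, where $\phi_i$ is an embedding of $G_i\setminus X_i$ into a surface ${\cal S}_i$, and we show $\eg({\cal S}_i)=O(\sum_{j\geq i,\, j\text{ deletion}} |A_j|)$ and $|X_i|\leq \sum_{j\geq i,\, j\text{ deletion}}|A_j|$. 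The base case is $\phi_\ell$: a planar embedding of $G_\ell$ into the sphere, with $X_\ell=\emptyset$.

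For the inductive step we distinguish the two types of steps. If $i$ is a \emph{deletion step}, so $G_i = G_{i-1}\setminus A_{i-1}$, then we simply put $X_{i-1} = X_i \cup A_{i-1}$ and $\phi_{i-1}=\phi_i$ (viewed as an embedding of $G_{i-1}\setminus X_{i-1} = G_i\setminus X_i$); both the genus bound and the size bound pick up the extra $|A_{i-1}|$ term, matching the definition of $\cost$. If $i$ is a \emph{framing step}, so $A_{i-1}=(\Gamma_{i-1},C_{i-1})$ is a patch in $G_{i-1}$ and $G_i$ is its framing, the situation is the heart of the argument. In $G_i$ we have the width-$3$ grid ``frame'' glued along $C_{i-1}$ in place of the interior of $\Gamma_{i-1}$. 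We consider how $X_i$ meets this frame. If $X_i$ avoids the frame entirely, then in the current embedding $\phi_i$ the boundary cycle $C_{i-1}$ of the patch bounds a disk (the presence of the intact width-$3$ grid around it certifies that $C_{i-1}$ is null-homotopic and its ``inner'' side is a disk region), and since $(\Gamma_{i-1},C_{i-1})$ is a patch it has a planar drawing with $C_{i-1}$ as outer face; we glue that drawing into the disk, obtaining an embedding of $G_{i-1}\setminus X_{i-1}$ into the same surface, with $X_{i-1}=X_i$. If instead $X_i$ hits $m_{i-1}\geq 1$ vertices of the frame, the frame may have been cut open and $C_{i-1}$ need no longer bound a disk in $\phi_i$; but, as sketched in item~8 of the overview (following the technique of \cite{chekuri2013approximation}), the width-$3$ grid guarantees that after removing at most $m_{i-1}$ of its vertices the remaining structure still lets us attach at most $O(m_{i-1})$ handles and crosscaps to ${\cal S}_i$ so that $C_{i-1}$ bounds a disk in the new surface ${\cal S}_{i-1}$; we then glue in the planar drawing of $\Gamma_{i-1}$ as before. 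We again set $X_{i-1}=X_i$. The genus increases by $O(m_{i-1})$ at this step, and we will charge the sum $\sum_{i\text{ framing}} m_{i-1}$ against $\cost({\cal G})$: each vertex of $X_0$ lies in at most $O(1)$ distinct frames encountered along the sequence (the frames created at distinct framing steps are vertex-disjoint, or at least $O(1)$-overlapping, by the disjoint-support structure of the patches), so $\sum m_{i-1}=O(|X_0|)=O(\cost({\cal G}))$.

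Combining the two cases over all $\ell$ steps, we reach $\phi_0$: an embedding of $G_0\setminus X_0 = G\setminus X$ into ${\cal S}_0$ with $\eg({\cal S}_0)=O(\cost({\cal G}))$ and $|X|=|X_0|\leq \cost({\cal G})$. Every individual step — gluing a planar disk, adding $O(m)$ handles — is a polynomial-time surface operation, and there are at most $\ell\leq n^{O(1)}$ steps, so the whole procedure runs in polynomial time, giving the claimed algorithm. The main obstacle I expect is making the framing step fully rigorous: precisely, (a) arguing that an intact width-$3$ frame around $C_{i-1}$ forces $C_{i-1}$ to bound a disk in the current embedding — this needs a careful local topological argument about how a $3\times t$ cylindrical grid can embed, ruling out the grid ``wrapping around'' a handle — and (b) quantifying, when $m_{i-1}$ frame vertices are deleted, exactly how many handles/crosscaps suffice to restore the disk property, and verifying that this is $O(m_{i-1})$ uniformly; this is where the width being $3$ (rather than $1$ or $2$) is used, and where the bookkeeping that $\sum m_{i-1}=O(\cost({\cal G}))$ must be pinned down against the structure of the patches produced by $\AlgPruning$.
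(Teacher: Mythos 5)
Your proposal is essentially the paper's proof: process the pruning sequence in reverse, carry the deletion steps trivially, and at each framing step either paste the patch's planar drawing into a face when $X$ misses the frame, or partition the damaged width-$3$ frame into $O(m)$ pieces and attach a handle or crosscap per piece, charging the total against $\cost({\cal G})$ via the vertex-disjointness of the frames introduced at distinct framing steps. The two points you flag as needing care are handled in the paper exactly as you sketch: the disk-bounding property of the innermost frame ring is maintained as a reverse-induction invariant, and the piece count is bounded by $3|F_i'\cap X|$, giving a genus increase of at most $6\,|F_i'\cap X|$ per step.
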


The next Lemma gives an efficient algorithm for planarizing a graph embedded into some surface of low Euler genus.

\begin{lemma}[Planarizing a surface-embedded graph]\label{lem:planarize_surface}
Let $G$ be a graph, and let $\phi$ be an embedding of $G$ into some surface of Euler genus $g>0$.
Then there exists a polynomial-time algorithm which given $G$ and $\phi$, outputs some planarizing set $X$ for $G$, with $|X| = O(g \log n + \mvp(G) \log^2 n)$.
\end{lemma}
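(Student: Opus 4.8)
The plan is to reduce the Euler genus of the ambient surface to $0$ one unit at a time, by repeatedly deleting a small vertex set lying on a short non-contractible noose and cutting the surface along it. First I would dispose of trivial cases: if $g\geq n$ we return $X=V(G)$, which is planarizing with $|X|=n=O(g\log n)$, so assume $g<n$; and since Euler genus is additive over connected components and, by Lemma~\ref{lem:additivity}, $\mvp$ is additive over vertex-disjoint subgraphs, we may treat each connected component of $G$ separately (restricting $\phi$), so assume $G$ is connected.

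Now the main loop. Suppose at some stage we hold a connected graph $H$ embedded by $\psi$ into a surface ${\cal S}'$ of Euler genus $h\geq 1$. If there is a non-contractible simple closed curve disjoint from $H$ (i.e.\ lying inside a single face), I would cut ${\cal S}'$ along it: this strictly decreases, at no cost, the Euler genus of the surface carrying $H$. Otherwise every non-contractible curve meets $H$, so it can be combed across faces to a $\psi$-noose, and I would compute a shortest non-contractible $\psi$-noose $\gamma$ — polynomial time, via a shortest non-contractible cycle computation in the radial graph of $(H,\psi)$. Let $Y=V(\gamma)\cap V(H)$; I would delete $Y$ and cut ${\cal S}'$ along $\gamma$. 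A routine Euler-characteristic computation shows that, after capping the new boundary circles with disks, $H\setminus Y$ embeds into either a single surface of Euler genus at most $h-1$ (if $\gamma$ is non-separating) or two surfaces of positive Euler genus summing to $h$ (if $\gamma$ is separating); in both cases I would recurse on each resulting connected surface-embedded graph, stopping when the Euler genus reaches $0$, i.e.\ the graph is planar. The output $X$ is the union of all deleted sets $Y$, and all steps run in polynomial time.

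The crucial quantitative claim — and the step I expect to be the main obstacle — is that the noose found at a stage of genus $h$ satisfies $\len(\gamma)=O\big((1+\mvp(H)/h)\log n\big)$. I would prove this via face-width: $\len(\gamma)$ is exactly the face-width of $(H,\psi)$, and if the face-width is at least $w$ then, by the surface grid/wall theorems, $H$ has as a minor the ${\cal S}'$-grid of order $\Omega(w)$; this minor contains, around each of the $\Omega(h)$ handles, a pairwise vertex-disjoint toroidal-grid-like subgraph of order $\Omega(w)$, and planarizing each of these requires $\Omega(w)$ vertex deletions (as one checks directly for toroidal grids), so by additivity and minor-monotonicity of $\mvp$ one gets $\mvp(H)=\Omega(wh)$, i.e.\ $w=O(\mvp(H)/h)$; a constant additive term covers the small-face-width case. (The extra logarithmic factor in the stated bound is slack, or enters through an approximate/algorithmic version of the grid-extraction step.) Making the per-handle argument precise — producing the pairwise-disjoint toroidal structures inside an ${\cal S}'$-grid and lower-bounding their planarization numbers — is the delicate part; an alternative route argues directly from an optimal planarizing set $S$ of $H$, using that the $|S|$ points $\psi(S)$ are distributed among the faces of $H\setminus S$ that carry the $h$ units of genus, and routing the noose through a cheap such face.

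Finally, the accounting. Vertices are deleted only at noose-cuts. In the recursion tree each cut performed at genus $\geq 2$ strictly decreases the potential $\Phi=\sum_{\text{pieces}}\max(0,\text{genus}-1)$, which starts below $g$, and at most $O(g)$ pieces ever reach genus $1$; hence the number of recursive calls at genus $\geq 1$ is $O(g)$. Moreover any two such calls with the same genus value are incomparable in the recursion tree and therefore act on vertex-disjoint subgraphs of $G$. Writing $H_c,h_c$ for the graph and genus at call $c$ and using Lemma~\ref{lem:additivity},
\[
|X| = \sum_{c} O\!\Big(\big(1+\mvp(H_c)/h_c\big)\log n\Big)
\le O(g\log n) + O(\log n)\sum_{h'=1}^{g}\frac{1}{h'}\!\!\sum_{c:\,h_c=h'}\!\!\mvp(H_c)
= O\big(g\log n+\mvp(G)\log^2 n\big),
\]
since $\sum_{c:\,h_c=h'}\mvp(H_c)\le \mvp(G)$ for each $h'$ and $\sum_{h'=1}^{g}1/h'=O(\log g)=O(\log n)$, using $g<n$. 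This gives the claimed bound.
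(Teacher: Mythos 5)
Your outer loop is the same as the paper's: compute a shortest noose in the current embedding (via a shortest non-contractible cycle computation), delete its vertices, cut the surface, and recurse until each piece is planar; then charge the deleted vertices against $\mvp$ with a harmonic-series argument over genus levels. Your accounting — the potential $\Phi$, the observation that pieces arising at the same genus value sit in vertex-disjoint subgraphs so Lemma~\ref{lem:additivity} applies levelwise, the harmonic sum $\sum 1/h' = O(\log n)$ — is correct, and in fact a bit more careful than the paper's own exposition, which maintains a single chain $G_0 \supset G_1 \supset \cdots$ and does not explicitly treat separating nooses.

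The genuine gap is exactly the one you flag yourself: the noose-length bound $\len(\gamma)=O\bigl((1+\mvp(H)/h)\log n\bigr)$. The paper isolates precisely this as Lemma~\ref{lem:short_noose}, and proves it by an argument quite different from your face-width sketch. In brief: fix an optimal planarizing set $X$ with $|X|=k$, so $G\setminus X$ is planar; iterate Mohar's Lemma~\ref{lem:noose2} to find length-$\le 2$ nooses in the (planar) remainder and cut, accumulating a vertex set $Y = X\cup A_0\cup\cdots$ of size $O(k+g)$ together with a cut graph $J$ on $Y$ with $O(k+g)$ edges; reduce $J$ to a trivalent cut graph $J''$ that still has $\Omega(g)$ edges; and use the fact that a trivalent graph of girth $\ell$ has $2^{\Omega(\ell)}$ vertices to iteratively extract $\Omega(g/\log(k+g))$ disjoint short cycles, the shortest of which has $O((1+k/g)\log(k+g))$ edges and corresponds to a noose of that length. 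Your proposed route — face-width $w$ on Euler genus $h$ forces $\mvp(H)=\Omega(wh)$, via $\Omega(h)$ pairwise vertex-disjoint toroidal-grid minors each of order $\Omega(w)$ — is a genuinely different idea, but the per-handle disjointness is precisely the hard step and you give no argument for it: the surface-grid theorems produce a single $\Sigma$-wall of order $\Omega(w)$, and carving out of it $\Omega(h)$ vertex-disjoint toroidal sub-walls of order $\Omega(w)$, one per handle, requires controlling how the wall wraps around the genus and is not an off-the-shelf fact. Moreover, as stated your inequality drops the $\log$ factor and is strictly stronger than Lemma~\ref{lem:short_noose}; the paper's $\log(k+g)$ is forced in its argument by the tight $2^{\Omega(\ell)}$ girth bound for cubic graphs, so you would need a genuinely new theorem, not a routine grid-extraction, to avoid it. Your parenthetical ``alternative route'' via an optimal planarizing set $S$ is much closer in spirit to the paper's actual proof, but you leave it undeveloped. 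As written, the central quantitative lemma is not established, and therefore neither is the result.
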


We are now ready to present our main results in this paper, which combines the above results.

\begin{theorem}\label{thm:main}
Let $\rho>0$, and $\approxfactor =\max\{2\rho, O(\log^{32} n)\}$.
Then there exists a $\approxfactor$-approximation algorithm for the minimum vertex planarization problem with running time
\[
T_{\myapprox}(n)\leq n^{O(1)} + \max\{T_{\myapprox}(n/4)+T_{\myapprox}(3n/4), T_{\myapprox}(n/\rho)2\rho\log n + T_{\myapprox}(n-1)\}.
\]
\end{theorem}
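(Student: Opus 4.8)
The plan is to prove Theorem~\ref{thm:main} by assembling the three main ingredients—Lemma~\ref{lem:pruning}, Lemma~\ref{lem:embedding}, and Lemma~\ref{lem:planarize_surface}—into a single recursive algorithm $\AlgApprox$, and then verifying that the claimed running-time recursion and approximation guarantee hold by induction on $n$. First I would set $\approxfactor = \max\{2\rho, c\log^{32} n\}$ for a suitable constant $c>0$, and define $\AlgApprox(G)$ on an $n$-vertex graph as follows: if $G$ is already planar (checkable in linear time via Hopcroft–Tarjan), return $\emptyset$; otherwise, run $\AlgPruning(G)$ from Lemma~\ref{lem:pruning}, which—using the inductive hypothesis that $\AlgApprox$ is an $\approxfactor$-approximation on all graphs with fewer than $n$ vertices—returns a pruning sequence ${\cal G}$ for $G$ of cost $O(\mvp(G)\sqrt{\approxfactor}\log^{15} n)$. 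Then apply Lemma~\ref{lem:embedding} to ${\cal G}$ to obtain $X\subseteq V(G)$ with $|X|\le \cost({\cal G})$ and an embedding of $G\setminus X$ into a surface of Euler genus $g = O(\cost({\cal G}))$. Finally apply Lemma~\ref{lem:planarize_surface} to $G\setminus X$ with this embedding to obtain a planarizing set $X''$ for $G\setminus X$ with $|X''| = O(g\log n + \mvp(G\setminus X)\log^2 n)$. Output $X\cup X''$.

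For correctness of the approximation factor: $X\cup X''$ is planarizing for $G$ since $(G\setminus X)\setminus X'' = G\setminus(X\cup X'')$ is planar. For the size bound, note $\mvp(G\setminus X)\le \mvp(G)$ (deleting vertices cannot increase the planarization number), so
\[
|X\cup X''| \le |X| + |X''| = O(\cost({\cal G})) + O(g\log n) + O(\mvp(G)\log^2 n).
\]
Since $g = O(\cost({\cal G})) = O(\mvp(G)\sqrt{\approxfactor}\log^{15} n)$, the dominant term is $O(\cost({\cal G})\log n) = O(\mvp(G)\sqrt{\approxfactor}\log^{16} n)$. We need this to be at most $\approxfactor\cdot\mvp(G)$, i.e.\ $\sqrt{\approxfactor}\log^{16} n = O(\approxfactor)$, i.e.\ $\approxfactor = \Omega(\log^{32} n)$, which holds by our choice of $\approxfactor$ with $c$ chosen large enough to absorb the hidden constants. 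This closes the induction on the approximation guarantee; the base case $n=O(1)$ (or $G$ planar) is trivial. One must also check the hypothesis $\approxfactor\ge 2\rho$ required by Lemma~\ref{lem:pruning}, which is immediate from $\approxfactor = \max\{2\rho,\dots\}$, and that $T_\myapprox$ is increasing and convex, which can be arranged/verified from the recursion itself.

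For the running time: Lemmas~\ref{lem:embedding} and~\ref{lem:planarize_surface} run in polynomial time, contributing $n^{O(1)}$. The cost of $\AlgPruning(G)$ is governed by the recursion in Lemma~\ref{lem:pruning}, with $T_\myapprox$ in place of $T_\myapprox$ as the recursive approximation subroutine; unwinding $T_\pruning(n)\le n^{O(1)} + T_\pruning(n,k)\log n$ and
\[
T_\pruning(n,k)\le n^{O(1)} + \max\{T_\pruning(n/4)+T_\pruning(3n/4),\ T_\myapprox(n/\rho)\,2\rho\log n + T_\pruning(n-1,k)\},
\]
and absorbing the $\log n$ factors and the polynomial overhead, one obtains exactly the stated recursion for $T_\myapprox(n)$. (The $T_\pruning(n-1,k)$ chain contributes at most $n$ iterations, each polynomial plus one call to $T_\myapprox(n/\rho)$, giving the $T_\myapprox(n/\rho)2\rho\log n$ term after bookkeeping; the balanced-separator branch gives $T_\myapprox(n/4)+T_\myapprox(3n/4)$ after noting a pruning call on a subgraph triggers a recursive $\AlgApprox$ call.) The main obstacle is not any single step but the careful bookkeeping needed to see that all the polylogarithmic overheads from the three lemmas—$\log^{15} n$ from the pruning cost, the extra $\log n$ from Lemma~\ref{lem:planarize_surface}, and the $\log^{1/2}$-type losses hidden inside—compound to exactly $\log^{32} n$ and no more, so that the induction on $\approxfactor$ actually closes; this is where the exponent $32$ in Theorem~\ref{thm:polylog} is pinned down. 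Theorems~\ref{thm:polylog} and~\ref{thm:eps} then follow by instantiating $\rho = \log n$ (giving running time $n^{O(\log n/\log\log n)}$ by solving the recursion) and $\rho = n^\eps$ respectively, as standard.
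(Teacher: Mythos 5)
Your proposal assembles Lemma~\ref{lem:pruning}, Lemma~\ref{lem:embedding}, and Lemma~\ref{lem:planarize_surface} in exactly the same order as the paper's proof, derives the $\log^{32} n$ exponent from the same inequality $\sqrt{\approxfactor}\log^{16} n = O(\approxfactor)$, and attributes the running time to the $\AlgPruning$ recursion just as the paper does. This is essentially the paper's argument, with a few welcome extra remarks (the explicit base case, the reminder to verify monotonicity/convexity of $T_\myapprox$) that the paper leaves implicit.
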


\begin{proof}
Let $\rho>0$ be the parameter from Lemma \ref{lem:pruning}, with $\approxfactor\geq 2\rho$.
Using the algorithm from Lemma \ref{lem:pruning} we compute a pruning sequence ${\cal F}$ of $G$ with $\cost({\cal G}) = O(\mvp(G) \sqrt{\approxfactor} \log^{15} n)$.
Using the algorithm from Lemma \ref{lem:embedding}, in polynomial time, we compute some $X_1\subset V(G)$, with $|X_1|\leq \cost({\cal G}) \leq O(\mvp(G) \sqrt{\approxfactor} \log^{15} n)$, and an embedding $\phi$ of $G\setminus X_1$ into some surface of genus $g=O(\mvp(G) \sqrt{\approxfactor} \log^{15} n)$.
Using the algorithm from Lemma \ref{lem:planarize_surface} we compute some $X_2 \subset V(G)\setminus X_1$, such that $G\setminus (X_1 \cup X_2)$ is planar, with $|X_2| = O(g \log n + \mvp(G\setminus X_1) \log^2 n) = O(g \log n + \mvp(G) \log^2 n) = O(\mvp(G) \sqrt{\approxfactor} \log^{16} n)$.
It follows that $X=X_1\cup X_2$ is a planarizing set of $G$, with
$|X| = |X_1| + |X_2| = O(\mvp(G) \sqrt{\approxfactor} \log^{16} n)$.
Thus the resulting approximation factor is at most
$\approxfactor = \max\{2\rho, O(\log^{32} n)\}$.

The total running time is dominated by the running time of the algorithm from Lemma \ref{lem:pruning}.
Thus we have
$T_{\myapprox}(n)\leq n^{O(1)} + \max\{T_\pruning(n/4)+T_\pruning(3n/4), T_\myapprox(n/\rho)2\rho\log n + T_\pruning(n-1,n-1)\} \leq n^{O(1)} + \max\{T_\pruning(n/4)+T_\pruning(3n/4), T_\myapprox(n/\rho)2\rho\log n + T_\pruning(n-1)\} \leq n^{O(1)} + \max\{T_\myapprox(n/4)+T_\myapprox(3n/4), T_\myapprox(n/\rho)2\rho\log n + T_\myapprox(n-1)\}$,
which concludes the proof.
\end{proof}

By Theorem \ref{thm:main}, we easily obtain the following results.
 

\begin{proof}[Proof of Theorem \ref{thm:polylog}]
It follows from Theorem \ref{thm:main} by setting $\rho=\log n$.
\end{proof}


\begin{proof}[Proof of Theorem \ref{thm:eps}]
It follows from Theorem \ref{thm:main} by setting $\rho= n^\eps$.
\end{proof}

\section{Computing a doubly cut-linked set}\label{sec:doubly}

In this Section we show that when the treewidth is sufficiently large, we can efficiently compute a separation $(U,U')$, and some large $Y\subseteq U\cap U'$, such that $Y$ is well-linked on both sides of the separation.
This result will form the basis for our algorithm for computing a grid minor in the subsequent Sections.

Let $G$ be a $n$-vertex $k$-apex graph of treewidth $t > 2 k$.
In this section, we shall, in polynomial time, compute some separation $(U,U')$ of $G$ of order at most $O(t \log^{3/2} n)$ and some $X\subseteq U\cap U'$ with $|X|= \Omega(t/\log^{9/2} n)$ that is $\Omega(1)$-cut-linked in both $G[U]$ and $G[U']$.
Moreover let $G_U$ be the graph obtained from $G[U]$ by removing all edges in $G[U\cap U']$, that is $G_U=G[U]\setminus E(G[U\cap U'])$. Then $X$ is also $\Omega(1)$-cut-linked in $G_U$.

To this end, we begin with the following lemma, which merges two disjoint cut-linked sets.

\begin{lemma}[Merging two cut-linked sets]\label{lem:merging_cut-linked}
Let $\alpha>0$ and $\eps \in (0,1/2]$.
Let $G$ be a graph and let $W_1, W_2, R \subseteq V(G)$ that are all $\alpha$-cut-linked in $G$, with $W_1\cap W_2=\emptyset$.
Suppose further that for any $i\in \{1,2\}$ there exist a collection ${\cal P}_i$ of pairwise edge-disjoint paths in $G$ between $W_i$ and $R$, with $|{\cal P}_i|\geq \eps\cdot |W_i|$, such that all paths in ${\cal P}_i$ have distinct endpoints.
Then $W_1\cup W_2$ is $\eps \alpha /8$-cut-linked in $G$.
\end{lemma}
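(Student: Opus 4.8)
The plan is to verify the cut-linked inequality for $W := W_1 \cup W_2$ directly. Fix an arbitrary partition $\{A, B\}$ of $V(G)$; without loss of generality assume $|A \cap W| \le |B \cap W|$, so that the goal is to show $|E(A,B)| \ge \frac{\eps\alpha}{8} |A \cap W|$. The key idea is to relate $|A \cap W|$ to the part of $R$ on the "small" side of the cut, using the path systems ${\cal P}_1, {\cal P}_2$ as a routing device, and then invoke the fact that $R$ is $\alpha$-cut-linked. Concretely, I would first dispose of an easy case: if $|A \cap R| \ge \frac{\eps}{4}|A \cap W|$ (or more precisely, if the smaller of $|A\cap R|, |B \cap R|$ is at least $\frac{\eps}{4}|A\cap W|$), then $\alpha$-cut-linkedness of $R$ immediately gives $|E(A,B)| \ge \alpha \min\{|A\cap R|,|B\cap R|\} \ge \frac{\eps\alpha}{4}|A\cap W|$, which is even stronger than needed. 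So the interesting case is when $R$ is almost entirely on the $B$-side, say $|A \cap R| < \frac{\eps}{4}|A \cap W|$.

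In that case I would argue as follows. Write $a_i = |A \cap W_i|$, so $a_1 + a_2 = |A \cap W|$, and pick the index $i$ with $a_i \ge \frac12|A\cap W|$. Since $W_i$ is $\alpha$-cut-linked, if $a_i \le |B \cap W_i|$ we already get $|E(A,B)| \ge \alpha a_i \ge \frac{\alpha}{2}|A\cap W|$ and are done; so assume instead $|B\cap W_i| < a_i$, i.e. a minority of $W_i$ lies in $B$. Now use the path collection ${\cal P}_i$: it consists of at least $\eps|W_i|$ edge-disjoint paths with distinct endpoints, from $W_i$ to $R$. Each path either (a) stays within $A$, (b) stays within $B$, or (c) crosses the cut and hence contributes at least one edge to $E(A,B)$. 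A path of type (a) has its $R$-endpoint in $A \cap R$, and distinct paths have distinct endpoints, so there are at most $|A \cap R| < \frac{\eps}{4}|A\cap W| \le \frac{\eps}{2} a_i$ of them. A path of type (b) has its $W_i$-endpoint in $B \cap W_i$, so there are at most $|B \cap W_i|$ of these; and $|B\cap W_i| < a_i$ — but I need a bound in terms of how many $W_i$-endpoints lie in $A$. Here I would count paths of ${\cal P}_i$ whose $W_i$-endpoint lies in $A \cap W_i$: since endpoints are distinct there are at least $|{\cal P}_i| - |W_i \setminus A| \ge \eps|W_i| - (|W_i| - a_i)$ such — this is not obviously positive, so instead I count by subtraction from the total: among the $\ge \eps|W_i|$ paths, at most $|B \cap W_i|$ have their $W_i$-endpoint in $B$, but that doesn't directly bound type (b) either. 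The cleaner route: among all $\ge \eps |W_i|$ paths, those not of type (c) are split into type (a) and type (b); type (a) are $< \frac{\eps}{4}|A\cap W|$ many; type (b) must have the $W_i$-endpoint in $A\cap W_i$ AND the $R$-endpoint in $B$, and are at most $\min\{a_i, |B\cap R|\}$, but since $|{\cal P}_i|\le |W_i|$ and $\eps|W_i|\le |{\cal P}_i|$... I would instead bound type-(b) paths by $a_i$ trivially, giving at least $\eps|W_i| - \frac{\eps}{4}|A\cap W| - a_i$ crossing paths — again possibly negative. The correct fix is to restrict attention at the outset to the sub-collection ${\cal P}_i'$ of paths in ${\cal P}_i$ whose $W_i$-endpoint lies in $A\cap W_i$; since endpoints are distinct, $|{\cal P}_i'| \ge |{\cal P}_i| - |B\cap W_i| \ge \eps|W_i| - |B\cap W_i|$, and combined with $a_i \le |W_i| \le \frac1\eps|{\cal P}_i|$ one gets $|{\cal P}_i'| \ge \eps a_i - $ (an error term controlled since $|B\cap W_i| < a_i$)...

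The honest statement of the main obstacle: getting the constants to line up in this edge-disjoint double-counting is exactly where the factor $\eps\alpha/8$ comes from, and the argument must be organized so that the "small side of the $R$-cut" and the "small side of the $W_i$-cut" are handled with the right case split. I expect the final clean version to run: reduce to the case where both $|A\cap R|$ is small and $|B\cap W_i|$ is small (every other case gives the bound directly from $\alpha$-cut-linkedness of $R$, $W_1$, or $W_2$); then the sub-collection of ${\cal P}_i$ with both endpoints "on the large side except for the forced crossing" has size $\ge \frac{\eps}{2}a_i - |B\cap W_i| - |A\cap R| \ge \frac{\eps}{4}a_i$ after absorbing the small terms, and each such path is type (c), contributing $\ge 1$ edge-disjoint edge to $E(A,B)$; hence $|E(A,B)| \ge \frac{\eps}{4}a_i \ge \frac{\eps}{8}|A\cap W|$, and since $\alpha$ can be taken $\le 1$ without loss of generality (or one keeps it and the bound is only better), $|E(A,B)| \ge \frac{\eps\alpha}{8}|A\cap W|$. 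I would write out the case analysis carefully, keeping the edge-disjointness of each ${\cal P}_i$ in mind so that crossing paths genuinely yield distinct cut edges, and noting we never need to combine ${\cal P}_1$ and ${\cal P}_2$ simultaneously — only the one indexed by the majority side $i$ — which is what keeps the constant at $8$ rather than something worse.
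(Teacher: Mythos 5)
Your high-level plan matches the paper's in spirit (case-split on how the cut $\{A,B\}$ interacts with $R$ and the $W_i$'s, use cut-linkedness of $R$ or $W_i$ in the balanced cases, and count crossing paths from ${\cal P}_i$ in the unbalanced case), but the write-up never closes and you say so yourself. There are two concrete gaps. First, the case split on $R$ is incomplete: after fixing $|A\cap W|\le |B\cap W|=:M'$ and $M:=|A\cap W|$, the unbalanced case for $R$ splits into $|A\cap R|$ small \emph{or} $|B\cap R|$ small, and these are genuinely different because $A$ and $B$ are no longer symmetric. You only treat $|A\cap R|$ small. In the subcase $|B\cap R|<\frac{\eps}{8}M$ you must instead pick $j$ with $|B\cap W_j|\ge \tfrac12|B\cap W|\ge\tfrac12 M$ (a majority of the $B$-side, not the $A$-side) and count paths in ${\cal P}_j$ whose $W_j$-endpoint lies in $B$ and whose $R$-endpoint lies in $A$; your ``majority side $i$'' (defined via $a_i\ge\tfrac12|A\cap W|$) is the wrong index here.

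Second, even in the subcase you do treat, the threshold on $|B\cap W_i|$ is too weak. You reduce to ``$|B\cap W_i|$ a minority of $W_i$'', i.e.\ $|B\cap W_i|<a_i$, but then the crossing count
\[
\eps|W_i|-|B\cap W_i|-|A\cap R| \;=\; \eps a_i-(1-\eps)|B\cap W_i|-|A\cap R|
\]
can be negative since $(1-\eps)|B\cap W_i|$ may be as large as $(1-\eps)a_i$. What is needed is the stronger condition $|B\cap W_i|=O(\eps M)$; when it fails, fall back to $\alpha$-cut-linkedness of $W_i$, which gives $|E(A,B)|\ge\alpha\min\{a_i,|B\cap W_i|\}=\Omega(\alpha\eps M)$. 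Concretely, with threshold $\frac{\eps}{8}M$ throughout (for $\min\{|A\cap R|,|B\cap R|\}$, for $|B\cap W_i|$ in the one subcase, and for $|A\cap W_j|$ in the other), the crossing count becomes $\ge\eps\cdot\tfrac{M}{2}-\tfrac{\eps}{8}M-\tfrac{\eps}{8}M=\tfrac{\eps}{4}M$, which suffices. The paper's own proof reaches the analogous point by fixing $\delta=\eps/2$, $\gamma=\eps/8$ and case-splitting on whether each of $|U_1|,|\bar U_1|,|U_2|,|\bar U_2|,|U_R|$ lies in $[\delta M,(1-\delta)M]$ (resp.\ $[\gamma M,(1-\gamma)M]$); the strength of those ``very unbalanced'' hypotheses ($<\delta M$ or $<\gamma M$, not merely ``minority'') is exactly what makes the path count positive. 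Your final sketched bound $\ge\tfrac{\eps}{2}a_i-|B\cap W_i|-|A\cap R|\ge\tfrac{\eps}{4}a_i$ does not follow from the conditions you stated, so the proposal as written is not a proof.
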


\begin{proof}
Let $U\subseteq V(G)$ and $\bar{U}=V(G)\setminus U$.
For any $i\in \{1,2\}$ let $U_i=W_i\cap U$, $\bar{U}_i=W_i\setminus U$, and let $U_R = R\cap U$, $\bar{U}_R=R\setminus U$.
Let also
$M=\min\{|U_1|+|U_2|, |\bar{U}_1|+|\bar{U}_2|\}$.
Let $\delta=\eps/2$ and $\gamma=\eps/8$.
We distinguish between the following cases:
\begin{description}
\item{Case 1: Suppose that $\delta M\leq |U_1| \leq (1-\delta)M$ or $\delta M\leq |\bar{U}_1| \leq (1-\delta)M$.}
Then
\begin{align*}
|E(U, \bar{U})| &\geq |E(U_1,\bar{U}_1)| \\
 &\geq \delta  \alpha  M.
\end{align*}

\item{Case 2: Suppose that $\delta M\leq |U_2| \leq (1-\delta)M$ or $\delta M\leq |\bar{U}_2| \leq (1-\delta)M$.}
This is similar to Case 1.

\item{Case 3: Suppose that $|U_1|,|\bar{U}_1|,|U_2|,|\bar{U}_2| \notin [\delta M, (1-\delta)M]$.}

\begin{description}
\item{Case 3.1: Suppose that $\gamma M\leq |U_R| \leq (1-\gamma) M$.}
Then
\begin{align*}
|E(U,\bar{U})| &\geq |E(U_R, \bar{U}_R)| \\
 &\geq \gamma \alpha M.
\end{align*}

\item{Case 3.2: Suppose that $|U_R| < \gamma M$.}
We may assume w.l.o.g.~that $|U_1|+|U_2|\leq |\bar{U}_1|+|\bar{U}_2|$ and thus $M=|U_1|+|U_2|$; the case $M=|\bar{U}_1|+|\bar{U}_2|$ is identical by swapping $U$ and $\bar{U}$.
We may also assume w.l.o.g.~that $|U_1|\leq |U_2|$ because the case $|U_2|\leq |U_1|$ is identical by swapping $W_1$ and $W_2$.
Since $|U_1|+|U_2|=M$, it follows that $|U_1|< \delta M$ and $|U_2|>(1-\delta)M$.

\begin{description}
\item{Case 3.2.1: Suppose that $|\bar{U}_1|<\delta M$.}
Since $|\bar{U}_1|+|\bar{U}_2|\geq M$, we get $|\bar{U}_2|>(1-\delta)M$.
Thus
\begin{align*}
|E(U,\bar{U})| &\geq |E(U_2, \bar{U}_2)| \\
 &\geq \alpha \cdot \min\{|U_2|, |\bar{U}_2|\} \\
 &\geq \alpha (1-\delta)  M \\
 &\geq (\alpha/2) M.
\end{align*}

\item{Case 3.2.2: Suppose that $|\bar{U}_1|>(1-\delta)M$.}
Then $|{\cal P}_1| \geq \eps  |W_1| \geq \eps |\bar{U}_1| > \eps(1-\delta) M$.
There exist at most $|U_1|$ paths in ${\cal P}_1$ with both endpoints in $U$.
Thus there exist at least $|{\cal P}_1|-|U_1|$ paths in ${\cal P}_1$ with at least one endpoint in $\bar{U}$.
Moreover there exist at most $|\bar{U}_R|$ paths in ${\cal P}_1$ with both endpoints in $\bar{U}$.
Therefore there exist at least $|{\cal P}_1|-|U_1|-|\bar{U}_R|$ paths in ${\cal P}_1$ with one endpoint in $U$ and one endpoint in $\bar{U}$.
Each such path must contain an edge in $E(U,\bar{U})$.
Therefore we have
\begin{align*}
|E(U,\bar{U}) &\geq |{\cal P}_1|-|U_1|-|\bar{U}_R| \\
 &> \eps(1-\delta) M - \delta M - \gamma M \\
 &\geq (\eps-\eps^2/2-\eps/2-\eps/8) M\\
 &\geq (\eps/8) M.
\end{align*}
\end{description}
\item{Case 3.3: Suppose that $|U_R| > (1-\gamma) M$.}
This is similar to Case 3.2.
\end{description}
\end{description}
We conclude that in either case we have $|E(U,\bar{U})|\geq (\eps \alpha/8) M$, as required.
\end{proof}

The next two lemmas are concerning results for the $(r\times r)$-grid. These two lemmas
are needed in the rest of this section.

\begin{lemma}\label{lem:grid-row-cut-linked}
Let $r\geq 1$ and let $H$ be the $(r\times r)$-grid.
Let $X$ be the set of vertices in the first row of $H$.
Then $X$ is $1$-cut-linked in $H$.
\end{lemma}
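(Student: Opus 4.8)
The plan is to fix an arbitrary partition $\{A,B\}$ of $V(H)$ and show directly that $|E(A,B)| \geq \min\{|A\cap X|,|B\cap X|\}$, where $X$ is the first row. The natural way to produce many edges of $E(A,B)$ is to exhibit a collection of vertex-disjoint paths in $H$, one starting at each vertex of $X$, such that each path can be ``charged'' an edge of the cut. First I would set up the standard disjoint-path structure of the grid: for each column $j\in\{1,\ldots,r\}$, let $Q_j$ be the $j$-th column path, i.e.\ the path on the $r$ vertices $\{v : \col(v)=j\}$, starting at the unique vertex $x_j\in X$ in column $j$ and going down to row $r$. These $r$ paths are pairwise vertex-disjoint and together cover all of $V(H)$, and crucially $Q_j$ contains exactly one vertex of $X$, namely $x_j$.

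Next I would argue that the number of columns $j$ for which $Q_j$ is ``monochromatic'' (entirely inside $A$ or entirely inside $B$) is small enough to ignore. Concretely, WLOG assume $|A\cap X|\leq |B\cap X|$, so $m:=\min\{|A\cap X|,|B\cap X|\} = |A\cap X|$. For each $j$ with $x_j\in A$, either the column path $Q_j$ meets $B$ — in which case $Q_j$ contains an edge of $E(A,B)$ — or $Q_j\subseteq A$. Let $J$ be the set of $j$ with $x_j\in A$ and $Q_j\subseteq A$. The columns in $J$ contribute a ``solid block'' of $A$-vertices; the point is that this block must still be separated from the rest via the rows. I would then use the $r$ row paths $R_i$ ($R_i$ = path on $\{v:\row(v)=i\}$) restricted to the columns in $J$: if $|J|\geq 1$, pick any column $j_0\notin J$ (if $J = \{1,\ldots,r\}$ then $B\cap X=\emptyset$, contradicting $|B\cap X|\geq|A\cap X|\geq 1$ unless $m=0$, in which case there is nothing to prove). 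Each of the $r$ rows gives a path from a vertex in column $j_0$ to the block $J$; since the block is all in $A$ and we can choose $j_0$'s column so that... — here I would instead take the cleaner route below.

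The cleaner route, and the one I would actually write: let $m=\min\{|A\cap X|,|B\cap X|\}$ and assume $m\geq 1$ (else trivial). Consider the $m$ column paths $Q_j$ for the $m$ columns $j$ with $x_j$ in the smaller side, say $x_j\in A$. If each such $Q_j$ contains an edge of $E(A,B)$ we are done since the $Q_j$ are vertex-disjoint. Otherwise some $Q_{j_0}\subseteq A$; but since $|B\cap X|\geq m\geq 1$, there is a column $j_1$ with $x_{j_1}\in B$, and now the $r$ horizontal row paths $R_1,\ldots,R_r$ each join a vertex of $Q_{j_0}\subseteq A$ to a vertex of $Q_{j_1}$; but that only gives disjoint paths between columns $j_0$ and $j_1$, and I would need each column $Q_j$ with $x_j\in A$ to reach $B$. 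The robust argument is a flow/cut one: route a unit of flow from each $x_j\in A$ down its own column $Q_j$ to row $r$, then along row $r$ to the columns of $B\cap X$-indices, then up those columns; because $|A\cap X|\le|B\cap X|$ this can be done with congestion $1$ on the vertical segments and congestion $\le 1$ on the bottom row (each bottom-row vertex used once, since we need only $m\le r$ units and we can keep the routes nested), giving $m$ edge-disjoint paths from $A\cap X$ to $B\cap X$, each necessarily crossing the cut. I expect the main obstacle to be exactly this routing bookkeeping — making the $m$ paths from the $A$-columns to the $B$-columns genuinely edge-disjoint (equivalently, exhibiting an integral flow of value $m$ from $A\cap X$ to $B\cap X$ in $H$), which follows from Menger's theorem once one checks that every $x_j\in A$ and every $x_{j'}\in B$ can be linked by $r\ge m$ disjoint paths in the grid, using the columns for the vertical legs and the bottom row for the horizontal transfer.
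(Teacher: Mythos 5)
Your overall strategy is a genuinely different route from the paper's: you aim to exhibit $m = \min\{|A\cap X|,|B\cap X|\}$ pairwise edge-disjoint paths in $H$ from $A\cap X$ to $B\cap X$, and then observe that each such path must contain a cut edge, giving $|E(A,B)|\geq m$. That logic is sound, and the path-packing approach can be made to work. But the specific routing you propose is wrong: if you send each path down its $A$-column to row $r$, across row $r$, and up a $B$-column, the bottom-row segments are \emph{not} edge-disjoint in general. Take $r=4$, $A$-columns $\{1,2\}$, $B$-columns $\{3,4\}$: whichever matching you choose, both paths must traverse the bottom-row edge between columns $2$ and $3$. ``Keeping the routes nested'' does not repair this — nesting the column intervals makes the horizontal segments \emph{contained} in one another, which still double-counts bottom-row edges. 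The fix is to route the $i$-th horizontal leg along row $i+1$ rather than row $r$: sort the $A$-columns as $a_1<\cdots<a_m$, choose $m$ distinct $B$-columns $b_1,\ldots,b_m$, and let path $i$ go down column $a_i$ to row $i+1$, across row $i+1$ to column $b_i$, then up column $b_i$. Since $m\leq\lfloor r/2\rfloor$, every index $i+1\leq r$ is in range; distinct horizontal rows plus distinct vertical columns (the $a_i$'s and $b_i$'s are all different) give edge-disjointness.

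The paper avoids the routing entirely with a three-way case analysis that exploits the symmetry between $A$ and $B$: either every column with $x_j\in A$ already crosses the cut (giving $|A\cap X|$ disjoint cut edges), or every column with $x_j\in B$ crosses (giving $|B\cap X|\geq|A\cap X|$ disjoint cut edges), or else there is a monochromatic $A$-column \emph{and} a monochromatic $B$-column, in which case every one of the $r$ rows contains a vertex of each side and hence a cut edge, yielding $r\geq|A\cap X|$ disjoint cut edges. Your first, abandoned attempt was heading in this direction but only tracked monochromatic $A$-columns; noticing that a monochromatic column on \emph{each} side forces every row to cross is the missing symmetric observation that makes the elementary proof close. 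What the path-packing route buys you is generality (it is essentially a flow-cut argument and transfers to other settings), at the cost of having to nail the routing; what the paper's case analysis buys you is a three-line proof with nothing to verify.
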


\begin{proof}
Let $U\subseteq V(H)$ and let $\bar{U}=V(H)\setminus U$.
Let $A=X\cap U$ and $B=X\cap \bar{U}$.
Assume w.l.o.g.~that $|A|\leq |B|$.
Suppose that for every $v\in A$ the column of $H$ containing $v$ intersects both $U$ and $\bar{U}$.
Then we get $|E(U,\bar{U})|\geq |A|$.
Similarly, if for every $v\in B$ the column of $H$ containing $v$ intersects both $U$ and $\bar{U}$ then $|E(U,\bar{U})|\geq |A|\geq |B|$.
It remains to consider the case where there exist $v\in A$ and $v'\in B$ such that the column $C$ (resp.~$C'$) of $H$ containing $v$ (resp.~$v'$) does not intersect both $U$ and $\bar{U}$.
Since $v\in U$ and $v'\in \bar{U}$ we get $C\subseteq U$ and $C'\subseteq U'$.
Since every row in $H$ intersects both $U$ and $\bar{U}$ (because there exist $v\in A$ and $v'\in B$ such that the column $C$ (resp.~$C'$) of $H$ containing $v$ (resp.~$v'$) does not intersect both $U$ and $\bar{U}$ ),
it follows that every row in $H$ contains an edge in $E(U,\bar{U})$, and thus $|E(U,\bar{U})|\geq r > |A|$.
In either case we have obtained that $|E(U,\bar{U})|\geq |A|$, concluding the proof.
\end{proof}

\begin{lemma}[Linking two grid minors in a grid]\label{lem:grid_isoperimetry_boundaries}
Let $\Gamma$ be a grid and let $r_1\leq r_2$.
For each $i\in \{1,2\}$ let $Z_i$ be some $(r_i\times r_i)$-grid and suppose that $Z_i$ is a minor of $\Gamma$ with minor mapping $\mu_i$.
Assume that $\mu_1(V(Z_1)) \cap \mu_2(V(Z_2)) = \emptyset$.
For each $i\in \{1,2\}$ let $Y_i$ be the set of vertices in the first row of $Z_i$,
and
for each $v\in Y_i$ pick some $q(v)\in \mu_i(v)$.
For each $i\in \{1,2\}$ let $W_i=\bigcup_{v\in Z_i}\{q(v)\}$.
Then $W_1\cup W_2$ is $1/384$-cut-linked in $\Gamma$.
\end{lemma}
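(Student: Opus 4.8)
The plan is to fix an arbitrary bipartition $\{A,B\}$ of $V(\Gamma)$ and prove $|E_\Gamma(A,B)|\geq\frac1{384}\min\{|A\cap(W_1\cup W_2)|,\,|B\cap(W_1\cup W_2)|\}$. Three facts feed the argument. \textbf{(i)} Each $W_i$ is $\tfrac15$-cut-linked in $\Gamma$: by Lemma \ref{lem:grid-row-cut-linked} the first row $Y_i$ is $1$-cut-linked in $Z_i$, and cut-linkedness transfers from a minor to its host up to a factor depending only on the maximum degree of the minor. Concretely, given $\{A,B\}$, the at most $|E_\Gamma(A,B)|$ branch sets of $Z_i$ meeting both sides each contain a private edge of $E_\Gamma(A,B)$; assigning these ``split'' branch sets arbitrarily to a side induces a bipartition of $Z_i$ whose cut has at most $4|E_\Gamma(A,B)|$ edges (since $\Delta(Z_i)\le 4$) and whose smaller $Y_i$-part has at least $\min\{|A\cap W_i|,|B\cap W_i|\}-|E_\Gamma(A,B)|$ vertices; combining with $1$-cut-linkedness of $Y_i$ in $Z_i$ gives $|E_\Gamma(A,B)|\geq\tfrac15\min\{|A\cap W_i|,|B\cap W_i|\}$. \textbf{(ii)} Each $Z_i$ has $r_i$ pairwise vertex-disjoint column paths joining its first to its last row; lifting them through $\mu_i$ yields $r_i$ pairwise vertex-disjoint paths $\tilde P^i_1,\dots,\tilde P^i_{r_i}$ in $\Gamma$, where $\tilde P^i_j$ starts at the representative (an element of $W_i$) of the top vertex of column $j$ of $Z_i$ and meets every branch set of that column. \textbf{(iii)} A standard edge-isoperimetric inequality for grids, $|E_\Gamma(A,B)|\geq\min\{\sqrt{\min\{|A|,|B|\}},\,\tw(\Gamma)\}$; moreover $\tw(\Gamma)\geq\tw(Z_2)=r_2\geq r_1$ since $Z_2$ is a minor of $\Gamma$ and treewidth is minor-monotone.

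Set $M:=\min\{|A\cap(W_1\cup W_2)|,|B\cap(W_1\cup W_2)|\}$ (the lemma is trivial if $M=0$) and say the minimum is attained on $A$; write $a_i=|A\cap W_i|$, $b_i=|B\cap W_i|$, so $M=a_1+a_2\leq b_1+b_2$ and $a_i+b_i=|W_i|=r_i$. If $\min\{a_i,b_i\}\geq M/10$ for some $i$, then (i) gives $|E_\Gamma(A,B)|\geq\tfrac15\cdot\tfrac M{10}>\tfrac M{384}$ and we are done. Otherwise $\min\{a_i,b_i\}<M/10$ for both $i$; since $b_1,b_2<M/10$ would force $M\leq b_1+b_2<M/5$ and symmetrically $a_1,a_2<M/10$ is impossible, exactly one of the two sets, say $W_p$, has $b_p<M/10$ and the other, $W_q$, has $a_q<M/10$, with $\{p,q\}=\{1,2\}$. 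A one-line computation — from $M=a_1+a_2$, $a_2<M/10$ when $p=1$, and from $M\leq b_1+b_2\leq r_1+b_2$, $b_2<M/10$ when $p=2$ — gives $M<\tfrac{10}{9}r_1$, and hence $a_p=r_p-b_p>r_p-\tfrac19 r_1\geq\tfrac89 r_p$ and likewise $b_q>\tfrac89 r_q$. Since $M<\tfrac{10}{9}r_1$, it now suffices to prove $|E_\Gamma(A,B)|\geq r_1/4$.

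Suppose for contradiction that $|E_\Gamma(A,B)|<r_1/4$. Among the paths $\tilde P^p_1,\dots,\tilde P^p_{r_p}$, at most $|E_\Gamma(A,B)|$ contain an edge of $E_\Gamma(A,B)$, while at least $a_p>\tfrac89 r_p$ of them start inside $A$; hence more than $\tfrac89 r_p-\tfrac14 r_p>\tfrac12 r_p$ of them lie entirely in $A$, and each such path contributes a vertex of $A$ to every branch set of its column. As these branch sets are pairwise disjoint, $|A|\geq\tfrac12 r_p^2\geq\tfrac12 r_1^2$; symmetrically $|B|\geq\tfrac12 r_q^2\geq\tfrac12 r_1^2$ (using $|E_\Gamma(A,B)|<\tfrac14 r_1\le\tfrac14 r_q$), and since $\mu_1(V(Z_1))\cap\mu_2(V(Z_2))=\emptyset$ these vertex sets are disjoint, so $\min\{|A|,|B|\}\geq\tfrac12 r_1^2$. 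Then (iii) yields $|E_\Gamma(A,B)|\geq\min\{r_1/\sqrt2,\tw(\Gamma)\}\geq r_1/\sqrt2>r_1/4$, a contradiction. Hence $|E_\Gamma(A,B)|\geq r_1/4>M/384$, which completes the case analysis.

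The heart of the argument is the last case: the individual cut-linkedness of $W_1$ and $W_2$ is useless exactly when each $W_i$ lies almost entirely on one side of the cut, for then both of its local minima vanish while $M$ can be of order $\min\{r_1,r_2\}$. The point is that the \emph{whole} grid $Z_i$, not merely its first row, is dragged to whichever side $W_i$ concentrates on — this is precisely what the column paths encode — so both sides of the cut become large and grid isoperimetry finishes the job. The one delicate bookkeeping issue is that the representatives need not exhaust their branch sets, so every counting step must be phrased in terms of the branch sets (which are disjoint and connected), not of individual vertices. One could alternatively try to route through Lemma \ref{lem:merging_cut-linked}, but supplying it with a reference set and the required systems of edge-disjoint paths between that set and $W_1,W_2$ seems to need essentially the same isoperimetric input, so I would prefer the direct argument above.
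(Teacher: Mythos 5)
Your proof is correct and takes a genuinely different route from the paper's. The paper establishes the lemma by feeding $W_1$, $W_2$, and a reference set $R$ (the rightmost column of $\Gamma$) into Lemma~\ref{lem:merging_cut-linked}; it constructs the required edge-disjoint path systems from $W_i$ to $R$ by a case split on whether some single column of $\Gamma$ carries a $\Theta(1)$ fraction of $W_i$, routing via a maximum flow in the hard case. You instead do a direct cut analysis: an elementary dichotomy reduces to the case where $W_1$ and $W_2$ are polarized to opposite sides of the cut and $M=O(r_1)$; then you lift the $r_i$ vertex-disjoint column paths of $Z_i$ into $\Gamma$ to force $\min\{|A|,|B|\}=\Omega(r_1^2)$, and finish with a grid edge-isoperimetric inequality. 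This is cleaner in that it avoids the flow-decomposition machinery, but it trades the merging lemma for the isoperimetric ingredient $|E_\Gamma(A,B)|\geq\min\{\sqrt{\min\{|A|,|B|\}},\tw(\Gamma)\}$, which you state without proof or citation. That inequality does hold for every $(m\times n)$-grid (a short projection argument: if no row and no column is monochromatic in one direction the Loomis--Whitney bound gives $\max\{|R_*|,|C_*|\}\geq\sqrt{\min\{|A|,|B|\}}$; otherwise one gets $\geq\min\{m,n\}$ crossing edges), but since your final step has only a modest margin ($r_1/\sqrt2$ versus the needed $r_1/4$), this should be proved or cited explicitly rather than labeled ``standard.'' Two small bookkeeping remarks: the cut of $Z_i$ induced by assigning split branch sets to one side is more naturally bounded by $5|E_\Gamma(A,B)|$ (edges between two non-split branch sets on opposite sides are counted separately from edges incident to split branch sets), which yields $W_i$ being $1/6$-cut-linked rather than $1/5$; this is harmless since $\tfrac16\cdot\tfrac{M}{10}>\tfrac{M}{384}$ still holds. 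Alternatively you can simply invoke Proposition~\ref{prop:cut-linked_degree}, which gives $1/4$-cut-linkedness directly and shortens step (i).
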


\begin{proof}
Let $i\in \{1,2\}$.
Let $(U,\bar{U})$ be a cut in $\Gamma$.
Define a cut $(U', \bar{U}')$ in $\Gamma$ as follows:
Let $\{x,y\}\in E(\Gamma)$ be an edge crossing $(U,\bar{U})$, and let $x'\in V(Z_i)$ with $x\in \mu_i(x')$.
We cut all the edges in $E(Z_i)$ that are incident to $x'$.
Since $Z_i$ has maximum degree 4, it follows that
$|E(U', \bar{U}')| \leq 4\cdot |E(U,\bar{U})|$.
Since $Y_i$ is 1-cut-linked in $Z_i$, we have
$|E(U', \bar{U}')| \geq \min\{|U'\cap Y_i|, |\bar{U}'\cap Y_i|\}$.
By construction, if a pair of vertices $a,b\in W_i$ is separated by the cut $(U,\bar{U})$ in $\Gamma$, then the pair of vertices $q^{-1}(a)$, $q^{-1}(b)$ is also separated by the cut $(U',\bar{U}')$ in $Z_i$.
Thus
$\min\{|U'\cap Y_i|, |\bar{U}'\cap Y_i|\} \geq \min\{|U\cap W_i|, |\bar{U}\cap W_i|\}$.
Putting everything together we get
$|E(U,\bar{U})|\geq (1/4)\min\{|U\cap W_i|, |\bar{U}\cap W_i|\}$, and thus $W_i$ is $1/4$-cut-linked in $\Gamma$.

Let $R$ be the set of vertices in the right-most column in $\Gamma$.
By Lemma \ref{lem:grid-row-cut-linked} we have that $R$ is 1-cut-linked in $\Gamma$.
We first argue that there exists some collection ${\cal P}_i$ of pairwise edge-disjoint paths between $W_i$ and $R$ such that all paths in ${\cal P}_i$ have distinct endpoints.
We consider the following cases:

\begin{description}
\item{Case 1:} There exists a column $C$ of $\Gamma$ such that $|C\cap W_i| \geq r_i/3$.
Let $W_i'=C\cap W_i$.
For each $v\in W_i'$ we add to ${\cal P}_i$ the horizontal path between $v$ and the right-most vertex in $V(H)$ that lies in the same row as $v$.
It is immediate that the paths in ${\cal P}_i$ are pairwise edge-disjoint and that $|{\cal P}_i| \geq r_i/3$.

\item{Case 2:} Any column of $\Gamma$ contains less than $r_i/3$ vertices in $W_i$.
There exists a column $C'$ such that each connected component of $\Gamma \setminus C'$ contains at least $r_i/3$ vertices in $W_i$.
Let $S$ (resp.~$T$) be the set of vertices in $W_i$ that are in the left (resp.~right) of $C'$.
Let $\Gamma'$ be the graph obtained from $\Gamma$ as follows.
We add new vertices $s$ and $t$ and we connect $s$ (resp.~$t$) to every vertex in $S$ (resp.~$T$) via edges of capacity $|S|+|T|$.
All other edges in $\Gamma'$ have unit capacity.
Since $W_i$ is $(1/4)$-cut-linked it follows that the value of the minimum $s$-$t$ cut in $\Gamma'$ is at least $\min\{|S|,|T|\}/4\geq r_i/12$.
Let $F_i$ be a maximum $s$-$t$ flow between $s$ and $t$ in $\Gamma'$ of value $f_i\geq r_i/12$.
Since all capacities in $\Gamma'$ are integers, it follows that $F_i$ can be decomposed into a collection ${\cal Q}_i$ of pairwise edge-disjoint $s$-$t$ paths in $\Gamma'$.
For each path $Q\in {\cal Q}_i$ let $Q'$ be the subpath of $Q$ between the first vertex that $Q$ visits in $S$ and the first vertex it visits in $C'$.
It follows that the last edge traversed by $Q'$ is some horizontal edge $\{z,w\}$ where $w\in V(C')$ and $z$ is to the left of $w$.
Let $Q''$ be the path obtained by concatenating $Q'$ with the horizontal path in $H$ between $w$ and the right-most column in $H$.
Let ${\cal P}_i$ be the collection of all the paths $Q''$ obtained in this way.
It is immediate that the paths in ${\cal P}_i$ are pairwise edge-disjoint, and $|{\cal P}_i|=|{\cal Q}_i| \geq r_i/12$.
\end{description}

In either case we have obtained for each $i\in \{1,2\}$ a collection ${\cal P}_i$ of pairwise edge-disjoint paths between $W_i$ and $C$, such that all paths in ${\cal P}_i$ have distinct endpoints, and with $|{\cal P}_i|\geq r_i/12$.
Since $W_1$ and $W_2$ are both $1/4$-cut-linked in $\Gamma$ and $R$ is $1$-cut-linked in $\Gamma$, we can apply Lemma \ref{lem:merging_cut-linked} with $\alpha=1/4$ and $\eps=1/12$.
We obtain that $W_1\cup W_2$ is $1/384$-cut-linked in $\Gamma$.
\end{proof}

The following is immediate by the definition of a cut-linked set.

\begin{proposition}\label{prop:cut-linked_degree}
Let $G$ be a graph and let $H$ be a minor of $G$ of maximum degree $\Delta$ with minor mapping $\mu:V(H)\to 2^{V(G)}$.
Let $X\subseteq V(H)$ such that $X$ is $\alpha$-cut-linked in $H$.
For each $x\in X$ pick some $q(x)\in \mu(x)$.
Then $\bigcup_{x\in X} \{q(x)\}$ is $\alpha/\Delta$-cut-lined in $G$.
\end{proposition}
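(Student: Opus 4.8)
The plan is to reduce a cut in $G$ (restricted to the image set $\bigcup_{x\in X}\{q(x)\}$) to a cut in $H$ (restricted to $X$), losing only a factor of $\Delta$ in the number of cut edges. First I would fix an arbitrary partition $(A,B)$ of $V(G)$; write $Q = \bigcup_{x\in X}\{q(x)\}$, and set $X_A = \{x\in X : q(x)\in A\}$ and $X_B = \{x\in X: q(x)\in B\}$, so that $|A\cap Q| = |X_A|$ and $|B\cap Q| = |X_B|$. The goal is to produce a partition $(A',B')$ of $V(H)$ with $X_A\subseteq A'$, $X_B\subseteq B'$, and $|E_H(A',B')| \le |E_G(A,B)|$; then $\alpha$-cut-linkedness of $X$ in $H$ gives $|E_G(A,B)| \ge |E_H(A',B')| \ge \alpha\cdot\min\{|X_A|,|X_B|\} = \alpha\cdot\min\{|A\cap Q|,|B\cap Q|\}$, which is even better than the claimed bound (no factor of $\Delta$ is needed). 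Wait — that is too strong; the issue is that one cannot in general push a $G$-cut down to an $H$-cut edge-for-edge, because a single branch set $\mu(v)$ may be split by $(A,B)$, and the edges of $E_G(A,B)$ internal to $\mu(v)$ do not correspond to any edge of $H$. This is exactly where the degree bound enters.

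So the corrected approach is as follows. Given the $G$-partition $(A,B)$, I would define the $H$-partition $(A',B')$ by a majority/assignment rule on branch sets: for each $v\in V(H)$, if $x\in X$ and $v = \mu^{-1}(q(x))$, force $v$ into the side containing $q(x)$; otherwise assign $v$ arbitrarily to one of the two sides (say, the side containing a majority of $\mu(v)$, or simply an arbitrary fixed choice). Now consider an edge $\{u,v\}\in E_H(A',B')$. By the definition of a minor there is an edge of $G$ between $\mu(u)$ and $\mu(v)$; but such an edge need not be cut by $(A,B)$. The clean way around this is instead to charge $H$-cut edges to $G$-cut edges through the branch sets: build an auxiliary cut $(A'',B'')$ of $V(G)$ that agrees with $(A,B)$ on the forced vertices $\{q(x)\}$ but is constant on every branch set (so that $\mu(v)\subseteq A''$ or $\mu(v)\subseteq B''$ for all $v$). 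This changes the cut only by moving vertices of branch sets across; the number of newly created cut edges inside a branch set $\mu(v)$ that gets "flipped" is at most the number of edges of $G$ leaving $\mu(v)$, which — after contraction — is at most $\Delta$ per $H$-vertex. Summing over all $v\in V(H)$ shows $|E_G(A'',B'')| \le \Delta\cdot|E_H(A',B')|$ is the wrong direction; what we actually want is $|E_H(A',B')| \le |E_G(A'',B'')| \le \Delta\cdot|E_G(A,B)|$. Hmm — the factor $\Delta$ must come from bounding $|E_G(A'',B'')|$ in terms of $|E_G(A,B)|$.

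Let me state the step order as I would actually execute it. (i) Fix $(A,B)$ partitioning $V(G)$. (ii) Define $(A'',B'')$ to partition $V(G)$ so that every branch set $\mu(v)$ lies entirely on one side, choosing that side to be the side of $q(x)$ whenever $v=\mu^{-1}(q(x))$ for some $x\in X$, and otherwise choosing the side holding the majority of $E_G(\mu(v),A)$ vs.\ $E_G(\mu(v),B)$ — i.e.\ the choice minimizing the number of boundary edges of $\mu(v)$ newly cut. (iii) Bound: each edge of $E_G(A'',B'')$ is a $G$-edge between two distinct branch sets $\mu(u),\mu(v)$, hence corresponds to an edge $\{u,v\}\in E(H)$; conversely, for a fixed $\{u,v\}\in E_H(A'',B'')$ (meaning the branch sets landed on opposite sides) there is at least one such $G$-edge, so $|E_H(A'',B'')| \le |E_G(A'',B'')|$. (iv) Bound $|E_G(A'',B'')|$ by $\Delta\cdot|E_G(A,B)|$: the only edges in $E_G(A'',B'')$ not already in $E_G(A,B)$ are inside branch sets that were flipped; by the majority choice in step (ii), the number of such edges incident to any single flipped branch set is at most the number of $G$-edges between that branch set and the rest of $G$ that were cut by the original $(A,B)$, and every such edge, having an endpoint in $\mu(v)$, is counted for at most $\Delta$ branch sets (in fact at most $2$); a careful accounting gives the factor $\Delta$. (v) Finally, since $X$ is $\alpha$-cut-linked in $H$ and $X_A\subseteq A''$-side, $X_B\subseteq B''$-side by construction, $|E_H(A'',B'')|\ge \alpha\min\{|X_A|,|X_B|\}$, and chaining the inequalities yields $|E_G(A,B)|\ge(\alpha/\Delta)\min\{|A\cap Q|,|B\cap Q|\}$, i.e.\ $Q$ is $(\alpha/\Delta)$-cut-linked in $G$. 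The main obstacle — and the only place requiring genuine care — is step (iv): making the "flip a branch set, pay at most $\Delta$ per vertex" accounting precise without double-counting, which hinges on the majority rule in step (ii) ensuring each flip is locally cost-reducing relative to the original cut.
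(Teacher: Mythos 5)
The paper offers no argument for this proposition; it simply asserts it is ``immediate by the definition of a cut-linked set,'' so there is no proof in the paper to compare against. Your overall plan --- fix a cut $(A,B)$ of $V(G)$, pass to a cut $(A'',B'')$ that is constant on branch sets and respects the side of $q(x)$ for every $x\in X$, then compare the $H$-cut to the $G$-cut --- is the right one, but the chain in your steps (iii)--(iv), namely $|E_H(A'',B'')|\le|E_G(A'',B'')|\le\Delta\,|E_G(A,B)|$, fails at the second inequality. Take $H=K_2$ on $\{x,y\}$ with $X=\{x,y\}$ (so $\Delta=1$); let $\mu(x)=\{q(x),z_1,\dots,z_m\}$ induce a path $q(x)z_1\cdots z_m$, let $\mu(y)=\{q(y)\}$, and add $G$-edges $\{z_i,q(y)\}$ for all $i$. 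With $A=\{q(x)\}$ and $B=V(G)\setminus A$ we have $|E_G(A,B)|=1$, but the forced rule puts $\mu(x)$ entirely on side $A''$ and $\mu(y)$ on side $B''$, giving $|E_G(A'',B'')|=m$, which can be arbitrarily larger than $\Delta|E_G(A,B)|$. The majority rule cannot rescue step (iv): branch sets $\mu(x)$ with $x\in X$ are forced to the side of $q(x)$ rather than a cost-minimizing side, and even for non-forced sets the ``locally cost-reducing'' intuition depends on where the other endpoint of each boundary edge is assigned, which a local rule does not control. (Also, your phrase ``edges\dots inside branch sets that were flipped'' is off: in $(A'',B'')$ every branch set is monochromatic, so no edge inside a branch set can lie in $E_G(A'',B'')$.)

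The repair is to bypass $|E_G(A'',B'')|$ and bound $|E_H(A'',B'')|\le\Delta\,|E_G(A,B)|$ directly by a charging argument. Assign $\mu(x)$ to the side of $q(x)$ for $x\in X$; for $v\notin X$ assign $\mu(v)$ to any side, subject only to sending $\mu(v)$ to $A''$ when $\mu(v)\subseteq A$ and to $B''$ when $\mu(v)\subseteq B$ (for instance, pick any representative $q'(v)\in\mu(v)$ with $q'=q$ on $X$ and use its side). Call $\mu(v)$ \emph{mixed} if it meets both $A$ and $B$. For $\{u,v\}\in E_H(A'',B'')$: if both $\mu(u),\mu(v)$ are pure they lie on opposite sides of $(A,B)$, so any $G$-edge between them is in $E_G(A,B)$ --- charge $\{u,v\}$ to one such bridge edge; otherwise, say $\mu(u)$ is mixed, and since $G[\mu(u)]$ is connected and meets both sides it contains some $e_u\in E_G(A,B)$ --- charge $\{u,v\}$ to $e_u$. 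An edge internal to $\mu(u)$ is charged only by $H$-edges incident to $u$, so at most $\deg_H(u)\le\Delta$ times; a bridge edge is charged at most once. Hence $|E_H(A'',B'')|\le\Delta\,|E_G(A,B)|$, and since the forced rule gives $|A''\cap X|=|A\cap Q|$ and $|B''\cap X|=|B\cap Q|$, the $\alpha$-cut-linkedness of $X$ in $H$ yields $\alpha\min\{|A\cap Q|,|B\cap Q|\}\le|E_H(A'',B'')|\le\Delta\,|E_G(A,B)|$, as required.
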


Before presenting the main result in this section we need the following auxiliary result.
\begin{lemma}\label{lem:linking_paths}
Let $G$ be a graph and let $X,Y\subset V(G)$.
Suppose that $X$ is $\alpha$-cut-linked in $G$ for some $\alpha> 0$.
Suppose that there exists a collection ${\cal Q}$ of pairwise vertex-disjoint paths in $G$ between $X$ and $Y$ with $|{\cal Q}|=|Y|$.
Then $Y$ is $\min\{1/2,\alpha/2\}$-cut-linked in $G$.
\end{lemma}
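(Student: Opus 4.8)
The plan is to fix an arbitrary cut $(A,B)$ of $V(G)$ and bound $|E(A,B)|$ from below by $\min\{1/2,\alpha/2\}\cdot\min\{|A\cap Y|,|B\cap Y|\}$ directly, by playing off two complementary lower bounds on $|E(A,B)|$: one coming from the paths of $\mathcal{Q}$ that cross the cut, and one coming from the $\alpha$-cut-linkedness of $X$.

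First I would record the structure of $\mathcal{Q}$. Since the paths are pairwise vertex-disjoint, run between $X$ and $Y$, and number exactly $|Y|$, each $y\in Y$ is the $Y$-endpoint of a unique path $Q_y$; writing $x_y$ for its $X$-endpoint, the vertices $\{x_y : y\in Y\}$ are $|Y|$ distinct vertices forming a set $X'\subseteq X$, and $y\mapsto x_y$ is a bijection $Y\to X'$. (Length-$0$ paths, at a vertex of $X\cap Y$, are permitted and cause no trouble.) Since the paths are in particular edge-disjoint, each $Q_y$ that uses an edge of $E(A,B)$ contributes a \emph{distinct} such edge.

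Next, assume w.l.o.g.\ that $m:=|A\cap Y|\le|B\cap Y|$. Call $Q_y$ \emph{crossing} if it contains an edge of $E(A,B)$; note that a non-crossing $Q_y$ has all of its vertices on one side of the cut, so $x_y$ lies on the same side as $y$. Let $s$ (resp.\ $s'$) be the number of crossing paths whose $Y$-endpoint lies in $A$ (resp.\ in $B$), and set $\sigma=\max\{s,s'\}$. The $s$ crossing paths with $Y$-endpoint in $A$ and the $s'$ with $Y$-endpoint in $B$ are distinct paths, hence contribute distinct crossing edges, so $|E(A,B)|\ge\max\{s,s'\}=\sigma$ (in fact $\ge s+s'$, but $\sigma$ suffices). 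Meanwhile the $m-s$ vertices $y\in A\cap Y$ with $Q_y$ non-crossing give $|A\cap X'|\ge m-s$, and, since $|B\cap Y|\ge m$, the non-crossing paths with $Y$-endpoint in $B$ give $|B\cap X'|\ge m-s'$; thus $\min\{|A\cap X'|,|B\cap X'|\}\ge m-\sigma$. Applying the $\alpha$-cut-linkedness of $X$ to the cut $(A,B)$ and using $X'\subseteq X$ yields $|E(A,B)|\ge\alpha\min\{|A\cap X|,|B\cap X|\}\ge\alpha(m-\sigma)$. Finally I would split on $\sigma$ versus $m/2$: if $\sigma\ge m/2$ then $|E(A,B)|\ge\sigma\ge m/2$, and if $\sigma<m/2$ then $|E(A,B)|\ge\alpha(m-\sigma)\ge\alpha m/2$. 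Either way $|E(A,B)|\ge\min\{1/2,\alpha/2\}\cdot m$, which is exactly what is needed.

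There is no genuine obstacle here; the only step requiring a little care is the bookkeeping in the third paragraph — checking that the $X$-endpoints used to lower-bound $|A\cap X'|$ and $|B\cap X'|$ are genuinely distinct (immediate from vertex-disjointness of $\mathcal{Q}$) and that crossing paths are not double-counted between the two sides (they are not, since a path's $Y$-endpoint pins down its side). The $\min\{1/2,\alpha/2\}$ loss is precisely the worst case of the trade-off $\max\{\sigma,\alpha(m-\sigma)\}$ over $\sigma\in[0,m]$.
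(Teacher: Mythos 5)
Your proof is correct and takes essentially the same approach as the paper: lower-bound $|E(A,B)|$ both by the number of crossing paths in $\mathcal{Q}$ and by $\alpha$ times $\bigl(\min\{|A\cap Y|,|B\cap Y|\}\ \text{minus the number of crossing paths}\bigr)$, via the injection $y\mapsto x_y$, then take the worst case of the trade-off. The only cosmetic difference is that you work with $\sigma=\max\{s,s'\}$ whereas the paper uses the total number of crossing paths $|\mathcal{Q}'|=s+s'$; since $\sigma\le|\mathcal{Q}'|$ this is a marginally tighter intermediate bound but yields the identical constant $\min\{1/2,\alpha/2\}$.
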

\begin{proof}
Let $U\subseteq V(G)$ and $\bar{U}=V(G)\setminus U$.
Let ${\cal Q}'$ be the set of all paths in ${\cal Q}$ that contain at least one edge crossing the cut $(U,\bar{U})$.
Then $|U\cap X|\geq |U\cap Y|-|{\cal Q}'|$ and
$|\bar{U}\cap X|\geq |\bar{U}\cap Y|-|{\cal Q}'|$.
Thus $\min\{|U\cap X|, |\bar{U}\cap X|\} \geq \min\{|U\cap Y|, |\bar{U}\cap Y|\}-|{\cal Q}'|$.
It follows that
\begin{align*}
|E(U,\bar{U})| &\geq \max\{|Q'|, \alpha \cdot (\min\{|U\cap Y|, |\bar{U}\cap Y|\}-|Q'|)\} \\
 &\geq \min\{1/2, \alpha/2\} \cdot \min\{|U\cap Y|, |\bar{U}\cap Y|\}-|Q'|)\},
\end{align*}
as required.
\end{proof}

We are now ready to present the main result in this section.
The following lemma is the main ``existential'' result with respect to
computing a doubly-cut-linked set. Actually, as stated below, we first compute, in polynomial time,
some separation $(U,U')$ of $G$ such that there exists some $X\subseteq U\cap U'$
that is $\Omega(1)$-cut-linked in both $G[U]$ and $G[U']$ (in fact, this set $X$ has even
a stronger property as stated below). In the next lemma,
we shall show how to find such a vertex set $X$ in polynomial time.

\begin{lemma}[Existence of a large doubly-cut-linked set]\label{lem:doubly-cut-linked-existence}
Let $G$ be a $n$-vertex $k$-apex graph of treewidth $t > 2 k$.
Then there exists a polynomial-time algorithm which given $G$ outputs some separation $(U,U')$ of $G$ of order at most $O(t \log^{3/2} n)$ such that there exists some $X\subseteq U\cap U'$ with $|X|= \Omega(t/\log^{9/2} n)$ that is $\Omega(1)$-cut-linked in both $G[U]$ and $G[U']$.
Moreover let $G_U$ be the graph obtained from $G[U]$ by removing all edges in $G[U\cap U']$, that is $G_U=G[U]\setminus E(G[U\cap U'])$.
Then $X$ is also $\Omega(1)$-cut-linked in $G_U$.
\end{lemma}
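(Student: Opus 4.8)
The plan is to combine an efficiently computable path decomposition (which supplies the separation) with the existence of a large grid minor forced by the $k$-apex hypothesis (which certifies the cut-linked set). First I would use Corollary~\ref{thm:tw-pw-approx} to compute in polynomial time a path decomposition $(R_1,\ldots,R_m)$ of $G$ of width $w=O(t\log^{3/2}n)$. For each index $j$ the bag $R_j$ induces the separation $(U_j,U_j')$ with $U_j=\bigcup_{i\le j}R_i$, $U_j'=\bigcup_{i\ge j}R_i$ and $U_j\cap U_j'\subseteq R_j$, so every such separation has order at most $w+1=O(t\log^{3/2}n)$; the algorithm will simply output one of them, so the order bound is automatic and all the work is in choosing $j$ so that the required $X\subseteq R_j$ exists. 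I would also record the elementary reduction: since $G_U=G[U_j]\setminus E(G[R_j])$ is a subgraph of $G[U_j]$, any set that is $\alpha$-cut-linked in $G_U$ is $\alpha$-cut-linked in $G[U_j]$ as well; hence it suffices to find $X$ that is $\Omega(1)$-cut-linked in $G_U$ and in $G[U_j']$.

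Next I would invoke the $k$-apex hypothesis. Since $t>2k$, deleting an apex set $K$ of size $k$ leaves a planar graph of treewidth at least $t-k>t/2$, so by Lemma~\ref{lem:grid_existence} the graph $G$ contains an $(\rho\times\rho)$-grid minor $H$ with $\rho=\Omega(t)$; fix a minor mapping $\mu$. The point is that a grid minor is unavoidably ``spread out'' along any bounded-width path decomposition: each branch set $\mu(v)$, and more generally $\mu(\mathrm{col}(c))$ for a column $c$ of $H$, is connected, so its set of occupied bags is an interval; the intervals of consecutive columns overlap (they share a grid edge, hence a common bag); and since branch sets are disjoint and $|R_i|\le w+1$, at most $w+1$ column-intervals contain any fixed index, while the grid contributes $\rho^2=\Omega(t^2)$ vertices and each bag holds only $O(t\log^{3/2}n)$ of them. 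From this interval/isoperimetry combinatorics I would extract an index $j$ such that $R_j$ separates, on one side, a set of columns of $H$ lying strictly in $U_j\setminus R_j$ that still supports (via Lemma~\ref{lem:grids_persistency}) an $(s\times s)$-grid minor $H_L$ of $G[U_j]$, and symmetrically an $(s\times s)$-grid minor $H_R$ of $G[U_j']$ lying strictly in $U_j'\setminus R_j$, with $s=\Omega(t/\log^{O(1)}n)$.

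Finally I would convert the two grid minors into the doubly-cut-linked set. By Lemma~\ref{lem:grid-row-cut-linked} the first row of $H_L$ is $1$-cut-linked in $H_L$, and since $H_L$ has maximum degree at most $4$ and is a minor of $G[U_j]$, Proposition~\ref{prop:cut-linked_degree} yields a set $Y_L$ of row-representatives that is $\Omega(1)$-cut-linked in $G[U_j]$; symmetrically a set $Y_R$ that is $\Omega(1)$-cut-linked in $G[U_j']$. To bring these to a common subset of $R_j$ I would route, in $G[U_j]$, a family of vertex-disjoint paths from $Y_L$ to $R_j$ — using a Menger argument based on the fact that $R_j$ separates $H_L$ from the rest of $G$, together with $\Omega(1)$-cut-linkedness to defeat small cuts (shrinking $Y_L$ by a polylogarithmic factor if necessary), possibly combining partial path families via Lemma~\ref{lem:merging_cut-linked} — and symmetrically in $G[U_j']$; then take $X$ to be a suitably large subset of the intersection of the two landing sets, so that $|X|=\Omega(t/\log^{9/2}n)$. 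Applying Lemma~\ref{lem:linking_paths} once on each side shows $X$ is $\Omega(1)$-cut-linked in $G[U_j]$ and in $G[U_j']$; and since the $U_j$-side linking paths run entirely through the interior $U_j\setminus R_j$, they use no edge inside $R_j$, so the same argument applied to $G_U$ in place of $G[U_j]$ gives $\Omega(1)$-cut-linkedness of $X$ in $G_U$ as well.

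I expect the main obstacle to be the structural step in the second paragraph: proving that a bounded-width path decomposition of a $k$-apex graph of treewidth $t>2k$ must contain a bag leaving, on each side, a grid minor of width $\Omega(t/\log^{O(1)}n)$ that is moreover linkable to that bag by many disjoint paths. The delicate feature is that the bags are wider than the grid ($w\gg\rho$), so one cannot merely count columns on either side of a cut; the argument must combine the interval-overlap structure of the branch sets with grid isoperimetry — a small separator cannot split a grid into two bulky parts — and it is precisely the planarity of $G\setminus K$ that forces the treewidth, equivalently the grid minor, to be distributed over many bags rather than concentrated near one of them.
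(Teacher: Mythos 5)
Your outline matches the paper's strategy at the structural level: both use the pathwidth approximation of Corollary~\ref{thm:tw-pw-approx} to get a path decomposition of width $O(t\log^{3/2}n)$, both invoke Lemma~\ref{lem:grid_existence} for an $\Omega(t)\times\Omega(t)$-grid minor $\Gamma$, and both look for a ``pivot'' bag index $i^*$ such that, after removing that bag, Lemma~\ref{lem:grids_persistency} still leaves a grid of side $\Omega(t/\mathrm{polylog}\,n)$ on each side. (In the paper this is made precise by taking $i^*$ maximal such that $\Gamma_{i^*}^+$ still contains a large grid minor, and then observing that the complement $\Gamma_{i^*+1}^-$ must contain one as well.) Your observation that cut-linkedness in $G_U$ implies cut-linkedness in $G[U]$ is also correct and does simplify bookkeeping. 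The first two paragraphs of your proposal are therefore sound in spirit, modulo the details you yourself flag as ``the main obstacle.''

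The real gap is in the last step, and it is not a cosmetic one. You propose to take the first row $Y_L$ of the left grid and the first row $Y_R$ of the right grid, route $Y_L\to R_j$ vertex-disjointly inside $G[U_j]$ and $Y_R\to R_j$ vertex-disjointly inside $G[U_j']$, and then set $X$ to be a large subset of the \emph{intersection} of the two landing sets. There is no reason for those two landing sets to have a large intersection --- or indeed any intersection at all. Each landing set has size $\Omega(t/\mathrm{polylog}\,n)$ inside a bag of size $O(t\log^{3/2}n)$, so they could easily be disjoint. The two routing problems are solved independently in two disjoint half-graphs, and nothing couples their outputs. (Lemma~\ref{lem:merging_cut-linked}, which you gesture at, merges two cut-linked sets into one cut-linked set given paths to a common reference set; it does not make two unrelated landing sets coincide.)

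The paper avoids this by never routing separately from the two sides. Instead it proves (Lemma~\ref{lem:grid_isoperimetry_boundaries}) that the \emph{union} $Q_1\cup Q_2$ of the two first rows is $\Omega(1)$-cut-linked in the ambient grid $\Gamma$ itself, which by Menger yields $\Omega(\min\{|Q_1|,|Q_2|\})$ vertex-disjoint paths between $Q_1$ and $Q_2$ all living in $\Gamma$. Lifting these to $G$ gives vertex-disjoint paths from the left grid to the right grid, each of which must cross the pivot bag; taking the first vertex of $B_{i^*+1}$ that each such path visits defines $X$. By construction the same $X$ is reached by disjoint paths from $Q_1'$ entirely inside one side and by disjoint paths from $Q_2'$ entirely inside the other, so a single application of Lemma~\ref{lem:linking_paths} on each side finishes the proof. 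If you want to repair your argument, you should route a single family of paths \emph{between} the two grids (for which you would want something like Lemma~\ref{lem:grid_isoperimetry_boundaries}, which you did not use), and let the bag-crossing points define $X$, rather than routing each grid to the bag independently and hoping the landing sets overlap.
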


\begin{proof}
By Lemma \ref{lem:grid_existence} there exists some $r\geq c_1 (t-k)$, for some universal constant $c_1>0$, and some minor mapping $\mu:V(\Gamma) \to 2^{V(G)}$ where $\Gamma$ is the $(r\times r)$ grid.

Let $({\cal P}, {\cal X})$ be a path decomposition of $G$ of width $w\leq c_2 t \log^{3/2} n$ computed by the algorithm in Theorem \ref{thm:tw-pw-approx}, for some universal constant $c_2>0$.
Let ${\cal X}=\{B_1,\ldots,B_{\ell}\}$ where $E({\cal P})=\{\{B_1,B_2\},\ldots,\{B_{\ell-1},B_{\ell}\}\}$.

For any $i\in\{1,\ldots,\ell\}$ let
\[
G_i^- = G\setminus \bigcup_{j=i}^{\ell} B_j
\]
and
\[
G_i^+ = G\setminus \bigcup_{j=1}^{i} B_j.
\]
Let also $\Gamma_i^-$ be the graph obtained as follows:
We start with $\Gamma$ and we delete all $v\in V(\Gamma)$ such that $\mu(v)$ intersects $B_{i}\cup \ldots \cup B_{\ell}$.
Formally we define
\[
\Gamma_i^- = \Gamma\left[\{v\in V(\Gamma) : \mu(v)\subseteq V(G_i^-)\}\right].
\]
Similarly we define
\[
\Gamma_i^+ = \Gamma\left[\{v\in V(\Gamma) : \mu(v)\subseteq V(G_i^+)\}\right].
\]
Observe that for any $i\in \{1,\ldots,\ell\}$ we have that $\Gamma_i^-$ and $\Gamma_i^+$ are vertex-disjoint subgraphs of $\Gamma$.
Moreover $\Gamma_i^-$ is a minor of $G_i^-$, and $\Gamma_i^+$ is a minor of $G_i^+$.

Let $i^*\in \{1,\ldots,\ell\}$ be maximum such that $\Gamma_i^+$ contains some $(r'\times r')$-grid minor where $r'=\gamma \cdot r$, for some $\gamma>0$ to be determined later.
Since $G_\ell^+$ is the empty graph it follows that $i^*<\ell$.
Since $G_{i^*}^+$ and $G_{i^*+1}^+$ differ by at most $w$ vertices it follows that $\Gamma_{i^*+1}^+$ is obtained from $\Gamma_{i^*}^+$ by removing at most $w$ vertices.
By Lemma \ref{lem:grids_persistency} it follows that $\Gamma_{i^*+1}^+$ contains some $(r''\times r'')$-grid $Z$ as a minor for some $r''=\Omega((r')^2/w)=\Omega(\gamma^2 r^2/(t \log^{3/2} n))=\Omega(\gamma^2 (t-k)^2/(t \log^{3/2} n))=\Omega(\gamma^2 t/\log^{3/2} n)$.
It follows by composition of minor mappings that $Z$ is also a minor of $G_{i^*+1}^+$.

By Lemma \ref{lem:grids_persistency} we have that for any $i\in \{1,\ldots,\ell\}$, there exists a universal constant $c_3>0$ and a $(r'''\times r''')$-grid $J_i$ that is a minor of $\Gamma_i^+\cup \Gamma_i^-$ for some $r'''\geq c_3 r^2/w\geq \frac{c_3 c_1^2}{c_2} (t-k)^2/(t\log^{3/2}n)\geq \frac{c_3 c_1^2}{4 c_2} t/\log^{3/2}n$.
By setting $\gamma = \frac{c_1 c_3}{ 4 c_2 \log^{3/2} n}$ we get that $r'''\geq r'$.
By the choice of $i^*$ it follows that $G_{i^*+1}^+$ does not contain a $(r'\times r')$-grid minor.
Therefore $J_{i^*+1}$ must be a minor of $\Gamma_{i^*+1}^-$ and thus also a minor of $G_{i^*+1}^-$.

We have that $Z$ and $J_{i^*+1}$ are vertex-disjoint minors of $\Gamma$.
Let $\mu_1$ be the minor mapping for $Z$ and let $\mu_2$ be the minor mapping for $J_{i^*+1}$.
Let $Y_1$ and $Y_2$ be the sets of vertices in the first row of $Z$ and $J_{i^*+1}$ respectively.
For each $y\in Y_1$ pick some $q(y)\in \mu_1(y)\subseteq V(G)$ and for each $y\in Y_2$ pick some $q(y)\in \mu_2(y)\subseteq V(G)$.
Let $Q_1 = \bigcup_{y\in Y_1} \{q(y)\}$ and $Q_2 = \bigcup_{y\in Y_2} \{q(y)\}$.
By Lemma \ref{lem:grid_isoperimetry_boundaries} we have that $Q_1\cup Q_2$ is $1/384$-cut-linked in $\Gamma$.
Since $\Gamma$ has maximum degree 4 it follows that there exists a collection ${\cal Q}$ of at least $\min\{|Q_1|, |Q_2|\}/(384 \cdot 4) = \min\{r',r''\}/1536 = \Omega(t/\log^{9/2} n)$ vertex-disjoint paths between $Q_1$ and $Q_2$ in $\Gamma$.
For each $x\in Q_1\cup Q_2$ pick some $q(x)\in \mu(x)$.
For each $i\in \{1,2\}$ let $Q_i'=\bigcup_{x\in Q_i} \{q(x)\}$.
Then there exists a collection ${\cal Q}'$ of at least $|{\cal Q}|$ vertex-disjoint paths between $Q_1'$ and $Q_2'$ in $G$.
Each path $P\in {\cal Q}'$ must intersect some vertex in $B_{i^*+1}$.
Let $v_P$ be the first vertex in $B_{i^*+1}$ that is visited when traversing $P$ starting from $Q_1'$.
Let $X=\{v_P\}_{P\in {\cal Q}'}$,
$U=G[B_1\cup \ldots\cup B_{i^*+1}]$, and
$U'=G[B_{i^*+1}\cup \ldots\cup B_{\ell}]$.
We have $|X|=|{\cal Q}'|=|{\cal Q}| = \Omega(t/\log^{9/2} n)$.
By Proposition \ref{prop:cut-linked_degree} we have that $Q'_1$ is $1/1536$-cut-linked in $G[U]$ and $G_U$, and $Q'_2$ is $1/1536$-cut-linked in $G[U']$.
It follows by Lemma \ref{lem:linking_paths} that $X$ is $1/3072$-cut-linked in $G[U]$, $G_U$, and $G[U']$, which concludes the proof.
\end{proof}

Finally, we show how to compute a vertex set $X$ as in Lemma \ref{lem:doubly-cut-linked-existence} in polynomial time, which completes the proof of this section.

\begin{lemma}[Computing a doubly-cut-linked set]\label{lem:doubly-linked_compute}
Let $G$ be a $n$-vertex $k$-apex graph of treewidth $t > 2 k$.
Then there exists a polynomial-time algorithm which computes some separation $(U,U')$ of $G$ of order at most $O(t \log^{3/2} n)$ and some $Y\subseteq U\cap U'$ with $|Y| = \Omega(t/\log^{9/2} n)$ such that $Y$ is $\Omega\left(\frac{1}{\log^{13/2} n}\right)$-cut-linked in both $G[U]$ and $G[U']$.
Moreover let $G_U$ be the graph obtained from $G[U]$ by removing all edges in $G[U\cap U']$, that is $G_U=G[U]\setminus E(G[U\cap U'])$.
Then $Y$ is also $\Omega\left(\frac{1}{\log^{13/2} n}\right)$-cut-linked in $G_U$.
\end{lemma}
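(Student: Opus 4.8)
The plan is to use Lemma~\ref{lem:doubly-cut-linked-existence} for the existential content and to make it constructive by running a \emph{well-linked decomposition} on top of the approximate sparsest-cut algorithm of Theorem~\ref{thm:ARV}. First run the algorithm of Lemma~\ref{lem:doubly-cut-linked-existence} to get, in polynomial time, the separation $(U,U')$ with $|U\cap U'|\le O(t\log^{3/2}n)$ together with the guarantee that there is \emph{some} (unknown) $X\subseteq U\cap U'$ with $|X|=\Omega(t/\log^{9/2}n)$ that is $c$-cut-linked, for a universal constant $c>0$, in each of $G[U]$, $G[U']$, and $G_U$. Write $S=U\cap U'$ and $\beta=|X|/|S|$, so that $\beta=\Omega(1/\log^{6}n)$ because $|S|/|X|=O(\log^{3/2}n)\cdot O(\log^{9/2}n)$. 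Since $G_U$ and $G[U]$ have the same vertex set and $G_U$ is a subgraph of $G[U]$, cut-linkedness in $G_U$ implies cut-linkedness in $G[U]$ with the same parameter; so it suffices to compute a single $Y\subseteq S$ with $|Y|=\Omega(|X|)$ that is $\Omega(1/\log^{13/2}n)$-cut-linked in both $G_U$ and $G[U']$.

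The engine is a pruning subroutine applied to a fixed graph $H$ together with an initial terminal set $T_0\subseteq S$ carrying a promise: there is an unknown $X_0\subseteq T_0$ with $|X_0|\ge\beta_0|T_0|$ that is $c$-cut-linked in $H$. Fix a threshold $\lambda=\Theta(c\beta_0)$, set $Y\leftarrow T_0$, and iterate: run the algorithm of Theorem~\ref{thm:ARV} on $H$ with demand uniform over the current $Y$; if it returns a cut of sparsity below $\lambda$, discard from $Y$ one of the two sides of that cut and repeat; otherwise stop. On termination no cut of $H$ has sparsity below $\Omega(\lambda)$ relative to $Y$, up to the $O(\log^{1/2}n)$ loss of Theorem~\ref{thm:ARV}, so $Y$ is $\Omega(\beta_0/\log^{1/2}n)$-cut-linked in $H$; and since each iteration strictly shrinks $Y$, the subroutine runs in polynomial time. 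The size of the output is controlled by charging against $X_0$: for a sparse cut $(A,B)$ we have $|E_H(A,B)|<\lambda\min\{|A\cap Y|,|B\cap Y|\}$, while $c$-cut-linkedness of $X_0$ gives $|E_H(A,B)|\ge c\min\{|A\cap X_0|,|B\cap X_0|\}$, so the side of $(A,B)$ carrying the minority of $X_0$ contains fewer than $(\lambda/c)$ times its own terminal count of vertices of $X_0$. The rule is therefore to discard at each step the side identified as the $X_0$-minority side (since $X_0$ is not known explicitly, this side is located by a max-flow computation inside each of the two sides, using that a $c$-cut-linked set cannot have most of its mass on the side from which little flow escapes). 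With this rule the discarded sets are pairwise disjoint subsets of $T_0$ and they remove in total fewer than $(\lambda/c)|T_0|\le|X_0|/8$ vertices of $X_0$, so the final $Y$ satisfies $|Y|\ge|Y\cap X_0|\ge(7/8)|X_0|=\Omega(\beta_0|T_0|)$.

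To get a single $Y$ good for both relevant graphs, run the subroutine twice in sequence. First with $H=G_U$, $T_0=S$, $X_0=X$, $\beta_0=\beta$, producing $Y_1\subseteq S$ that is $\Omega(1/\log^{13/2}n)$-cut-linked in $G_U$ (hence in $G[U]$) with $|Y_1|=\Omega(|X|)$ and $|X\cap Y_1|=\Omega(|X|)$. Then with $H=G[U']$, $T_0=Y_1$, $X_0=X\cap Y_1$ (which is $c$-cut-linked in $G[U']$ as a subset of $X$), and $\beta_0=|X\cap Y_1|/|Y_1|=\Omega(\beta)$, producing $Y=Y_2\subseteq Y_1$ that is $\Omega(1/\log^{13/2}n)$-cut-linked in $G[U']$ with $|Y_2|=\Omega(|X|)=\Omega(t/\log^{9/2}n)$. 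Finally, cut-linkedness passes to subsets --- for any bipartition $(A,B)$, $\min\{|A\cap Y_2|,|B\cap Y_2|\}\le\min\{|A\cap Y_1|,|B\cap Y_1|\}$ --- so $Y_2$ is still $\Omega(1/\log^{13/2}n)$-cut-linked in $G_U$ and in $G[U]$. Output $(U,U')$ and $Y=Y_2$.

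The main obstacle is the size accounting inside the pruning subroutine. The naive rule of discarding the side carrying fewer vertices of the \emph{current} terminal set $Y$ provably fails: a sparse cut can be $Y$-minority on exactly the side that carries essentially all of $X_0$ (for instance when $X_0$ is internally expander-like but weakly attached to the rest of $S$), and discarding it would destroy the whole promised set in one step. Overcoming this needs both the threshold to be set below $c\beta_0$ (so that no cut that unbalanced with respect to $X_0$ can be that sparse) and a careful, flow-certified choice of which side to discard. The only other subtlety is tracking the polylog exponent: $13/2$ is exactly $1/2$ from the approximation factor of Theorem~\ref{thm:ARV} plus $6$ from the promise ratio $|S|/|X|=O(\log^{3/2}n)\cdot O(\log^{9/2}n)$ of Lemma~\ref{lem:doubly-cut-linked-existence}, and one checks that the two-phase composition costs only a constant factor in the size and nothing in the exponent, because $\beta$ shrinks by at most a constant factor per phase and cut-linkedness is inherited by subsets.
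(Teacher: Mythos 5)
Your high-level plan --- invoke Lemma~\ref{lem:doubly-cut-linked-existence} for the promise and then iteratively refine using the sparsest-cut algorithm of Theorem~\ref{thm:ARV} --- matches the paper, and your exponent arithmetic is right. But your realization of the well-linked decomposition as an \emph{eager discard} procedure has a genuine gap that the paper's version of the same idea cleverly avoids.

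Concretely, you keep a single terminal set $Y$ and, upon finding a $Y$-sparse cut $(A,B)$, you \emph{discard} one side. You correctly observe that discarding the $Y$-minority side can destroy $X_0$ in one step, and you propose instead to discard the ``$X_0$-minority'' side, certified by ``a max-flow computation inside each of the two sides, using that a $c$-cut-linked set cannot have most of its mass on the side from which little flow escapes.'' That claim is false: a bipartition $(A,B)$ can have $X_0$ \emph{entirely} inside $A$, in which case $\min\{|A\cap X_0|,|B\cap X_0|\}=0$ and the cut-linkedness of $X_0$ places no lower bound whatsoever on $|E(A,B)|$. Such a cut can be arbitrarily sparse with respect to $Y$ while also being arbitrarily unbalanced with respect to $X_0$, and no flow value computed from either side can distinguish this configuration from its mirror image. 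Your threshold tuning ($\lambda<c\beta_0$) does not help, since the inequality $|E(A,B)|\ge c\min\{|A\cap X_0|,|B\cap X_0|\}$ becomes vacuous precisely in the dangerous case. In short, the side-selection step you need is not computable from the information available, and the proof does not go through.

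The paper's proof takes the same iterative sparse-cut viewpoint but never commits to discarding anything. It maintains a \emph{partition} $P_0,P_1,\ldots$ of $U\cap U'$, splitting any cluster that admits a sparse cut (in $G_U$ or $G[U']$) into its two pieces and keeping both; upon termination every cluster is cut-linked, and the algorithm outputs the \emph{largest} one. The set $X$ is used only in the analysis: one tracks $X_i$, the largest piece of $X$ still inside a single cluster $C_i$, and uses the chain $\mu_{C_i}(C_{i+1})\le\alpha\min\{|C_{i+1}|,|C_i|-|C_{i+1}|\}$ (sparsity) together with $\mu_{C_i}(C_{i+1})\ge c_3(|X_i|-|X_{i+1}|)$ (cut-linkedness of $X$) to show that $|C_i|$ cannot shrink without a proportional loss in $|X_i|$; telescoping and choosing $\alpha=\Theta(1/\log^6 n)$ then gives $|C_t|\ge|X_t|=\Omega(t/\log^{9/2}n)$. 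Because the algorithm never has to decide which side to keep, the impossible side-selection you ran into simply never arises. If you rewrite your argument to maintain a partition and output the largest final cluster --- the standard form of a well-linked decomposition --- the rest of your reasoning (passing cut-linkedness to subsets, running two phases for $G_U$ and $G[U']$, and the polylog bookkeeping) carries over. One further simplification the paper uses: rather than running two sequential phases, at each step it invokes Theorem~\ref{thm:ARV} on \emph{both} $G_U$ and $G[U']$ and splits a cluster if either graph exhibits a sparse cut, so a single decomposition suffices.
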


\begin{proof}
Let $(U,U')$ be the separation of order at most $c_1 t \log^{3/2} n$ computed by the algorithm in Lemma \ref{lem:doubly-cut-linked-existence}, for some universal constant $c_1>0$.
Lemma \ref{lem:doubly-cut-linked-existence} guarantees that there exists some $X\subseteq U\cap U'$ with $|X| \geq c_2 t \log^{9/2} n$ such that $X$ is $c_3$-well-lined in both $G_U$ and $G[U']$, for some universal constants $c_2,c_3>0$


We compute a sequence ${\cal P}=\{P_i\}_{i=0}^{\ell}$ of partitions of $U\cap U'$.
We set $P_0=\{U\cap U'\}$, i.e.~with a single cluster containing all vertices in $U\cap U'$.
Given $P_i$ for some $i\geq 0$, we proceed as follows.
For any $C\in P_i$ and for any $S\subseteq C$ let
$\mu_C(S)$ (resp.~$\mu_C'(S)$) denote the minimum number of edges in a cut in $G_U$ (resp.~$G[U']$) that separates $S$ from $C\setminus S$.
Let $\alpha>0$ be a parameter to be determined later.
By Theorem \ref{thm:ARV} there exists some universal constant $c_4>0$ such that in polynomial time we can decide whether one of the following two conditions holds:
\begin{description}
\item{(1)}
There exists $C\in P_i$ and some non-empty $S\subsetneq C$ such that either
\[
\mu_C(S) \leq \alpha \cdot \min\{|S|, |C\setminus S|\}
\]
or
\[
\mu'_C(S) \leq \alpha \cdot \min\{|S|, |C\setminus S|\}
\]

\item{(2)}
For all $C\in P_i$ and for all non-empty $S\subsetneq C$ we have
\[
\mu_C(S) \geq \min\{|S|, |C\setminus S|\} \cdot \frac{c_4 \alpha}{\log^{1/2}n}
\]
and
\[
\mu'_C(S) \geq \min\{|S|, |C\setminus S|\} \cdot \frac{c_4 \alpha}{\log^{1/2}n}.
\]
\end{description}
If condition (1) holds for some $C\in P_i$ and some $S\subsetneq C$ then we obtain $P_{i+1}$ as a refinement of $P_i$ by partitioning $C$ into $S$ and $C\setminus S$.
Formally, we set $P_{i+1} = (P_i \setminus \{C\})\cup \{S,C\setminus S\}$.
Otherwise, if condition (2) holds then all clusters in $P_i$ are $\frac{c_4 \alpha}{\log^{1/2} n}$-cut-linked in both $G_U$ and $G[U']$;
we terminate the sequence ${\cal P}$ at $P_i$, and we output the largest cluster in ${\cal P}$.
This concludes the description of the algorithm.

It is immediate that the length of ${\cal P}$ is at most $n-1$ and thus the algorithm runs in polynomial time.
It remains to show that the cluster that the algorithm outputs is sufficiently large.
To that end we define $X_0\supseteq X_1 \supseteq \ldots \supseteq X_{\ell}$ where $X_0=X$ and for each $i\in \{1,\ldots,\ell\}$ we set $X_i$ to be the largest subset of $X_{i-1}$ that is contained in some cluster in $P_i$, breaking ties arbitrarily.
For any $i\in \{0,\ldots,\ell\}$, let $C_i$ be the unique cluster in $P_i$ that contains $X_i$; clearly, we have $C_0=U\cap U'$ and $C_0\supseteq C_1\supseteq\ldots\supseteq C_{\ell}$.

Consider some $i\in \{0,\ldots,\ell-1\}$.
If $P_{i+1}$ is obtained from $P_i$ by partitioning some cluster other than $C_i$, then we have $X_{i+1}=X_i$ and $C_{i+1}=C_i$.
Otherwise, suppose that $P_{i+1}$ is obtained by partitioning $C_i$ into $C_{i+1}$ and $C_i\setminus C_{i+1}$.
By construction we have that either
\begin{align}
\mu_{C_i}(C_{i+1}) &\leq \alpha \cdot \min\{|C_{i+1}|, |C_i| - |C_{i+1}|\} \label{eq:srink_mu}
\end{align}
or
\begin{align}
\mu'_{C_i}(C_{i+1}) &\leq \alpha \cdot \min\{|C_{i+1}|, |C_i| - |C_{i+1}|\}  \label{eq:srink_mu_prime}.
\end{align}
We may assume w.l.o.g.~that \eqref{eq:srink_mu} holds.
Since $X$ is $c_3$-cut-linked in $G[U]$ it follows that
\begin{align}
\mu_{C_1}(C_{i+1}) &\geq c_3 (|X_i| - |X_{i+1}|). \label{eq:srink_X}
\end{align}
By \eqref{eq:srink_mu_prime} and \eqref{eq:srink_X} we get
\begin{align*}
\min\{|C_{i+1}|, |C_i|-|C_{i+1}|\} &\geq \frac{c_3}{\alpha} \cdot (|X_i| - |X_{i+1}|).
\end{align*}
Thus we obtain that for all $i\in \{0,\ldots,\ell-1\}$ (regardless or whether $P_{i+1}$ is obtained from $P_i$ by partitioning $C_i$ or not), we have
\[
|C_i|-|C_{i+1}| \geq \frac{c_3}{\alpha} \cdot (|X_i| - |X_{i+1}|).
\]
Thus
\begin{align*}
|C_0| - |C_t| &= \sum_{i=0}^{\ell-1} |C_i|-|C_{i+1}|\\
 &\geq \frac{c_3}{\alpha} \sum_{i=0}^{\ell-1} |X_i| - |X_{i+1}|\\
 &= \frac{c_3}{\alpha} (|X_0| - |X_t|).
\end{align*}
Thus
$|X_t| \geq |X_0| - \frac{\alpha}{c_3} |C_0| = |X| - \frac{\alpha}{c_3} |U\cap U'| \geq \frac{c_2}{\log^{9/2}} t - \frac{\alpha}{c_3} c_1 t \log^{3/2} n$.
By setting $\alpha=\frac{c_2 c_3}{2 c_1 \log^{6} n}$, we obtain that $|X_t| \geq \frac{c_2}{2 \log^{9/2}} t$
and $X_t$ is
$\frac{c_2 c_3 c_4}{2 c_1 \log^{13/2} n}$-cut-linked in both $G_U$ and $G[U']$,
concluding the proof.
\end{proof}

\section{Pseudogrids}\label{sec:pseudogrids}

In this section we introduce the concept of a \emph{pseudogrid}, and prove some preliminary results.
This concept will be used in the construction of a grid minor.
Specifically, we are given two sets ${\cal P}$, ${\cal Q}$ of disjoint paths such that each path in ${\cal P}$ intersects all paths in ${\cal Q}$.

\begin{definition}[Pseudogrid]
Let ${\cal P}$, ${\cal Q}$ be families of paths in some graph.
Suppose that all paths in ${\cal P}$ are vertex disjoint and all paths in ${\cal Q}$ are vertex disjoint.
Suppose further that for all $P\in {\cal P}$ and $Q\in {\cal Q}$ we have $V(P)\cap V(Q)\neq\emptyset$.
Then we say that $({\cal P}, {\cal Q})$ is a \emph{pseudogrid}.
\end{definition}

\begin{lemma}[Finding a large complete bipartite graph]\label{lem:compute_bipartite}
Let $\Gamma$ be a bipartite graph with left and right sides $L$ and $R$.
Suppose there exists $X\subseteq V(\Gamma)$ such that $\Gamma\setminus X$ is complete bipartite with left and right sides $L\setminus X$ and $R\setminus X$.
Then there exists a polynomial time algorithm which given $\Gamma$, $L$, and $R$, outputs some $X'\subseteq V(\Gamma)$ with $|X'|\leq 2|X|$ such that $\Gamma\setminus X'$ is complete bipartite with left and right sides $L\setminus X'$ and $R\setminus X'$.
\end{lemma}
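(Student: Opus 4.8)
The plan is to reduce the problem to a simple combinatorial covering/matching argument about the bipartite "non-edges" of $\Gamma$. First I would form the \emph{complement-of-edges} bipartite graph $\widehat{\Gamma}$ on the same vertex bipartition $L,R$, where $\{u,v\}$ with $u\in L$, $v\in R$ is an edge of $\widehat{\Gamma}$ iff $\{u,v\}\notin E(\Gamma)$. The hypothesis says there exists $X$ with $\Gamma\setminus X$ complete bipartite, which is exactly the statement that $X$ is a \emph{vertex cover} of $\widehat{\Gamma}$: every non-edge of $\Gamma$ has at least one endpoint in $X$. So $\widehat{\Gamma}$ has a vertex cover of size $|X|$, hence (by König's theorem, or simply since a vertex cover bounds any matching) a maximum matching $M$ in $\widehat{\Gamma}$ has $|M|\le |X|$.

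The algorithm is then: compute a \emph{maximum matching} $M$ in $\widehat{\Gamma}$ (polynomial time, e.g.\ Hopcroft–Karp), and output $X'=V(M)$, the set of all $2|M|$ endpoints of $M$. I would then verify the two required properties. For the size bound, $|X'|=2|M|\le 2|X|$ as noted above. For correctness, I claim $\widehat{\Gamma}\setminus X'$ has no edges, i.e.\ $\Gamma\setminus X'$ is complete bipartite with sides $L\setminus X'$, $R\setminus X'$: if $\widehat{\Gamma}\setminus X'$ contained an edge $e$, then $e$ is disjoint from every edge of $M$ (since all endpoints of $M$ lie in $X'$), so $M\cup\{e\}$ would be a larger matching, contradicting maximality of $M$. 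Thus every pair $u\in L\setminus X'$, $v\in R\setminus X'$ is an edge of $\Gamma$, which is exactly the claimed conclusion.

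There is essentially no hard step here; the only thing to be careful about is bookkeeping — making sure $\widehat{\Gamma}$ is taken only on the $L$–$R$ pairs (so that "complete bipartite" really means all of $L\setminus X'$ times $R\setminus X'$), and that isolated vertices of $\widehat{\Gamma}$ need not be touched. If one wants to avoid invoking König's theorem, the inequality $|M|\le|X|$ needs only the trivial observation that distinct matching edges must be covered by distinct vertices of any vertex cover. I would also remark that $2$ is the natural loss from passing from a cover to the vertex set of a maximum matching (the standard $2$-approximation for \textsc{Vertex Cover}), which is what the statement asks for.
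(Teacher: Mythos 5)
Your proof is correct and is essentially the same idea as the paper's: both arguments amount to finding a matching in the bipartite graph $\widehat{\Gamma}$ of non-edges and outputting the set of its endpoints. The paper implements this with a greedy maximal matching (repeatedly pick any remaining non-edge $\{x,y\}$, delete both endpoints), noting that each chosen pair must contain a vertex of $X$, whereas you compute a maximum matching via Hopcroft--Karp and invoke the matching-versus-vertex-cover bound; both yield the same $|X'|\le 2|X|$ guarantee, with the paper's version being slightly more elementary since it avoids any appeal to optimal matching algorithms or K\H{o}nig's theorem.
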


\begin{proof}
We begin with $X'=\emptyset$.
If $\Gamma\setminus X'$ does not satisfy the assertion then there must exist $x\in L\setminus X'$ and $y\in R\setminus X'$ such that $\{x,y\}\notin E(\Gamma)$.
We add both $x$ and $y$ to $X'$.
We repeat until the assertion is satisfied.
It is immediate that for each $\{x,y\}$ considered, at least one of $x$ and $y$ must be in $X$.
Thus $|X'|\leq 2|X|$ as required.
\end{proof}

\begin{definition}[Compatibility]
Let ${\cal Q}$ be a collection of vertex-disjoint paths and let $P$ be a directed path in some graph.
Suppose that $P$ intersects all paths in ${\cal Q}$.
We define $\sigma^-_P({\cal Q})$ (resp.~$\sigma^+_P({\cal Q})$) to be the total ordering of ${\cal Q}$ induced by $P$ as follows: we traverse $P$ and we order the paths in ${\cal Q}$ according to the first (resp.~last) time they are visited.
We say that $P$ is \emph{compatible} with ${\cal Q}$ if $\sigma^-_P({\cal Q})=\sigma^+_P({\cal Q})$.
If $P$ is compatible with ${\cal Q}$ then we also write $\sigma_P({\cal Q})=\sigma^-_P({\cal Q})=\sigma^+_P({\cal Q})$.
\end{definition}

\begin{definition}[Combed pseudogrids]
Let $({\cal P}, {\cal Q})$ be a pseudogrid.
Suppose that all $P\in {\cal P}$ are compatible with ${\cal Q}$ and all $Q\in {\cal Q}$ are compatible with ${\cal P}$.
Suppose further that
for all $P,P'\in {\cal P}$ we have $\sigma_P({\cal Q})=\sigma_{P'}({\cal Q})$
and
for all $Q,Q'\in {\cal Q}$ we have $\sigma_Q({\cal P})=\sigma_{Q'}({\cal P})$.
Then we say that $({\cal P}, {\cal Q})$ is \emph{combed}.
\end{definition}

\begin{lemma}[Combing a pseudogrid]\label{lem:combing}
Let ${\cal P}$, ${\cal Q}$ be collections of paths in some graph $\Gamma$ such that $({\cal P}, {\cal Q})$ is a pseudogrid.
Suppose that there exist ${\cal X}\subseteq {\cal P}$ and ${\cal Y}\subseteq {\cal Q}$ such that $({\cal P}\setminus {\cal X}, {\cal Q}\setminus {\cal Y})$ is combed.
Then there exists a polynomial-time algorithm which given $\Gamma$, ${\cal P}$, and ${\cal Q}$ outputs some ${\cal X}'\subseteq {\cal P}$ and ${\cal Y}'\subseteq {\cal Q}$ such that
$({\cal P}\setminus {\cal X}', {\cal Q}\setminus {\cal Y}')$ is combed,
with
$|{\cal X}'|+|{\cal Y}'| \leq 3 (|{\cal X}|+|{\cal Y}|)$.
\end{lemma}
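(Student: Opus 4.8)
The plan is to mimic the two classical "greedy correction" arguments that already appear in this section (the proof of Lemma~\ref{lem:compute_bipartite} for complete bipartite graphs, and the refinement argument in Lemma~\ref{lem:doubly-linked_compute}), but now applied to the three defining conditions of a combed pseudogrid simultaneously. First I would make the definition of ``combed'' fully explicit as three requirements on the pair $(\mathcal{P},\mathcal{Q})$: (C1) every $P\in\mathcal{P}$ is compatible with $\mathcal{Q}$ and every $Q\in\mathcal{Q}$ is compatible with $\mathcal{P}$; (C2) $\sigma_P(\mathcal{Q})=\sigma_{P'}(\mathcal{Q})$ for all $P,P'\in\mathcal{P}$; (C3) $\sigma_Q(\mathcal{P})=\sigma_{Q'}(\mathcal{P})$ for all $Q,Q'\in\mathcal{Q}$. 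The algorithm maintains a pair of ``removed'' sets $\mathcal{X}'\subseteq\mathcal{P}$, $\mathcal{Y}'\subseteq\mathcal{Q}$, initially empty, and repeatedly checks in polynomial time whether $(\mathcal{P}\setminus\mathcal{X}',\mathcal{Q}\setminus\mathcal{Y}')$ is combed; all three conditions can be tested by simply computing the orderings $\sigma^-_P,\sigma^+_P,\sigma^-_Q,\sigma^+_Q$ by walking along each path.

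If the current pair is not combed, we find a small ``witness'' of the violation and remove it. The key observation is that each of C1, C2, C3 is a condition that, when it fails, points to at most a constant number of offending paths, and in each case at least one of those offending paths must lie in the ``guaranteed-removable'' sets $\mathcal{X}\cup\mathcal{Y}$. Concretely: if some $P$ is not compatible with $\mathcal{Q}\setminus\mathcal{Y}'$, then there are two paths $Q_1,Q_2\in\mathcal{Q}\setminus\mathcal{Y}'$ whose first-visit order along $P$ disagrees with their last-visit order; since $(\mathcal{P}\setminus\mathcal{X},\mathcal{Q}\setminus\mathcal{Y})$ is combed, not all of $P,Q_1,Q_2$ survive the removal of $\mathcal{X}\cup\mathcal{Y}$, so at least one of $\{P,Q_1,Q_2\}$ is in $\mathcal{X}\cup\mathcal{Y}$; we remove all three. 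Similarly, if C2 fails there are $P,P'\in\mathcal{P}\setminus\mathcal{X}'$ and $Q_1,Q_2\in\mathcal{Q}\setminus\mathcal{Y}'$ with $Q_1,Q_2$ in opposite relative order in $\sigma_P(\mathcal{Q})$ and $\sigma_{P'}(\mathcal{Q})$ (assuming C1 already holds, so these orderings are well-defined); again one of $P,P',Q_1,Q_2$ must belong to $\mathcal{X}\cup\mathcal{Y}$, and we remove all four. The case of C3 is symmetric. Since every removal step deletes at least one element of $\mathcal{X}\cup\mathcal{Y}$ that has not been ``charged'' before, the process terminates after at most $|\mathcal{X}|+|\mathcal{Y}|$ steps, and in each step we add at most $4$ paths to $\mathcal{X}'\cup\mathcal{Y}'$, yielding $|\mathcal{X}'|+|\mathcal{Y}'|\le 4(|\mathcal{X}|+|\mathcal{Y}|)$.

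To get the claimed bound $|\mathcal{X}'|+|\mathcal{Y}'|\le 3(|\mathcal{X}|+|\mathcal{Y}|)$ rather than a factor $4$, I would refine the charging: order the checks so that C1 is repaired first (to a fixed point), then C2 and C3. Once C1 holds globally, a violation of C2 produces a witness $P,P',Q_1,Q_2$ but in fact one can argue more carefully that it suffices to remove a set of at most $3$ paths per step — e.g.\ remove the one path among the witness that is in $\mathcal{X}\cup\mathcal{Y}$ together with just enough others, or observe that C1-violations and C2/C3-violations can be charged against disjoint portions of $\mathcal{X}$ versus $\mathcal{Y}$. Alternatively, since a C1-violation involves one $\mathcal{P}$-path and two $\mathcal{Q}$-paths (or vice versa), and $\mathcal{X}$ accounts only for $\mathcal{P}$-paths while $\mathcal{Y}$ accounts for $\mathcal{Q}$-paths, the amortized removal cost per unit of $|\mathcal{X}|+|\mathcal{Y}|$ consumed is $3$.

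The main obstacle I anticipate is precisely this constant-factor bookkeeping: making the witness extraction genuinely local (bounded-size) and ensuring the charging is tight enough to land at $3$ rather than $4$, while also confirming that removing a constant number of paths from a combed pseudogrid keeps it a pseudogrid (trivial) and combed (immediate, since all the ordering conditions are hereditary under deletion of paths). The routine parts — that $\sigma^-_P,\sigma^+_P$ can be computed in polynomial time by traversing $P$ and recording first/last visits to each $Q$, and that there are only polynomially many checks — I would dispatch quickly. The delicate part is the invariant ``every path currently in $\mathcal{X}'$ (resp.\ $\mathcal{Y}'$) can be matched to a distinct path in $\mathcal{X}$ (resp.\ $\mathcal{Y}$), up to the factor $3$'', which must survive each of the three repair operations; I would set this up as a formal potential/charging argument and verify it case by case.
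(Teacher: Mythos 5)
Your greedy "find a small witness, delete it, charge at least one deleted path against $\mathcal{X}\cup\mathcal{Y}$" framework is exactly the paper's, and the observations that the violation conditions are hereditary under path deletion, and that everything is polynomially checkable, are all correct. Where you diverge — and where your argument has a genuine gap — is in the size of the witness. You treat a failure of C2 or C3 as a $4$-path event (two $\mathcal{P}$-paths and two $\mathcal{Q}$-paths whose induced orderings disagree), which only yields $|\mathcal{X}'|+|\mathcal{Y}'|\le 4(|\mathcal{X}|+|\mathcal{Y}|)$. You acknowledge this and sketch several ideas for shaving the $4$ down to $3$ (repair C1 first, charge C1 and C2/C3 against disjoint parts of $\mathcal{X},\mathcal{Y}$, etc.), but none of these is carried out; the last paragraph of your write-up concedes that the "constant-factor bookkeeping" is exactly the part you cannot yet make rigorous.

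The paper closes that gap differently and more directly: it claims that whenever the current pair is not combed, there is always a \emph{three}-path witness of one of two symmetric forms — either some $P$ together with $Q_1,Q_2$ such that $P$ is not compatible with $\{Q_1,Q_2\}$, or some $Q$ together with $P_1,P_2$ such that $Q$ is not compatible with $\{P_1,P_2\}$ — and it deletes exactly those three paths. Since compatibility of one path with a two-element set is hereditary, if all three survived the pruning by $\mathcal{X},\mathcal{Y}$ the resulting pair could not be combed, so at least one of the three is in $\mathcal{X}\cup\mathcal{Y}$, giving the factor $3$ immediately. This is the single key observation your proposal is missing: you never show that a C2/C3 violation can be localized to a witness containing only \emph{one} path from the "other" family (equivalently, that some $Q$ fails to be compatible with the two offending rows $P,P'$, or vice versa), and without that reduction to a $3$-path witness, the amortization ideas you sketch have no concrete mechanism to land at the claimed constant. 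To repair your proof you would need either to prove that two-row/two-column ordering disagreements always manifest as a compatibility failure of a single path against a two-element set, or to replace the witness-extraction step with the paper's version outright.
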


\begin{proof}
We compute the desired sets ${\cal X}'$ and ${\cal Y}'$ inductively starting with ${\cal X}'={\cal Y}'=\emptyset$.
If $({\cal P}\setminus {\cal X}', {\cal Q}\setminus {\cal Y}')$ is combed then we are done.
Otherwise at least one of the following conditions holds:
(i) There exists $P\in {\cal P}$ and $Q_1,Q_2\in {\cal Q}$ such that $P$ is not compatible with $\{Q_1,Q_2\}$.
(ii) There exists $Q\in {\cal Q}$ and $P_1,P_2\in {\cal P}$ such that $Q$ is not compatible with $\{P_1,P_2\}$.
In either case at least one of the three involved paths must be in ${\cal X}\cup {\cal Y}$.
In case (i) we add $P$ to ${\cal X}'$ and we add $Q_1$ and $Q_2$ to ${\cal Y}'$.
In case (ii) we add $P_1$ and $P_2$ to ${\cal X}'$ and we add $Q$ to ${\cal Y}'$.
We repeat this process until $({\cal P}\setminus {\cal X}', {\cal Q}\setminus {\cal Y}')$ is combed.
It is immediate that for any three paths that we add to ${\cal X}'\cup {\cal Y}'$ there exists at least one path in ${\cal X}\cup {\cal Y}$ and thus $|{\cal X}'|+|{\cal Y}'| \leq 3(|{\cal X}+|{\cal Y}|)$, as required.
\end{proof}

The following is implicit in \cite{chekuri2004edge,robertson1994quickly}.

\begin{lemma}[From combed pseudogrids to grid minors]\label{lem:from_combed_to_grid}
Let $({\cal P}, {\cal Q})$ be some combed pseudogrid in some graph $\Gamma$.
Then there exists a polynomial-time algorithm which given $\Gamma$, ${\cal P}$, and ${\cal Q}$ outputs some $(|{\cal P}|\times |{\cal Q}|)$-grid minor in $\Gamma$.
\end{lemma}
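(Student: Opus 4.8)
The plan is to realize the grid minor explicitly by using the two path families as the two ``axis directions'' of the grid. Let $p = |{\cal P}|$ and $q = |{\cal Q}|$, and write ${\cal P} = \{P_1,\ldots,P_p\}$ and ${\cal Q} = \{Q_1,\ldots,Q_q\}$ indexed so that $\sigma_Q({\cal P})$ agrees with the ordering $P_1,\ldots,P_p$ for every $Q\in{\cal Q}$, and $\sigma_P({\cal Q})$ agrees with $Q_1,\ldots,Q_q$ for every $P\in{\cal P}$; this is possible precisely because $({\cal P},{\cal Q})$ is combed, so all these orderings coincide. First I would orient each $P_i$ consistently with $\sigma_{P_i}({\cal Q})$ and each $Q_j$ consistently with $\sigma_{Q_j}({\cal P})$. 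Since $P_i$ is compatible with ${\cal Q}$, the first and last visits of $P_i$ to $Q_j$ induce the same position in the ordering, so there is a well-defined ``segment'' of $P_i$ associated to index $j$; similarly for the $Q_j$'s.

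The key construction is to define the branch set $\beta(i,j)\subseteq V(\Gamma)$ for the grid vertex in row $i$, column $j$ as follows. Let $a_{i,j}$ be the first vertex of $P_i$ lying on $Q_j$ and $b_{i,j}$ the first vertex of $Q_j$ lying on $P_i$ (note $a_{i,j}, b_{i,j}$ need not coincide, but both lie on $P_i\cap Q_j$, which is nonempty). Take $\beta(i,j)$ to be the union of: the subpath of $P_i$ between the segment for column $j-1$ and the segment for column $j$ (more precisely, a suitable ``half-open'' subpath of $P_i$ covering its $j$-th portion), together with the analogous half-open subpath of $Q_j$ covering its $i$-th portion, glued at $P_i\cap Q_j$. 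The combed hypothesis guarantees that as we sweep $P_i$ we meet $Q_1, Q_2, \ldots, Q_q$ in this order with no backtracking, so these portions of $P_i$ are vertex-disjoint for distinct $j$; symmetrically the portions of $Q_j$ are vertex-disjoint for distinct $i$. I would then argue the $\beta(i,j)$ are pairwise disjoint (a portion of $P_i$ and a portion of $Q_{j'}$ can only meet inside $P_i\cap Q_{j'}$, and the half-open convention assigns that intersection region to exactly one branch set), each $\beta(i,j)$ is connected (it is a union of two subpaths sharing a vertex), and consecutive branch sets in a row are joined by an edge of $P_i$ while consecutive branch sets in a column are joined by an edge of $Q_j$. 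This exhibits the $(p\times q)$-grid as a minor, and the whole construction is clearly polynomial-time: finding the first/last intersection vertices and cutting the paths accordingly takes linear time per pair.

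The main obstacle I anticipate is making the ``half-open subpath'' bookkeeping precise enough that the branch sets are genuinely disjoint yet their union still covers enough of each path to leave the connecting edges intact — in particular, handling the boundary columns $j=1$ and $j=q$ (and rows $1$ and $p$), and handling the case where $P_i$ and $Q_j$ share more than one vertex or even a whole subpath. The compatibility condition ($\sigma^-_P = \sigma^+_P$) is exactly what rules out the pathological interleavings that would otherwise break disjointness, so the proof will lean on it at precisely this point; I would phrase the cut points of $P_i$ using the last vertex on $Q_{j-1}$ and the last vertex on $Q_j$ (both well-defined by compatibility) to get a clean ``after leaving $Q_{j-1}$, up through leaving $Q_j$'' interval, and dually for columns, then check the few edge cases. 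Since the lemma is attributed as implicit in \cite{chekuri2004edge,robertson1994quickly}, I would keep this argument terse, citing those sources for the standard version and only spelling out the adaptation to the combed-pseudogrid formulation.
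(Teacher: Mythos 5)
The paper gives no proof of this lemma at all; it merely records the statement and attributes it as implicit in \cite{chekuri2004edge,robertson1994quickly}, so there is no in-paper argument to compare against, and your proposal has to stand on its own.

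Your plan is the natural one, but it contains a genuine gap at exactly the spot you try to wave away. You assert that compatibility ($\sigma^-_P=\sigma^+_P$) ``is exactly what rules out the pathological interleavings.'' That is not so. Compatibility forbids a pattern such as $Q_j,Q_{j'},Q_j$ with $j<j'$ along $P_i$ (that would swap the last-visit order), but it does \emph{not} forbid $Q_j,Q_{j'},Q_j,Q_{j'}$: here first visits occur in order $(j,j')$ and so do last visits, so $\sigma^-_{P_i}=\sigma^+_{P_i}$ is satisfied even though visits to $Q_j$ are non-contiguous. Likewise $Q_j$ may visit $P_{i-1},P_i,P_{i-1},P_i$ while remaining compatible with ${\cal P}$. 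With such interleavings your half-open segments break in two ways. First, disjointness fails: the $P_i$-interval ending at $\mathrm{last}_{P_i}(Q_j)$ already contains a vertex of $Q_{j'}$ (the middle visit), and that vertex can also lie inside the $Q_{j'}$-interval you assign to $\beta(i,j')$, so $\beta(i,j)\cap\beta(i,j')\neq\emptyset$. Second, connectivity fails: the $P_i$-half of $\beta(i,j)$ need not meet its $Q_j$-half, because $\mathrm{last}_{P_i}(Q_j)$ and $\mathrm{last}_{Q_j}(P_i)$ can be different vertices of $P_i\cap Q_j$, and it is possible (with the crossed pairing, e.g.~the vertex at $P_i$-position $4$ sitting at $Q_j$-position $2$ and vice versa) that neither lies in the other's half-open interval, so $A_{i,j}\cap B_{i,j}=\emptyset$. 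Since $A_{i,j}\subseteq V(P_i)$ and $B_{i,j}\subseteq V(Q_j)$, any intersection must lie in $V(P_i)\cap V(Q_j)$, so you cannot hope to rescue connectivity from elsewhere. The missing ingredient is therefore real work, not bookkeeping: you either need a preprocessing step that shortens or replaces paths so that each $P_i\cap Q_j$ becomes contiguous with respect to both traversal orders, or a global choice of crossing vertices $c_{i,j}\in P_i\cap Q_j$ that is simultaneously monotone along every $P_i$ and along every $Q_j$ (note that neither the first-visit nor the last-visit anchors have both monotonicities), and that is precisely the content the cited sources supply and your sketch currently assumes away.
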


\section{Computing a grid minor}\label{sec:grid}

In this section, we present one of the key procedures used by our algorithm. Namely, we give an algorithm for computing a grid minor in graphs of large treewidth.
To this end, we need some notations.

\begin{definition}[Sampling a path from a flow]
Let $Y\subseteq V(G)$ and let $F$ be a product multi-commodity flow instance in $G$ that routes a unit of demand between any pair of vertices in $Y$.
We define the following probability distribution over paths in $G$.
We first sample an unordered pair of nodes $\{u,v\}\in {Y\choose 2}$ uniformly at random.
Let $F_{u,v}$ be the single-commodity flow induced by $F$ that routes a unit of demand between $u$ and $v$.
We decompose the flow $F_{u,v}$ into a collection ${\cal F}_{u,v}$ of flows such that each $F'\in {\cal F}_{u,v}$ routes $p(F')$ total flow along a single $u$--$v$ path $P(F')$.
We may assume w.l.o.g.~that all paths $P(F')$ are simple; otherwise we may shortcut them without increasing the congestion.
We choose a flow $F'\in {\cal F}_{u,v}$ with probability $p(F')$ and we let $Q$ be the path $P(F')$.
We say that the random path $Q$ is obtained by \emph{sampling from $F$}.
\end{definition}

For the remainder of this section let $G$ be an $n$-vertex graph and let $A\subseteq V(G)$ with $|A|=k$ such that $H=G\setminus A$ is planar. For the shake of our algorithm, 
we assume for a moment that maximum degree of $G$ is $\dmax$ (this assumption will be eliminated at the end of this section). 
 
We first describe the whole algorithm, and then present analysis and correctness of the algorithm.

\subsection{The algorithm for computing a grid minor}

\begin{description}
\item{\textbf{Step 1: Computing a doubly cut-linked set.}}
Using Lemma \ref{lem:doubly-linked_compute} we compute a separation $(U,U')$ of $G$ of order $O(t \log^{3/2} n)$ and some $Y\subseteq U\cap U'$ with $|Y|=\Omega(t/\log^{9/2} n)$ such that $Y$ is $\Omega(1/\log^{13/2} n)$-cut-linked in both $G_U$ and $G[U']$, where $G_U=G[U]\setminus E(G[U\cap U'])$.
Using Theorem \ref{thm:LR} we compute a multi-commodity flow $F$ in $G_U$ that routes a unit of demand between any pair of vertices in $Y$ and with congestion $c=O(|Y|\log^{15/2} n)$.
For each $\{u,v\}\in {Y\choose 2}$ let $F_{u,v}$ denote the single-commodity flow induced by $F$ that routes a unit of demand between $u$ and $v$.

\item{\textbf{Step 2: Sampling a tree.}}
Let $\ell=12$.
We sample $\{u_1,v_1\},\ldots,\{u_{\ell},v_{\ell}\} \in {Y\choose 2}$, uniformly and independently with repetition.
For each $i\in \{1,\ldots,\ell\}$ we sample a path $P_i$ from $F_{u_i,v_i}$ independently (see Figure \ref{fig:skeleton_paths}).
We remark that all $P_i$ are simple.
We proceed to construct a sequence of trees $T_1,\ldots,T_{\ell}$ with $T_1=P_1$.
For any $i\in \{2,\ldots,\ell\}$ we construct $T_i$ as follows.
If $V(P_i)\cap V(T_{i-1})=\emptyset$ then the algorithm returns $\nil$.
Otherwise let $Q$ (resp.~$Q'$) be the maximal subpath of $P_i$ that contains $u_i$ (resp.~$v_i$) and at most one vertex $u'$ (resp.~$v'$) in $V(T_{i-1})$.
We let $T_i=T_{i-1} \cup Q \cup Q'$.
This completes the construction of the sequence $T_1,\ldots,T_{\ell}$ (see Figure \ref{fig:skeleton_trees}).

\begin{figure}
\begin{center}
\subfigure[The set of paths $P_1,\ldots,P_{12}$.]{
  \scalebox{1.1}{\includegraphics{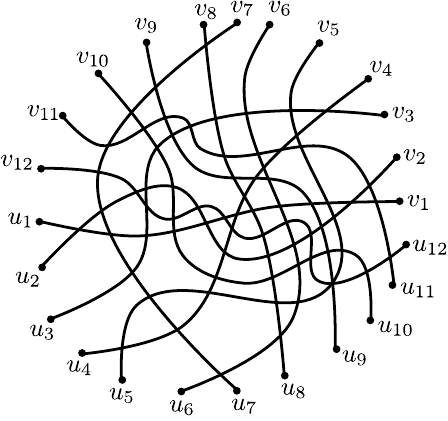}}
  \label{fig:skeleton_paths}
}
\subfigure[The trees $T_1$, $T_2$, and $T_{12}$.]{
  \scalebox{1.1}{\includegraphics{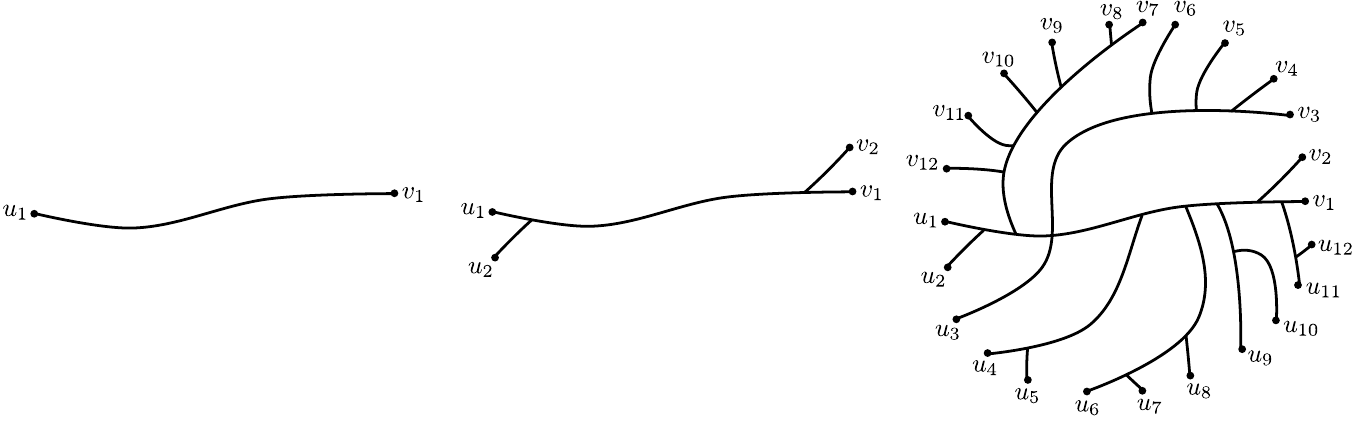}}
  \label{fig:skeleton_trees}
}
\subfigure[The graph $S$.]{
  \scalebox{1.1}{\includegraphics{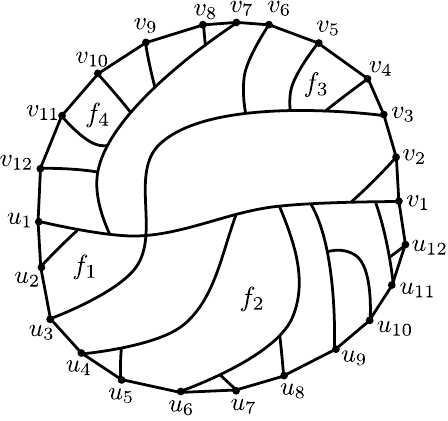}}
  \label{fig:skeleton}
}
\end{center}
\caption{Constructing a skeleton.\label{fig:skeleton_construction}}
\end{figure}


\item{\textbf{Step 3: Constructing a skeleton.}}
Let $T=T_{\ell}$.
We construct a graph $S$ as follows.
We start with $T$ and we add a cycle $C$ that visits all the leaves of $T$ in the order $u_1,\ldots,u_{\ell},v_1,\ldots,v_{\ell}$.
We check whether $S$ admits a planar drawing with $C$ being the outer face.
If not then the algorithm returns $\nil$.
Otherwise let $\psi$ be such a drawing of $S$ (see Figure \ref{fig:skeleton}).
Let $S^*$ be the dual of $S$.
Let $v^*\in V(S^*)$ be the vertex that is dual to the outer face of $S$ in $\psi$.
The graph $S^*-v^*$ is outerplanar and thus $3$-colorable.
Since $S$ has $\ell$ faces other than the outer face, it follows that there exists a collection of $\ell/3=4$ faces $f_1,\ldots,f_{4}$ in $\psi$ that are edge-disjoint.
Let us assume (after possibly reordering the indices) that $f_1,\ldots,f_4$ intersect $C$ in this order along a clockwise traversal of $C$.
For each $i\in \{1,\ldots,4\}$ let $C_i=f_i\setminus C$.
We remark that each $C_i$ is a simple path in $T$.
We set $B_i=V(C_i) \setminus \bigcup_{j\in \{1,\ldots,4\} \setminus\{i\}} V(C_j)$.

\item{\textbf{Step 4: Sampling paths between boundary components.}}
We sample a matching $M\subseteq {Y\choose 2}$ of size $|Y|/2$ uniformly at random.
For each $\{u,v\}\in M$ we sample a path $Q_{u,v}$ from $F_{u,v}$.
For any $i\in \{1,\ldots,4\}$ let $M_i$ be the set of pairs $\{u,v\}\in M$ such that
there exists a subpath $R$ of $Q_{u,v}$ containing an endpoint of $Q_{u,v}$ and with $V(R)\cap V(T)\subseteq V(B_i)$.
Let $M_{1,3}=M_1\cap M_3$ and $M_{2,4}=M_2\cap M_4$.

\item{\textbf{Step 5: Finding vertex-disjoint paths.}}
For any $i\in \{1,2\}$ let
\[
H_{i,i+2} = \bigcup_{\{u,v\}\in M_{i,i+2}} Q_{u,v}.
\]
Using Menger's Theorem we find a maximum collection ${\cal R}_{i,i+2}$ of vertex-disjoint paths in $H_{i,i+1}$ such that each path in ${\cal R}_{i,i+2}$ has one endpoint in $B_i$ and one in $B_{i+2}$.

\item{\textbf{Step 6: Finding a pseudogrid.}}
Let ${\cal I}$ be the intersection graph of the paths in ${\cal R}_{1,3} \cup {\cal R}_{2,4}$, that is
 $V({\cal I})={\cal R}_{1,3} \cup {\cal R}_{2,4}$ and
 $E({\cal I}) = \{\{P,Q\} : V(P)\cap V(Q) \neq\emptyset\}$.
Note that ${\cal I}$ is bipartite with left and right partitions ${\cal R}_{1,3}$ and ${\cal R}_{2,4}$.
Using Lemma \ref{lem:compute_bipartite} we compute some ${\cal W}_{1,3}\subseteq {\cal R}_{1,3}$ and ${\cal W}_{2,4}\subseteq {\cal R}_{2,4}$ such that ${\cal I}[{\cal W}_{1,3}\cup {\cal W}_{2,4}]$ is complete bipartite.
It follows that $({\cal W}_{1,3}, {\cal W}_{2,4})$ is a pseudogrid.

\item{\textbf{Step 7: Combing the pseudogrid.}}
Using Lemma \ref{lem:combing} we compute some ${\cal P}_{1,3}\subseteq {\cal W}_{1,3}$ and ${\cal P}_{2,4}\subseteq {\cal W}_{2,4}$ such that $({\cal P}_{1,3}, {\cal P}_{2,4})$ is combed.

\item{\textbf{Step 8: Computing a grid minor.}}
Using Lemma \ref{lem:from_combed_to_grid} we compute a $(|{\cal P}_{1,3}| \times |{\cal P}_{2,4}|)$-grid minor in $G$.

\end{description}

\subsection{Analysis}

We now give analysis of our algorithm. 
Fix a drawing $\phi$ of $H$ into the sphere ${\cal S}$ (it is convenient to assume that all faces are bounded).
Let $H_{U}$ be the subgraph of $H[U]$ obtained after removing all edges in $H[U\cap U']$, that is
\[
H_{U}=H[U]\setminus E(H[U\cap U']).
\]
Note that $H_{U} \subseteq G_U$. 

We need the following lemma. 

\begin{lemma}\label{lem:Y_prime}
There exists some universal constant $c>0$, such that the following holds.
Suppose that $t>c  k \dmax \log^{11} n$.
Then there exists some $Y'\subseteq Y\setminus A$ with $|Y'|\geq |Y|/2$ such that $Y'$ is contained
in a single connected component of $H_U$
and
in a single connected component of $H[U']$.
\end{lemma}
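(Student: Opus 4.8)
The plan is to show that deleting the $k$ apices $A$ cannot shatter $Y$ into many small pieces inside either cut-linked graph: all but a $\Theta(k\dmax/\alpha)$-fraction of $Y$ must survive inside a single connected component, where $\alpha = \Omega(1/\log^{13/2} n)$ is the cut-linkedness parameter from Lemma~\ref{lem:doubly-linked_compute}. Intersecting the two surviving components then gives $Y'$.

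First I would record the (routine) identifications $H_U = G_U \setminus A$ and $H[U'] = G[U']\setminus A$, which follow by unwinding the definitions $H = G\setminus A$, $G_U = G[U]\setminus E(G[U\cap U'])$, $H_U = H[U]\setminus E(H[U\cap U'])$. Write $\tilde{Y} = Y\setminus A$; since $|Y| = \Omega(t/\log^{9/2} n)$ and $|A| = k$, for $t$ large we have $\tilde{Y}\neq\emptyset$ and $|\tilde{Y}|\geq |Y| - k$.

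The core step is the following. Consider the connected components $C_1, C_2, \dots$ of $H_U = G_U\setminus A$, indexed so that $C_1$ maximizes $|V(C_j)\cap \tilde{Y}|$. The key observation is that no edge of $G_U$ joins two distinct components $C_i, C_j$ of $H_U$, since such an edge, avoiding $A$, would already lie in $H_U$; hence every edge of $G_U$ crossing the cut $(V(C_j),\, U\setminus V(C_j))$ is incident to $A\cap U$, and therefore $\sum_{j\neq 1} |E_{G_U}(V(C_j),\, U\setminus V(C_j))| \leq \sum_{a\in A\cap U}\deg_{G_U}(a) \leq k\dmax$. On the other hand, for $j\neq 1$ we have $|V(C_j)\cap\tilde{Y}| \leq |V(C_1)\cap\tilde{Y}| \leq |(U\setminus V(C_j))\cap\tilde{Y}|$, so the $\alpha$-cut-linkedness of $Y$ in $G_U$ gives $|E_{G_U}(V(C_j),\, U\setminus V(C_j))| \geq \alpha\,|V(C_j)\cap\tilde{Y}|$. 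Summing over $j\neq 1$ yields $\alpha(|\tilde{Y}| - |V(C_1)\cap\tilde{Y}|) \leq k\dmax$, i.e.\ $|V(C_1)\cap\tilde{Y}| \geq |\tilde{Y}| - k\dmax/\alpha$. Running the identical argument on $H[U'] = G[U']\setminus A$, using that $Y$ is $\alpha$-cut-linked in $G[U']$ and that $A\cap U'$ has at most $k\dmax$ incident edges, produces a component $D_1$ of $H[U']$ with $|V(D_1)\cap\tilde{Y}| \geq |\tilde{Y}| - k\dmax/\alpha$.

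Finally I would set $Y' = V(C_1)\cap V(D_1)\cap\tilde{Y}$. By construction $Y'\subseteq Y\setminus A$ lies in the single component $C_1$ of $H_U$ and in the single component $D_1$ of $H[U']$, and a union bound gives $|Y'| \geq |\tilde{Y}| - 2k\dmax/\alpha \geq |Y| - k - 2k\dmax/\alpha$. Since $\alpha = \Omega(1/\log^{13/2} n)$ we have $k\dmax/\alpha = O(k\dmax\log^{13/2} n)$, while $|Y| = \Omega(t/\log^{9/2} n)$; as $13/2 + 9/2 = 11$, choosing the universal constant $c$ large enough in the hypothesis $t > c\,k\dmax\log^{11} n$ makes $k + 2k\dmax/\alpha \leq |Y|/2$, so $|Y'|\geq |Y|/2$, as required. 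I expect no substantive obstacle here: the only mildly delicate points are the bookkeeping identification $H_U = G_U\setminus A$ and verifying that the minimum in the cut-linked inequality is realized on the $\tilde{Y}$-side for every non-maximal component; planarity of $H$ is in fact not used in this lemma.
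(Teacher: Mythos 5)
Your proof is correct and rests on the same core idea as the paper's: since $A$ has at most $k\dmax$ incident edges in $G_U$ (resp.\ $G[U']$) and $Y$ is $\alpha$-cut-linked there with $\alpha=\Omega(1/\log^{13/2}n)$, deleting $A$ cannot spread $\tilde Y=Y\setminus A$ across components of $H_U$ with more than $O(k\dmax/\alpha)$ mass outside the dominant one. The execution differs mechanically: the paper proceeds by contradiction, greedily bin-packing the component classes of $\bar Y$ into two sides $S_1,S_2$ with $\min\{|S_1|,|S_2|\}\geq|\bar Y|/8$ and then invoking cut-linkedness once on that balanced cut; you instead apply cut-linkedness to each non-maximal component $C_j$ separately (noting correctly that every edge leaving $C_j$ in $G_U$ must land in $A$, and that $C_1$'s maximality makes $|V(C_j)\cap\tilde Y|$ the minimizer) and sum, obtaining $|V(C_1)\cap\tilde Y|\geq|\tilde Y|-k\dmax/\alpha$ directly. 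Your version is a bit cleaner and avoids the paper's case split on $|\bar Y_1|$; both land at the same $|Y|/2$ bound under $t>ck\dmax\log^{11}n$. One small bookkeeping point worth keeping explicit, which you did: $V(C_j)\cap A=\emptyset$, so $V(C_j)\cap Y=V(C_j)\cap\tilde Y$, which is what lets you pass from the cut-linkedness of $Y$ to the inequality in terms of $\tilde Y$.
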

\begin{proof}
We first argue that there exists some $Y'_1\subseteq Y$, with $|Y'_1|\geq 3 |Y|/4$, such that $Y'_1$ is contained in a single connected component of $H_U$.
Let $\bar{Y}=Y\cap V(H_U)$.
If $\bar{Y}$ is contained in a single connected component of $H_U$, then there is nothing to show.
We may therefore assume w.l.o.g.~that $\bar{Y}$ intersects at least two connected components of $H_U$.
Let $\bar{Y}_1,\ldots,\bar{Y}_d$ be the partition of $\bar{Y}$ into maximal subsets, each contained in a single connected component of $H_U$.
Assume w.l.o.g.~that $|\bar{Y}_1|\geq \ldots\geq |\bar{Y}_d|$.
Suppose first that $|\bar{Y}_1|\leq 7|\bar{Y}|/8$.
We define a bipartition $\bar{Y}=S_1\cup S_2$ as follows:
Initially, we set $S_1=S_2=\emptyset$.
For any $i=1,\ldots,d$, we add $\bar{Y}_i$ to the set $S_j$, $j\in \{1,2\}$, of minimum cardinality.
It is immediate that we arrive at a bipartition with $\min\{|S_1|,|S_2|\}\geq |\bar{Y}|/8$.
Since $Y$ is $c$-cut-linked in $G_U$, it follows that the number of edges that we need to remove in $H_U$ to separate $S_1$ and $S_2$ is at least $\ell=\Omega(1/\log^{13/2} n)\cdot |\bar{Y}|/8
= \Omega(1/\log^{13/2} n)\cdot (|Y|-k)/8
= \Omega(1/\log^{13/2} n)\cdot |Y|/16
 = \Omega(t/\log^{11} n)$.
However, we can separate $S_1$ and $S_2$ in $H_U$ by removing all the edges incident to $A$; the number of such edges is at most $k\cdot \dmax < \ell$, for some sufficiently large universal constant $c>0$, which is a contradiction.
We have thus established that $|\bar{Y}_1|> 7|\bar{Y}|/8$.
Let $Y'_1=\bar{Y}_1$.
We have $|Y'_1| = |\bar{Y}_1| > 7|\bar{Y}|/8 = 7(|Y|-k)/8 > 3|Y|/4$, as required.

Similarly, we have that there exists some $Y'_2\subseteq Y$, with $|Y'_2|\geq 3 |Y|/4$, such that $Y'_2$ is contained in a single connected component of $H[U']$.
Setting $Y'=Y'_1\cap Y'_2$, completes the proof.
\end{proof}

For the remainder of this section let $Y'$ be as in Lemma \ref{lem:Y_prime}.
Since $Y'$ is contained in a single connected component of $H[U']$ and in a single connected component of $H_U$ it follows that there exists some disk ${\cal D}\subset {\cal S}$ such that
 $\phi(H_U) \subset {\cal D}$
and
 $\phi(H[U'] \subset {\cal S} \setminus ({\cal D}\setminus \partial {\cal D}))$.
It follows that
 $\phi(Y')\subset \partial {\cal D}$.
 For every path $P$ sampled from $F$ that has both endpoints in $Y'$ and does not intersect $A$ we have $\phi(P)\subset {\cal D}$.
Let us order $Y'=\{y_1,\ldots,y_{r}\}$ along a clockwise traversal of $\partial {\cal D}$.
For any $i\in \{1,\ldots,4\ell\}$ let
\[
Y'_i = \{y_{1+(i-1)r/(4\ell)}, \ldots, y_{ir/(4\ell)}\}.
\]
Let ${\cal E}_1$ denote the event that for all $i\in \{1,\ldots,\ell\}$ we have
$u_i \in Y_{2i}$
and
$v_i \in Y_{2\ell+2i}$. We prove the following lemma for $\Pr[{\cal E}_1]$.  

\begin{lemma}
$\Pr[{\cal E}_1] \geq (1/96)^{12}$.
\end{lemma}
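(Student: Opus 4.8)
The plan is to use the independence of the $\ell = 12$ sampled pairs to write $\Pr[{\cal E}_1]$ as a product of per-sample probabilities, and then to bound each factor by an elementary counting argument resting only on a lower bound for the sizes of the arcs $Y'_j$. Since $\{u_1,v_1\},\dots,\{u_\ell,v_\ell\}$ are drawn uniformly and independently from $\binom{Y}{2}$, we immediately get $\Pr[{\cal E}_1]=\prod_{i=1}^{\ell}\Pr[\,u_i\in Y'_{2i}\text{ and }v_i\in Y'_{2\ell+2i}\,]$, so it suffices to bound each factor from below.

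First I would record the relevant arc sizes. By construction $Y'_1,\dots,Y'_{4\ell}$ is a partition of $Y'=\{y_1,\dots,y_r\}$ into $4\ell$ consecutive arcs of $\partial{\cal D}$ of equal size, so $|Y'_j|=r/(4\ell)$ for every $j$ (when $4\ell\mid r$; otherwise one rounds, losing only a $1+o(1)$ factor that does not affect the stated constant). By Lemma~\ref{lem:Y_prime} we have $r=|Y'|\ge |Y|/2$, and hence $|Y'_j|\ge |Y|/(8\ell)=|Y|/96$ for every $j\in\{1,\dots,4\ell\}$. Moreover, for each $i$ the two target arcs $Y'_{2i}$ and $Y'_{2\ell+2i}$ carry distinct indices — the index sets $\{2,4,\dots,2\ell\}$ and $\{2\ell+2,\dots,4\ell\}$ are disjoint — so they are vertex-disjoint subsets of $Y'$.

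Next, fix $i$ and estimate $\Pr[\,u_i\in Y'_{2i}\text{ and }v_i\in Y'_{2\ell+2i}\,]$. Here $u_i,v_i$ are the two endpoints of the sampled unordered pair under a fixed convention (say $u_i$ is the one occurring earlier along the clockwise order of $\partial{\cal D}$, which is also the candidate for the arc with the smaller index), so the event is precisely that the pair has one endpoint in $Y'_{2i}$ and the other in $Y'_{2\ell+2i}$; the number of such pairs is $|Y'_{2i}|\cdot|Y'_{2\ell+2i}|$, out of $\binom{|Y|}{2}$ pairs in total. Plugging in $|Y'_{2i}|,|Y'_{2\ell+2i}|\ge |Y|/96$ gives an explicit lower bound on each of the $\ell=12$ factors, and taking the product over the independent samples yields $\Pr[{\cal E}_1]\ge (1/96)^{12}$. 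I do not expect any genuine obstacle here — the statement is a routine calculation — so the ``hard part'' is really just bookkeeping: correctly passing from $|Y'|\ge|Y|/2$ to $|Y'_j|\ge |Y|/96$, fixing a labelling convention for the unordered pair $\{u_i,v_i\}$ so that the count is exact, and absorbing the harmless rounding when $4\ell\nmid r$ (e.g.\ by replacing $r$ with $4\ell\lfloor r/(4\ell)\rfloor$).
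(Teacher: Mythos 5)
Your proof matches the paper's exactly in approach: factor $\Pr[{\cal E}_1]$ over the $\ell=12$ independent samples, bound each factor by counting pairs that land in the two target arcs (using $|Y'_j|\geq|Y|/96$ from Lemma~\ref{lem:Y_prime}), and multiply. You also handle the labeling of the unordered pair $\{u_i,v_i\}$ more carefully than the paper does, which is a nice touch.

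However, the ``plugging in'' step at the end hides the same arithmetic slip that the paper makes. With both arcs of size at least $|Y|/96$, the per-sample probability is
\[
\frac{|Y'_{2i}|\cdot|Y'_{2\ell+2i}|}{\binom{|Y|}{2}}\ \geq\ \frac{(|Y|/96)^2}{|Y|^2/2}\ =\ \frac{2}{96^2},
\]
not $1/96$; the paper's displayed inequality $|Y_{2i}|\cdot|Y_{2\ell+2i}|/|Y|^2\geq 1/96$ drops a square in exactly the same way. The product over the twelve samples is therefore $\geq(2/96^2)^{12}$ rather than the claimed $(1/96)^{12}$, so neither your sketch nor the paper's proof actually delivers the stated constant. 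This is harmless downstream --- the lemma is used only to certify that ${\cal E}_1$ occurs with probability bounded below by an absolute constant, which feeds into the $O(\log n)$-repetition boost in the proof of Theorem~\ref{thm:k-apex_grid_minor} --- but to salvage the statement as phrased you would need to either weaken the constant or partition $Y'$ into fewer, larger arcs so that each $|Y'_j|$ is a $\Theta(1/\sqrt{\ell})$ rather than $\Theta(1/\ell)$ fraction of $|Y|$.
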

\begin{proof}
For any $\{1,\ldots,\ell\}$ let ${\cal E}_{1,i}$ denote the event that $u_i\in Y_{2i}$  and  $v_i\in Y_{2\ell+2i}$.
Since by Lemma \ref{lem:Y_prime} we have $|Y'|\geq |Y|/2$ it follows that for each $i\in \{1,\ldots,\ell\}$ we have
\[
\Pr[{\cal E}_{1,i}] \geq \frac{|Y_{2i}| \cdot |Y_{2\ell+2i}|}{|Y|^2} \geq 1/96.
\]
Since the events ${\cal E}_i$ are independent for distinct $i$ it follows that
\[
\Pr[{\cal E}_1] = \Pr[{\cal E}_{1,1}\wedge \ldots \wedge {\cal E}_{1,\ell}] = \prod_{i=1}^{\ell} \Pr[{\cal E}_{1,i}] \geq (1/96)^{12},
\]
which concludes the proof.
\end{proof}

Let ${\cal E}_2$ denote the event that none of the paths $P_1,\ldots,P_{\ell}$ intersects $A$, that is
\[
\left(\bigcup_{i=1}^{\ell} V(P_i)\right) \cap A = \emptyset.
\]

 We prove the following lemma for $\Pr[{\cal E}_2]$.  

\begin{lemma}\label{lem:grid_E2}
$\Pr[{\cal E}_2] \geq 1-O(\Delta k / (t\log^3 n))$.
\end{lemma}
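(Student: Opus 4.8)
The plan is to estimate, for a single sampled path $P_i$ and a single apex vertex $a\in A$, the probability that $a\in V(P_i)$, and then to finish with a union bound over the $\ell=12$ paths and the $k$ vertices of $A$. Recall from Step~2 (and the definition of sampling a path from a flow) that $P_i$ is produced by first drawing an unordered pair $\{u,v\}\in\binom{Y}{2}$ uniformly at random and then choosing a path from the flow decomposition of $F_{u,v}$ with probability equal to the amount of flow it carries. Hence, writing $f(a)$ for the total amount of flow routed through the vertex $a$ by $F$, i.e.
\[
f(a)=\sum_{\{u,v\}\in\binom{Y}{2}}\bigl(\text{amount of flow routed through }a\text{ in }F_{u,v}\bigr),
\]
one has $\Pr[a\in V(P_i)]=f(a)\big/\binom{|Y|}{2}$.

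The key step is to bound $f(a)$ using the congestion of $F$. Every flow-path that visits $a$ must use at least one of the at most $\dmax$ edges incident to $a$, and each edge carries a total of at most $c=O(|Y|\log^{15/2}n)$ units of flow; summing over the edges incident to $a$ gives $f(a)=O(\dmax\, c)$. Plugging this bound, together with the lower bound $|Y|=\Omega(t/\log^{9/2}n)$ from Step~1, into the identity $\Pr[a\in V(P_i)]=f(a)/\binom{|Y|}{2}$, and taking a union bound over all $\le \ell k$ pairs $(i,a)$, we obtain $\Pr[\overline{{\cal E}_2}]=O\!\left(\dmax k/(t\log^{3}n)\right)$, which is the assertion.

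I do not expect a genuine obstacle here: this is a standard first-moment argument driven by the congestion guarantee of $F$, and the remaining work is bookkeeping. The two points that require some care are (i) verifying that the edge-congestion bound $c$ really does yield $f(a)=O(\dmax\, c)$ — so that, in particular, an apex vertex that happens to lie in $Y$, and is therefore an endpoint of many commodities, still contributes only $O(\dmax\, c)$ to $f(a)$, since its endpoint flow is all incident to it — and (ii) tracking the polylogarithmic factors coming from $c$ and from the lower bound on $|Y|$ so that they combine to the exponent stated in the lemma.
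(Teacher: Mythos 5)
Your proposal takes essentially the same route as the paper: bound the probability that a single sampled path visits a single apex vertex (the paper phrases this as a per-edge bound and unions over the $\leq k\dmax$ edges incident to $A$; you phrase it as a per-vertex bound after absorbing the $\dmax$ factor into $f(a)\leq \dmax c$ — these are the same accounting), then union-bound over the $\ell k$ path/apex pairs. One caveat that applies equally to the paper's own proof: tallying the polylog factors explicitly, $c=O(|Y|\log^{15/2}n)$ and $|Y|=\Omega(t/\log^{9/2}n)$ give $c/\binom{|Y|}{2}=O(\log^{15/2}n/|Y|)=O(\log^{12}n/t)$, not $O(1/(t\log^3 n))$; the paper's chain ``$c/|Y|^2 = O(1/(|Y|\log^{15/2}n))$'' appears to have the exponent on the wrong side of the fraction, and your proposal defers this bookkeeping without noticing the same issue. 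This is a discrepancy inherited from the paper rather than a flaw in your argument's structure, but it is worth being aware that the stated exponent does not follow from a straightforward computation.
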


\begin{proof}
The congestion of $F$ is $c=O(|Y| \log^{15/2} n)$.
Thus for any $e\in E(G_U)$, the probability that a path sampled from $F$ traverses $e$ is at most $c/|Y|^2 = O(1/(|Y| \log^{15/2}n)) = O(1/(t\log^3 n))$.
There are at most $k\cdot \Delta$ edges in $G$ that have at least one endpoint in $A$.
By a union bound over all these edges it follows that a random path sampled from $F$ intersects $A$ with probability at most
$O(\Delta k / (t\log^3 n))$.
The assertion follows by a union bound over all paths $P_1,\ldots,P_{\ell}$.
\end{proof}

We now relate  ${\cal E}_1$ and ${\cal E}_2$ to present the following two lemmas. 

\begin{lemma}
Conditioned on ${\cal E}_1\wedge {\cal E}_2$ occurring, the algorithm does not return $\nil$.
\end{lemma}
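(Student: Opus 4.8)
I want to show that conditioned on the two good events $\mathcal{E}_1 \wedge \mathcal{E}_2$, the algorithm never outputs $\nil$. There are exactly three places in the algorithm where $\nil$ can be returned: in Step 2, when some $P_i$ fails to intersect $T_{i-1}$; in Step 3, when the skeleton graph $S = T \cup C$ fails to admit a planar drawing with $C$ as outer face. (Those are the only $\nil$ branches in the stated algorithm.) So the proof is a case analysis over these failure points, and in each case I argue the failure is impossible under $\mathcal{E}_1 \wedge \mathcal{E}_2$.

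**Step 2 failures.** Under $\mathcal{E}_2$, all of $P_1,\dots,P_\ell$ avoid $A$, and under $\mathcal{E}_1$ all their endpoints lie in $Y' \subseteq Y \setminus A$, hence (by the discussion following Lemma~\ref{lem:Y_prime}) all $\phi(P_i)$ lie inside the disk $\mathcal{D}$, with endpoints on $\partial\mathcal{D}$. More importantly, I need connectivity: $T_{i-1}$ is a connected subtree whose leaves include $u_1,v_1,\dots,u_{i-1},v_{i-1}$, all of which are vertices of $Y'$ lying on $\partial\mathcal D$ in a controlled cyclic position (event $\mathcal E_1$ pins each $u_i$ into block $Y_{2i}$ and each $v_i$ into block $Y_{2\ell+2i}$). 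The path $P_i$ has one endpoint in $Y_{2i}$ and one in $Y_{2\ell+2i}$, and its image is a curve in $\mathcal D$ joining two arcs of $\partial\mathcal D$ that are separated (along $\partial\mathcal D$) by arcs containing previously chosen endpoints — in particular by the arc containing $u_{i-1}\in Y_{2(i-1)}$ on one side and by $v_{i-1}$ on the other, so to speak. Since $T_{i-1}$ is a connected set in $\mathcal D$ that touches $\partial\mathcal D$ at points on both "sides" of the chord traced by $P_i$, a Jordan-curve / planarity argument forces $\phi(P_i)$ to meet $\phi(T_{i-1})$; hence $V(P_i)\cap V(T_{i-1})\neq\emptyset$ and the algorithm does not return $\nil$ at this step. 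This is the step I expect to be the main obstacle: making the "$T_{i-1}$ separates the two endpoint-arcs of $P_i$" claim fully rigorous requires carefully tracking, via $\mathcal{E}_1$, that the ordering of the blocks $Y_1,Y_2,\dots,Y_{4\ell}$ around $\partial\mathcal D$ interleaves the endpoints of $P_i$ with endpoints already present in $T_{i-1}$, and then invoking that a connected subgraph drawn in a disk with boundary vertices on both sides of a boundary-to-boundary arc must cross that arc.

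**Step 3 failure.** For the skeleton, I must exhibit the required planar drawing. Restrict $\phi$ to $H_U$ (or more precisely to the union $T$ of the sampled paths together with the added cycle $C$): since under $\mathcal{E}_1\wedge\mathcal{E}_2$ all $P_i$ avoid $A$ and have endpoints in $Y'\subseteq V(H_U)$, the tree $T$ is a subgraph of $H_U$, so $\phi|_T$ is a planar drawing of $T$ inside $\mathcal D$ with all leaves on $\partial\mathcal D$. The leaves appear on $\partial\mathcal D$ in cyclic order $u_1,\dots,u_\ell,v_1,\dots,v_\ell$ exactly because $\mathcal E_1$ forces $u_i\in Y_{2i}$ and $v_i\in Y_{2\ell+2i}$ and the blocks $Y_1<Y_2<\dots<Y_{4\ell}$ are in clockwise order around $\partial\mathcal D$ by construction. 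Therefore the cycle $C$ added in Step~3 (which visits the leaves in precisely this order) can be drawn just inside $\partial\mathcal D$ without crossings, so $S = T\cup C$ has a planar drawing with $C$ as the outer face. Hence the algorithm does not return $\nil$ at Step~3 either.

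**Wrap-up.** Combining the two cases: conditioned on $\mathcal{E}_1\wedge\mathcal{E}_2$, none of the three $\nil$-returning branches can be triggered, so the algorithm runs to completion (Steps 4--8 contain no $\nil$ branch), which is the claim. The only real subtlety, as noted, is the Jordan-curve separation argument in Step~2; everything else is bookkeeping with the cyclic order of $Y'$ guaranteed by $\mathcal{E}_1$ and the planarity of $\phi|_{H_U}$ inside $\mathcal D$ guaranteed by $\mathcal{E}_2$.
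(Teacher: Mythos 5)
Your proof is correct and uses the same underlying idea as the paper's: under $\mathcal{E}_1 \wedge \mathcal{E}_2$, all the sampled paths embed as chords of the disk $\mathcal{D}$ with endpoints on $\partial\mathcal{D}$ in positions controlled by the blocks $Y_1, \ldots, Y_{4\ell}$, so a Jordan-curve/interleaving argument forces the needed intersections in Step~2, and the cyclic order of the leaves directly yields the required outerplanar drawing in Step~3. One remark: for Step~2 you work harder than necessary by reasoning about the full tree $T_{i-1}$; since $T_1 = P_1$ and the trees are nested, $P_1 \subseteq T_{i-1}$ for every $i \geq 2$, and the endpoints of $P_i$ (in $Y_{2i}$ and $Y_{2\ell+2i}$) interleave with those of $P_1$ (in $Y_2$ and $Y_{2\ell+2}$), so $\phi(P_i)$ must cross $\phi(P_1)$, and since $\phi$ is a planar embedding the crossing is a shared vertex. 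This is exactly the paper's observation that all pairs $P_i, P_j$ pairwise intersect, which makes the ``separation by $T_{i-1}$'' claim (which you flag as the delicate point) unnecessary. Also, a trivial slip: you announce ``exactly three places'' where $\nil$ can be returned but then correctly list only the two (Steps~2 and~3).
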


\begin{proof}
The algorithm can return $\nil$ only in Steps 2 and 3.
Suppose that ${\cal E}_1\wedge {\cal W}_2$ occurs.
Then for any $i\in \{1,\ldots,\ell\}$ we have that $\phi(P_i)$ is a simple curve contains in ${\cal D}$ with endpoints in $\partial {\cal D}$.
Thus for any $i\neq j\in \{1,\ldots,\ell\}$, we have $V(P_i)\cap V(P_j)\neq \emptyset$.
Thus, for any $i\in \{2,\ldots,\ell\}$, we have $V(P_i)\cap V(T_{i-1})\neq \emptyset$.
It follows that the algorithm does not return $\nil$ during Step 2.
Similarly, the graph $S$ constructed in Step 3 can be drawn inside ${\cal D}$ by mapping the cycle $C$ to $\partial {\cal D}$.
It follows that the algorithm does not return $\nil$ during Step 3, which concludes the proof.
\end{proof}

\begin{lemma}\label{lem:large_combed_pair}
There exist a universal constant $\alpha>0$ such that if $t>\alpha k \Delta \log^3 n$ then the following holds.
Conditioned on ${\cal E}_1\wedge {\cal E}_2$ occurring, the algorithm computes a $(r \times r)$-grid minor, for some $r=\Omega(t/(\dmax\log n))$, with probability at least $1-O(1/n)$.
\end{lemma}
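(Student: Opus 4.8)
The plan is to condition on ${\cal E}_1\wedge{\cal E}_2$ throughout and to track how the sizes of the two path families shrink as we pass through Steps~4--8, showing that each step costs only a $\Delta\log^{O(1)}n$ multiplicative factor plus an additive $O(k)$. First, by the preceding lemma the skeleton $S$ is drawn inside ${\cal D}$ with $C$ mapped to $\partial{\cal D}$, and, using ${\cal E}_2$ (so $T\subseteq H_U$), this drawing is the restriction of $\phi$; hence the faces $f_1,\ldots,f_4$ are honest disk regions inside ${\cal D}$ whose arcs $\gamma_i=f_i\cap\partial{\cal D}$ are internally disjoint and occur in cyclic order $\gamma_1,\gamma_2,\gamma_3,\gamma_4$ around $\partial{\cal D}$. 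Conditioned on ${\cal E}_1$, between any two consecutive leaves of $T$ there is at least one full bucket $Y'_j$ of $Y'$ (of size $|Y'|/(4\ell)$) on $\partial{\cal D}$, so each $\gamma_i$ carries $\Omega(|Y|)$ vertices of $Y'$. Next I would analyze Step~4: for a pair $\{u,v\}$ sampled as in Step~4 (independent of the skeleton, so the conditioning is irrelevant to its law), with probability $\Omega(1)$ we have $u,v\in Y'$, the sampled path $Q_{u,v}$ avoids $A$ (probability $1-O(\Delta k/(t\log^3 n))=1-o(1)$ by the congestion argument of Lemma~\ref{lem:grid_E2}), $u$ lies in the interior of $\gamma_1$, and $v$ in the interior of $\gamma_3$; when this happens $Q_{u,v}$ enters $f_1$ at $u$, and to reach $v\in f_3$ it must cross $C_1$, hitting a vertex of $C_1$, which is one of the $\leq 3$ vertices shared with $C_2,C_3,C_4$ only with probability $o(1)$, and symmetrically at the $v$ end. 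So with probability $\Omega(1)$ the pair lies in $M_{1,3}$ and its path genuinely meets $B_1$ and $B_3$; likewise for $M_{2,4}$. Since changing one edge of the random matching changes each count by $O(1)$, concentration gives $|M_{1,3}|,|M_{2,4}|=\Omega(|Y|)=\Omega(t/\log^{9/2}n)$ with probability $1-e^{-\Omega(|Y|)}$, which is $1-O(1/n)$ under the assumed lower bound on $t$; I would also condition on the high-probability event (Chernoff, using $c=O(|Y|\log^{15/2}n)$) that every edge carries $O(\log^{O(1)}n)$ of the sampled paths.

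For Steps~5--8 the plan is as follows. Each ``good'' pair $\{u,v\}\in M_{1,3}$ supplies a subpath of $Q_{u,v}$ from $B_1$ to $B_3$; these $\Omega(|Y|)$ subpaths have vertex-congestion $O(\Delta\log^{O(1)}n)$, so greedily removing overlaps (equivalently, via Menger in Step~5) yields $|{\cal R}_{1,3}|=\Omega(t/(\Delta\log^{O(1)}n))$ vertex-disjoint $B_1$--$B_3$ paths, and similarly $|{\cal R}_{2,4}|=\Omega(t/(\Delta\log^{O(1)}n))$. Now I would argue that the only obstruction to $({\cal R}_{1,3},{\cal R}_{2,4})$ being a combed complete-bipartite pseudogrid is the presence of $A$: since the paths inside each family are vertex-disjoint, at most $2k$ of them touch $A$, and we discard those. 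Every surviving ${\cal R}_{1,3}$ path is drawn inside ${\cal D}$ from a vertex of $B_1$ (on the inner boundary of $f_1$, near $\gamma_1$) to a vertex of $B_3$ (near $\gamma_3$), so by the Jordan curve theorem in ${\cal D}$ it separates $\gamma_2$ from $\gamma_4$ and hence crosses every surviving ${\cal R}_{2,4}$ path (which runs from near $\gamma_2$ to near $\gamma_4$): this gives the complete bipartite structure. Moreover, because all surviving paths are drawn planarly in the disk with endpoints in the four cyclically ordered regions, any two paths of one family cross the opposite family in the same total order and each path crosses the opposite family monotonically along itself, which is exactly the combed condition. Consequently, in Step~6 the set $X$ of Lemma~\ref{lem:compute_bipartite} may be taken to be the $\leq 2k$ paths meeting $A$ (so $|{\cal W}_{1,3}|\geq|{\cal R}_{1,3}|-4k$, and likewise for ${\cal W}_{2,4}$), and in Step~7 the sets ${\cal X},{\cal Y}$ of Lemma~\ref{lem:combing} may be taken among the $\leq 2k$ surviving paths meeting $A$ (so $|{\cal P}_{1,3}|\geq|{\cal W}_{1,3}|-6k$, and likewise). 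Combining, $|{\cal P}_{1,3}|,|{\cal P}_{2,4}|\geq\Omega(t/(\Delta\log^{O(1)}n))-O(k)$, and since the hypothesis on $t$ forces $k$ to be an arbitrarily small fraction of $t/(\Delta\log^{O(1)}n)$ once $\alpha$ is large enough, both are $\Omega(t/(\Delta\log^{O(1)}n))$. Step~8 with Lemma~\ref{lem:from_combed_to_grid} then outputs a $(|{\cal P}_{1,3}|\times|{\cal P}_{2,4}|)$-grid minor of $G$, i.e.\ an $(r\times r)$-grid minor with $r=\Omega(t/(\Delta\log^{O(1)}n))$; bookkeeping the exponents gives the stated bound.

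The hard part is the topological accounting, in two places. First, in Step~4 one must be sure that each arc $\gamma_i$ really carries a constant fraction of $Y'$ — this is exactly why ${\cal E}_1$ is designed to spread the skeleton leaves over the $4\ell$ buckets and why $f_1,\ldots,f_4$ are chosen edge-disjoint, which forces $|C_i\cap C_j|\leq 1$ (two edge-disjoint paths in a tree meet in at most a vertex) so that the private sets $B_i$ differ from $C_i$ by at most three vertices. Second, one must justify that ``avoids $A$'' is the sole obstruction to the surviving families forming a combed complete-bipartite pseudogrid; this rests on the fact that after deleting $A$ the relevant subgraph is planar with the prescribed disk drawing, so standard Jordan-curve arguments apply, but it takes care to set up the correspondence between the combinatorial notions $M_i$, $B_i$ and the geometry. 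Everything else — the concentration for $|M_{1,3}|$ and $|M_{2,4}|$, the congestion-based extraction of vertex-disjoint paths, and absorbing the additive $O(k)$ using the assumed lower bound on $t$ — is routine, and it is the accounting in these routine steps that pins down the precise polylogarithmic exponents in the hypothesis and conclusion.
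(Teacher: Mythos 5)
Your proposal follows the same high-level route as the paper's proof: condition on ${\cal E}_1\wedge{\cal E}_2$, show each of $M_{1,3}, M_{2,4}$ is $\Omega(|Y|)$ with high probability, bound the per-edge congestion of the sampled paths, extract $\Omega(t/(\Delta\log^{O(1)}n))$ vertex-disjoint $B_1$--$B_3$ and $B_2$--$B_4$ paths via Menger, argue that only paths touching $A$ can spoil the complete-bipartite and combed structure, absorb that loss using $t \gg k\Delta\log^{O(1)}n$, and finish with Lemma~\ref{lem:from_combed_to_grid}. Two of your refinements are actually cleaner than what the paper writes. You bound $|{\cal X}|$ \emph{deterministically} by $2k$ using that each family is vertex-disjoint and $|A|=k$, whereas the paper routes this through yet another Chernoff bound on ``the number of sampled paths meeting $A$''; your version is tighter and avoids a slightly dubious inequality (the Menger paths need not be sampled paths). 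You also replace the paper's bare ``Chernoff-Hoeffding'' for $|M_{1,3}|$ with a bounded-differences argument, which is more appropriate since the pairs come from a random matching rather than independent draws.

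The one place where your write-up is imprecise is the topological justification for the complete-bipartite and combed structure. You claim that a surviving ${\cal R}_{1,3}$ path, by the Jordan curve theorem, ``separates $\gamma_2$ from $\gamma_4$.'' But its endpoints are in $B_1,B_3\subseteq V(T)$, which lie in the interior of ${\cal D}$, not on $\partial{\cal D}$, so the arc does not by itself separate the disk. To make this rigorous one must close the arc up through the faces $f_1$ and $f_3$ (which are accessible precisely because, by the definition of $M_{1,3}$, the paths enter $B_1$ and $B_3$ through these faces), and then argue that an ${\cal R}_{2,4}$ path cannot sneak through $f_1$ or $f_3$ because its own access to $T$ is only through $B_2$ and $B_4$. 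To be fair, the paper omits this argument entirely, merely asserting ``every path in ${\cal R}_{1,3}$ that does not intersect $A$ must intersect every path in ${\cal R}_{2,4}$ that does not intersect $A$,'' so the gap is shared; but since you chose to supply a reason, the reason as stated does not quite hold and needs the patch above. With that fix, the proof is sound and matches the paper in substance.
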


\begin{proof}
Consider a path $Q_{u,v}$ sampled at Step 4.
Let us first give the lower bound of the probability that $\{u,v\}\in M_{1,3}$, conditioned on ${\cal E}_1\wedge {\cal E}_2$.
Let $y_{\zeta}, y_{\zeta'}\in Y'$ be the endpoints of $C_{1}$.
and let $y_{\xi}, y_{\xi'}\in Y'$ be the endpoints of $C_{3}$.
Since ${\cal E}_1$ occurs, it follows that $|\zeta-\zeta'|\geq \ell/48$ and $|\xi-\xi'|\geq \ell/48$.
Let $u=y_i$ and $v=y_j$.
Since $|Y'|\geq |Y|/2$, we get
\[
\Pr[\zeta<i<\zeta' \text{ and } \xi<j<\xi'] \geq (1/96)^2.
\]
Since $C_1,\ldots,C_4$ are edge disjoint and $T$ is a tree it follows that for any $i\in \{1,\ldots,4\}$ there exist at most one vertex $b_i\in V(C_i)\setminus V(B_i)$.
Arguing as in Lemma \ref{lem:grid_E2} we have that
 the probability that $Q_{u,v}$  does not intersect $A\cup \{b_0\}$ is at least $1 - O(\Delta k / (t\log^3 n))$.
Thus for sufficiently large constant $\alpha>0$, it follows that there exists some constant $p>0$ such that the probability that
$\zeta<i<\zeta'$,  $\xi<j<\xi'$, and $Q_{u,v}$ does not intersect $A\cup \{b_0\}$ is at least $p$.
When these three events occur we have $\{u,v\}\in M_{1,3}$.
Thus by the Chernoff-Hoeffding Theorem we get
\begin{align}
\Pr[|M_{1,3}| \geq t p /2 ~ | ~ {\cal E}_1\wedge {\cal E}_2] &> 1-O(1/n). \label{eq:grid_M13}
\end{align}
Similarly we obtain that
\begin{align}
\Pr[|M_{2,4}| \geq t p /2 ~ | ~ {\cal E}_1\wedge {\cal E}_2] &> 1-O(1/n). \label{eq:grid_M24}
\end{align}

The maximum number of paths sampled by the algorithm is at most $t$.
Since all the paths are sampled independently, arguing as above, it follows by the Chernoff-Hoeffding Theorem and a union bound over all edges that the maximum number of paths visiting any edge is at most $O(\log n)$ with probability at least $1-O(1/n)$.
Combining with \eqref{eq:grid_M13} and \eqref{eq:grid_M24} we get that, with probability at least $1-O(1/n)$,
the size of the minimum cut in $H_{1,3}$ between $B_1$ and $B_3$ is at least $\Omega(|M_{1/3}|/\log n)$.
Thus by Menger's Theorem we get
\begin{align}
\Pr\left[|{\cal R}_{1,3}| \geq \Omega\left(t p / \Delta \log n\right) ~ | ~ {\cal E}_1\wedge {\cal E}_2\right] &\geq 1-O(1/n). \label{eq:grid_R13}
\end{align}
Similarly we get
\begin{align}
\Pr\left[|{\cal R}_{2,4}| \geq \Omega\left(t p / \Delta \log n\right) ~ | ~ {\cal E}_1\wedge {\cal E}_2\right] &\geq 1-O(1/n). \label{eq:grid_R24}
\end{align}

Let ${\cal I}$ be the conflict graph constructed at Step 6.
Let ${\cal X}\subseteq {\cal R}_{1,3}\cup {\cal R}_{2,4}$ such that ${\cal I} \setminus {\cal X}$ is complete bipartite, and such that $|{\cal X}|$ is minimized.
Every path in ${\cal R}_{1,3}$ that does not intersect $A$ must intersect every path in ${\cal R}_{2,4}$ that does not intersect $A$.
Therefore $|{\cal X}|$ is at most the number of paths in ${\cal R}_{1,3}\cup {\cal R}_{2,4}$ that intersect $A$, which is at most the number of paths sampled throughout the execution of the algorithm that intersect $A$.
Using the Chernoff-Hoeffding Theorem as above we obtain
\[
Pr[|{\cal X}| \geq |{\cal R}_{1,3}|/4 ~ | ~ {\cal E}_1\wedge {\cal E}_2] \leq O(1/n).
\]
and
\[
Pr[|{\cal X}| \geq |{\cal R}_{2,4}|/4 ~ | ~ {\cal E}_1\wedge {\cal E}_2] \leq O(1/n).
\]
Thus, with probability at least $1-O(1/n)$, the algorithm from Lemma  \ref{lem:compute_bipartite} outputs some ${\cal X}'\subset {\cal R}_{1,3}\cup {\cal R}_{2,4}$ with $|{\cal X}'|\leq 2|{\cal X}|\leq |{\cal R}_{1,3}|/2$ and $|{\cal X}'|\leq |{\cal R}_{2,4}|$, and such that ${\cal I}\setminus {\cal X}'$ is complete bipartite.
Since ${\cal W}_{1,3}={\cal R}_{1,3}\setminus {\cal X}'$ and ${\cal W}_{2,4}={\cal R}_{2,4}\setminus {\cal X}'$, we get
\begin{align}
\Pr\left[|{\cal W}_{1,3}| \geq \Omega\left(t p / \Delta \log n\right) ~ | ~ {\cal E}_1\wedge {\cal E}_2\right] &\geq 1-O(1/n),
\end{align}
and
\begin{align}
\Pr\left[|{\cal W}_{2,4}| \geq \Omega\left(t p / \Delta \log n \right) ~ | ~ {\cal E}_1\wedge {\cal E}_2\right] &\geq 1-O(1/n).
\end{align}

Finally, let ${\cal Z}\subseteq {\cal W}_{1,3}$, ${\cal Z}'\subseteq {\cal W}_{2,4}$ such that $({\cal W}_{1,3}\setminus {\cal Z}, {\cal W}_{2,4}\setminus {\cal Z}')$ is combed and $|{\cal Z}|+|{\cal Z}'|$ is minimized.
By setting ${\cal Z}$ (resp.~${\cal Z}'$) to be the set of all paths in ${\cal W}_{1,3}$ (resp.~${\cal W}_{2,4}$) that intersect $A$, we get that $({\cal W}_{1,3}\setminus {\cal Z}, {\cal W}_{2,4}\setminus {\cal Z}')$ is combed.
Therefore $|{\cal Z}|+|{\cal Z}'|$ is at most the number of paths that intersect $A$.
Arguing as above we have
\[
\Pr[|{\cal Z}|+|{\cal Z}'| \geq \min\{|{\cal W}_{1,3}|,|{\cal W}_{2,4}|\}/6 ~ | ~ {\cal E}_1\wedge {\cal E}_2] \leq O(1/n).
\]
Thus for the combed pair $({\cal P}_{1,3}, {\cal P}_{2,4})$ computed by the algorithm in Lemma \ref{lem:combing} we have
\begin{align}
\Pr\left[|{\cal P}_{1,3}| \geq \Omega\left(t p / \Delta \log n \right) ~ | ~ {\cal E}_1\wedge {\cal E}_2\right] &\geq 1-O(1/n),
\end{align}
and
\begin{align}
\Pr\left[|{\cal P}_{2,4}| \geq \Omega\left(t p / \Delta \log n \right) ~ | ~ {\cal E}_1\wedge {\cal E}_2\right] &\geq 1-O(1/n).
\end{align}
The assertion now follows by Lemma \ref{lem:from_combed_to_grid}.
\end{proof}

We are now ready to present the main result for analysis of the algorithm. 

\begin{theorem}\label{thm:k-apex_grid_minor}
There exists a randomized polynomial-time algorithm which given a graph $G$ of maximum degree $\dmax$, and $k,t\in \mathbb{N}$,
with $t>\alpha k \dmax \log^3 n$, for some universal constant $\alpha>0$,
terminates with one of the following outcomes, with high probability:
\begin{description}
\item{(1)}
Correctly decides that $\mvp(G)>k$.
\item{(2)}
Correctly decides that $\tw(G) = O(t \log^{1/2}n)$.
\item{(3)}
Outputs a $(r\times r)$-grid minor in $G$ for some $r=\Omega(t/(\Delta \log n))$.
\end{description}
\end{theorem}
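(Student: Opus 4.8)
The plan is to combine the treewidth approximation of Corollary~\ref{thm:tw-pw-approx} with the randomized grid-minor procedure of Section~\ref{sec:grid} (Steps~1--8 above), amplifying the success probability of the latter by independent repetition. First I would run the algorithm of Corollary~\ref{thm:tw-pw-approx} on $G$. Since any tree decomposition of width $w$ certifies that $\tw(G)\le w$, if the returned decomposition has width at most $c_1 t\log^{1/2}n$, for the universal constant $c_1$ implicit in that corollary, then I output outcome~(2), which is always correct. Otherwise the guarantee of Corollary~\ref{thm:tw-pw-approx} forces $\tw(G)>t$, because a graph of treewidth at most $t$ would have produced a decomposition of width at most $c_1 t\log^{1/2}n$. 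So from this point I may assume $\tw(G)>t$; the hypothesis $t>\alpha k\dmax\log^3 n$ then gives $\tw(G)>t>\alpha k\dmax\log^3 n>2k$, so that if $\mvp(G)\le k$ then $G$ is a $k$-apex graph whose treewidth satisfies the hypothesis of Lemma~\ref{lem:large_combed_pair}.

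Next I would run the grid-minor procedure of Section~\ref{sec:grid} on $G$ a total of $\Theta(\log n)$ times independently (with more repetitions in the quasi-polynomial-time regime, per the discussion of randomization in Section~\ref{sec:prelim}); each run is polynomial-time, being a constant number of polynomial-time subroutines (Theorem~\ref{thm:LR}, planarity testing, Menger's theorem, and Lemmas~\ref{lem:doubly-linked_compute}, \ref{lem:compute_bipartite}, \ref{lem:combing}, \ref{lem:from_combed_to_grid}). Every output is self-certifying: in polynomial time I check whether it is a genuine $(r\times r)$-grid minor of $G$ with $r\ge c_2 t/(\dmax\log n)$, where $c_2$ is the constant implicit in Lemma~\ref{lem:large_combed_pair}. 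If some run passes this check I output its grid minor, which is outcome~(3) and is correct since a grid minor is verifiable; if every run fails, I output outcome~(1), that is, $\mvp(G)>k$. To justify this last decision, suppose for contradiction that $\mvp(G)\le k$. Then $G$ is $k$-apex with treewidth $>\alpha k\dmax\log^3 n$, so Lemma~\ref{lem:large_combed_pair} applies (with the true treewidth playing the role of $t$, which only enlarges the grid produced): conditioned on the events ${\cal E}_1\wedge{\cal E}_2$ of Section~\ref{sec:grid}, a single run outputs a grid minor of side $\Omega(\tw(G)/(\dmax\log n))\ge c_2 t/(\dmax\log n)$ with conditional probability at least $1-O(1/n)$. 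Since $\Pr[{\cal E}_1]\ge(1/96)^{12}$ and, by Lemma~\ref{lem:grid_E2}, $\Pr[\neg{\cal E}_2]=O(\dmax k/(t\log^3 n))$, taking $\alpha$ large enough yields $\Pr[{\cal E}_1\wedge{\cal E}_2]\ge(1/96)^{12}/2=\Omega(1)$, so each run succeeds with some constant probability $p_0>0$. Hence all $\Theta(\log n)$ runs fail with probability at most $(1-p_0)^{\Theta(\log n)}\le n^{-c}$, contradicting the assumption that every run failed; therefore, conditioned on all runs failing, $\mvp(G)>k$ with high probability.

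The main obstacle is the second step: the analysis of the grid-minor procedure in Section~\ref{sec:grid} is written for a graph that is \emph{promised} to be $k$-apex of a prescribed treewidth $t$, whereas here we only know $\tw(G)>t$ and, in the relevant case, $\mvp(G)\le k$. This causes no real trouble because the structural quantities underlying that analysis---the doubly-cut-linked set of Lemma~\ref{lem:doubly-linked_compute}, the persistence of grid minors from Lemma~\ref{lem:grids_persistency}, and the congestion of the flow from Theorem~\ref{thm:LR}---are all monotone in the treewidth, so running the procedure with a smaller lower bound on $\tw(G)$ only underestimates what is available, and the guarantee of Lemma~\ref{lem:large_combed_pair} then holds with $\tw(G)$ in place of $t$; moreover every positive output of the procedure is verifiable, so a spurious output can never be mistaken for a valid one. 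The only quantitative fact that must be confirmed is that the hypothesis $t>\alpha k\dmax\log^3 n$ of the theorem, for $\alpha$ a sufficiently large universal constant, supplies exactly the hypothesis needed by Lemma~\ref{lem:large_combed_pair}. The overall running time is polynomial: a constant number of polynomial-time subroutines, repeated $\Theta(\log n)$ times.
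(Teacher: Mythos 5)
Your proof takes essentially the same approach as the paper's own: use Theorem~\ref{thm:tw-pw-approx} to either certify outcome~(2) or conclude $\tw(G)>t$, then run the grid-minor procedure of Section~\ref{sec:grid} independently $\Theta(\log n)$ times, verify each candidate output, and report outcome~(1) if all repetitions fail. The paper's proof is nearly identical (run Lemma~\ref{lem:large_combed_pair} $c''\log n$ times, declare $\mvp(G)>k$ on failure), so the approaches match in substance.

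There is, however, one quantitative step you slide past, which the paper's own proof also handles awkwardly. You assert that ``taking $\alpha$ large enough yields $\Pr[{\cal E}_1\wedge{\cal E}_2]=\Omega(1)$.'' For ${\cal E}_2$ this is fine via Lemma~\ref{lem:grid_E2}. But the bound $\Pr[{\cal E}_1]\ge(1/96)^{12}$ is proved via Lemma~\ref{lem:Y_prime}, whose hypothesis is $t>c\,k\dmax\log^{11}n$---a factor $\log^{8}n$ stronger than the theorem's stated hypothesis $t>\alpha k\dmax\log^{3}n$, and no choice of the universal constant $\alpha$ can absorb this. The paper's proof tacitly concedes this by internally resetting $t$ to $\Theta(k\dmax\log^{11}n)$ before running the treewidth test; but that shifts the mismatch into outcome~(2), whose resulting guarantee becomes $O(k\dmax\log^{23/2}n)$ rather than $O(t\log^{1/2}n)$ when the original input $t$ is near the $\alpha k\dmax\log^{3}n$ threshold. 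Since you inherit this defect from the paper rather than introduce it, I count your argument as matching the intended proof, but you should flag the $\log^{11}n$ requirement of Lemma~\ref{lem:Y_prime} explicitly and either strengthen the hypothesis on $t$ in the theorem statement or restate outcome~(2) with the weaker bound.
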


\begin{proof}
Let $t=c' k \dmax \log^{11} n$.
We can pick constant $c'>0$ such that
$t>\max\{c  k \dmax \log^{11} n, \alpha k \dmax \log^3 n\}$, where $c$ and $\alpha$ are as in Lemmas \ref{lem:Y_prime} and \ref{lem:large_combed_pair} respectively.
Using Theorem \ref{thm:tw-pw-approx} we can decide in polynomial time whether $\tw(G) > t$, or $\tw(G) = O(t\log^{1/2} n)$.
In the latter case, the algorithm terminates with outcome (2).
In the former case, we run $c'' \log n$ times the algorithm from Lemma \ref{lem:large_combed_pair}; if at least one of the executions returns a $(r\times r)$-grid minor for some $r=\Omega(t/(\dmax \log n))$, then the algorithm terminates with outcome (3); otherwise we terminate with outcome (1).
The only case where the algorithm can err is if we terminate with outcome (1) while $\mvp(G) \leq k$.
By setting $c''$ to be a sufficiently large constant, the probability of this event can be made at most $1/n^{c'''}$, for an arbitrarily large constant $c'''>0$, concluding the proof.
\end{proof}



We now eliminate the condition on $\dmax$. 
It was observed by Chekuri and Chuzhoy \cite{DBLP:conf/soda/ChekuriC15} that, as a consequence of the cut-matching game of Khandekar, Rao and Vazirani \cite{DBLP:conf/stoc/KhandekarRV06}, any graph of treewidth $t$ contains a subgraph of degree at most $\log^{O(1)} t$ and treewidth at least $t/\log^{O(1)} t$.
The precise statement is as follows.

\begin{theorem}[Chekuri and Chuzhoy \cite{DBLP:conf/soda/ChekuriC15}]\label{thm:tw_degree}
There exists a randomized polynomial-time algorithm which given a graph $G$ outputs, with high probability, some subgraph $H\subseteq G$ of maximum degree at most $O(\log^3 \tw(G))$ and with $\tw(H) = \Omega(\tw(G)/\log^6 \tw(G))$.
\end{theorem}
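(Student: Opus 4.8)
The plan is to combine the \emph{cut-matching game} of Khandekar, Rao and Vazirani \cite{DBLP:conf/stoc/KhandekarRV06} with the standard fact that treewidth is robustly preserved under low-congestion embeddings. First I would extract from $G$ a large well-linked set: using the duality between treewidth and balanced separators (together with a well-linked decomposition), one can compute in polynomial time a set $X\subseteq V(G)$ with $|X| = \Omega(\tw(G)/\log^{O(1)}\tw(G))$ that is $\Omega(1)$-well-linked in $G$, meaning that for every bipartition $X = A\cup B$ with $|A|\le|B|$ there is a flow of value $\Omega(|A|)$ from $A$ to $B$ with edge-congestion $1$.

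Next I would run the cut-matching game on the ground set $X$. In each of the $O(\log^2|X|)$ rounds the (randomized) cut player proposes a bisection of $X$, and the matching player responds with a perfect matching $M_i$ on $X$ together with a routing of $M_i$ in $G$; well-linkedness of $X$ guarantees such a routing exists with fractional edge-congestion $O(1)$, which I round to an integral routing of congestion $O(\log|X|)$. When the game ends, the union $W = \bigcup_i M_i$ is, with high probability, an expander on vertex set $X$ with constant expansion and maximum degree $O(\log^2|X|)$, and by construction $W$ embeds into $G$: every edge $e\in E(W)$ is mapped to a path $\pi(e)$ in $G$ joining its endpoints, with total vertex-congestion $\eta = O(\log^3|X|)$, the product of the number of rounds and the per-round congestion.

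Now set $H = \bigcup_{e\in E(W)} \pi(e)\subseteq G$. The degree bound is then immediate: a vertex $v\in V(H)$ lies on at most $\eta$ of the paths $\pi(e)$, so it has at most $2\eta + O(\log^2|X|) = O(\log^3\tw(G))$ incident edges in $H$, where the additive term covers the case $v\in X$ and accounts for the up to $\deg_W(v)$ paths ending at $v$. For the treewidth lower bound, recall that an expander on $m$ vertices with constant expansion has treewidth $\Omega(m)$, and that whenever a graph $W$ embeds into a graph $H$ with every edge realized by a path of vertex-congestion $\eta$ one has $\tw(H) = \Omega(\tw(W)/\eta)$, a standard robustness property obtained by pulling an optimal tree decomposition of $H$ back through the embedding. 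Hence
\[
\tw(H) = \Omega\!\left(\frac{|X|}{\eta}\right) = \Omega\!\left(\frac{\tw(G)}{\log^{6}\tw(G)}\right),
\]
after absorbing the polylogarithmic losses from the size of $X$, the number of rounds, and the congestion of the routings into the exponent $6$. A final repetition of the randomized procedure $O(\log n)$ times boosts the success probability into the high-probability regime, and the whole computation runs in polynomial time.

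The main obstacle I expect is the bookkeeping in the middle step: one must guarantee that the matching player can always answer with a routable perfect matching of controlled congestion, while simultaneously ensuring that $W$ is a genuine expander after only $O(\log^2|X|)$ rounds. This is exactly the content of the cut-matching game analysis of \cite{DBLP:conf/stoc/KhandekarRV06}; the only extra verification needed here is that $\Omega(1)$-well-linkedness of $X$ supplies the matching player's routings with the required per-round congestion bound (the source of the $\log^3$ in the degree). Once that is in place, the degree and treewidth estimates are routine, and the result is exactly Theorem~\ref{thm:tw_degree} as stated in \cite{DBLP:conf/soda/ChekuriC15}.
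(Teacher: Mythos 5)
The paper does not prove this theorem; it is stated as a direct citation to Chekuri and Chuzhoy \cite{DBLP:conf/soda/ChekuriC15}, with the one-sentence attribution that it is ``a consequence of the cut-matching game of Khandekar, Rao and Vazirani \cite{DBLP:conf/stoc/KhandekarRV06}.'' Your sketch reconstructs exactly the argument the paper points to: extract an $\Omega(1)$-well-linked set $X$ of size $\tw(G)/\log^{O(1)}\tw(G)$, run the cut-matching game on $X$ to build an expander $W$ embedded in $G$ via paths of polylogarithmic vertex-congestion $\eta$, and let $H$ be the union of those paths. The degree bound $O(\eta) = O(\log^3)$ and the treewidth pull-back $\tw(H) = \Omega(\tw(W)/\eta) = \Omega(|X|/\eta)$ (inflating each bag of a tree decomposition of $H$ by the at most $\eta$ embedding paths through it, a factor-$2\eta$ blowup) are both standard, and the resulting exponents match the stated $\log^3$ and $\log^6$. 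So your approach is the intended one, and the sketch is essentially correct.

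One small inaccuracy in the bookkeeping: the matching player's per-round response is naturally a single-commodity flow (super-source over $A$, super-sink over $B$ with unit capacities), for which integrality of max-flow already yields an integral routing at the same congestion, so the extra $O(\log|X|)$ you charge to ``rounding'' is not needed there; the per-round congestion is more naturally $O(1)$ on edges, and the degree/congestion accounting in Chekuri--Chuzhoy is arranged slightly differently. This does not affect the conclusion since the theorem's exponents are stated loosely, but be aware that your decomposition of the $\log^3$ as ``rounds $\times$ rounding loss'' is not quite the source of the three logs in the actual argument.
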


We remark that Chekuri and Chuzhoy \cite{DBLP:conf/soda/ChekuriC15} have proved that any graph $G$ contains some topological minor $H$ of maximum degree $3$ and with $\tw(H)\geq \tw(G)/\log^{O(1)} \tw(G)$.
Despite the stronger degree bound, this leads to a weaker approximation factor in our application.

\begin{corollary}\label{cor:k-apex_grid_minor_general}
There exists a randomized polynomial-time algorithm which given a graph $G$ and $k,t\in \mathbb{N}$,
with $t>\alpha' k \log^6 n$, for some universal constant $\alpha'>0$,
terminates with one of the following outcomes, with high probability:
\begin{description}
\item{(1)}
Correctly decides that $\mvp(G)>k$.
\item{(2)}
Correctly decides that $\tw(G) = O(t \log^{13/2} n)$.
\item{(3)}
Ouptuts an $(r\times r)$-grid minor in $G$ for some $r=\Omega(t/\log^4 n)$.
\end{description}
\end{corollary}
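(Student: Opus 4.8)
The plan is to reduce to Theorem~\ref{thm:k-apex_grid_minor} by first passing to a bounded-degree subgraph. Concretely, I would run the algorithm of Theorem~\ref{thm:tw_degree} on $G$, which with high probability returns a subgraph $H\subseteq G$ with $\dmax(H)\le c_1\log^3\tw(G)$ and $\tw(H)\ge c_2\tw(G)/\log^6\tw(G)$ for universal constants $c_1,c_2>0$; since $\tw(G)<n$ this gives $\dmax(H)<c_1\log^3 n$, and (assuming $\tw(G)\ge 2$, the complementary case being an instance of outcome~(2) that is detectable up to constants via Corollary~\ref{thm:tw-pw-approx}) also $\tw(H)\ge c_2\tw(G)/\log^6 n$. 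I would then run the algorithm of Theorem~\ref{thm:k-apex_grid_minor} on $H$ with the same parameters $k$ and $t$. Its hypothesis asks for $t>\alpha k\,\dmax(H)\log^3|V(H)|$, which is implied by $t>\alpha c_1 k\log^6 n$ because $\dmax(H)<c_1\log^3 n$ and $|V(H)|\le n$; hence it suffices to take $\alpha'=\alpha c_1$ (or any larger constant) in the statement of the corollary.

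Next I would translate each of the three outcomes of the inner call, applied to $H$, into an outcome for $G$. If the inner call certifies $\mvp(H)>k$: since $H\subseteq G$, restricting any planarizing set of $G$ to $V(H)$ gives a planarizing set of $H$ of no larger size, so $\mvp(H)\le\mvp(G)$ and thus $\mvp(G)>k$; report outcome~(1). If it certifies $\tw(H)=O(t\log^{1/2}n)$: then $\tw(G)\le\tw(H)\log^6 n/c_2=O(t\log^{13/2}n)$; report outcome~(2). If it outputs an $(r\times r)$-grid minor of $H$ with $r=\Omega\big(t/(\dmax(H)\log|V(H)|)\big)=\Omega(t/\log^4 n)$: since a minor of a subgraph of $G$ is a minor of $G$, this is an $(r\times r)$-grid minor of $G$ with $r=\Omega(t/\log^4 n)$; report outcome~(3). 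Both invoked algorithms run in randomized polynomial time and succeed with high probability, so by a union bound the composite algorithm does as well.

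I do not expect a serious obstacle here; the work is essentially bookkeeping of polylogarithmic factors. The degree bound $O(\log^3 n)$ on $H$ is what inflates the grid-minor loss from $\log n$ to $\log^4 n$, and the $\log^6 n$ treewidth loss of the degree-reduction step composes with the $\log^{1/2} n$ loss inside Theorem~\ref{thm:k-apex_grid_minor} to produce the $\log^{13/2} n$ factor in outcome~(2). The one point that is not purely mechanical is that outcome~(2) of the inner call only certifies that \emph{$H$} has small treewidth, so the quantitative treewidth-preservation guarantee of Theorem~\ref{thm:tw_degree} must be used to lift this to a certificate for $G$ itself; the minor case (outcome~(3)) and the $\mvp$ case (outcome~(1)) lift immediately because both minors and planarizing sets behave monotonically under taking subgraphs.
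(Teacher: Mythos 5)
Your proof is correct and follows essentially the same route as the paper's: apply Theorem~\ref{thm:tw_degree} to pass to a bounded-degree subgraph $H$, run Theorem~\ref{thm:k-apex_grid_minor} on $H$, and lift each outcome back to $G$ using that minors and planarizing sets behave monotonically under subgraphs, while the treewidth certificate for $H$ is translated to one for $G$ via the quantitative guarantee $\tw(G)=O(\tw(H)\log^6 n)$ of Theorem~\ref{thm:tw_degree}.
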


\begin{proof}
Let $H$ be the subgraph of $G$ computed by the algorithm of Theorem \ref{thm:tw_degree}.
Since $t>\alpha' k \log^6 n$, by setting $\alpha'$ to be some sufficiently large constant, we get $t>\alpha k \dmax(H) \log^3 n$, and thus $t$ satisfies the condition of Lemma \ref{thm:k-apex_grid_minor} for the graph $H$.
Applying the algorithm from Theorem \ref{thm:k-apex_grid_minor} on $H$, we obtain one of the following results:
(1) We correctly decide that $\mvp(H)>k$. Since $H$ is a subgraph of $G$, we have $\mvp(G)\geq \mvp(H) > k$.
(2)
We correctly decide that $\tw(H) = O(t \log^{1/2} n)$.
By the guarantee of Theorem \ref{thm:tw_degree} we deduce that $\tw(G) = O(\tw(H) \log^6 \tw(G)) = O(\tw(H) \log^{6} n) = O(t \log^{13/2} n)$.
(3) A grid $(r\times r)$-grid minor $J$ of $H$ for some $r=\Omega(t/(\log^3 \tw(G) \cdot \log n)) = \Omega(t/\log^4 n)$; since $H$ is a subgraph of $G$, it follows that $J$ is also a grid minor of $G$.
\end{proof}

We remark that the bounds in Corollary \ref{cor:k-apex_grid_minor_general} can be slightly improved:
the proof also yields in case (1) $r=\Omega(r/(\log^3 \tw(G) \cdot \log n))$ and in case (2) $\tw(G) = O(k \log^{23/2} n \cdot (\log^9 k + \log^{O(1)} \log n))$.
In order to simplify the exposition we use the more succinct bounds obtained in Corollary \ref{cor:k-apex_grid_minor_general}.

Finally, we restate Corollary \ref{cor:k-apex_grid_minor_general} so that the output in case (1) is a combed pseudogrid instead of a grid minor.
This is for technical reasons that simplify the remaining algorithms and proofs.
The proof of Corollary \ref{cor:k-apex_grid_minor_general_combed} is identical to the proof of Corollary \ref{cor:k-apex_grid_minor_general}, so it is omitted.

\begin{corollary}\label{cor:k-apex_grid_minor_general_combed}
There exists a randomized polynomial-time algorithm which given a graph $G$ and $k,t\in \mathbb{N}$,
with $t>\alpha' k \log^6 n$, for some universal constant $\alpha'>0$,
terminates with one of the following outcomes, with high probability:
\begin{description}
\item{(1)}
Correctly decides that $\mvp(G)>k$.
\item{(2)}
Outputs some $3/4$-separator $S$ of $G$ with $|S|=O(t \log^{13/2} n)$.
\item{(3)}
Outputs some combed pseudogrid $({\cal P}, {\cal Q})$ in $G$, with $|{\cal P}| = |{\cal Q}| = r$, for some $r=\Omega(t/\log^4 n)$.
\end{description}
\end{corollary}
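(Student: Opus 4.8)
The plan is to reuse the proof of Corollary~\ref{cor:k-apex_grid_minor_general} (equivalently, of Theorem~\ref{thm:k-apex_grid_minor}) almost verbatim, changing only the two places where the algorithm reports its output, and being slightly more careful about where the treewidth test is applied. The algorithm runs as follows. First, apply the balanced vertex-separator algorithm of Theorem~\ref{thm:vertex-separators-approx} to $G$ itself; since the minimum $1/2$-balanced vertex separator of $G$ has size at most $\tw(G)+1$, this produces a $3/4$-balanced vertex separator $S$ of $G$ with $|S| = O(\tw(G)\log^{1/2} n)$. If $|S|\le c\, t\log^{13/2} n$ for a suitable constant $c>0$, return $S$; this is outcome~(2). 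Otherwise $\tw(G) = \Omega(t\log^{6} n)$, and we move on to the grid-minor machinery.

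In that case, exactly as in the proof of Corollary~\ref{cor:k-apex_grid_minor_general}, apply Theorem~\ref{thm:tw_degree} to obtain a subgraph $H\subseteq G$ with $\dmax(H) = O(\log^{3} n)$ and $\tw(H) = \Omega(\tw(G)/\log^{6} n) = \Omega(t)$; for $\alpha'$ large enough this makes $H$ satisfy the hypotheses of Lemmas~\ref{lem:Y_prime} and~\ref{lem:large_combed_pair} whenever $\mvp(H)\le k$. Run the procedure of Section~\ref{sec:grid} on $H$ as in that proof, but \emph{stop after Step~7}: rather than feeding the combed pseudogrid to Lemma~\ref{lem:from_combed_to_grid} in Step~8, output the combed pseudogrid $({\cal P}_{1,3},{\cal P}_{2,4})$ itself. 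By Lemma~\ref{lem:large_combed_pair}, if $\mvp(H)\le k$ then a single run produces, with constant probability, such a pseudogrid with $|{\cal P}_{1,3}|,|{\cal P}_{2,4}| = \Omega(t/\log^{4} n)$ --- the very bound Corollary~\ref{cor:k-apex_grid_minor_general} obtains for its grid minor. Delete the surplus paths from whichever of the two sides is larger to obtain ${\cal P}\subseteq {\cal P}_{1,3}$ and ${\cal Q}\subseteq {\cal P}_{2,4}$ with $|{\cal P}| = |{\cal Q}| = r = \Omega(t/\log^{4} n)$; since any sub-pair of a combed pseudogrid is again combed, $({\cal P},{\cal Q})$ is a combed pseudogrid in $H$, hence in $G$, and we return it as outcome~(3). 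Repeat the procedure $O(\log n)$ times independently so that the overall failure probability is $n^{-\Omega(1)}$; if no run succeeds, return ``$\mvp(G)>k$'' (outcome~(1)), which is correct with high probability, because $\mvp(G)\le k$ would force $\mvp(H)\le k$ (as $H\subseteq G$; cf.\ Lemma~\ref{lem:additivity}) and then, by Lemma~\ref{lem:large_combed_pair}, some run would have succeeded with high probability. All subroutines run in polynomial time.

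Essentially no new content is needed: all of the probabilistic and topological work has already been done inside Lemma~\ref{lem:large_combed_pair}, which we invoke as a black box with ``Step~8'' amputated. The only facts requiring a line of checking are (i) that truncating ${\cal P}_{1,3}$ and ${\cal P}_{2,4}$ to a common cardinality preserves combedness (immediate from the definition) and does not spoil the $\Omega(t/\log^{4} n)$ bound (both sides were already that large), and (ii) that stating outcome~(1) for $G$ rather than for $H$ is legitimate, which holds because $\mvp$ is monotone under taking subgraphs. The one place I would be a little careful is the very first step: the clean $O(t\log^{13/2} n)$ bound in outcome~(2) is obtained precisely because we threshold on a balanced separator of $G$ directly; if one instead passed through $H$, certified $\tw(G)=O(t\log^{13/2}n)$ indirectly, and only then extracted a separator, the separator-approximation step would cost an extra $\log^{1/2} n$ factor. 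I do not anticipate any genuine difficulty beyond this bookkeeping.
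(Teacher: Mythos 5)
Your proof is correct and follows the same route the paper intends; the paper itself states that this corollary's proof is ``identical'' to that of Corollary~\ref{cor:k-apex_grid_minor_general} and omits it. You rightly observed that this ``identical'' claim glosses over a small but real change: outcome~(2) here demands an actual $3/4$-separator of $G$ of size $O(t\log^{13/2}n)$, not merely a treewidth certificate, and your thresholding on the separator size returned by Theorem~\ref{thm:vertex-separators-approx} applied directly to $G$ --- rather than certifying $\tw(G)=O(t\log^{13/2}n)$ through $H$ and only then running a separator approximation, which would cost an extra $\log^{1/2}n$ factor --- is exactly the right way to meet the stated bound. The remaining modifications you made (stopping after Step~7 to output $({\cal P}_{1,3},{\cal P}_{2,4})$ instead of invoking Lemma~\ref{lem:from_combed_to_grid}, trimming the two path families to a common cardinality, and lifting the combed pseudogrid from $H$ to $G$) are, as you say, routine and correct.
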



\section{Computing a partially triangulated grid contraction with a few apices}\label{sec:pt_grid}

In this section, we are given a combed pseudogrid in $G$. 
We shall construct a  partially triangulated grid contraction, after deleting a small set of vertices.
We begin with some notation.

\begin{definition}[Simple combed pseudogrid]
Let $G$ be a graph and let $({\cal P}, {\cal Q})$ be a combed pseudogrid in $G$.
We say that $({\cal P}, {\cal Q})$ is \emph{simple} if for all $P\in {\cal P}$ and $Q\in {\cal Q}$, we have that $P\cap Q$ is a path.
\end{definition}

For the remainder of this section, let $G$ be a graph.
Let $({\cal P}, {\cal Q})$ be a simple combed pseudogrid in $G$, with $|{\cal P}|=|{\cal Q}|=r$, where $r\geq c\cdot k \log n$, some some sufficiently large constant $c>0$ to be determined.
Let
\[
\Psi=\left(\bigcup_{P\in {\cal P}} P\right) \cup \left( \bigcup_{Q\in {\cal Q}} Q\right)
\]
be the induced grid minor in $G$.
Fix a planar drawing $\phi$ of $\Psi$ into the plane.
We may assume w.l.o.g.~that each path in ${\cal P}$ has both endpoints in some paths in ${\cal Q}$, and each path in ${\cal Q}$ has both endpoints in some paths in ${\cal P}$.
It follows that $\Psi$ is the subdivision of some 3-connected planar graph.
Thus all planar embeddings of $\Psi$ are combinatorially equivalent.
Let ${\cal P}=\{P_1,\ldots,P_r\}$, where the paths $P_1,\ldots,P_r$ are the rows in the standard drawing of $\Psi$, and they appear in this order from top to bottom.
Similarly, let ${\cal Q}=\{Q_1,\ldots,Q_r\}$, where $Q_1,\ldots,Q_r$ are the columns of $\Psi$, and they appear in this order from left to right.

For each $i,j\in \{1,\ldots,r-1\}$ let $F_{i,j}\subset \Psi$ be the cycle bounding the unique face in $\phi$ such that $F_{i,j}\subset P_i\cup P_{i+1} \cup Q_j \cup Q_{j+1}$.

We are now ready to present several technical lemmas. 

\subsection{Row/column restrictions}

In this subsection, we shall present ``row/column restrictions'' which will be used 
in the main result of this section. For this purpose, let us give some tools needed here. 

We recall that an instance of the \NWMWC~problem is a pair $(\Gamma,S)$, where $\Gamma$ is a graph with non-negative weights on its vertices and $S\subseteq V(\Gamma)$ is a set of \emph{terminals}.
The goal is to find some $C\subseteq V(\Gamma)\setminus S$, such that for all $s\neq t\in S$, we have that $s$ and $t$ lie in different connected components in $\Gamma\setminus C$; in other words, by deleting all vertices in $C$ we separate all pairs of terminals.
Let $\OPTNWMWC(\Gamma,S)$ denote the minimum total weight of any solution $C$ for the instance $(\Gamma,S)$.
We will use the algorithm for \NWMWC~summarized int he following.

\begin{theorem}[Garg, Vazirani, and Yannakakis \cite{garg2004multiway}]\label{thm:NWMWC-approx}
There exists a polynomial-time $(2-2/|S|)$-approximation algorithm for the \NWMWC~problem.
\end{theorem}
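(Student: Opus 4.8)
The statement is the Garg--Vazirani--Yannakakis bound, which the paper invokes as a black box; if I had to reprove it, I would follow the standard LP-relaxation plus region-growing route. First I would write down the natural fractional relaxation of \NWMWC: introduce a variable $x_v\in[0,1]$ for every non-terminal vertex $v$, impose that every path joining two distinct terminals has total $x$-length at least $1$, and minimize $\sum_{v\notin S} w_v x_v$. This LP is solvable in polynomial time (a shortest-path separation oracle finds a violated path, or one can rewrite it compactly with distance variables $d_i(\cdot)$ from each terminal). Any integral feasible solution to \NWMWC\ gives a $0/1$ feasible point of the LP, so the optimum LP value is at most $\OPTNWMWC(\Gamma,S)$.

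Next I would run the region-growing rounding. For each terminal $s_i$ let $d_i(v)$ be the shortest-path distance from $s_i$ to $v$ under the vertex lengths $x$. The key geometric fact is that for $i\neq j$ and any $v$, the triangle inequality together with the path constraint gives $d_i(v)+d_j(v)\ge d_i(s_j)\ge 1$, so the radius-$\tfrac12$ balls $B_i(\tfrac12)=\{v:d_i(v)<\tfrac12\}$ are pairwise disjoint, and no terminal lies on the boundary of another's ball. Sample a radius $\rho$ uniformly from $(0,\tfrac12)$, and for each $i$ let $\delta_i(\rho)$ be the set of non-terminal vertices that ``straddle'' radius $\rho$ from $s_i$ (i.e.\ $d_i(v)-x_v<\rho\le d_i(v)$). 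Order the terminals so that $s_k$ is the one whose nearby LP mass $\mathrm{LP}_i:=\sum_{v\in B_i(1/2)} w_v x_v$ is largest, and output $C=\bigcup_{i=1}^{k-1}\delta_i(\rho)$.

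For correctness, $C$ separates every pair of terminals: for $i<k$ the set $\delta_i(\rho)$ encloses $s_i$, while $s_k$ sits outside all the balls $B_i(\rho)$ with $i<k$; hence no path between two distinct terminals survives in $\Gamma\setminus C$. For the cost, the standard region-growing estimate gives, for each fixed $i$,
\[
\mathbb{E}_\rho\big[w(\delta_i(\rho))\big]=2\int_0^{1/2} w(\delta_i(\rho))\,d\rho \le 2\,\mathrm{LP}_i,
\]
since each $v$ contributes $w_v$ to $\delta_i(\rho)$ only for $\rho$ in an interval of length at most $x_v$. Summing over $i=1,\dots,k-1$, using disjointness of the balls (so $\sum_{i=1}^k \mathrm{LP}_i\le \mathrm{LP}$) and the choice of $s_k$ as the heaviest, yields $\mathbb{E}[w(C)]\le 2\sum_{i<k}\mathrm{LP}_i\le 2(1-\tfrac1k)\,\mathrm{LP}\le (2-\tfrac2k)\,\OPTNWMWC(\Gamma,S)$. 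Finally derandomize: $\rho\mapsto w\big(\bigcup_{i<k}\delta_i(\rho)\big)$ is piecewise constant with polynomially many breakpoints, so trying all of them and returning the cheapest $C$ deterministically matches the expectation.

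The only genuinely delicate part is the node-weighted bookkeeping in the region-growing step: fixing the convention for how a vertex of length $x_v$ ``covers'' an interval of radii, making sure terminals never land on a ball boundary (so that $C\subseteq V(\Gamma)\setminus S$), and verifying that $\delta_i(\rho)$ really disconnects $s_i$ from everything outside its ball. These are exactly the points where the argument departs from the simpler edge-weighted multiway-cut analysis; once they are set up correctly, the disjointness-of-balls argument and the $\tfrac{k-1}{k}$ averaging are routine.
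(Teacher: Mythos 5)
The paper does not prove this statement. It is invoked verbatim as a cited black-box result of Garg, Vazirani, and Yannakakis, with no argument supplied, so there is no paper proof to compare your sketch against --- which you correctly note at the outset.

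As to your sketch: the LP, the disjointness-of-balls observation, and the averaging step that drops the heaviest region to pick up the factor $1-\tfrac1k$ are all the right ingredients, but the node-weighted bookkeeping you flag as delicate really does hide a gap. You take $\mathrm{LP}_i=\sum_{v\in B_i(1/2)}w_vx_v$ with $B_i(1/2)=\{v:d_i(v)<\tfrac12\}$, and invoke disjointness of the $B_i(1/2)$ to get $\sum_i\mathrm{LP}_i\le\mathrm{LP}$. But the vertices that can land in $\delta_i(\rho)$ for some $\rho<\tfrac12$ form the set $\{v:d_i(v)-x_v<\tfrac12\}$, which is strictly larger: a vertex with $d_i(v)-x_v<\tfrac12\le d_i(v)$ straddles the radius-$\tfrac12$ boundary, contributes to $\int_0^{1/2}w(\delta_i(\rho))\,d\rho$, yet lies outside $B_i(1/2)$. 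These larger ``reach'' sets need not be pairwise disjoint, since the LP constraint only gives $d_i(v)+d_j(v)\ge 1+x_v$, leaving room for both $d_i(v)-x_v$ and $d_j(v)-x_v$ to be below $\tfrac12$ simultaneously when $x_v$ is large. Consequently the two facts you want --- $\int_0^{1/2}w(\delta_i(\rho))\,d\rho\le\mathrm{LP}_i$ and $\sum_i\mathrm{LP}_i\le\mathrm{LP}$ --- cannot both hold with the definitions as written, and the union-bound charge $\sum_i 2\,\mathrm{LP}_i$ can overshoot; making it airtight requires, e.g., splitting each non-terminal into two copies joined by an edge of length $x_v$ so that the interval charged to ball $i$ opens at the endpoint strictly inside the ball, as in the edge-weighted analysis. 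For the record, the cited Garg--Vazirani--Yannakakis proof does not go through region growing at all: they show the LP relaxation of node multiway cut always has a half-integral optimal solution, and derive the $(2-2/|S|)$ bound by rounding such a half-integral point and discarding the heaviest piece. Your route is more generic (no half-integrality needed) but needs the extra vertex-splitting care; theirs is structurally cleaner and specific to this LP.
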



An instance of \NWMWC~is called \emph{uniform} if all non-terminal nodes  have unit weight.

We now present the precise definition of row/column restrictions.
\begin{definition}[Restriction]
Let ${\cal A}=\{A_i\}_{i=1}^\ell$ be a collection of pairwise disjoint subsets of $V(\Psi)$, for some $\ell\geq 1$.
We define the \emph{${\cal A}$-restriction} to be the uniform instance $(\Gamma,S)$ of the \NWMWC~problem with
\[
V(\Gamma) = (V(G) \setminus V(\Psi)) \cup \left( \bigcup_{i=1}^{\ell} \left(\{s_i,s_i'\} \cup A_i \right) \right),
\]
where $s_1',\ldots,s_\ell'$ are new vertices, and
\[
E(\Gamma) = \{\{u,v\}\in E(G) : u,v \in V(\Gamma)\} \cup \left( \bigcup_{i=1}^\ell \bigcup_{v\in A_i} \{\{s_i',v\}\} \right) \cup \left( \bigcup_{i=1}^\ell \{\{s_i,s_i'\}\}\right),
\]
and
\[
S = \{s_1,\ldots,s_\ell\}.
\]
We remark that every terminal vertex in $S$ is adjacent to a unique non-terminal vertex in $\Gamma$.
\end{definition}

\begin{definition}[Row/column restriction]\label{defn:row_col_restriction}
Let $I\subseteq \{1,\ldots,r-1\}$.
Let $F_1 = \bigcup_{j=1}^{r-1} F_{1,j}$,
$F_{r-1} = \bigcup_{j=1}^{r-1} F_{r-1,j}$, and for any $i\in \{2,\ldots,r-2\}$, let $F_i = \bigcup_{j=2}^{r-2} F_{i,j}$.
Let
${\cal A} = \bigcup_{i\in I} \{V(F_i)\} \}$,
and let $(\Gamma, S)$ be the ${\cal A}$-restriction.
Then we say that $(\Gamma, S)$ is the \emph{$I$-row restriction}.
The \emph{$I$-column restriction} is defined analogously, with the only difference being that we set
$F_1 = \bigcup_{i=1}^{r-1} F_{i,1}$,
$F_{r-1} = \bigcup_{i=1}^{r-1} F_{i,r-1}$, and for any $j\in \{2,\ldots,r-2\}$,  $F_j = \bigcup_{i=2}^{r-2} F_{i,j}$.
\end{definition}

For some $I\subseteq \mathbb{N}$, we say that $I$ is \emph{sparse} if it satisfies the following conditions:
\begin{description}
\item{(1)}
For all $i\neq j\in I$, we have $|i-j|\geq 4$.
\item{(2)}
$\left|\{1,r-1\}\cap I\right| \leq 1$.
\end{description}

Figure \ref{fig:row_restriction} depicts an example of a row restriction.

\begin{figure}
\begin{center}
\scalebox{0.85}{\includegraphics{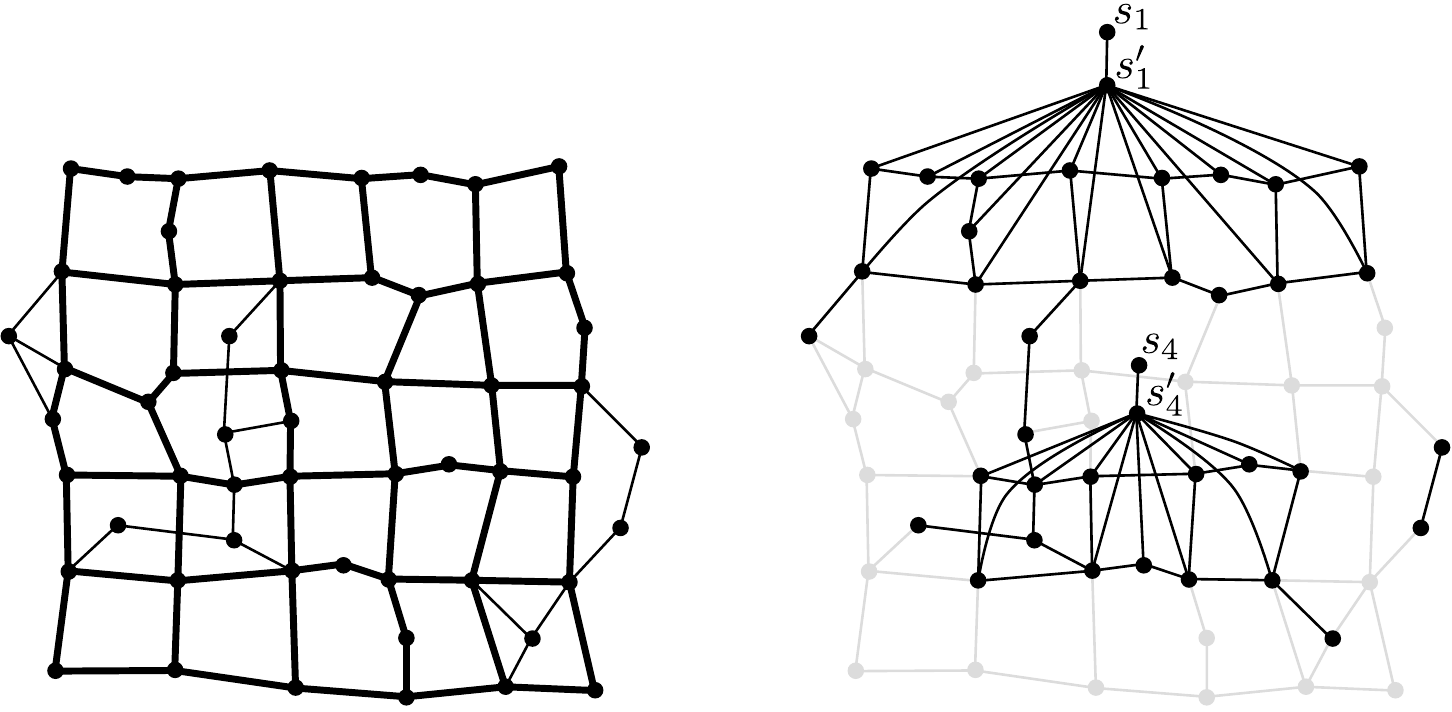}}
\caption{The graph $G$ with the subgraph $\Psi$ depicted in bold (left), and the $\{1,4\}$-row restriction with the deleted part of $G$ depicted in light gray (right).\label{fig:row_restriction}}
\end{center}
\end{figure}

Here is the main lemma concerning row/column restrictions.

\begin{lemma}\label{lem:restriction_cost}
Let $I\subseteq \{1,\ldots,r-1\}$ be sparse.
Let $(\Gamma,S)$ be either the $I$-row restriction or the $I$-column restriction.
Then $\OPTNWMWC(\Gamma,S) \leq 4 \cdot \mvp(G)$.
\end{lemma}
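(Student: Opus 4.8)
\medskip
\noindent\textbf{Proof plan.}
By exchanging the roles of the rows $P_i$ and the columns $Q_j$ of $\Psi$ it suffices to treat the $I$-row restriction. The plan is to fix an optimal planarizing set $X\subseteq V(G)$, so $|X|=\mvp(G)$ and $G\setminus X$ is planar, fix a planar embedding of $G\setminus X$, and then exhibit a solution $C$ of the instance $(\Gamma,S)$ with $|C|\le 4|X|$; since the instance is uniform, $\OPTNWMWC(\Gamma,S)$ is the minimum size of a solution, so this is enough. Write $Y=X\cap V(\Psi)$ and $C_0=X\cap V(\Gamma)$; the grid vertices of $X$ lying on no strip $F_i$ are exactly the vertices of $X$ outside $V(\Gamma)$ (call them the \emph{off-strip} vertices of $X$), so $|C_0|=|X|-(\#\text{off-strip vertices of }X)$.

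First I would extract the consequence of sparseness: since each $F_i$ is confined to the two consecutive rows $P_i,P_{i+1}$ of $\Psi$, since distinct elements of $I$ differ by at least $4$, and since at most one of $1,r-1$ lies in $I$, no face of $\Psi$ has vertices of two distinct strips on its boundary; moreover a face of $\Psi$ bordering $F_i$ uses only rows $i-1,\dots,i+2$. Next comes the key ``blob'' analysis of $\Gamma-C_0$ inside the chosen embedding of $G\setminus X$: a vertex of $\Gamma-C_0$ not on $\Psi$ lies strictly inside a single face $\Phi$ of $\Psi\setminus Y$ and all of its $\Gamma$-neighbours lie in $\overline{\Phi}$, while the grid vertices surviving in $\Gamma-C_0$ are exactly $(\bigcup_{i\in I}V(F_i))\setminus Y$ and attach only to non-grid vertices sitting in faces bordering them. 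Hence a component of $\Gamma-C_0$ contains a vertex of $F_i$ only if it contains a face of $\Psi\setminus Y$ having an $F_i$-vertex on its boundary. Since each face of $\Psi\setminus Y$ is a union of faces of $\Psi$ merged along vertices of $Y$, if such a face borders two distinct strips $F_a,F_b$ then, by the row estimate, the corresponding chain of $\Psi$-faces (consecutive faces sharing a vertex of $Y$) must traverse faces whose rows lie strictly between the two strips, and the $Y$-vertices bridging those faces lie on no strip; in other words, every ``merge'' of two distinct strips in $\Gamma-C_0$ is paid for by off-strip vertices of $X$.

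With this in hand, the cut I would take is
\[
C \;=\; C_0 \;\cup\; \bigl\{\, s_i'\ :\ \text{the component of $\Gamma-C_0$ containing $F_i$ meets $\ge 2$ strips, and $i$ is not the least such strip index}\,\bigr\}.
\]
That $C$ is a valid solution is immediate: for $a\neq b$, if $F_a,F_b$ lie in different components of $\Gamma-C_0$ they are already separated after deleting $C_0\subseteq C$; otherwise at least one of $s_a',s_b'$ lies in $C$ and isolates its terminal. For the size, $|C|\le|C_0|+M$ with $M=\sum_{c}(|I_c|-1)$, summed over the components of $\Gamma-C_0$ meeting a strip, which equals the number of edges of a spanning forest of the equivalence ``$F_a,F_b$ in the same component of $\Gamma-C_0$''. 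Using the blob analysis together with the pairwise disjointness of the buffer-row intervals of distinct consecutive strip-pairs, I would charge each of these forest edges to a distinct off-strip vertex of $X$, giving $M\le \#\text{off-strip vertices of }X$ and hence $|C|\le|C_0|+M\le |X|=\mvp(G)\le 4\mvp(G)$. (The generous factor $4$ absorbs any slack in the charging: e.g.\ charging each merge to both of its witnessing off-strip vertices, or only bounding $M$ crudely and then applying the $2$-approximation of Theorem~\ref{thm:NWMWC-approx} to a fractional relaxation of this construction.) The $I$-column restriction follows by the exchange of rows and columns noted at the start.

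The step I expect to be the main obstacle is the last one --- bounding the number of strip-merges $M$ by $O(\mvp(G))$. A single grid vertex of $X$ merges only the $\le 4$ faces of $\Psi$ around it, but a cluster of grid vertices of $X$ can merge an arbitrarily large region of $\Psi$ (deleting a whole grid column produces a face of $\Psi\setminus Y$ bordering every strip simultaneously), so the real content is to show that any such large merge also consumes proportionally many grid vertices of $X$ on rows strictly between strips; this is precisely where the sparseness of $I$ (consecutive strips at distance $\ge 4$, and pairwise-disjoint buffer intervals) is used.
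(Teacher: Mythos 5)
Your construction is genuinely different from the paper's, and the difference shifts the hard work to a place where your argument is incomplete. Both you and the paper fix an optimal planarizing set $X$ and include the on-grid part of $X$ in the cut, but the paper defines its auxiliary set $J$ of terminals to delete purely by \emph{proximity to $X$}: it sets $J=\{i : (V(F_{i-1})\cup V(F_i)\cup V(F_{i+1}))\cap X\neq\emptyset\}$ and $C=X\cup\{s_i':i\in J\}$, so the size bound $|C|\leq 4|X|$ is immediate counting (each $x\in X$ hits $O(1)$ such indices). The difficulty is then moved to \emph{validity}, which the paper proves by contradiction without ever fixing a planar embedding of $G\setminus X$: if some path survives between $s_i$ and $s_j$, it together with the untouched $3$-connected grid-bands $Z_i=F_{i-1}\cup F_i\cup F_{i+1}$, $Z_j$, and one of the $r-2$ vertex-disjoint column paths of $\Psi$ between $\partial Z_i$ and $\partial Z_j$, forms a nonplanar subgraph of $G\setminus X$.

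Your construction goes the other way: you choose the $s_i'$ from the component structure of $\Gamma-C_0$, so validity is trivial but the size bound now requires showing $M\leq\#(\text{off-strip vertices of }X)$, which you explicitly flag as the obstacle and do not prove. This is the genuine gap. Your charging sketch rests on a specific planar embedding of $G\setminus X$ and on the claim that every merged face of $\Psi\setminus Y$ bordering $\geq 2$ strips consumes at least one off-strip $Y$-vertex per buffer gap, chargeable injectively to forest edges. That claim is plausible but non-trivial: merged faces need not be vertically ``straight'' intervals of rows, the outer face of $\Psi$ (which borders $F_1$ or $F_{r-1}$ and both boundary columns) requires separate handling, and a single merged face can border many strips at once while the number of off-strip deletions it used depends on geometry you have not pinned down. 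The paper's route avoids all of this machinery; if you wish to pursue yours you would need to turn the ``blob analysis'' into an actual lemma with a careful case analysis of how $Y$-vertices can bridge $\Psi$-faces across buffer rows, whereas switching to the paper's $J$ gives the required bound with essentially no work, at the price of the (also nontrivial, but cleaner) planarity-contradiction argument for validity.
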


\begin{proof}
Let us assume w.l.o.g.~that $(\Gamma,S)$ is the $I$-row restriction; the case where $(\Gamma,S)$ is the $I$-column restriction is identical.

Let $k=\mvp(G)$.
Let $X\subseteq V(G)$, with $|X|=k$, such that $G\setminus X$ is planar.
Let
\[
J = \{i\in \{2,\ldots,r-2\} : (V(F_{i-1})\cup V(F_i)\cup V(F_{i+1}))\cap X \neq \emptyset\}.
\]
Let
\[
C = X \cup \left(\bigcup_{i\in J} \{s_i'\}\right).
\]

Clearly, we have
\begin{align*}
|C| &\leq |X| + |J|\\
 &\leq |X| + 3|X|\\
 &= 4|X|.
\end{align*}

It remains to show that $C$ is a valid solution for the \NWMWC~instance $(\Gamma,S)$.
Suppose for the sake of contradiction that $C$ is not a valid solution.
It follows that there exist $s_i\neq s_j\in S$, such that $C$ does not separate $s_i$ and $s_j$.
Let $P$ be a path between $s_i$ and $s_j$ in $\Gamma \setminus C$.
Any such path must visit $V(F_i)$ and $V(F_j)$.
Let $P'$ be the subpath of $P$ between some vertex $v_i\in V(F_i)$ and some vertex $v_j\in V(F_j)$.
Let
\[
Z_i = F_{i-1} \cup F_i \cup F_{i+1},
\]
and
\[
Z_j = F_{j-1} \cup F_j \cup F_{j+1},
\]
where we have used the notational convention that $F_{-1}$ and $F_{r+1}$ are the empty graph.
Note that $Z_i$ and $Z_j$ are subdivisions of some grids; let $\partial Z_i$ and $\partial Z_j$ denote their boundary cycles.
Note that there exist at least $r-2$ vertex-disjoint paths between $V(\partial Z_i)$ and $V(\partial Z_j)$ in $\Psi$.
Since $r>k+2$, it follows that there exists some path $Q$ in $\Psi \setminus X$ between some $u_i\in V(\partial Z_i)$ and some $u_j\in V(\partial Z_j)$.
By the choice of $C$, we have that $Z_i\cup Z_j \cup P'\cup Q$ is a subgraph of $G\setminus X$.
Thus $Z_i\cup Z_j\cup P'\cup Q$ must be planar.
Since $Z_i$ and $Z_j$ are subdivisions of $3$-connected graphs, it follows that they admit combinatorially unique planar embeddings.
Fix a planar embedding $\gamma$ of $Z_i\cup Z_j\cup P'\cup Q$.
It follows that $\partial Z_i$ and $\partial Z_j$ are contained in the same face of $\gamma$, and $v_i$ and $v_j$ are in the same face of $\gamma$, which is a contradiction.
It thus follows that $C$ is a valid solution for $(\Gamma, S)$, concluding the proof.
\end{proof}

\subsection{Sparse sets}

We say that some sparse $I\subseteq \{1,\ldots,r-1\}$ is \emph{maximal} if for all sparse $I'\subseteq \{1,\ldots,r-1\}$ with $I\subseteq I'$, we have $I=I'$. 
We now find some probability distribution ${\cal D}$ over subsets $I\subseteq \{1,\ldots,r-1\}$, where $I$ is sparse. 

\begin{lemma}\label{lem:sparse_collection}\label{lem:sample_sparse}
There exists a probability distribution ${\cal D}$ over subsets $I\subseteq \{1,\ldots,r-1\}$, where $I$ is sparse, and such that for all $x,y\in \{1,\ldots,r-1\}$, with $|x-y|\geq 4$, and $\{x,y\}\neq \{1,r-1\}$, we have
\[
\Pr_{I\sim {\cal D}}[\{x,y\}\subseteq I] \geq 1/182.
\]
Moreover, ${\cal D}$ can be sampled in polynomial time.
\end{lemma}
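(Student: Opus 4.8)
The plan is to construct $\mathcal{D}$ by a simple two‑stage random process that is ``local'' enough to analyze by independence, and then fix up the boundary condition with one extra coin flip. Fix $p=1/4$. In the first stage, form a random candidate set $C\subseteq\{1,\dots,r-1\}$ by placing each element $i$ into $C$ independently with probability $p$. In the second stage, let
\[
I_0=\{\,i\in C : C\cap\{i-3,i-2,i-1\}=\emptyset\,\},
\]
where out‑of‑range indices are treated as not belonging to $C$; that is, $I_0$ keeps exactly those candidates having no candidate among the three positions immediately to their left. Finally, flip one fair coin: with probability $1/2$ delete $1$ from $I_0$ (if present), and with probability $1/2$ delete $r-1$ from $I_0$ (if present); call the result $I$, and output $I$.

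First I would check that $I$ is sparse. For condition~(1), if $i<j$ were both in $I_0$ with $j-i\le 3$, then $i\in C$ and $i\in\{j-3,j-2,j-1\}$, contradicting $j\in I_0$; hence $I_0$, and therefore $I\subseteq I_0$, is $4$-separated. For condition~(2), the final coin removes at least one of $1,r-1$ from $I$, so $|\{1,r-1\}\cap I|\le 1$. Sampling $\mathcal{D}$ takes $O(r)$ time: $r-1$ independent bits for $C$, a linear scan to compute $I_0$, and one more bit.

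Next I would estimate $\Pr[\{x,y\}\subseteq I]$ for $x<y$ with $y-x\ge 4$ and $\{x,y\}\neq\{1,r-1\}$. For each $i$ the event $\{i\in I_0\}$ is determined by the $C$-membership of the positions in $W_i:=\{i-3,i-2,i-1,i\}\cap\{1,\dots,r-1\}$, and $\Pr[i\in I_0]=\Pr[i\in C]\cdot\Pr[C\cap(W_i\setminus\{i\})=\emptyset]\ge p(1-p)^3$. Since $y-x\ge 4$, the windows $W_x$ and $W_y$ are disjoint, so $\{x\in I_0\}$ and $\{y\in I_0\}$ are independent, giving $\Pr[\{x,y\}\subseteq I_0]\ge p^2(1-p)^6$. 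Because $\{x,y\}\neq\{1,r-1\}$, at most one of $x,y$ lies in $\{1,r-1\}$, so with probability at least $1/2$ the coin deletes an element not in $\{x,y\}$, and conditioned on this, $\{x,y\}\subseteq I_0$ implies $\{x,y\}\subseteq I$; the coin is independent of $C$. Hence $\Pr[\{x,y\}\subseteq I]\ge \tfrac12 p^2(1-p)^6$. Plugging in $p=1/4$ gives $\tfrac12\cdot\tfrac1{16}\cdot(3/4)^6=\tfrac{729}{131072}\ge\tfrac1{182}$, since $729\cdot 182=132678>131072$.

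The only genuinely delicate point is reconciling the boundary requirement (condition~(2)) with the probability bound for pairs touching $1$ or $r-1$: a deterministic fix (always delete $1$, say) would destroy the bound for pairs of the form $\{1,y\}$, since conditioned on $1\in I_0$ the event $r-1\in I_0$ need not be rare. Using the \emph{symmetric} coin—delete $1$ or $r-1$ with equal probability—avoids this at the cost of only a factor $2$, which the stated constant $1/182$ comfortably absorbs. Everything else is a routine independence computation.
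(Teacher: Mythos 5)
Your proof is correct, and it takes a genuinely different route from the paper. The paper constructs $\mathcal{D}$ by a random greedy process: it repeatedly samples a uniformly random still-feasible index and adds it to the current set until the sparse set is maximal, and then fixes the boundary with a coin flip. The $1/182$ bound there comes from a "first-in-window" argument: with the seven-element windows $J_x,J_y$ around $x,y$, the probability that $x$ is the first element of $J_x$ chosen and $y$ the first of $J_y$ is at least $\frac{1}{|J_x\cup J_y|}\cdot\frac{1}{|J_x\cup J_y|-1}\ge\frac{1}{14}\cdot\frac{1}{13}=\frac{1}{182}$. You instead use independent $p$-Bernoulli marks followed by a one-sided sparsification (keep $i$ iff none of $i-3,i-2,i-1$ is marked), which turns the event $\{i\in I_0\}$ into a function of a disjoint four-element window $W_i$, so that for $y-x\ge 4$ the two events are literally independent and the bound $p^2(1-p)^6$ drops out immediately; the same symmetric boundary coin handles condition (2). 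Your route is arguably more modular and the independence is cleaner, but it needs the choice $p=1/4$ plus a numerical check to just clear $1/182$, whereas the paper's greedy argument produces that constant exactly. Both constructions run in $O(r)$ time. One small stylistic note: your asymmetric filter (look only to the left) is exactly what makes the two windows disjoint at separation $4$; a symmetric filter (forbid marks within distance $3$ on both sides) would force separation $\ge 7$ to get the same independence, so the one-sidedness is doing real work and is worth flagging explicitly.
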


\begin{proof}
We define the desired distribution ${\cal D}$ via the following sampling process.
We construct a random sequence $I^0,\ldots,I^{\tau}$ of sparse subsets of $\{1,\ldots,r-1\}$.
We start by setting $I^0=\emptyset$.
We then proceed inductively as follows.
Suppose that $I^i$ has been constructed for some $i\geq 0$;
if $I^i$ is not maximal, then let
\[
W^i=\{x\in \{1,\ldots,r-1\} : \text{ for all } x'\in I^i \text{ we have } |x-x'| < 4\}.
\]
We sample some $x\in W^i$ uniformly at random, and we set $I^{i+1}=I^i\cup \{x\}$.
Otherwise, if $I^i$ is maximal, then we set $\tau=i$, and terminate the sequence at $I^i$.
Given $I^\tau$, we proceed as follows:
If $|\{1,r-1\}\cap I^i| < 2$, then we set $I=I^\tau$.
Otherwise (i.e.~if $\{1,r-1\}\subseteq I^\tau$), then we randomly choose $I=I^\tau\setminus \{1\}$ with probability $1/2$, or $I=I^\tau\setminus \{r-1\}$ with probability $1/2$.
This completes the process for sampling $I$ from ${\cal D}$.

Clearly, the process runs in polynomial time.
It is also immediate by construction that $I$ is sparse with probability $1$.
It thus remains to show that for all $x,y\in \{1,\ldots,r-1\}$, with $|x-y|\geq 4$, and $\{x,y\}\neq \{1,r-1\}$, we have that $I$ contains both $x$ and $y$ with at least the desired probability.
For any $z\in \{1,\ldots,r-1\}$, let
\[
J_z = \{z-3,\ldots,z+3\}.
\]
As we inductively add elements to the current set $I^i$ during the construction, let ${\cal E}$ denote the event that $x$ is the first element to be chosen from $J_x$, and $y$ is the first element to be chosen from $J_y$.
We have
\begin{align*}
\Pr[{\cal E}] &\geq \frac{1}{|J_x\cup J_y|} \cdot \frac{1}{|J_x\cup J_y|-1} \geq \frac{1}{14}\cdot \frac{1}{13} = \frac{1}{182}.
\end{align*}
Conditioned on ${\cal E}$ occurring, we have $\{x,y\}\subseteq I^\tau$.
Thus
\begin{align*}
\Pr[\{x,y\}\subseteq I^{\tau}] &= \Pr[{\cal E}] \geq 1/182
\end{align*}
Since $\{x,y\}\neq \{1,r-1\}$, it follows that
\begin{align*}
\Pr[\{x,y\}\subseteq I] &= \Pr[\{x,y\}\subseteq I^{\tau}] \geq 1/182,
\end{align*}
concluding the proof.
\end{proof}

Using Lemma \ref{lem:sample_sparse} we obtain the following lemma. 

\begin{lemma}\label{lem:sparse_cover}
There exists ${\cal I}=\{I_1,\ldots,I_{\ell}\}$, satisfying the following conditions:
\begin{description}
\item{(1)}
For all $t\in \{1,\ldots,\ell\}$ we have that $I_t\subseteq \{1,\ldots,r-1\}$, and $I_t$ is sparse.

\item{(2)}
For all $x,y\in \{1,\ldots,r-1\}$, if $|x-y|\geq 4$, and $\{x,y\}\neq \{1,r-1\}$, then there exist some $t\in \{1,\ldots,\ell\}$, such that $\{x,y\}\subseteq I_t$.

\item{(3)}
$\ell=O(\log r)$.
\end{description}
Moreover ${\cal I}$ can be computed in polynomial time, with high probability.
\end{lemma}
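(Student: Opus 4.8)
The plan is to build $\mathcal{I}$ by independent random sampling from the distribution $\mathcal{D}$ supplied by Lemma~\ref{lem:sample_sparse}, and to certify property~(2) by a union bound over the pairs that must be covered. First I would dispose of a trivial case: if $r\le 6$, then there is no pair $\{x,y\}\subseteq\{1,\dots,r-1\}$ with $|x-y|\ge 4$ and $\{x,y\}\ne\{1,r-1\}$, so $\mathcal{I}=\emptyset$ satisfies all three conditions vacuously; hence I may assume $r\ge 7$, so that $\log r=\Theta(1)>0$ and there is no degeneracy in the target bound $\ell=\Theta(\log r)$.

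Next I would fix a sufficiently large universal constant $c>0$, put $\ell=\lceil c\log r\rceil$, and draw $I_1,\dots,I_\ell$ independently from $\mathcal{D}$ using the polynomial-time sampler of Lemma~\ref{lem:sample_sparse}. Property~(1) is then immediate, since every set in the support of $\mathcal{D}$ is sparse, and property~(3) holds by the choice of $\ell$. For property~(2), call a pair $\{x,y\}$ with $x,y\in\{1,\dots,r-1\}$, $|x-y|\ge 4$, and $\{x,y\}\ne\{1,r-1\}$ \emph{good}; there are fewer than $r^2$ good pairs. For a fixed good pair, Lemma~\ref{lem:sample_sparse} gives $\Pr[\{x,y\}\subseteq I_t]\ge 1/182$ for every $t$, so by independence
\[
\Pr\big[\{x,y\}\not\subseteq I_t \text{ for all } t\big]\le\big(1-\tfrac{1}{182}\big)^{\ell}\le e^{-\ell/182}.
\]
A union bound over all good pairs bounds the failure probability by $r^{2}e^{-\ell/182}\le r^{2-c/182}$, which is inverse-polynomial in $r$ once $c$ is taken large enough; this is the claimed high-probability guarantee. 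The total running time is polynomial: it consists of $\ell=O(\log r)$ calls to the sampler of Lemma~\ref{lem:sample_sparse}, plus a scan over the $O(r^2)$ good pairs to verify coverage (and, if desired, to resample on failure).

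I do not expect a real obstacle here. All the substance is carried by Lemma~\ref{lem:sample_sparse}, which builds a distribution over sparse sets under which every admissible pair is jointly captured with constant probability; given that lemma, the present statement is a routine application of the probabilistic method. The only mildly delicate point is the bookkeeping that keeps $\ell=\Theta(\log r)$ while still driving the failure probability down — but this is automatic, since only $O(r^2)$ pairs need to be covered, so logarithmically many independent samples suffice, and a larger choice of the constant $c$ sharpens the success probability as much as one needs.
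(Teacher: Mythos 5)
Your proof is correct and follows essentially the same route as the paper: sample $\ell=O(\log r)$ sets independently from the distribution $\mathcal{D}$ of Lemma~\ref{lem:sample_sparse}, union-bound over the $O(r^2)$ pairs that must be covered, and resample if coverage fails. The only thing worth tightening is the ``with high probability'' claim: bounding the one-shot failure probability by $r^{2-c/182}$ makes it inverse-polynomial in $r$, but the paper's convention asks for failure probability $n^{-\Omega(1)}$, and $r$ can be much smaller than $n$; the paper handles this the same way you gesture at in parentheses — it argues each draw succeeds with probability at least $1/2$ and then repeats $O(\log n)$ times — so you should make that resampling loop and its $O(\log n)$ iteration count explicit rather than optional.
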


\begin{proof}
Let ${\cal D}$ be the probability distribution given by Lemma \ref{lem:sample_sparse}.
Let $\ell= c \cdot \log r$, for some constant $c>0$ to be determined later.
We sample ${\cal I}=\{I_1,\ldots,I_\ell\}$, where each $I_i\in {\cal I}$ is sampled randomly and independently from ${\cal D}$.
Conditions (1) \& (3) clearly hold.
If condition (2) does not hold, then we sample another ${\cal I}$, and we repeat until condition (2) is satisfied.

It remains to bound the running time.
By Lemma \ref{lem:sample_sparse} we have that for all $i\in \{1,\ldots,\ell\}$, and for all $x,y\in \{1,\ldots,r\}$, with $|x-y|\geq 4$ and $\{x,y\}\neq \{1,r\}$, we have that $\Pr[\{x,y\}\subseteq I_i]\geq 1/182$.
Thus,
\[
\Pr[\exists i\in \{1,\ldots,\ell\} \text{ s.t.~} \{x,y\}\subseteq I_i] \geq 1-(1-1/182)^\ell > 1-1/(2r^2),
\]
where the last inequality follows by setting $c=378$.
By taking a union bound over all $\{x,y\}$, we get that condition (2) holds at the end of each iteration with probability at least $1/2$.
It follows that the algorithm terminates after $O(\log n)$ iterations with high probability.
\end{proof}

\subsection{The algorithm for finding a nearly flat partially triangulated grid contraction}

We now present an algorithm for computing a contraction that consists of a large grid and some additional edges such that the ``dual distance'' between their endpoints in the grid is small. To this end, we need the following definition.

\begin{definition}[Purification]\label{defn:purification}
Recall that for all $P\in {\cal P}$, $Q\in {\cal Q}$, we have that $P\cap Q$ is a path.
Let $G'$ be the contraction of $G$ obtained by contracting for all $P\in {\cal P}$, $Q\in {\cal Q}$, the path $P\cap Q$ into a single vertex,
and for each $P\in {\cal P}\cup {\cal Q}$, each subpath of $P$ that is induced in $\Psi$, to a single edge.
Note that the above operation contracts $\Psi$ into some grid $\Gamma$.
Let $I,J\subseteq \{1,\ldots,r-1\}$.
Assume that $I=\{i_1,\ldots,i_a\}$, $J=\{j_1,\ldots,j_b\}$, with $i_1<\ldots<i_a$, and $j_1<\ldots<j_b$.
Let $G''$ be the contraction of $G'$ defined as follows:
If $i_1>1$ then we contract the subpath of each column of $\Gamma$ between row $1$ and row $i_1$ to its endpoint in $i_1$.
Similarly, if $j_1>1$ then we contract the subpath of each row of $\Gamma$ between column $1$ and column $j_1$ to its endpoint in $j_1$.
Next, for all $\iota\in \{1,\ldots,a\}$, we contract the subpath of each column of $\Gamma$ between row $i_\iota$ and row $i_{\iota}-1$ to its endpoint in row $i_\iota$.
Finally, for all $\gamma\in \{1,\ldots,b\}$, we contract the subpath of each row of $\Gamma$ between column $j_\gamma$ and column $j_{\gamma}-1$ to its endpoint in column $j_\gamma$.
We say that $G''$ is the \emph{$(I,J)$-purification of $G$}.
\end{definition}

\begin{lemma}\label{lem:almost_triag_grid}
There exists a randomized polynomial-time algorithm,  which given $G$, ${\cal P}$, and ${\cal Q}$, outputs with high probability some
$A\subset V(G)$,
and some
contraction $\xi:V(W)\to 2^{V(G\setminus A)}$ such that
$V(W)=V(H')$,
$E(W)=E(H')\cup E^*_1\cup E^*_2$, where $H'$ is the $(r'\times r')$-grid for some $r'\geq r/2$, and such that the following conditions are satisfied:
\begin{description}
\item{(1)}
$|A| = O(k \log n)$.

\item{(2)}
For all $\{u,v\}\in E^*_1$, we have $\max\{d_{H'}(u, V(\partial H')), d_{H'}(v, V(\partial H'))\} \leq 3$.

\item{(3)}
For all $\{u,v\}\in E^*_2$, we have $d_{H'}(u,v) \leq 6$.
\end{description}
\end{lemma}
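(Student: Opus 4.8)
The plan is to deal separately with the vertices and edges of $G$ lying outside the grid minor $\Psi$: the "dangerous" ones will be handled either by deletion (through node-weighted multiway cut) or by contraction (through the purification operation of Definition~\ref{defn:purification}). First I would fix the grid $\Gamma$ obtained by contracting, for all $P\in{\cal P}$ and $Q\in{\cal Q}$, the path $P\cap Q$ to a vertex and each maximal $\Psi$-subpath of each $P\in{\cal P}\cup{\cal Q}$ to an edge; this is exactly the first step of an $(I,J)$-purification and contracts $\Psi$ to the $(r\times r)$-grid $\Gamma$. Next, invoke Lemma~\ref{lem:sparse_cover} to obtain a family ${\cal I}=\{I_1,\dots,I_\ell\}$ of sparse subsets of $\{1,\dots,r-1\}$ with $\ell=O(\log r)$ covering every pair $\{x,y\}$ with $|x-y|\ge 4$ and $\{x,y\}\ne\{1,r-1\}$. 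For each $t$ form the $I_t$-row restriction and the $I_t$-column restriction of Definition~\ref{defn:row_col_restriction}; by Lemma~\ref{lem:restriction_cost} both have optimum at most $4\,\mvp(G)\le 4k$, so by Theorem~\ref{thm:NWMWC-approx} we compute in polynomial time multiway-cut solutions $C^{\mathrm{row}}_t,C^{\mathrm{col}}_t$ of size at most $8k$ each. Set
\[
A=\bigcup_{t=1}^{\ell}\big((C^{\mathrm{row}}_t\cup C^{\mathrm{col}}_t)\cap(V(G)\setminus V(\Psi))\big),
\]
the parts of these cuts outside $\Psi$, so $|A|=O(k\log n)$, giving condition~(1). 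Call a row band $F_i$ (resp.\ column band) \emph{bad} if $s_i'\in C^{\mathrm{row}}_t$ or $V(F_i)\cap C^{\mathrm{row}}_t\ne\emptyset$ for some $t$ (resp.\ with $C^{\mathrm{col}}_t$); the number of bad bands, hence of the corresponding bad rows/columns of $\Gamma$, is $O(k\log n)$.

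The heart of the argument is the following structural claim about $G':=G\setminus(V(\Psi)\cup A)$: for every connected component $K$ of $G'$, and any two row bands $F_x,F_y$ both meeting $N(K)$ with $|x-y|\ge 4$ and $\{x,y\}\ne\{1,r-1\}$, at least one of $x,y$ is bad (and similarly for column bands). Indeed, pick $I_t\ni x,y$ from the cover; then $K$, together with witnesses of its adjacencies to $F_x$ and $F_y$, yields an $s_x$--$s_y$ path in the $I_t$-row restriction. If neither $x$ nor $y$ is bad then $s_x',s_y'\notin C^{\mathrm{row}}_t$ and $C^{\mathrm{row}}_t$ deletes no vertex of $F_x\cup F_y$, while $K\cap C^{\mathrm{row}}_t=\emptyset$ since $K\subseteq G'$; hence that path avoids $C^{\mathrm{row}}_t$, contradicting its validity. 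Consequently, after discarding bad bands, the surviving row bands (resp.\ column bands) met by $N(K)$ lie in a window of absolute constant length, or else include a frame band $F_1$ or $F_{r-1}$, in which case they lie among the first or last few bands. (The pair $\{1,r-1\}$ is harmless because it corresponds to the two extreme rows, both of which lie on $\partial\Gamma$; if convenient one of them can simply be added to the bad set, at a cost of $O(1)$.)

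It remains to contract. Let $I^\ast,J^\ast\subseteq\{1,\dots,r-1\}$ be the surviving rows and columns of $\Gamma$; since only $O(k\log n)$ are bad and $r\ge c\,k\log n$ for $c$ large enough, $|I^\ast|,|J^\ast|\ge r/2$. Apply the $(I^\ast,J^\ast)$-purification of Definition~\ref{defn:purification} to $\Gamma$; each maximal block of consecutive bad rows (resp.\ columns) is contracted into an adjacent surviving row (resp.\ column), yielding a grid $H'$ of size $r'\times r'$ with $r'\ge r/2$. A bad row of $\Gamma$ merges into one of the at most two surviving rows bounding its block; applying the structural claim now to those \emph{surviving} bounding bands forces, for each component $K$, the image in $H'$ of $N(K)\cap V(\Psi)$ to lie either inside a bounded sub-block of $H'$, or within distance $3$ of $\partial H'$. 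Define $\xi$ by letting every vertex of $H'$ inherit its purification preimage in $\Gamma$, and assign each component $K$ of $G'$ entirely to a single vertex $w_K\in V(H')$ lying in the image of $N(K)$ — chosen inside the aforementioned bounded sub-block in the first case, and on $\partial H'$ (so $d_{H'}(w_K,\partial H')\le 3$) in the second. Then $\xi(w_K)$ is connected, and every edge of $G\setminus A$ that is not a grid edge of $H'$ joins $w_K$ to another image vertex of $N(K)$: in the first case both lie in the bounded sub-block, so the edge has $H'$-distance at most $6$ and goes into $E^\ast_2$; in the second case both lie within distance $3$ of $\partial H'$, so the edge goes into $E^\ast_1$; this gives conditions~(2) and~(3). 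Grid edges of $\Gamma$ incident to contracted blocks collapse to grid edges of $H'$, so they create no long chords. Finally, since the only randomness is in Lemma~\ref{lem:sparse_cover}, which succeeds with high probability, the whole procedure runs in randomized polynomial time and succeeds with high probability.

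I expect the main obstacle to be the structural claim and its interaction with purification. One must carefully navigate the off-by-one between band indices $\{1,\dots,r-1\}$ and grid row/column indices $\{1,\dots,r\}$, handle the special pair $\{1,r-1\}$ and the frame bands (which behave like boundary and so feed $E^\ast_1$), and — the subtlest point — verify that contracting a long block of bad rows does not "spread out" a component's neighborhood; this last step again uses the structural claim, now applied to the surviving bands that bound a bad block, to show that any grid row a component touches merges into a surviving row close to the rest of its neighborhood.
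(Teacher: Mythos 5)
Your proposal follows the same skeleton as the paper's proof of this lemma: fix the grid $\Gamma$, invoke Lemma~\ref{lem:sparse_cover} to cover all ``far'' index pairs by sparse sets $I_1,\ldots,I_\ell$, solve the corresponding $I_t$-row and $I_t$-column restrictions via Theorem~\ref{thm:NWMWC-approx} (valid by Lemma~\ref{lem:restriction_cost}), and then purify (Definition~\ref{defn:purification}) to contract away a small set of offending rows and columns. The first half of your argument --- the bound $|A|=O(k\log n)$, the count of bad bands, and the structural claim that any two bands $F_x,F_y$ both meeting $N(K)$ with $|x-y|\ge 4$ and $\{x,y\}\ne\{1,r-1\}$ must contain a bad index --- is correct and matches the paper's reasoning in spirit.

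However, there is a genuine gap exactly at what you yourself flag as ``the subtlest point'', and your proposed resolution does not close it. The difficulty is this. You set $A=\bigcup_t\bigl(C^{\row}_t\cup C^{\col}_t\bigr)\cap\bigl(V(G)\setminus V(\Psi)\bigr)$, deliberately keeping all of $\Psi$ intact, and you compensate by declaring a band $F_i$ bad whenever either $s_i'\in C^{\row}_t$ or $V(F_i)\cap C^{\row}_t\ne\emptyset$ for some $t$. A component $K$ of $G'=G\setminus(V(\Psi)\cup A)$ may then still have a neighbour $w_2\in V(F_z)\setminus A$ for some bad $z$ that is arbitrarily far (in $\Gamma$-coordinates) from the cluster of non-bad bands touched by $N(K)$; the structural claim is vacuous for the pair $\{10,z\}$ because $z$ is already bad. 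After purification, $w_2$ contracts into a surviving row $z'$ adjacent to the bad block containing $z$, but the number of surviving rows between $z'$ and $K$'s cluster is not controlled by anything in your argument: the structural claim constrains only bands that actually meet $N(K)$, and the surviving band near $z'$ need not meet $N(K)$ at all. So the resulting edge from $w_K$ to the $H'$-image of $w_2$ can have $H'$-distance much larger than $6$ while also being far from $\partial H'$, violating conditions (2)--(3). Saying ``apply the structural claim to the surviving bands that bound a bad block'' does not help, precisely because those bounding bands may be disjoint from $N(K)$.

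The paper's proof sidesteps this by making a different choice of $A$: it takes $A=C\cap V(G)$, which includes the multiway-cut vertices lying inside the bands $V(F_i)\subset V(\Psi)$, not just those outside $\Psi$; the row/column set $I'$ removed by purification is then the set of indices where the $s_i'$ vertex itself is cut. With the larger $A$, the contradiction in the paper is carried out directly on the grid-to-grid ``witness path'' $Z\subset G\setminus A$: since $C_\iota^{\row}\cap V(G)\subseteq A$ and $Z$ avoids $A$, the cut can only hit $Z$ at the synthetic vertices $s_{i_1}',s_{i_2}'$, and these are excluded because $i_1,i_2\notin I'$. Your smaller $A$ breaks this inclusion $C_\iota^{\row}\cap V(G)\subseteq A$ (band vertices of the cut survive), which is what forces you into the component-by-component argument and creates the uncontrolled-merge problem above. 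To repair the proposal you would essentially need to re-introduce the band cut vertices into $A$, at which point you have recovered the paper's construction.
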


\begin{proof}
Let $I=J=\{1,\ldots,r-1\}$.
Let ${\cal I}=\{I_1,\ldots,I_\ell\}$, ${\cal J}=\{J_1,\ldots,J_\ell\}$ be  collections of sparse subsets of $\{1,\ldots,r-1\}$, for some $\ell=O(\log r)$, computed by Lemma \ref{lem:sparse_cover} (we may assume w.l.o.g.~that $|{\cal I}|=|{\cal J}|$ by padding one of the two sets).
For each $t\in \{1,\ldots,\ell\}$, let $(\Gamma^{\row}_t, S^\row_t)$ be the $I_t$-row restriction, and let $(\Gamma^{\col}_t, S^\col_t)$ be the $J_t$-column restriction.
Let $S^\row_t = \{s^\row_{t,1},\ldots,s^\row_{t,a_t}\}$,
and $S^\col_t = \{s^\col_{t,1},\ldots,s^\col_{t,b_t}\}$.

By Lemma \ref{lem:restriction_cost}, for all $t\in \{1,\ldots,\ell\}$, we have
\[
\OPTNWMWC(\Gamma^\row_t,S^\row_t) \leq 4 k,
\]
and
\[
\OPTNWMWC(\Gamma^\col_t,S^\col_t) \leq 4 k.
\]
Thus, for all $t\in \{1,\ldots,\ell\}$, using Theorem \ref{thm:NWMWC-approx} we can compute in polynomial time some solution $C_t^\row$ for $(\Gamma^\row_t,S^\row_t)$, with
\[
|C^\row_t| \leq 2\cdot \OPTMWC(\Gamma^\row_t,S^\row_t) \leq 8 k,
\]
and some solution $C_t^\col$ for $(\Gamma^\col_t,S^\col_t)$, with
\[
|C^\col_t| \leq 2\cdot \OPTMWC(\Gamma^\col_t,S^\col_t) \leq 8 k.
\]

Let
\[
C=\bigcup_{t=1}^\ell (C^\row_t \cup C^\col_t).
\]
Let
\[
A = C\cap V(G).
\]
Since $A\subseteq C$, we have $|A|\leq |C|\leq 2 \ell 8 k = O(k\log r) = O(k \log n)$, and thus condition (1) holds.

For each $t\in \{1,\ldots,\ell\}$, let $I_t'$ be the set of integers $i\in I_t$, such that the unique vertex adjacent to the terminal $s^\row_{t,i}$ in $\Gamma^\row_t$, is in $C^\row_t$.
Similarly, for each $t\in \{1,\ldots,\ell\}$, let $J'_t$ be the set of integers $j\in J_t$, such that the unique vertex incident to the terminal $s^\col_{t,j}$ in $\Gamma^\col_t$, is in $C$.
Let $I'=\bigcup_{t=1}^\ell I'_t$, and $J'=\bigcup_{t=1}^\ell J'_t$.
Let
\[
I^* = \bigcup_{i\in I'} \{i,i+1\}
\text{ ~~ and ~~ }
J^* = \bigcup_{j\in J'} \{j,j+1\}.
\]
Let
\[
{\cal P}' = {\cal P} \setminus \bigcup_{i\in I^*} \{P_i\}
\text{ ~~ and ~~ }
{\cal Q}' = {\cal Q} \setminus \bigcup_{j\in J^*} \{Q_j\}.
\]
We first derive a lower bound on $|{\cal P}'|$ and $|{\cal Q}'|$.
We have
\begin{align*}
|{\cal P}'| &\geq |{\cal P}| - 2|I'| \geq |{\cal P}| - 16 k \ell \geq |{\cal P}|/2,
\end{align*}
where the last inequality follows when $|{\cal P}| \geq c k \log n$, for some sufficiently large constant $c>0$.
Similarly, we get $|{\cal Q'}|\geq|{\cal Q}|/2$.

Let
\[
\Psi'=\left(\bigcup_{P\in {\cal P}'} P\right) \cup \left(\bigcup_{Q\in {\cal Q}'} Q\right).
\]
Note that $\Psi'\subseteq \Psi \subseteq G$.
Let $W$ be the $(I\setminus I^*, J\setminus J^*)$-purification of $G\setminus A$ with contraction mapping $\xi : V(W) \to 2^{V(G\setminus A)}$.
Note that $H'=\xi^{-1}(\Psi')\subseteq W$ is the $(r_1,r_2)$-grid, where $r_1=|I\setminus I^*|=|{\cal P}'| \geq r/2$, and $r_2=|J\setminus J^*|=|{\cal Q}'| \geq r/2$.
It remains to show that for all $\{u',v'\}\in E(W)\setminus E(H')$, we have
\[
\max\{d_{H'}(u', V(\partial H')), d_{H'}(v', V(\partial H'))\} \leq 3,
\text{ or }
d_{H'}(u',v') \leq 6.
\]
To that end, suppose for the sake of contradiction that there exists $\{u',v'\}\in E(W)\setminus E(H')$, such that
\[
\max\{d_{H'}(u', V(\partial H')), d_{H'}(v', V(\partial H'))\} > 3,
\text{ and }
d_{H'}(u',v') > 6.
\]
It follows that there exist $u\in \xi(u')$, $v\in \xi(v')$, and some path $Z$ in $G\setminus A$ between $u$ and $v$, that interesects $\Psi'$ only on $u$ and $v$.
Suppose that $u'$ is in row $i_1'$ and column $j_1'$ of $H'$, and that $v'$ is in row $i_2'$ and column $j_2'$ of $H'$.
We may assume w.l.o.g.~that $|i_1'-i_2'|\geq |j_1'-j_2'|$, since the case $|j_1'-j_2'|\geq |i_1'-i_2'|$ is identical after exchanging the rows and columns of $\Psi$.
We have
\[
|i_1'-i_2'| \geq d_{H'}(u',v')/2 > 3.
\]
We may also assume w.l.o.g.~that $i_1<i_2$, since otherwise we may simply swap $u'$ and $v;$.
We may further assume w.l.o.g.~that $d_{H'}(u',V(\partial H')) \geq d_{H'}(v',V(\partial H'))$; the complementary case can be treated in a similar fashion.
Thus we have
\[
3 < i_1' < i_2' + 3.
\]
For any $i\in \{1,\ldots,r-1\}$, let $F_i$ be as in the definition of a row restriction (Definition \ref{defn:row_col_restriction}).
Let $i_1,i_2\in \{1,\ldots,r-1\}$ such that $u\in F_{i_1}$ and $v\in F_{i_2}$.
It follows by the definition of purification (Definition \ref{defn:purification}) that
$i_1\geq i_1'-1> 2$ and $i_1-i_2\geq i_1'-i_2' > 3$.
By Lemma \ref{lem:sparse_cover}, there exists $I_\iota\in {\cal I}$ such that $\{i_1,i_2\}\subseteq I_\iota$.
It follows that $V(Z)\cap C_\iota^{\row} \neq \emptyset$, and thus $A\cap V(Z)\neq \emptyset$, which contradicts the fact that $Z$ is a path in $G\setminus A$, and concludes the proof.
\end{proof}

We are now ready to obtain the main result of this section, which is the algorithm for computing a partially triangulated grid contraction.

\begin{lemma}[Computing a partially triangulated grid contraction with a few apices]\label{lem:pt_grid}
There exists a randomized polynomial-time algorithm,  which given some graph $G$, some $k\in \mathbb{N}$, and some simple combed pseudogrid $({\cal P}, {\cal Q})$ in $G$, with $|{\cal P}|=|{\cal Q}|=r$, where $r\geq ck  \log n$,
for some universal constant $c>0$,
 terminates with one of the following outcomes, with high probability:
\begin{description}
\item{(1)}
Correctly decides that $\mvp(G)>k$.
\item{(2)}
Outputs some
$A\subset V(G)$,
with $|A|=O(k \log n)$,
some $(r''\times r'')$-partially triangulated grid $W''$, for some $r''\geq  r/8$,
and some
contraction mapping $\xi':V(W'')\to 2^{V(G\setminus A)}$.
\end{description}
\end{lemma}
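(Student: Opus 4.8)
The plan is to produce the required partially triangulated grid contraction as a cleanup of the ``almost flat'' contraction given by Lemma~\ref{lem:almost_triag_grid}. First I would run the algorithm of that lemma on $G,{\cal P},{\cal Q}$. Its guarantees — $|A|=O(k\log n)$, side length $r'\ge r/2$, and the boundary/locality conditions on the extra edge sets $E^*_1,E^*_2$ — rest on the bound $\OPTNWMWC\le 4\,\mvp(G)$ of Lemma~\ref{lem:restriction_cost}, hence implicitly on $\mvp(G)\le k$; so while running it I would check that every \NWMWC{} solution it computes has size at most $8k$ and that the grid it returns has side at least $r/2$. If a check fails, then $\mvp(G)>k$ and we report outcome~(1). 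Otherwise we hold a contraction $\xi$ of $G\setminus A_0$ onto a graph $W=H'\cup E^*_1\cup E^*_2$, where $H'$ is the $(r'\times r')$-grid with $r'\ge r/2$, every edge of $E^*_1$ has both endpoints within grid-distance $3$ of $\partial H'$, and every edge of $E^*_2$ joins vertices at grid-distance at most $6$. The remaining task is to turn $W$ into an honest partially triangulated grid of side at least $r/8$ while deleting only $O(k\log n)$ further vertices.

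The short chords in $E^*_2$ I would absorb by passing to a coarser grid: partition the rows and the columns of $H'$ into consecutive blocks of a suitable constant size and contract, inside $\xi$, each block of preimages together; this contracts $H'$ onto an $(r''\times r'')$-grid $H^{(1)}$ with $r''$ a constant fraction of $r'$, which can be kept at least $r/8$ (the slack in the bound $r'\ge r/2$ is there precisely for this). Since each edge of $E^*_2$ has bounded grid-length, after coarsening it becomes either a loop, or an edge parallel to a grid edge — both of which disappear under the convention that parallel classes are replaced by one edge — or a diagonal of a single internal face of $H^{(1)}$; symmetrically, after additionally discarding a constant-width boundary frame of $H'$ before coarsening, every edge of $E^*_1$ either disappears or survives only as a chord inside a bounded-width strip along $\partial H^{(1)}$. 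Two nuisances remain: a face of $H^{(1)}$ may receive both of its diagonals, and a few chords of $E^*_1$ may persist near $\partial H^{(1)}$. Both are eliminated by deleting an additional vertex set $A_1$, obtained from the multiway-cut computation described below, whose removal leaves no internal face with two diagonals and no chord along the boundary strip (when such a face would be left with two diagonals we simply retain one of them in $W''$). Setting $A=A_0\cup A_1$, letting $W''$ be $H^{(1)}$ together with the retained diagonals, and letting $\xi'$ be the contraction mapping induced on $G\setminus A$, the graph $W''$ is a partially triangulated $(r''\times r'')$-grid with $r''\ge r/8$, as required; and if any cut computation below overshoots, we again report $\mvp(G)>k$.

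The hard part is bounding $|A_1|=O(k\log n)$, i.e.\ showing that after coarsening only $O(k\log n)$ faces carry two diagonals and only $O(k\log n)$ chords survive along the boundary strip. The natural route is to re-run the machinery of Lemma~\ref{lem:restriction_cost} at the finer granularity this requires: for the boundary strip, and for each pair of nearby blocks, set up a uniform \NWMWC{} instance whose optimum is $O(\mvp(G))$ — by the same argument as in Lemma~\ref{lem:restriction_cost}, that a surviving bad edge together with the relevant subgrids of $\Psi$ would form a subdivision of a $3$-connected planar graph which must be planar, a contradiction — then apply the $2$-approximation of Theorem~\ref{thm:NWMWC-approx} over the $O(\log r)$ sparse families of Lemma~\ref{lem:sparse_cover} and add all resulting cut vertices to $A_1$; exactly as in Lemma~\ref{lem:almost_triag_grid} this loses only an $O(\log n)$ factor, so $|A_1|=O(k\log n)$, and a cut solution that comes out too large again certifies $\mvp(G)>k$. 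The bulk of the proof is then the bookkeeping needed to make the blocks, the boundary frame, the row/column restrictions of Definition~\ref{defn:row_col_restriction}, and the purification of Definition~\ref{defn:purification} line up so that every ``bad'' edge of $W$ genuinely forces a non-planar gadget, and so that the sizes come out as claimed.
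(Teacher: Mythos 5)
Your plan diverges substantially from the paper's, and as written it has a real gap. The paper's proof of Lemma~\ref{lem:pt_grid} does \emph{not} delete any additional vertices beyond the set $A$ produced by Lemma~\ref{lem:almost_triag_grid}, and it does \emph{not} re-run the \NWMWC{} machinery. Instead, it cleans up $W$ purely by further \emph{purification} (contraction of rows and columns), and it bounds the number of ``bad'' rows/columns by a direct counting argument: if an edge of $E(W)\setminus E(H')$ has its endpoints in different faces of a $3$-connected subgrid, that subgrid together with the edge is non-planar; one can extract, from a fixed residue class of rows mod $8$ (resp.\ columns), a family of vertex-disjoint non-planar induced subgraphs of $G$, one per bad index; by Lemma~\ref{lem:additivity}, if this family has more than $k$ members then $\mvp(G)>k$, and otherwise $O(k)$ rows and $O(k)$ columns suffice to absorb all long chords. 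The exact same trick, with $5\times 5$ subgrids, bounds the number of faces carrying a crossing pair of diagonals by $O(k)$. All of these rows/columns are removed by a new purification, giving the final grid $W''$ with $r''\ge r/8$ and contraction mapping $\xi'$. The conditions of Lemma~\ref{lem:almost_triag_grid} (boundary-locality of $E^*_1$, distance-$\le 6$ for $E^*_2$) are used only to argue that a bad edge must span a \emph{bounded} window of rows/columns, so that the vertex-disjointness of the gadgets can be arranged; they are not fed into any further cut computation.

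Your route has two concrete problems. First, the arithmetic in the coarsening step does not close: to keep $r''\ge r/8$ starting from $r'\ge r/2$ you must use block size at most $4$, but an $E^*_2$ edge with $d_{H'}(u,v)\le 6$ can, after coarsening by blocks of size $4$, join vertices whose block coordinates differ by $2$ in some axis; such an edge is not a loop, not a multi-edge, and not a diagonal of a single face of $H^{(1)}$. To make all $E^*_2$ edges collapse to within one face you would need block size at least $7$, which drops $r''$ below $r/8$. Second, the cleanup by an extra deleted set $A_1$ obtained from re-running \NWMWC{} is not substantiated: for a face of $H^{(1)}$ carrying two crossing diagonals, there is no natural set of terminals whose separation corresponds to destroying exactly one diagonal; and deleting vertices that witness one diagonal may lie inside contraction bags $\xi(\cdot)$ and disconnect them, invalidating the contraction mapping $\xi'$. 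The paper sidesteps both issues entirely by only contracting, never deleting, in this step, and by replacing the \NWMWC{} reasoning with the vertex-disjoint-gadget count. If you want to salvage a coarsening-based argument, you would need to either strengthen Lemma~\ref{lem:almost_triag_grid} (better distance bound or larger $r'$) or accept a smaller fraction than $r/8$, and you would still need to replace the $A_1$/\NWMWC{} step by a purification-plus-counting argument of the paper's kind.
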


\begin{proof}
Let $A, W, \xi, H'$ be as in Lemma
\ref{lem:almost_triag_grid}, where $H'$ is the $(r'\times r')$-grid.
Let $E'=E(W)\setminus E(H')$.
Let also $E''$ be the set of edges in $E'$ that do not have both endpoints in the same face of $H'$.
Let $\widetilde{I}=\widetilde{J}=\{1,\ldots,r'\}$.
Let $\widetilde{I}'$ be the set of all $i\in \widetilde{I}$ such that there exists some edge in $E''$ with at least one endpoint in row $i$ of $H'$.
Similarly, let $\widetilde{J}'$ be the set of all $j\in \widetilde{J}$ such that there exists some edge in $E''$ with at least one endpoint in column $j$ of $H'$.
For each $t\in \{0,\ldots,7\}$ let $\widetilde{I}'_t$ be the set of integers $i\in \widetilde{I}'$ such that there exists some edge in $E''$ with one endpoint in some row $i$, where $i\mymod 8=t$, and one endpoint in row $i'$ of $H'$, for some $i'>i$.
Let
\[
t^*=\text{argmax}_{t\in \{0,\ldots,7\}} \{|\widetilde{I}'_t|\}.
\]
For each $i\in \widetilde{I}'_{t^*}$, the subgraph $W_i$ of $W$ induced on the rows $i-1,\ldots,i+7$ is nonplanar; this follows by the fact that the subgrid $H'_i$ of $H'$ induced on rows $i-1,\ldots,i+7$ is 3-connected, and there exists, by the choice of $i$, some edge in $W'_i$ with endpoints in different faces of $H'_i$.
Let also $G_i=G[\xi(V(W_i)]$; note that $G_i$ contains $W_i$ as a minor, and thus $G_i$ is also non-planar.
By construction, we have that for all $i\neq i'\in \widetilde{I}'_{t^*}$, the graphs $G_i$ and $G_{i'}$ are vertex-disjoint subgraphs of $G$.
Thus, if $\widetilde{I}'_{t^*}>k$, we may conclude that $\mvp(G)>k$.
Otherwise, we have
$|\widetilde{I}'|\leq 8 |\widetilde{I}'_{t^*}| \leq 8k$.
Similarly, if $|\widetilde{J}'| > 8k$, we may conclude that $\mvp(G)>k$, and otherwise we have that $|\widetilde{J}'| \leq 8k$.
Let
\[
\widetilde{I}^* = \bigcup_{i\in \widetilde{I}'} \{i-1,\ldots,i+7\}
\text{ ~~ and ~~ }
\widetilde{J}^* = \bigcup_{j\in \widetilde{J}'} \{j-1,\ldots,j+7\}
\]

Recall that $W$ is the $(I\setminus I^*, J\setminus J^*)$-purification of $G$, where $I=J=\{1,\ldots,r-1\}$, $I^*\subseteq $, and $J^*\subseteq J$.
Every row $i$ (resp.~column $j$) of $H'$ corresponds to some row $\tau(i)$  (resp.~column $\tau(j)$) of $\Psi$.
Let $I^{**} = \bigcup_{i\in \widetilde{I}^*} \{\tau(i)\}$,
and
$J^{**} = \bigcup_{j\in \widetilde{J}^*} \{\tau(j)\}$.
Let $W'$ be the $(I\setminus (I^*\cup I^{**}), J\setminus (J^*\cup J^{**}))$-purification of $G$.
It follows by construction that $W'=\Gamma \cup E^*$, where $\Gamma$ is the $(r'_1\times r_2')$-grid, for some $r_1'\geq r/2-|\widetilde{I}^{**}| \geq r/2-8|\widetilde{I}'|\geq r/2-64 k \geq r/4$, and, similarly, for some $r_2'\geq r/4$.
Moreover, for every $e\in E^*$, there exists some face of $\Gamma$ that contains both endpoints of $e$.

In order to obtain a partially triangulated grid, it suffices to ensure that there are no crossings between pairs of edges with endpoints in the same face of $\Gamma$.
To that end, let $R$ be the set of all faces $F$ of $\Gamma$ such that there exist $\{u,v\},\{u',v'\}\in E^*$, with $\{u,v,u',v'\}\in V(F)$, and with all the vertices $u$, $v$, $u'$, $v'$ being distinct, and such that along a clockwise traversal of $F$ we visit the vertices $u,u',v,v'$ in this order.
For each $F\in R$, let $\Gamma^F$ be the subgraph of $\Gamma$ induced by the $(5\times 5)$-subgrid of $\Gamma$ having $F$ as its central face.
We claim that $|R|\leq 25k$.
To that end, if $|R|>25k$, then we can find some $R'\subset R$ with $|R'|\geq |F|/25 > k$, such that for all $F,F'\in R'$, with $F\neq F'$, we have $V(\Gamma^F)\cap V(\Gamma^{F'})=\emptyset$.
However, the graphs $W'[V(\Gamma^F)]$ and $W'[V(\Gamma^{F'})]$ are vertex disjoint, which implies that $\mvp(G)>j$, which is a contradiction.
Thus, we have established that $|R|\leq 25k$.
Let $\bar{I}$ (resp.~$\bar{J})$ be the set of rows (resp.~columns) of $\Gamma$ that intersect the faces in $R$.
We have $|\bar{I}|\leq 50k$ and $|\bar{J}|\leq 50k$.
For each $i\in \bar{I}$, row $i$ of $\Gamma$ corresponds to some row $\tau'(i)$ of $G$.
Similarly, for each $j\in \bar{J}$, column $j$ of $\Gamma$ corresponds to some column $\tau'(j)$ of $G$.
Let $I^{***}=\bigcup_{i\in \bar{I}}\{\tau'(i)\}$, and $J^{***}=\bigcup_{j\in \bar{J}}\{\tau'(j)\}$.
Let $W''$ be the $(I\setminus (I^*\cup I^{**}\cup I^{***}), J\setminus (J^*\cup J^{**}\cup J^{***}))$-purification of $G$.
It is immediate by the above construction that $W''$ is a $(r_1'',r_2'')$-partially triangulated grid, where $r_1''\geq r_1'-|\bar{I}|\geq r/4-50k \geq r/8$.
We can ensure that $r_1''=r_2''$ by adding some arbitrary entries to either $I^{***}$ or $J^{***}$.
This concludes the proof.
\end{proof}


\section{Computing semi-universal vertex sets}\label{sec:semi-universal}

In this Section we present our algorithm for computing semi-universal vertex sets.

\begin{definition}[Semi-universality]
Let $G$ be a graph and let $Y\subset V(G)$.
We say that $Y$ is \emph{semi-universal}  (w.r.t.~$G$) if
for all $S\subseteq V(G)$, with $|S|=\mvp(G)$, such that $G\setminus S$ is planar, we have
\[
|Y\cap S|\geq (2/3)\cdot |Y|.
\]
\end{definition}

The following is an easy consequence. 

\begin{lemma}
Let $G$ be a graph and let $Y\subseteq V(G)$ be semi-universal.
Then
\begin{align*}
\mvp(G\setminus Y) &\leq \mvp(G) - (2/3)\cdot |Y|.
\end{align*}
\end{lemma}

\begin{proof}
Let $S\subseteq V(G)$, with $|S|=\mvp(G)$, such that $G \setminus S$ is planar.
We have
\begin{align*}
\mvp(G\setminus Y) &\leq |S\cap (V(G)\setminus Y)| = |S| - |S\cap Y| \leq \mvp(G) - (2/3)\cdot |Y|,
\end{align*}
concluding the proof.
\end{proof}

We now describe an algorithm for computing a semi-universal set.
We first introduce some notation.
For the remainder of this Section, let $G$ be a graph, and let $k\in \mathbb{N}$.
Let $X\subset V(G)$,
and let $\mu:V(H)\to 2^{V(G)\setminus X}$ be a contraction of $G\setminus X$, where $H$ is some $(r\times r)$-partially triangulated grid, for some $r>0$.
Let $\mu':V(H')\to 2^{V(G)}$ be the contraction of $G$ induced by $\mu$, where $V(H')=V(H)\cup X$, and $\mu'$ is the identity on $X$.
Let
\[
L = |N_{H'}(X)| = |\{v\in V(H):d_H(v,\partial H)\geq 3 \text{ and } N_G(X)\cap \mu'(v)\neq \emptyset\}|.
\]
Suppose that
\[
L>(144|X|+1296k)c_{FHL}\log^{3/2}n,
\]
where $\cFHL>0$ is the universal constant in Theorem \ref{thm:vertex-separators-approx}.

For any $i,j\in \{1,\ldots,r\}$, let $v_{i,j}\in V(H)$ be the vertex in row $i$ and column $j$ in $H$.
For any $i,j\in \{0,\ldots,3\}$ let
\[
V_{i,j} = \{v_{i',j'} : i'\mymod 4 = i \text{ and } j'\mymod 4 = j\},
\]
and
\[
L_{i,j} = |N_{H'}(X) \cap V_{i,j}|.
\]
Since $L=\sum_{i=0}^{3} \sum_{j=0}^{3} L_{i,j}$, it follows that there exist $i^*\in \{0,\ldots,3\}$, and $j^*\in \{0,\ldots,3\}$, such that
\[
L_{i^*,j^*} \geq L/16.
\]
Let $V^* = V_{i^*,j^*}$.
For any $x\in X$, let
$N^*(x) = N_{H'}(x) \cap V^*$.
For any $X'\subseteq X$, let also $N^*(X) = \bigcup_{x\in X'} N^*(x)$.
For any $U\subseteq V(H)$, we define
\[
\cover(U) = \{x\in X : N^*(x) \subseteq U\}.
\]

We are now ready to present our algorithm. 

\subsection{The algorithm for computing a semi-universal set}
The algorithm proceeds in the following steps:
\begin{description}
\item{\textbf{Step 1.}}
The algorithm computes a sequence ${\cal U}=U_0,U_1,\ldots,U_{\ell}$ of disjoint subsets of $V(H)$ as follows.
We set $U_0=\empty$, and for any $i\geq 1$, using the algorithm from Lemma \ref{lem:semi-universal-induction}, we obtain in polynomial time one of the following two outcomes:
\begin{description}
\item{(I)}
We compute some $U_{i+1}\subseteq V(H) \setminus (U_0\cup \ldots\cup U_i)$, such that
$|U_{i+1}| \leq \frac{L}{48 \log n}$,
and
\[
|X\setminus \cover(U_0\cup \ldots \cup U_{i+1})| \leq (3/4) \cdot |X\setminus \cover(U_0\cup \ldots \cup U_{i})|.
\]

\item{(II)}
We correctly decide that for all $U\subseteq V(H) \setminus (U_0\cup \ldots\cup U_i)$,
such that
$|U| \leq \frac{L}{48 c_{FHL} \log^{3/2} n}$,
we have
\[
|X\setminus \cover(U_0\cup \ldots \cup U_{i}\cup U)| > (2/3) \cdot |X\setminus \cover(U_0\cup \ldots \cup U_{i})|.
\]
\end{description}
If outcome (I) occurs, then we obtain $U_{i+1}$; if outcome (II) occurs, then we set $\ell=i$ and we terminate the sequence ${\cal U}$ at $U_i$.

\item{\textbf{Step 2.}}
We output the set $Y = X \setminus \cover(U_0\cup\ldots U_\ell)$.
\end{description}
This concludes the description of the algorithm for computing a semi-universal set.

\subsection{Analysis}
We now analyze the above algorithm.
We first show that the set $Y$ computed by the algorithm is non-empty and semi-universal.

\begin{lemma}\label{lem:Y_nonempty}
$Y\neq \emptyset$.
\end{lemma}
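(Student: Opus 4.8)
The plan is to read the conclusion directly off the certificate produced in outcome (II), instantiated at the empty set. For $i\in\{0,\dots,\ell\}$ write $m_i=|X\setminus\cover(U_0\cup\dots\cup U_i)|$, so that $|Y|=m_\ell$ by definition of the algorithm's output, and note $m_0=|\{x\in X:N^*(x)\neq\emptyset\}|\le|X|\le n$. The key observation is that when the algorithm halts it has \emph{correctly decided} (hence truthfully asserted) that for every $U\subseteq V(H)\setminus(U_0\cup\dots\cup U_\ell)$ with $|U|\le \frac{L}{48\,c_{FHL}\log^{3/2}n}$ we have $|X\setminus\cover(U_0\cup\dots\cup U_\ell\cup U)|>(2/3)\,|X\setminus\cover(U_0\cup\dots\cup U_\ell)|$. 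I would apply this with $U=\emptyset$: this choice is admissible since $\emptyset\subseteq V(H)\setminus(U_0\cup\dots\cup U_\ell)$ and $|\emptyset|=0\le \frac{L}{48\,c_{FHL}\log^{3/2}n}$, and it yields $m_\ell=|Y|>(2/3)|Y|$. Since $|Y|$ is a non-negative integer and $|Y|>(2/3)|Y|$ is equivalent to $|Y|\ge 1$, this gives $Y\neq\emptyset$.

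Before that, I would check that the sequence $\mathcal U$ is finite, i.e.\ that outcome (II) does occur, so that $\ell$ (and hence $Y$) is well defined. Suppose not: then outcome (I) occurs at every step, and each such step shrinks $m_i$ by a factor at most $3/4$ while adding a set $U_{i+1}$ of size at most $\frac{L}{48\log n}$. Since $m_i$ is a non-negative integer bounded initially by $n$, after $i_0=O(\log n)$ steps — and in particular for some $i_0<3\log n$ — we would reach $m_{i_0}=0$, i.e.\ $\cover(U_0\cup\dots\cup U_{i_0})=X$, i.e.\ $N^*(x)\subseteq U_0\cup\dots\cup U_{i_0}$ for every $x\in X$. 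Taking the union over $x\in X$ and using $|U_0|=0$ gives $|N^*(X)|\le\sum_{j=1}^{i_0}|U_j|\le i_0\cdot\frac{L}{48\log n}<\frac{L}{16}$. But $N^*(X)=N_{H'}(X)\cap V^*$, so $|N^*(X)|=L_{i^*,j^*}\ge L/16$, a contradiction. Hence outcome (II) occurs at some finite step, $\ell$ and $Y$ are well defined, and as a by-product $\ell=O(\log n)$ (which is what makes Step~1 run in polynomial time).

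The only place requiring any care — the main obstacle, such as it is — is this termination/well-definedness step: one must be sure $\mathcal U$ is finite before speaking of $Y$, and that the empty set genuinely satisfies the size constraint in outcome (II). This is exactly where the hypothesis $L>(144|X|+1296k)c_{FHL}\log^{3/2}n$ is used, though only through its mild consequences $L_{i^*,j^*}\ge L/16$ and the coarse fact that $O(\log n)$ sets of size $\frac{L}{48\log n}$ cannot cover a set of size $L/16$; no deeper structural property of $H$, $X$, or $\cover$ is needed. Beyond that, the argument is just the "plug in $U=\emptyset$'' observation together with the integrality of $|Y|$.
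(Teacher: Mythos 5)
Your proof is correct, but it is worth observing that the $U=\emptyset$ trick, while elegant, is really a different \emph{packaging} of the same counting fact rather than a different argument. The paper closes directly: since $\ell < 3\log n$ and each $|U_i|\le L/(48\log n)$, the union $U_0\cup\dots\cup U_\ell$ has size $<L/16\le L_{i^*,j^*}=|N^*(X)|$, so some vertex of $N^*(X)$ is uncovered, which exhibits an $x\in X$ with $N^*(x)\not\subseteq U_0\cup\dots\cup U_\ell$, i.e., a witness in $Y$. Your proof reaches the identical inequality $|U_0\cup\dots\cup U_\ell|<L/16\le|N^*(X)|$, but deploys it inside a contradiction to establish that outcome~(II) is ever reached, and then reads $Y\neq\emptyset$ off the outcome-(II) certificate instantiated at $U=\emptyset$ via $|Y|>(2/3)|Y|$. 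Two remarks. First, your approach is arguably cleaner logically, and it has the genuine merit of making explicit that the sequence $\mathcal{U}$ terminates (the paper leaves this implicit when it writes $\ell\le\log_{4/3}|X|$). Second, your closing step leans on the strict inequality ``$>$'' in outcome~(II), whereas the paper's direct-witness version would survive a non-strict ``$\ge$''; this is a harmless sensitivity here, but it means the real substance of both proofs lives in the same place, namely the bound $3\log n\cdot\frac{L}{48\log n}<\frac{L}{16}\le L_{i^*,j^*}$, and the final deduction is where the two write-ups diverge cosmetically.
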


\begin{proof}
By induction on $i$, we have that $|X\setminus \cover(U_0\cup \ldots \cup U_i)|\leq (3/4)^i \cdot |X|$.
It follows that $\ell\leq \log_{4/3} |X| <  3\log n$.
Since $U_0=\emptyset$, and for all $i\in \{1,\ldots,\ell\}$, we have $|U_i|\leq L/(48 \log n)$, it follows that
\begin{align*}
|U_0\cup \ldots\cup U_\ell| &\leq \ell L / (48 \log n)\\
 &< (3 \log n) L / (48 \log n)\\
 &= L / 16\\
 &\leq L_{i^*,j^*} \\
 &= |N^*(X)|
\end{align*}
This implies that $N^*(X)\setminus (U_0\cup \ldots \cup U_\ell) \neq \emptyset$, and thus $Y=X\setminus \cover(U_0\cup \ldots \cup U_\ell)\neq \emptyset$.
\end{proof}

We now show that $Y$ is ``semi-universal''. 
\begin{lemma}\label{lem:Y_semiuniversal}
If $\mvp(G)\leq k$, then
$Y$ is semi-universal.
\end{lemma}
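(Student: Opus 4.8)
The plan is to argue by contradiction, playing the planar structure of $G\setminus S$ against the termination rule of the algorithm. Write $C_\ell=U_0\cup\dots\cup U_\ell$, so that $Y=X\setminus\cover(C_\ell)$ and the algorithm stopped because outcome (II) occurred at step $\ell$; in particular, for every $U\subseteq V(H)\setminus C_\ell$ with $|U|\le L/(48\cFHL\log^{3/2}n)$ we have $|X\setminus\cover(C_\ell\cup U)|>(2/3)|Y|$. Now fix an optimal planarizing set $S$, $|S|=\mvp(G)\le k$, and suppose toward a contradiction that $|Y\cap S|<(2/3)|Y|$, i.e. $|Y\setminus S|>|Y|/3$. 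Set $U:=\bigcup_{x\in Y\setminus S}\bigl(N^*(x)\setminus C_\ell\bigr)\subseteq V(H)\setminus C_\ell$. For every $x\in Y\setminus S$ we then have $N^*(x)\subseteq C_\ell\cup U$, hence $Y\setminus S\subseteq\cover(C_\ell\cup U)$ and so $|X\setminus\cover(C_\ell\cup U)|\le|Y|-|Y\setminus S|<(2/3)|Y|$. By outcome (II) this forces $|U|>L/(48\cFHL\log^{3/2}n)$. The remaining work is to derive the opposite bound $|U|\le L/(48\cFHL\log^{3/2}n)$ from planarity, which is the contradiction that proves the lemma. (Note that this lemma uses only the guarantee of outcome (II); the inductive Lemma~\ref{lem:semi-universal-induction} is invoked inside the algorithm, not here.)

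To bound $|U|$, let $V_S=\{v\in V(H):\mu(v)\cap S\neq\emptyset\}$; since the branch sets are pairwise disjoint and avoid $X$, we get $|V_S|\le|S\setminus X|\le k$. Consider $H'_0:=H'\bigl[(V(H)\setminus V_S)\cup(X\setminus S)\bigr]$. Its branch sets under $\mu'$ all avoid $S$ and are connected in $G\setminus S$, and every edge of $H'_0$ lifts to an edge of $G$ with both endpoints outside $S$; hence $H'_0$ is obtained from $G\setminus S$ by contracting connected sets and deleting edges, so $H'_0$ is a minor of $G\setminus S$ and therefore planar. Fix a planar embedding of $H'_0$ and restrict it to $H_0:=H[V(H)\setminus V_S]$. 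Every $x\in X\setminus S$ then lies in the interior of a unique face $F_x$ of $H_0$, with all neighbors of $x$ in $H_0$ lying on $\partial F_x$. In particular, if $x\in Y\setminus S$ and $w\in(N^*(x)\setminus C_\ell)\setminus V_S$, then $w\notin V_S$ and $x\notin S$ imply that the corresponding $G$-edge between $x$ and $\mu(w)$ survives in $G\setminus S$, so $w$ is a neighbor of $x$ in $H'_0$ and hence $w\in\partial F_x$; moreover $w\in N^*(x)$ gives $w\in V^*$ and $d_H(w,\partial H)\ge3$.

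Next I would show the $V^*$-vertices appearing on these $\partial F_x$'s are few. Let $W_S:=V_S\cup\{w\in V(H):\text{the radius-}2\text{ ball around }w\text{ in }H\text{ meets }V_S\}$; since $H$ has bounded degree, $|W_S|=O(k)$. I claim that for every $x\in X\setminus S$ the set $(V^*\cap\partial F_x)\setminus W_S$ has size at most one. Indeed, if $w\in(V^*\cap\partial F_x)\setminus W_S$, then the whole radius-$2$ ball of $w$ in $H$ avoids $V_S$, so a $3$-connected partially triangulated subgrid of $H$ centered at $w$ (which fits inside $H$ because $d_H(w,\partial H)\ge 3$) survives intact in $H_0$; this pins down the rotation of $H_0$ near $w$, so $F_x$ must be one of the small interior faces of $H$ incident to $w$. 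But any interior face of the partially triangulated grid $H$ spans a single $2\times2$ block of cells, whereas two distinct vertices of $V^*=V_{i^*,j^*}$ differ by at least $4$ in their row index or in their column index; hence such a face contains at most one vertex of $V^*$, forcing $V^*\cap\partial F_x=\{w\}$ and proving the claim. Consequently $N^*(x)\setminus C_\ell\subseteq W_S\cup\{w_x\}$ with at most one extra vertex $w_x$ per $x\in Y\setminus S$, so $U\subseteq W_S\cup\{w_x:x\in Y\setminus S\}$ and $|U|\le|W_S|+|Y\setminus S|=O(k)+|Y|\le O(k)+|X|$. Since the hypothesis gives $L>(144|X|+1296k)\cFHL\log^{3/2}n$, the right-hand side is at most $L/(48\cFHL\log^{3/2}n)$ (the constants $144$ and $1296$ are exactly what is needed to dominate $|X|$ and the $O(k)$ term). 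This contradicts the lower bound forced by outcome (II), so $|Y\cap S|\ge(2/3)|Y|$ for every optimal $S$, i.e. $Y$ is semi-universal.

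The main obstacle is the face-counting step: one must argue carefully that, away from the $O(k)$ vertices that lie near $V_S$, the face $F_x$ of $H_0$ hosting $x$ is genuinely one of the original tiny faces of the partially triangulated grid — so it can carry at most one of the sparsely-spaced $V^*$-vertices — which requires exploiting the local $3$-connectivity and essentially unique embedding of the grid together with the $d_H(\cdot,\partial H)\ge 3$ slack built into the definition of $N_{H'}$.
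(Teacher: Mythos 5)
The opening part of your proof matches the paper's argument exactly: assume an optimal $S$ with $|Y\cap S|<(2/3)|Y|$, set $U=\bigcup_{x\in Y\setminus S}(N^*(x)\setminus C_\ell)$, observe $Y\setminus S\subseteq\cover(C_\ell\cup U)$, and invoke the guarantee of outcome~(II) to force $|U|>L/(48\cFHL\log^{3/2}n)$. It is in the second half, where planarity of $G\setminus S$ must turn into an upper bound on $|U|$, that your argument has a genuine gap.

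Your key claim is that for every $x\in Y\setminus S$, the set $(V^*\cap\partial F_x)\setminus W_S$ has size at most one, and the justification is that the preserved subgrid $\Gamma_w$ around any such $w$ ``pins down the rotation of $H_0$ near $w$, so $F_x$ must be one of the small interior faces of $H$ incident to $w$.'' This last inference does not follow. What the unique embedding of $\Gamma_w$ gives you is that the edge $xw$ enters a small face $f$ of $\Gamma_w$, hence $F_x\subseteq f$. It does \emph{not} give $F_x=f$, because other connected components of $H_0=H[V(H)\setminus V_S]$ may be drawn inside $f$ in the planar embedding of $H'_0$. Deleting $V_S$ can disconnect the grid: components of $H\setminus V_S$ other than the big one are enclosed by $V_S$ in the grid, and with $|V_S|\leq k$ such an enclosed region can be as large as $\Theta(k^2)$ cells, so it can contain vertices of $V^*$ whose entire radius-$2$ balls avoid $V_S$ (i.e.\ vertices outside $W_S$). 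Nothing forbids such a component $B$ from being drawn inside $f$ (it is joined to the rest only through apex vertices of $X\setminus S$), and nothing forbids $x$ from being adjacent to a second vertex $w'\in V^*\setminus W_S$ deep inside $B$. In that situation $\partial F_x$ contains both $w$ and $w'$, contradicting your claim; your proof gives no way to rule this out. As a secondary issue, the small subgrids you use are not literally $3$-connected (corners have degree $2$), so the justification for rotation-uniqueness should go through ``subdivision of a $3$-connected graph'' as the paper does; and your final arithmetic (``$O(k)+|X|\leq 3|X|+27k$'') is asserted without fixing the constant hidden in $O(k)$, which can actually exceed $27k$ when $|X|$ is small.

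The paper's proof closes exactly the hole your argument leaves open. Instead of claiming one $V^*$-neighbor per apex globally, it proves $|N^{**}(x)\cap C|\leq 1$ for each \emph{connected component} $C$ of $H\setminus S'$ (the non-planarity of $\Gamma_u\cup\Gamma_v\cup P\cup\{x\}$ for two far-apart neighbors $u,v$ in the same component), and then controls the total over all components by forming the bipartite incidence graph $J$ between $Y\setminus S$ and the set ${\cal C}$ of components, observing $J$ is a minor of $G\setminus S$ hence planar, and applying Euler's formula $|E(J')|\leq 3|V(J')|-6$ together with $|{\cal C}|\leq 4k$. That component-level bookkeeping is precisely the missing ingredient in your argument.
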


\begin{proof}
Assume, for the sake of contradiction, that $\mvp(G)\leq k$, and $Y$ is not semi-universal.
It follows that there exists some $S\subseteq V(G)$, with $|S|\leq k$, such that $G\setminus S$ is planar, and $|Y\cap S| < (2/3)\cdot |Y|$.
Let
\[
S' = \{v\in V(H) : S\cap \mu(v)\neq \emptyset\}.
\]
Let also
\[
S''=S' \setminus (U_0\cup \ldots \cup U_\ell).
\]
We have $|S''|\leq |S'|\leq |S| \leq k$.
Thus, by the termination outcome (II) in the construction of ${\cal U}$, we have  that
\begin{align*}
|Y\setminus \cover(U_0\cup \ldots\cup U_\ell\cup S'')| &=|X\setminus \cover(U_0\cup \ldots\cup U_\ell\cup S'')|\\
 &> (2/3) \cdot |X\setminus \cover(U_0\cup \ldots \cup U_{\ell})|\\
 &= (2/3)\cdot |Y|.
\end{align*}
Let $U=N^{*}(Y\setminus S) \setminus (U_0\cup \ldots \cup U_{\ell})$.
We have
$Y\setminus S \subseteq \cover(U_0\cup\ldots\cup U_\ell \cup U)$.
Thus
\begin{align*}
|X\setminus \cover(U_0\cup\ldots\cup U_\ell \cup U)| &\leq |Y\setminus \cover(U_0\cup\ldots\cup U_\ell \cup U)|\\
 &\leq |Y\cap S|\\
 &< (2/3)\cdot |Y|.
\end{align*}
It follows by applying the condition of outcome (II) on $U$ that
\begin{align}
|N^{*}(Y \setminus S)| & \geq |N^{*}(Y\setminus S) \setminus (U_0\cup \ldots \cup U_{\ell})| = |U| \geq \frac{L}{48 c_{FHL} \log^{3/2} n}. \label{eq:univ1}
\end{align}

Let
\[
V^{**}=\{v\in V^* : d_H(v,S')\geq 3\}.
\]
For any $x\in X$, let $N^{**}(x) = N^*(x)\cap V^{**}$, and for any $X'\subseteq X$, let $N^{**}(X')=\bigcup_{x\in X'} N^{**}(x)$.
From \eqref{eq:univ1} we get
\begin{align}
|N^{**}(Y\setminus S)| &\geq |N^{*}(Y\setminus S)| - |V^{*}\setminus V^{**}| \notag \\
 &\geq |N^{*}(Y\setminus S)| - |S'| \cdot 25 \notag \\
 &\geq \frac{L}{48 c_{FHL} \log^{3/2} n} - 25 k \label{eq:univ1**}
\end{align}

Let ${\cal C}$ be the set of connected components of $H\setminus S'$.
We argue that for all $x\in X\setminus S$, and for all $C\in {\cal C}$, we have $|N^{**}(x)\cap C| \leq 1$.
Suppose, for the sake of contradiction, that there exists $C\in {\cal C}$, and $x\in X\setminus S$, such that $|N^{**}(x)\cap C| \geq 2$.
Let $u, v \in N^{**}(x)\cap C$ be distinct vertices.
Since $d_{H}(u, S') > 2$,
it follows that there exists some $(3\times 3)$-subgrid $\Gamma_u\subset C$, such that $u$ is the central vertex of $\Gamma_u$.
Similarly, there exists some $(3\times 3)$-subgrid $\Gamma_v\subset C$, such that $v$ is the central vertex of $\Gamma_v$.
By the construction of $V^*$, we have that $d_{H}(u,v)\geq 4$,
which implies that $V(\Gamma_u)\cap V(\Gamma_v)=\emptyset$.
Since $\Gamma_u$ and $\Gamma_v$ are vertex-disjoint subgraphs of the connected component $C$, it follows that there exists path $P$ in $C$ between some vertex $u'\in V(\partial H_u)$, and some vertex $v'\in V(\partial H_v)$, that intersects $\Gamma_u\cup \Gamma_v$ only on $\{u',v'\}$.
It follows that the graph $\Gamma = \Gamma_u\cup \Gamma_v\cup P\cup \{x\}$ is non-planar.
Since $\Gamma$ is a minor of $G\setminus S$, this contradicts the fact that $G\setminus S$ is planar.
We have thus established that for all $x\in X\setminus S$, and for all $C\in {\cal C}$, we have $|N^{**}(x)\cap C| \leq 1$.

Let $J$ be the bipartite graph with
\[
V(J)=(Y\setminus S) \cup {\cal C},
\]
and
\[
E(J)=\{\{x,C\} : \text{ there exist } x\in Y\setminus S, C\in {\cal C}, \text{ s.t.~}N^{**}(x)\cap V(C) \neq \emptyset\}.
\]
By contracting each $C\in {\cal C}$ into a single vertex in $G$, we obtain $J$.
Since for all $x\in Y\setminus S$, and for all $C\in {\cal C}$, we have $|N^{**}(x)\cap C| \leq 1$, it follows that
\[
|E(J)| = |E_{H'}(Y\setminus S, V^{**})|.
\]
Let $E'$ be the subset of $E(Y\setminus S,V(H))$ obtained by deleting, for all $C\in {\cal C}$, for all $y\in C$, all but at most one of the edges in $E(Y\setminus S, \{y\})$.
We have
\[
|E'|=|N^{**}(Y\setminus S)|.
\]
Let $J'=((Y\setminus S) \cup {\cal C}, E')$.
Since $J'$ is a subgraph of $J$, and $J$ is a minor of $G\setminus S$, it follows that $J'$ is planar.
By the formula for Euler's characteristic on planar graphs, we have
\[
|E(J')| \leq 3|V(J')|-6.
\]
Thus
\begin{align}
|N^{**}(Y\setminus S)| \leq 3 (|Y\setminus S|+|{\cal C}|) \leq 3(|X| + |{\cal C}|) \label{eq:univ2}
\end{align}
Since $H$ is a connected graph of maximum degree $4$, it follows that
\begin{align}
|{\cal C}| \leq 1+3|S'| \leq 4 |S'| \leq 4|S| \leq 4k. \label{eq:univ3}
\end{align}
By \eqref{eq:univ2} and \eqref{eq:univ3}, we get
\[
|N^{**}(Y\setminus S)| \leq 3 |X| + 12k,
\]
which contradicts \eqref{eq:univ1**}, since, by assumption, $L>(144|X|+1296k)c_{FHL}\log^{3/2}n$.
We have thus established that $Y$ is semi-universal, which concludes the proof.
\end{proof}

It remains to show how Step 1 of the algorithm can be performed in polynomial time.

\begin{lemma}\label{lem:semi-universal-induction}
Let $U_0,\ldots,U_i \subset V(H)$ be disjoint subsets of $V(H)$.
There exists a polynomial-time algorithm which terminates with one of the two following outcomes:
\begin{description}
\item{(I)}
Computes some $U_{i+1}\subseteq V(H) \setminus (U_0\cup \ldots\cup U_i)$, such that
$|U_{i+1}| \leq \frac{L}{48 \log n}$,
and
\[
|X\setminus \cover(U_0\cup \ldots \cup U_{i+1})| \leq (3/4) \cdot |X\setminus \cover(U_0\cup \ldots \cup U_{i})|.
\]

\item{(II)}
Correctly decides that for all $U\subseteq V(H) \setminus (U_0\cup \ldots\cup U_i)$,
such that
$|U| \leq \frac{L}{48 c_{FHL} \log^{3/2} n}$,
we have
\[
|X\setminus \cover(U_0\cup \ldots \cup U_{i} \cup U)| > (3/2)\cdot |X\setminus \cover(U_0\cup \ldots \cup U_{i})|.
\]
\end{description}
\end{lemma}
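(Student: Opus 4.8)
The plan is to recast both outcomes as an approximate balanced vertex separator problem on a bipartite incidence graph and invoke Theorem~\ref{thm:vertex-separators-approx}.

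Write $U_{\le i}=U_0\cup\cdots\cup U_i$ and let $B=X\setminus\cover(U_{\le i})$; if $B=\emptyset$ outcome (I) is vacuous, so assume $B\neq\emptyset$. For $x\in B$ set $R(x)=N^*(x)\setminus U_{\le i}\neq\emptyset$, and let $V'=\bigcup_{x\in B}R(x)=N^*(X)\setminus U_{\le i}\subseteq V(H)\setminus U_{\le i}$. For any $U\subseteq V(H)\setminus U_{\le i}$ and $x\in B$ one has $x\in\cover(U_{\le i}\cup U)$ iff $R(x)\subseteq U$, and a vertex outside $V'$ never helps, so we may assume $U\subseteq V'$; hence $\lvert X\setminus\cover(U_{\le i}\cup U)\rvert=\lvert B\rvert-\lvert\{x\in B:R(x)\subseteq U\}\rvert$. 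So (I) asks for $U_{i+1}\subseteq V'$ with $\lvert U_{i+1}\rvert\le L/(48\log n)$ that ``covers'' (i.e.\ contains $R(x)$ for) at least $\lvert B\rvert/4$ of the $x\in B$, and (II) asks us to certify that no $U\subseteq V'$ with $\lvert U\rvert\le L/(48\,c_{FHL}\log^{3/2}n)$ covers at least $\lvert B\rvert/3$ of $B$.

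Form the bipartite graph $J$ with parts $B,V'$ and edges $\{x,v\}$ for $v\in R(x)$, add a vertex $z$ adjacent to all of $V'$, and (by the standard device of replacing each vertex of $B\cup\{z\}$ by sufficiently many twin copies) make the vertices of $B$ and $z$ uncuttable, with balance weight $1$ on each original $B$-vertex and $0$ elsewhere; call this $J'$. For $S\subseteq V'$, in $J'\setminus S$ all the $z$-copies lie in one component $Z$, and every $B$-vertex outside $Z$ is isolated, since its component cannot meet $V'$ (every vertex of $V'$ is adjacent to $z$). Hence a $\gamma$-balanced vertex separator $S$ of $J'$ (which, by uncuttability, we may take inside $V'$) covers, as the set $U=S$, the $\ge(1-\gamma)\lvert B\rvert$ vertices of $B$ lying outside $Z$; conversely, if $U\subseteq V'$ covers $B''\subseteq B$ with $\lvert B''\rvert\ge\beta\lvert B\rvert$, then $N_J(B'')\subseteq U$ is a $(1-\beta)$-balanced vertex separator of $J'$ of size $\le\lvert U\rvert$. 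Now run the algorithm of Theorem~\ref{thm:vertex-separators-approx} on $J'$ to get a $3/4$-balanced vertex separator $S\subseteq V'$ of size at most $c_{FHL}\log^{1/2}n$ times the minimum size of a $1/2$-balanced vertex separator of $J'$. If $\lvert S\rvert\le L/(48\log n)$, output $U_{i+1}=S$, which covers $\ge\lvert B\rvert/4$ of $B$ and realizes (I). Otherwise $\lvert S\rvert>L/(48\log n)=c_{FHL}\log^{1/2}n\cdot\bigl(L/(48\,c_{FHL}\log^{3/2}n)\bigr)$, so every $1/2$-balanced vertex separator of $J'$ exceeds $L/(48\,c_{FHL}\log^{3/2}n)$, from which (II) is supposed to follow. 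Everything is polynomial time.

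The main obstacle is precisely this last implication. Theorem~\ref{thm:vertex-separators-approx} is one-sided and benchmarks against $1/2$-balanced separators, whereas a set $U$ covering a $1/3$-fraction of $B$ only witnesses a $2/3$-balanced separator of $J'$ of the same size, and in general the minimum $1/2$-balanced separator can be arbitrarily larger than the minimum $2/3$-balanced one. Bridging the two requires exploiting additional structure: one applies the balanced-separator routine a bounded number of times to the residual $z$-side, using that the covering instances arising here are ``spread out'' in the grid $H$ (the vertices of $V^*$ are pairwise at distance $\ge4$, and, by the planarity argument behind Lemma~\ref{lem:Y_semiuniversal}, the apex neighbourhoods cannot concentrate), so that the residual instances again admit cheap balanced separators; the constant-factor loss is absorbed by the slack between the fractions $1/4$ and $1/3$ and between the two size thresholds. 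One also has to discharge the routine technicalities around the uncuttable gadgets, so that the separator is genuinely confined to $V'$, and check the trivial size bounds (e.g.\ $\lvert V'\rvert>0$, which holds since $B\neq\emptyset$).
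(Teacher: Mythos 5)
Your construction is essentially the same as the paper's: you build a graph whose components after removing a subset of the non-apex side encode the covered/uncovered split, make the $X$-side uncuttable, and run the FHL separator algorithm. The paper's graph $K$ places a clique directly on $V_2=V(H)\setminus U_{\le i}$ instead of adding your hub vertex $z$, and uses $c(v)=\infty$ on $V_1$ instead of twin copies, but these are cosmetic variations of the same idea.

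Where you go wrong is in the diagnosis of the ``$1/2$ vs.\ $2/3$'' mismatch and in the proposed fix. You are right that a set $U$ covering a $1/3$-fraction of $B$ is only a $2/3$-balanced separator, and that this is incomparable to the minimum $1/2$-balanced separator. But the resolution is not ``exploiting additional structure'' of the grid; no geometry of $H$ or planarity of $G$ is used in this lemma at all. The issue is that the paper's statement of Theorem~\ref{thm:vertex-separators-approx} appears to misquote Feige--Hajiaghayi--Lee: the FHL algorithm outputs a $3/4$-balanced vertex separator whose weight is $O(\sqrt{\log n})$ times the minimum weight of a \emph{$2/3$}-balanced separator (not $1/2$-balanced). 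With the $2/3$ benchmark the argument closes immediately: a hypothetical $U^*$ of size $\le L/(48\,c_{FHL}\log^{3/2}n)$ covering $\ge|B|/3$ is a $2/3$-balanced separator of that size, so FHL must return $S$ with $|S|\le c_{FHL}\log^{1/2}n\cdot|U^*|\le L/(48\log n)$, contradicting Case~2; and whenever $|S|\le L/(48\log n)$, the $3/4$-balance gives coverage of $\ge|B|/4$, realizing~(I). This is exactly what the paper's proof does (modulo what are evidently typos: the ``$3/2$'' in the lemma statement and in its proof should read ``$2/3$''). Your extra paragraph about $V^*$ being $4$-separated and apex neighbourhoods not concentrating is a red herring --- those facts are used in the proof of Lemma~\ref{lem:Y_semiuniversal}, not here --- and as written your proof does not actually establish~(II); it only flags the gap and gestures at a fix that would not work as described.
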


\begin{proof}
Let $V_1 = X\setminus A(U_0\cup \ldots \cup U_i)$,
and
$V_2 = V(H) \setminus (U_0\cup \ldots \cup U_{\ell})$.
Let $K$ be the graph with $V(K)=V_1 \cup V_2$, and
\[
E(K) = E_G(V_1,V_2\cap N^*(V_1)) \cup {V_2 \choose 2}.
\]
That is, $K$ contains all edges in $G$ between $V_1$ and $V_2\cap N^*(V_1)$, and a clique on $V_2$.
Let $W=V_1$.
For any $v\in V(K)$, let
\[
c(v) = \left\{\begin{array}{ll}
\infty & \text{ if } v\in V_1\\
1 & \text{ if } v\in V_2
\end{array}\right.
\]
We remark that it is sufficient for this proof to set $c(v)=\Theta(n^2)$, for all $v\in V_1$; however we use the convention $c(v)=\infty$ in order to simplify the exposition.
Using Theorem \ref{thm:vertex-separators-approx} we compute some $3/4$-balanced vertex separator (w.r.t.~$W$) $S$ of $G$.
Since $c(v)=\infty$ for all $v\in V_1$, we may assume w.l.o.g.~that $S\subseteq V_2$, and thus $c(S)=|S|$.
We consider the following two cases:
\begin{description}
\item{Case 1:}
Suppose that $c(S)\leq \frac{L}{48 \log n}$.
Since $K$ contains a clique on $V_2$, it follows that there exists some connected component $C$ of $K\setminus S$ that contains all the vertices in $V_2\setminus S$.
Let $X'$ be the set of vertices in $V_1$ that are not in $C$.
Since $S$ is a $3/4$-balanced vertex separator w.r.t.~$W$, it follows that $|(C\setminus S)\cap W| \leq (3/4)\cdot |W|$.
Thus, $|X'|\geq |W|/4$.
Since $X'\cap C=\emptyset$, it follows that $X'\subseteq \cover(S)$.
Thus,
\begin{align*}
|X\setminus \cover(U_0\cup \ldots \cup U_i \cup S| &\leq |X\setminus \cover(U_0\cup\ldots\cup U_i)| - |X'|\\
 &\leq |X\setminus \cover(U_0\cup\ldots\cup U_i)| - |V_1|/4\\
 &= |X\setminus \cover(U_0\cup\ldots\cup U_i)| - |\cover(U_0\cup \ldots \cup U_i)|/4\\
 &= (3/4) \cdot |X\setminus \cover(U_0\cup\ldots\cup U_i)|
\end{align*}
We may thus set $U_{i+1}=S$, and terminate with outcome (I).

\item{Case 2:}
Suppose that $c(S)>  \frac{L}{48 \log n}$.
We claim that the conclusion in outcome (II) holds; namely, for all $U\subseteq V(H)\setminus (U_0 \cup \ldots \cup U_\ell)$, such that $|U|\leq \frac{L}{48 c_{FHL} \log^{3/2}n}$, we have $|X\setminus \cover(U_0\cup \ldots \cup U_i\cup U)| > (3/2)\cdot |X\setminus \cover(U_0 \cup \ldots \cup U_i)|$.
Suppose, for the sake of contradiction, that there exists some $U^*\subseteq V(H) \setminus (U_0\cup \ldots \cup U_i)$, such that
\[
|U^*| \leq \frac{L}{48 c_{FHL} \log^{3/2}n},
\]
and
\[
|X\setminus \cover(U_0\cup \ldots \cup U_i \cup U^*)| \leq (3/2)\cdot |X\setminus \cover(U_0\cup \ldots \cup U_i)|.
\]
By removing $U^*$ from $K$, every $x\in \cover(U_0\cup \ldots \cup U_i \cup U^*) \setminus \cover(U_0\cup \ldots \cup U_i)$ becomes isolated (since, by definition, $N^*(x)\subseteq \cover(U_0\cup \ldots \cup U_i\cup U^*$)).
Thus, the maximum size of any connected component in $K\setminus U^*$ is at most
$|X\setminus \cover(U_0\cup \ldots \cup U_i \cup U^*)| \leq (3/2) \cdot |X \setminus \cover(U_0\cup \ldots \cup U_i)| = |W|$.
Thus, $U_i$ is a $3/2$-balanced vertex separator (w.r.t.~$K$) of $K$.
Since $|U^*|\leq \frac{L}{48 c_{FHL} \log^{3/2}n}$, it follows that the algorithm from Theorem \ref{thm:vertex-separators-approx} outputs some $S$ with $c(S)\leq \frac{L}{48 \log n}$, which contradicts the assumption.
Thus, we have established that the conclusion in outcome (II) holds.
\end{description}
In summary, when Case 1 occurs, we terminate with outcome (I), and when Case 2 occurs, we terminate with outcome (II).
This concludes the proof.
\end{proof}

The following summarizes the main result of this section.

\begin{lemma}[Computing a semi-universal set]\label{lem:semi-universal}
Let $G$ be a graph, and let $k\in \mathbb{N}$.
Let $X\subset V(G)$,
and let $\mu:V(H)\to 2^{V(G)\setminus X}$ be a contraction of $G\setminus X$, where $H$ is the $(r\times r)$-partially triangulated grid, for some $r>0$.
Let $\mu':V(H')\to 2^{V(G)}$ be the contraction of $G$ induced by $\mu$, where $V(H')=V(H)\cup X$, and $\mu'$ is the identity on $X$.
Let
\[
L = |N_{H'}(X)| = |\{v\in V(H):d_H(v,\partial H)\geq 3 \text{ and } N_G(X)\cap \mu'(v)\neq \emptyset\}|.
\]
Suppose that
\[
L>(144|X|+1296k)c_{FHL}\log^{3/2}n.
\]
Then there exists a polynomial-time algorithm which given $G$, $X$, $H$, and $\mu$, computes some non-empty $Y\subseteq X$, satisfying the following property:
If $\mvp(G)\leq k$, then $Y$ is semi-universal (w.r.t.~$G$).
\end{lemma}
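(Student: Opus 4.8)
The plan is to simply run the two-step algorithm stated just before the lemma and verify its guarantees by invoking the three auxiliary results already established in this section. First I would bound the number of iterations of Step~1: each iteration calls the polynomial-time procedure of Lemma~\ref{lem:semi-universal-induction}, which either returns a new set $U_{i+1}$ with $|X\setminus\cover(U_0\cup\cdots\cup U_{i+1})|\le(3/4)\,|X\setminus\cover(U_0\cup\cdots\cup U_i)|$ (outcome (I)), or certifies outcome (II) and terminates the sequence. Since $|X\setminus\cover(U_0)|=|X|\le n$ is a nonnegative integer that contracts by a constant factor whenever outcome (I) occurs, there are at most $\log_{4/3}n=O(\log n)$ iterations, so $\ell=O(\log n)$ and the whole procedure, together with the trivial Step~2 that outputs $Y=X\setminus\cover(U_0\cup\cdots\cup U_\ell)$, runs in polynomial time.

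Next I would read off the two required properties of $Y$ directly from the lemmas. Non-emptiness is Lemma~\ref{lem:Y_nonempty}: because $\ell<3\log n$ and each $|U_i|\le L/(48\log n)$, the union $U_0\cup\cdots\cup U_\ell$ has size strictly less than $L/16\le L_{i^*,j^*}=|N^*(X)|$, so some vertex of $N^*(X)$ survives, forcing some $x\in X$ to lie outside $\cover(U_0\cup\cdots\cup U_\ell)$, i.e.\ $Y\neq\emptyset$. For the conditional guarantee, assume $\mvp(G)\le k$; then Lemma~\ref{lem:Y_semiuniversal} states precisely that $Y$ is semi-universal with respect to $G$. Combining the running-time bound with these two facts finishes the proof, so the lemma itself is pure assembly.

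The hard part is therefore not in this wrapper but in the two lemmas it relies on, which are already proven. If I had to reprove them from scratch, the main obstacle would be the argument inside Lemma~\ref{lem:Y_semiuniversal}: assuming $Y$ is not semi-universal, one fixes an optimal planarizing set $S$ and must contradict the lower bound $|N^*(Y\setminus S)|\ge L/(48c_{FHL}\log^{3/2}n)$ coming from termination outcome (II). The two delicate points are (i) showing that for every $x\in X\setminus S$ and every component $C$ of $H\setminus S'$ one has $|N^{**}(x)\cap C|\le 1$ --- this uses the spacing of $4$ built into $V^*$ together with $3$-connectivity of small subgrids and planarity of $G\setminus S$ to forbid two ``far apart'' neighbors of $x$ inside a single component --- and (ii) converting the bipartite incidence structure between $Y\setminus S$ and the components of $H\setminus S'$ into a planar graph and applying Euler's formula to get $|N^{**}(Y\setminus S)|\le 3|X|+12k$, which, after accounting for the $|V^*\setminus V^{**}|$ and $|{\cal C}|$ loss terms, contradicts the hypothesis $L>(144|X|+1296k)c_{FHL}\log^{3/2}n$. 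Since all of this is available as a black box, the proof of Lemma~\ref{lem:semi-universal} is exactly the three-line assembly described above.
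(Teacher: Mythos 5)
Your proposal matches the paper's proof exactly: it is indeed a three-line assembly that cites Lemma~\ref{lem:semi-universal-induction} for polynomial running time, Lemma~\ref{lem:Y_nonempty} for $Y\neq\emptyset$, and Lemma~\ref{lem:Y_semiuniversal} for semi-universality. The extra discussion of how one would reprove the auxiliary lemmas is accurate but not needed, since those results are already established.
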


\begin{proof}
The bound on the running time follows by the fact that the algorithm in Lemma \ref{lem:semi-universal-induction} runs in polynomial time.
The fact that $Y$ is non-empty follows by Lemma \ref{lem:Y_nonempty}, and the fact that $Y$ is semi-universal follows by Lemma \ref{lem:Y_semiuniversal}.
\end{proof}


\section{Computing irrelevant vertex sets}\label{sec:irrelevant}


In this Section we present our algorithm for computing irrelevant vertices. 

Let $G$ be some $n$-vertex graph, and let $k\in \mathbb{N}$.
Let $X\subseteq V(G)$,
with $|X|\leq c k \log n$, for some universal constant $c>0$.
Let $H$ be some $(r\times r)$-partially triangulated grid, where $r\geq c' k  \sqrt{\approxfactor} \log^{5/2}n$,
for some universal constant $c'>0$ to be determined later.
Suppose that $H$ is a contraction of $G\setminus X$, with contraction mapping $\mu:V(H)\to 2^{V(G\setminus X)}$.
Let $\mu':V(H')\to 2^{V(G)}$ be the contraction of $G$ induced by $\mu$, where $V(H')=V(H)\cup X$, and $\mu'$ is the identity on $X$; that is for all $v\in V(H')$, we have
\[
\mu'(v)=\left\{\begin{array}{ll}
  \mu(v)  & \text{ if } v\notin X\\
  \{x\}   & \text{ if } v\in X
\end{array}\right.
\]
Let
\[
L = |N_{H'}(X)| = |\{v\in V(H):d_H(v,\partial H)\geq 3 \text{ and } N_G(X)\cap \mu'(v)\neq \emptyset\}|.
\]
Suppose that
\[
L \leq (144|X|+1296k)c_{FHL}\log^{3/2}n,
\]
where $\cFHL>0$ is the universal constant in Theorem \ref{thm:vertex-separators-approx}.

For each $i,j\in \{1,\ldots,r\}$, let $v_{i,j}$ be the vertex in the $i$-th row and $j$-th column of $H$.
For each $i,j,\ell\in \{1,\ldots,r\}$, let
\[
S(i,j,\ell) = \bigcup_{i'=\max\{1,i-\ell-1\}}^{\min\{r,i+\ell\}} \bigcup_{j'=\max\{1,j-\ell-1\}}^{\min\{r,j+\ell\}} \{v_{i',j'}\}.
\]
Note that $H[S(i,j,\ell)]$ is a $(2\ell\times 2\ell)$-partially triangulated grid.
We define
\[
\width(S(i,j,\ell)) = 2\ell.
\]
For each $t\in \{0,\ldots,\log r\}$ we define a partition ${\cal S}_{t,1}$ and ${\cal S}_{t,2}$ of $V(H)$, with
\[
{\cal S}_{t,1} = \bigcup_{i=0}^{\lfloor r/2^{t+1}\rfloor} \bigcup_{j=0}^{\lfloor r/2^{t+1} \rfloor} \{S(i 2^{t+1},j 2^{t+1},2^{t})\},
\]
and
\[
{\cal S}_{t,2} = \bigcup_{i=0}^{\lfloor r/2^{t+1}\rfloor} \bigcup_{j=0}^{\lfloor r/2^{t+1} \rfloor} \{S(i 2^{t+1} + 2^t,j 2^{t+1} + 2^t,2^{t})\}.
\]
Let also
\[
{\cal S} = \bigcup_{t=0}^{\log r} \bigcup_{i=1}^2 {\cal S}_{t,i}.
\]

For each $U\subseteq V(H)$, let $\gamma(U)$ be the number of vertices $v\in U$ such that there exists at least some neighbor of $X$ in $\mu(v)$; that is
\[
\gamma(U) = |N_H'(X) \cap U|.
\]

\begin{definition}[Dominating weight function]
Let $w:{\cal S}\to \mathbb{N}$.
We say that $w$ is \emph{dominating} if for all $S\in {\cal S}$, we have
\[
w(S) \geq \mvp(G[\mu(S)]) + \gamma(S).
\]
We say that some $S\in {\cal S}$ is \emph{active} (w.r.t.~$w$) if $w(S)\geq \width(S) / 2000$.
We say that some $v\in V(H)$ is \emph{forgotten} (w.r.t.~$w$) if $d_H(v, \partial H)>123 k$,
and
$v$ is not contained it any active $S\in {\cal S}$; that is
$v\notin \bigcup_{S\in {\cal S}:S \text{ is active}} S$.
We also say that some $U\subseteq V(H)$ is forgotten, if every $v\in U$ is forgotten.
Note that if $U$ is forgotten, then $N_{H'}(X)\cap U=\emptyset$, and thus $N_G(X)\cap \mu(U)=\emptyset$ (since otherwise, if $v\in N_{H'}(X)\cap U$, then $\{v\}$ is active, and thus $U$ cannot be forgotten).
\end{definition}

For some $v\in V(G)$, we say that $v$ is \emph{irrelevant} if for any $X\subset V(G)$, with $|X|=\mvp(G)$, such that $G\setminus X$ is planar, we have $v\notin X$.
Similarly, we say that some $U\subset V(G)$ is \emph{irrelevant} if for all $v\in V(G)$, $v$ is irrelevant.

In the next lemma whose proof is technical and lengthly, we show which vertices are irrelevant. 

\begin{lemma}[Forgotten vertices are irrelevant]\label{lem:forgotten_irrelevant}
Let $w:{\cal S}\to \mathbb{N}$ be \emph{dominating},
and
let $v\in V(H)$ be forgotten (w.r.t.~$w$).
If $\mvp(G)\leq k$, then $\mu(v)$ is irrelevant.
\end{lemma}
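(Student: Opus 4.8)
The plan is to show that a forgotten vertex $v$ can be "rerouted around" in any optimal solution, using the classical irrelevant-vertex argument adapted to the approximation setting. The key point is that $v$ sits inside a large flat (planar) piece of $G$ that is far from everything "interesting" — far from the boundary of the grid, far from $X$, and far from any part of the graph that contributes significantly to the planarization cost. Concretely, I would first establish that around $v$ there is a large partially-triangulated subgrid $\Gamma \subseteq H$, centered at $v$, of width roughly $\Theta(k)$ (using that $d_H(v,\partial H) > 123k$), such that $\mu(\Gamma)$ together with the relevant part of $G$ is planar after removing a bounded number of vertices.

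**First I would** quantify how much "cost" and how many "$X$-neighbors" can appear near $v$. Since $v$ is forgotten, it lies in no active $S \in {\cal S}$, and the quad-tree-like collection ${\cal S}$ contains, for every scale $2^t$ up to $\Theta(k)$, a member containing $v$ (or with $v$ within distance $2^t$ of its center); each such $S$ is inactive, so $w(S) < \width(S)/2000$. Because $w$ is dominating, this gives $\mvp(G[\mu(S)]) + \gamma(S) < \width(S)/2000$ for these $S$, and by Lemma~\ref{lem:additivity} (applied to a disjoint family of such subgrids at a fixed scale, say $\Theta(k)$) we conclude that the total planarization cost charged to a $\Theta(k)$-radius ball $B$ around $v$ is a tiny fraction of $k$, and similarly only a tiny fraction of $k$ vertices of $B$ are neighbors of $X$ — in fact, one can arrange the scale so that $B$ contains \emph{no} neighbor of $X$ at all (the remark after the definition notes that a forgotten $U$ satisfies $N_G(X) \cap \mu(U) = \emptyset$). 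So the ball $B$ around $v$ in $\mu(H)$ is attached to the rest of $G$ only through its boundary (plus the partial-triangulation diagonals, which stay within faces), and removing at most $o(k)$ vertices makes $G[\mu(B)]$ planar.

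**Then the core argument** is the standard one: let $S^* \subseteq V(G)$ be any optimal planarizing set, $|S^*| = \mvp(G) \le k$. Inside $B$, the set $S^* \cap \mu(B)$ has size at most $k$, but $B$ is a $\Theta(k)$-wide (partially triangulated) grid, so there is still a concentric partially-triangulated subgrid $\Gamma' \subseteq B$ of width $\Theta(1)\cdot$(something bigger than the number of "defects") — more carefully, using Lemma~\ref{lem:grids_persistency} (Eppstein), $B \setminus (S^* \cup \text{bad vertices})$ still contains a grid minor of width $\omega(1)$ surrounding $v$, so $\mu(v)$ is enclosed by a large flat planar subgrid avoiding $S^*$. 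Then in the planar graph $G \setminus S^*$, the vertex $\mu(v)$ is drawn inside a disk whose boundary is this flat cycle, with everything reaching $\mu(v)$ passing through the cycle; by the classical irrelevant-vertex / flatness lemma (the same reasoning used in the fixed-parameter algorithms, cf.\ the discussion in Section~\ref{sec:main} and the arguments in Lemma~\ref{lem:restriction_cost} where $3$-connected grid subdivisions force unique embeddings), $\mu(v)$ cannot be "used" by any optimal solution: if $\mu(v) \in S^*$, then $(S^* \setminus \{\mu(v)\}) \cup \{\text{some single vertex on the surrounding flat cycle}\}$ is also planarizing and no larger, contradicting nothing directly — so instead one argues that $G \setminus (S^* \setminus \{\mu(v)\})$ is \emph{already} planar, because the flat subgrid lets us re-embed $\mu(v)$ in the face where its neighbors lie. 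Hence $\mu(v) \notin S^*$, i.e.\ $\mu(v)$ is irrelevant.

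**The main obstacle** I expect is the bookkeeping in the second step: making sure that the inactivity of the dyadic cells in ${\cal S}$ around $v$ really does bound \emph{both} the number of optimal-solution vertices \emph{and} the structural defects (partial-triangulation diagonals, $X$-neighbors, contracted bags of $\mu$) inside the ball $B$ simultaneously, and that after deleting all of these the remaining grid minor surrounding $v$ is still large enough (width $\ge$ a fixed constant, say $4$ or $5$) for the flatness argument to kick in. This requires choosing the radius $123k$ and the $\width/2000$ threshold so that $\gamma + \mvp$-charge over the relevant cells, multiplied by the $O(\log r)$ scales and the disjointness factor from Lemma~\ref{lem:additivity}, is comfortably below the grid width; the constants $123$ and $2000$ in the definition are presumably tuned exactly for this. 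The other delicate point is translating between the grid $H$ (where distances and flatness are clean) and the graph $G$ via the contraction mapping $\mu$ — one must check that a flat cycle in $H$ surrounding $v$ pulls back to a subgraph of $G$ that genuinely separates $\mu(v)$ from the rest of $G \setminus S^*$, which uses that $\mu$ is a contraction (so connectivity is preserved) and that $N_G(X) \cap \mu(B) = \emptyset$ so $X$ contributes no "shortcut" edges across the cycle.
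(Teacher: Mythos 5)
There is a genuine gap in your proposal, and it is precisely the central difficulty that this lemma is designed to overcome.

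You argue that around $v$ there is a $\Theta(k)$-wide ball $B$ in which ``the total planarization cost \ldots is a tiny fraction of $k$,'' and from there you conclude via Lemma~\ref{lem:grids_persistency} that $B\setminus S^*$ still contains a flat cycle surrounding $v$, so that the classical irrelevant-vertex argument applies. The problem is that $w$ being dominating only bounds $\mvp(G[\mu(S)])$, the \emph{intrinsic} planarization number of the local piece; it places no bound whatsoever on $|X_\OPT\cap \mu(B)|$, the number of vertices that the \emph{global} optimal solution happens to place inside $B$. The global optimum may well dump all $k$ of its vertices in a small neighbourhood of $v$ even though the local subgraph could be planarized for much cheaper. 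Since $B$ has width only $\Theta(k)$ (not $\Theta(k^2)$ as in the FPT setting the paper explicitly contrasts with in Section~\ref{sec:main}, item~7), those $k$ vertices can destroy \emph{every} cycle separating $v$ from $\partial B$, leaving no flat cycle to reroute around. Note also that Lemma~\ref{lem:grids_persistency} produces a large subgrid \emph{somewhere} in $B\setminus S^*$, not one concentric with $v$, so it cannot be used to manufacture the surrounding cycle you need.

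The missing idea is the dense/sparse dichotomy together with a surgery argument that turns the bad case into a contradiction. The paper's proof assumes $u\in\mu(v)\cap X_\OPT$ and then picks the \emph{largest} scale $\ell$ at which a square $S(i,j,\ell)\ni v$ is dense (contains $\geq \width/49$ vertices of $X_\OPT$); the father square $S^\father$ of double width is then sparse. In the annulus between them, $X_\OPT$ has so few vertices that three consecutive concentric cycles $C^H_{t-1},C^H_t,C^H_{t+1}$ avoid $X_\OPT$ entirely (and also avoid the small set $Y$ supplied by $w$-domination). One then cuts along $C_t^G$ in the embedding of $G\setminus X_\OPT$ and replaces the interior by the cheap planar embedding of $G[\mu(S^\father)]\setminus Y_2$ guaranteed by $w$ being dominating. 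Since $S(i,j,\ell)$ is dense, this surgery removes at least $2\ell/49$ optimum vertices while inserting fewer than $8\ell/200$, producing a strictly smaller planarizing set and contradicting optimality of $X_\OPT$. Your proposal never makes this trade, and without it the claim that a flat cycle survives around $v$ is simply unjustified.
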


\begin{proof}
Suppose for the sake of contradiction that the assertion is false.
It follows that there exists some $v\in V(H)\setminus \left(\bigcup_{S\in {\cal S}'} S\right)$, with $d_H(v, \partial H) > 123 k$, such that $\mu(v)$ is not irrelevant.
Thus there exists $u\in \mu(v)$ that is not irrelevant.
It follows that there exists some $X_{\OPT}\subseteq V(G)$, with $|X_{\OPT}|=k$, such that $G\setminus X_\OPT$ is planar, and $u\in X_\OPT$.

We say that some $S\in {\cal S}$ is \emph{dense} if
\[
|X_{\OPT} \cap \mu(S)| \geq \width(S) / 49,
\]
and otherwise we say that it is \emph{sparse}.
Trivially, $\{v\}$ is a dense square that contains $v$.
Pick $i,j,\ell\in \mathbb{N}$ such that $S(i,j,\ell)\in {\cal S}$ is dense, with $v\in S(i,j,\ell)$, and such that $\ell$ is maximized.
Since $|X_\OPT|=k$, it follows that $2\ell = \width(S) \leq 49 k$, thus
$\ell\leq 49 k/2$.
Since $d_H(v, \partial H) > 123 k$, it follows by the triangle inequality that
\begin{align*}
d_H(S(i,j,\ell), \partial H) &= d_H(\partial H[S(i,j,\ell)], \partial H) \\
 &\geq d_H(v, \partial H) - d_H(v, \partial H[S(i,j,\ell)]) \\
 &> 123 k - \ell \\
 &\geq 4\ell.
\end{align*}
It follows that there exist $S(i',j',2\ell)\in {\cal S}$, for some $i',j'$, such that
\[
S(i,j,\ell) \subset S^{\father},
\]
and
\[
d_H(S(i,j,\ell), \partial H[S^\father]) \geq \ell,
\]
where
\[
S^\father = S_1 \cup S_2 \cup S_3 \cup S_4,
\]
\[
S_1 = S(i',j',2\ell), ~~ S_2 = S(i'+4\ell,j',2\ell), ~~ S_3 = S(i',j'+4\ell,2\ell), ~~ S_4 = S(i'+4\ell,j'+4\ell,2\ell).
\]
By the construction of ${\cal S}$, we have $S^\father\in {\cal S}$.
Let
\[
S^+ = S^\father \setminus S(i,j,\ell).
\]
For each $\ell'\in \{1,\ldots,\ell\}$, let
$C^H_{\ell'}=\partial H[S(i,j,\ell+\ell')]$;
note that $C^H_{\ell'}$ is a cycle in $H$, and the images of the cycles $C^H_1,\ldots,C^H_{\ell}$ are nested in the standard drawing of $H$ into the plane, with $C^H_1$ being the innermost cycle and $C^H_\ell$ being the outermost cycle, and with $C_1^H,\ldots,C_{\ell}^H \subset S^+$ (see Figure \ref{fig:S_father}).

\begin{figure}
\begin{center}
\includegraphics{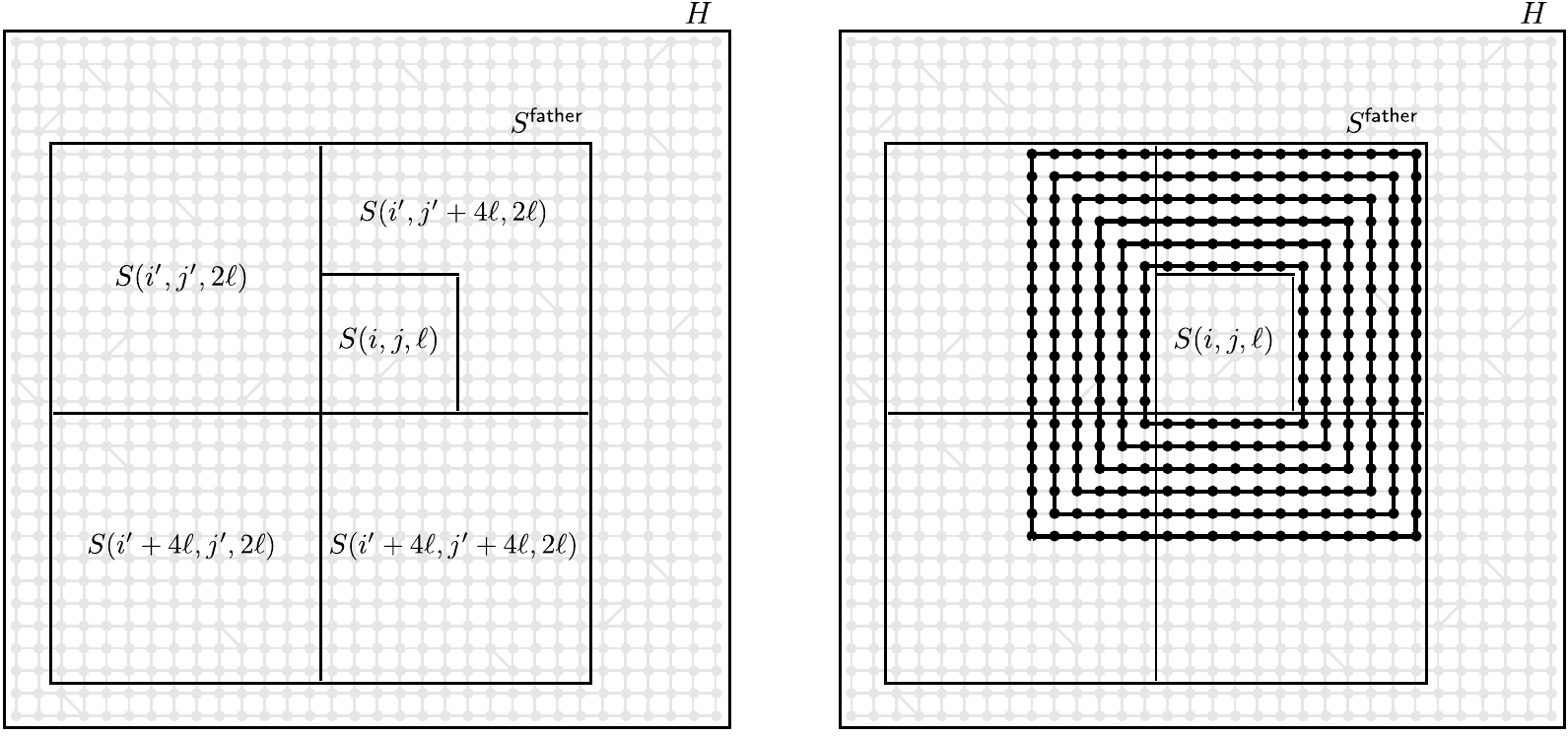}
\caption{The subsets $S(i,j,\ell)$, $S(i',j',2\ell)$, $S(i'+4\ell,j',2\ell)$, $S(i',j'+4\ell,2\ell)$, $S(i'+4\ell,j'+4\ell,2\ell)$, and $S^\father$ of $V(H)$ (left), and the cycles $C_1^H,\ldots,C_{\ell}^H$ depicted in bold (right).\label{fig:S_father}}
\end{center}
\end{figure}

Let
\[
Y_1 = N_{H'}(X) \cap \mu(S^\father).
\]
By the choice of $v$, we have that $S^\father$ is not active.
Since $w$ is dominating, and $S^\father$ is not active, it follows that there exists some $Y_2 \subset \mu(S^\father)$, such that $G[\mu(S^\father)] \setminus Y_2$ is planar,
with
\begin{align}
|Y_1\cup Y_2| \leq w(S^\father) \leq \width(S^\father) / 200 = 8 \ell / 200. \label{eq:Y_father}
\end{align}
Let $Y=Y_1\cup Y_2$, and $Z=G[\mu(S^\father)]\setminus Y_2$.

Let $X^\father_\OPT=X_\OPT \cap \mu(S^\father)$.
By the choice of $\ell$, we have that $S^\father$ is sparse, and thus
\begin{align}
|X_\OPT^\father| &= |X_{\OPT} \cap \mu(S^\father)| < \width(S^\father)/49 = 8\ell / 49. \label{eq:father_X_OPT_plus}
\end{align}
By \eqref{eq:Y_father} and \eqref{eq:father_X_OPT_plus}
we get
\begin{align*}
|Y\cup X_\OPT^\father| &\leq |Y| + |X_\OPT^\father| < \ell / 3.
\end{align*}
Thus there exists $t\in \{2,\ldots,\ell-1\}$, such that $(V(C_{t-1}^H) \cup V(C_t^H) \cup V(C_{t+1}^H)) \cap (Y\cup X_\OPT^\father)=\emptyset$.

Let $V^\myin$ be the set of vertices $v\in V(G)\setminus X$, such that any path betweeh $v$ and $\mu(\partial H)$ in $G\setminus X$ intersects $C_t^G$.
Let $V^\myout = (V(G) \setminus X) \setminus V^\myin$.
Thus, $V(G) = V^\myin \cup V^\myout \cup X$ is a partition of $V(G)$.
We define a partition $X_\OPT=X_\OPT^\myin \cup X_\OPT^\myout \cup X_\OPT^\apex$, where
$X_\OPT^\myin = X_\OPT \cap V^\myin$,
$X_\OPT^\myout = X_\OPT \cap V^\myout$, and
$X_\OPT^\apex = X_\OPT \cap X$.

Fix some embedding $\phi$ of $G\setminus X_\OPT$ into $\mathbb{S}^2$ (i.e.~the 2-sphere), and some embedding $\psi$ of $Z$ into $\mathbb{S}^2$.
For each $i\in \{1,\ldots,\ell\}$, there exists some cycle $C_i^G$ in $H[\mu(V(C_i^H))]$ such that $C_i^H$ is a contraction of $C_i^G$.
Since $H[V(C_{t-1}^H)\cup C_{t}^H\cup C_{t+1}^H)]$ is the subdivision of some 3-connected planar graph, it follows that it admits a unique planar drawing.
Since $H[\mu(V(C_i^H))]\subset G\setminus X_\OPT$, this implies that $\phi(C_{t-1}^G)$ and $\phi(C_{t+1}^G)$ are contained in distinct components of $\mathbb{S}^2 \setminus \phi(C_{t-1}^G)$.
Similarly, since $H[\mu(V(C_i^H))]\subset Z$, it follows that $\psi(C_{t-1}^G)$ and $\psi(C_{t+1}^G)$ are contained in distinct components of $\mathbb{S}^2 \setminus \psi(C_{t-1}^G)$.

Let ${\cal D}_\phi^\myin$ (resp.~${\cal D}_\phi^\myout$) be the topological disk in $\mathbb{S}^2$ with boundary $\phi(C_{t}^G)$ that contains $\phi(C_{t-1}^G)$ (resp.~$\phi(C^G_{t+1})$).
Similarly, let ${\cal D}_\psi^\myin$ (resp.~${\cal D}_\psi^\myout$) be the topological disk in $\mathbb{S}^2$ with boundary $\psi(C_{t}^G)$ that contains $\psi(C_{t-1}^G)$ (resp.~$\psi(C^G_{t+1})$).

Let
\begin{align*}
V_\phi^{\myout\to\myin} &= \{v\in V^{\myout} : \phi(v)\in {\cal D}_\phi^\myin\}\\
V_\phi^{\apex\to\myin} &= \{v\in X : \phi(v)\in {\cal D}_\phi^\myin\}\\
V_\phi^{\myin\to\myin} &= \{v\in V^{\myin} : \phi(v)\in {\cal D}_\phi^\myin\}\\
V_\phi^{\myout\to\myout} &= \{v\in V^{\myout} : \phi(v)\in {\cal D}_\phi^\myout\}\\
V_\phi^{\apex\to\myout} &= \{v\in X : \phi(v)\in {\cal D}_\phi^\myout\}\\
V_\phi^{\myin\to\myout} &= \{v\in V^{\myin} : \phi(v)\in {\cal D}_\phi^\myout\}\\
V_\psi^{\myout\to\myin} &= \{v\in V^{\myout} : \psi(v)\in {\cal D}_\psi^\myin\}\\
V_\psi^{\apex\to\myin} &= \{v\in X : \psi(v)\in {\cal D}_\psi^\myin\}\\
V_\psi^{\myin\to\myin} &= \{v\in V^{\myin} : \psi(v)\in {\cal D}_\psi^\myin\}\\
V_\psi^{\myout\to\myout} &= \{v\in V^{\myout} : \psi(v)\in {\cal D}_\psi^\myout\}\\
V_\psi^{\apex\to\myout} &= \{v\in X : \psi(v)\in {\cal D}_\psi^\myout\}\\
V_\psi^{\myin\to\myout} &= \{v\in V^{\myin} : \psi(v)\in {\cal D}_\psi^\myout\}
\end{align*}
Figure \ref{fig:irrelevant_surgery} depicts the above sets in the embeddings $\phi$ and $\psi$.
Since $H[C_{t-1}^H\cup C_t^H \cup C_{t+1}^H]$ is a subdivision of a 3-connected graph, it follows that there are no edges between $V_\phi^{\myout\to\myin}$ and $V_\phi^{\myin\to \myin}$.
Similarly, there are no edges between
$V_\phi^{\myout\to\myout}$ and $V_\phi^{\myin\to \myout}$,
between
$V_\psi^{\myout\to\myin}$ and $V_\psi^{\myin\to \myin}$,
and between
$V_\psi^{\myout\to\myout}$ and $V_\psi^{\myin\to \myout}$.
Thus, there can only be edges
between
$V_\phi^{\myout\to\myin}$ and $V_\phi^{\apex\to\myin}$,
between
$V_\phi^{\myin\to\myin}$ and $V_\phi^{\apex\to\myin}$,
between
$V_\phi^{\myout\to\myout}$ and $V_\phi^{\apex\to\myout}$,
between
$V_\phi^{\myin\to\myout}$ and $V_\phi^{\apex\to\myout}$,
between
$V_\phi^{\myin\to\myin}$ and $C_t^G$,
and between
$V_\phi^{\myout\to\myout}$ and $C_t^G$.
Similarly,
there can only be edges
between
$V_\psi^{\myout\to\myin}$ and $V_\psi^{\apex\to\myin}$,
between
$V_\psi^{\myin\to\myin}$ and $V_\psi^{\apex\to\myin}$,
between
$V_\psi^{\myout\to\myout}$ and $V_\psi^{\apex\to\myout}$,
between
$V_\psi^{\myin\to\myout}$ and $V_\psi^{\apex\to\myout}$,
between
$V_\psi^{\myin\to\myin}$ and $C_t^G$,
and between
$V_\psi^{\myout\to\myout}$ and $C_t^G$.
Figure \ref{fig:irrelevant_surgery} depicts all possible combinations of pairs that can have edges between them.

We now combine the embeddings $\phi$ and $\psi$ to obtain a new embedding $\psi'$ (see Figure \ref{fig:irrelevant_surgery}).
We start by setting $\phi'=\phi$.
We delete all the vertices in $V_\phi^{\myin\to\myin} \cup V_\phi^{\myin\to\myout}$.
By the above discussion, there are no edges between $V_\phi^{\myout\to\myin} \cup V_{\phi}^{\apex\to\myin}$ and $C_t^G$.
Thus, in the current embedding $\phi'$, there exists some disk ${\cal W}_\phi\subset {\cal D}_\phi^\myin$ that contains only the image of $V_\phi^{\myout\to\myin} \cup V_{\phi}^{\apex\to\myin}$.
We move ${\cal W}_{\phi}$ inside some face outside ${\cal D}_\phi^\myin$.
By restricting $\psi$ on $V_\psi^{\myin\to\myout}$ we obtain an embedding into some disk ${\cal W}_\psi$.
We place a copy of ${\cal W}_\psi$ inside some face outside ${\cal D}_\phi$.
Finally, we restrict $\psi$ on $C_t^G\cup (V_\psi^{\myin\to\myin}\setminus Y_1)$ to obtain an embedding into some disk ${\cal W}'$ with boundary $C_t^G$.
We replace, in $\phi'$, the disk ${\cal D}_\phi$ by ${\cal W}'$.
Finally, we restrict the resulting embedding on the graph
\[
G'=G\setminus X',
\]
where
\[
X'=(X_\OPT^\myout \cup X_\OPT^\apex \cup Y).
\]
It follows that $G'$ is planar.

\begin{figure}
\begin{center}
\scalebox{0.9}{\includegraphics{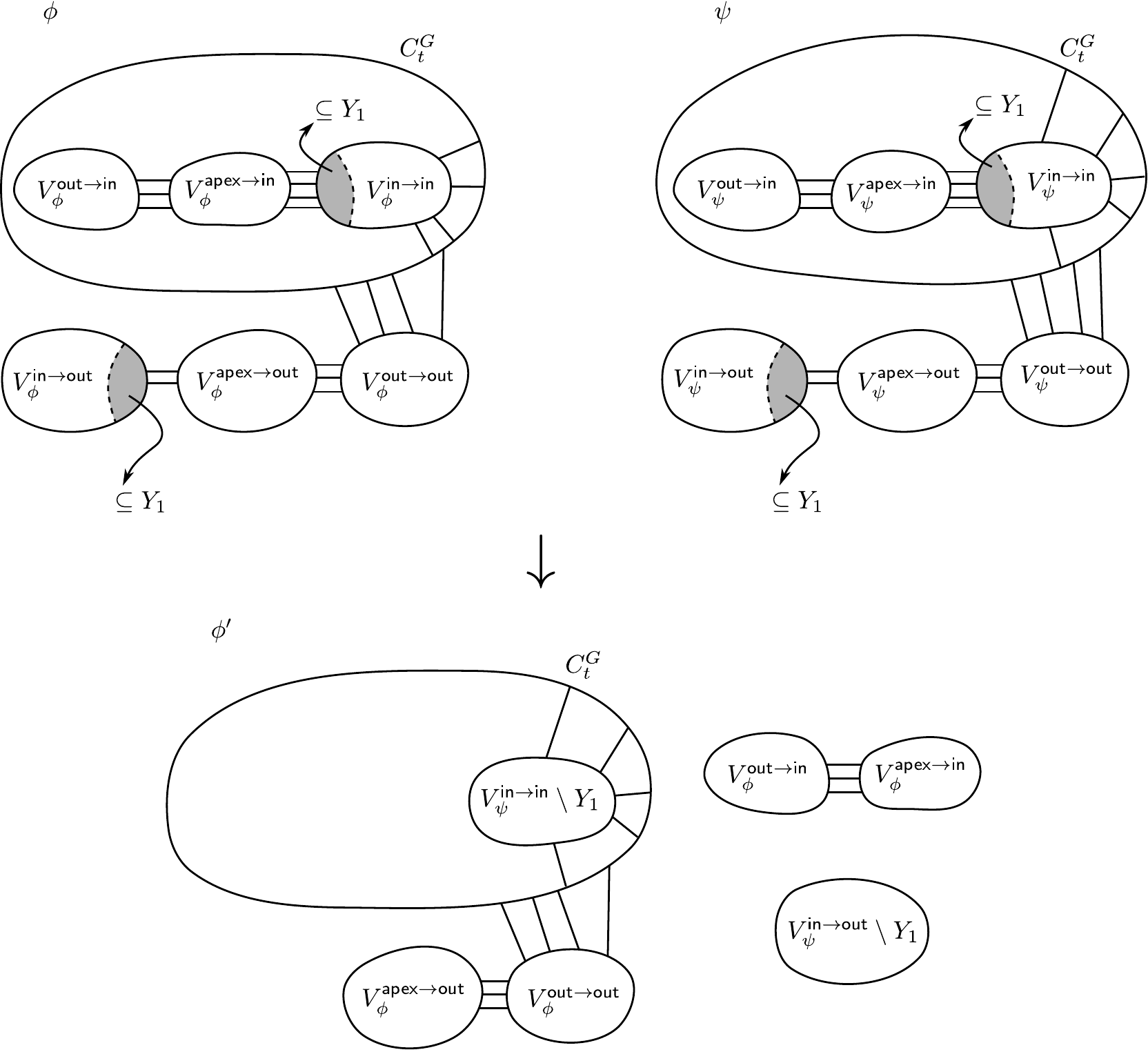}}
\caption{The embeddings $\phi$ (top left), $\psi$ (top right), and $\phi'$ (bottom).\label{fig:irrelevant_surgery}}
\end{center}
\end{figure}

Since $S(i,j,\ell)$ is dense, we get
\begin{align}
|X_\OPT^\myin| &\geq |X_\OPT\cap \mu(S(i,j,\ell))| \geq \width(S(i,j,\ell))/49 = 2\ell/49 \label{eq:X_OPT_father_large}
\end{align}
By \eqref{eq:Y_father} and \eqref{eq:X_OPT_father_large}, we obtain
\begin{align*}
|X'| &\leq |X_\OPT| - |X_\OPT^\myin| + |Y| \leq |X_\OPT| - 2\ell/49 + 8 \ell / 200 < |X_{OPT}|.
\end{align*}
This contradicts the choice of $X_\OPT$, and concludes the proof.
\end{proof}

Using Lemma \ref{lem:forgotten_irrelevant}, we show how to compute irrelevant vertices. 

\begin{lemma}[Computing an irrelevant subgrid]\label{lem:computing_dominating}
Let $G$ be an $n$-vertex graph, and let $k\in \mathbb{N}$, and let $\rho>0$.
Let $X\subset V(G)$, with $|X|\leq c k \log n$, for some universal constant $c>0$.
Let $H$ be the $(r\times r)$-partially triangulated grid, for some $r\geq c' \cdot (\log^{7/2} n) \cdot \sqrt{\approxfactor} \cdot k$, for some universal constant $c'>0$.
Suppose that $H$ is a contraction of $G\setminus X$, with contraction mapping $\mu:V(H)\to 2^{V(G\setminus X)}$.
Let $\mu':V(H')\to 2^{V(G)}$ be the contraction of $G$ induced by $\mu$, where $V(H')=V(H)\cup X$, and $\mu'$ is the identity on $X$.
Let
\[
L = |N_{H'}(X)| = |\{v\in V(H):d_H(v,\partial H)\geq 3 \text{ and } N_G(X)\cap \mu'(v)\neq \emptyset\}|.
\]
Suppose that
\[
L \leq (144|X|+1296k)c_{FHL}\log^{3/2}n,
\]
Suppose that there exists an algorithm $\AlgApprox$ which for all $n'\in \mathbb{N}$, with $n'<n$, given an $n'$-vertex graph $G'$, outputs some $S'\subset V(G')$, such that $G'\setminus S'$ is planar, with $|S'|\leq \approxfactor \cdot \mvp(G')$,
for some $\approxfactor \geq 2\rho$,
in time $T_\myapprox(n')$, where $T_\myapprox:\mathbb{N}\to\mathbb{N}$ is increasing and convex.
Then, there exists an algorithm $\AlgIrrelevant$ which given $G, k, X, \mu, H$, and $L$, terminates with one of the following outcomes:
\begin{description}
\item{(1)}
Correctly decides that $\mvp(G)>k$.
\item{(2)}
Outputs some $(6\log n \times 6\log n)$-partially triangulated subgrid $J$ of $H$,
such that if $\mvp(G)\leq k$, then $\mu(V(J))$ is irrelevant.
\end{description}
Moreover, the running time of $\AlgIrrelevant$ is at most $T_\irrelevant(n) \leq n^{O(1)} + T_\myapprox(n/\rho) 2 \rho \log n$.
\end{lemma}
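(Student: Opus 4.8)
The plan is to realize item~7 of the overview: build the multiscale family ${\cal S}$, equip it with a \emph{dominating} weight function $w$ computed via recursive calls to $\AlgApprox$, and then locate an axis‑aligned $(6\log n)\times(6\log n)$ subgrid $J$ of $H$ that is simultaneously \emph{deep} (every vertex of $J$ at $d_H$–distance more than $123k$ from $\partial H$) and disjoint from every subgrid of ${\cal S}$ that is active with respect to $w$. Since $H$ is a partially triangulated grid, $J$ is a $(6\log n)\times(6\log n)$ partially triangulated subgrid; and by Lemma~\ref{lem:forgotten_irrelevant} every vertex of $J$ is forgotten, so $\mu(v)$ is irrelevant for each $v\in V(J)$, i.e.\ $\mu(V(J))$ is irrelevant. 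If no valid placement of $J$ exists, the algorithm outputs outcome~(1); this is sound because the counting below shows that a valid $J$ must exist whenever $\mvp(G)\le k$, and when $\mvp(G)>k$ both outcomes are acceptable (the conditional in outcome~(2) is then vacuous).

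\textbf{Building $w$.} For every $S\in{\cal S}$ I compute $\gamma(S)=|N_{H'}(X)\cap S|$ exactly, and set $w(S)=\mathrm{est}(S)+\gamma(S)$ for a value $\mathrm{est}(S)\ge\mvp(G[\mu(S)])$ obtained as follows. Fix one of the $O(\log r)$ partitions underlying ${\cal S}$ (some ${\cal S}_{t,i}$); its members cover each vertex of $V(G\setminus X)$ a bounded number of times, so $\sum_S|\mu(S)|=O(n)$. For the at most $O(\rho)$ subgrids with $|\mu(S)|\ge n/\rho$ I put $\mathrm{est}(S)=k$, which is a valid upper bound on $\mvp(G[\mu(S)])$ whenever $\mvp(G)\le k$ (as $G[\mu(S)]$ is an induced subgraph of $G$). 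I pack the remaining subgrids, by first fit, into $O(\rho)$ groups each of total $\mu$–image size at most $n/\rho$; for each group $B$ I run $\AlgApprox$ once on the disjoint union $\bigsqcup_{S\in B}G[\mu(S)]$ and let $\mathrm{est}(S)$ be the size of the trace of the returned planarizing set on $\mu(S)$. Then each $\mathrm{est}(S)\ge\mvp(G[\mu(S)])$, and, summing over one partition, Lemma~\ref{lem:additivity} gives $\sum_S\mathrm{est}(S)\le\approxfactor\,k$. Over the $O(\log r)$ partitions this is $O(\rho\log n)$ invocations of $\AlgApprox$, each on an instance of size $O(n/\rho)$, yielding $T_\irrelevant(n)\le n^{O(1)}+T_\myapprox(n/\rho)\cdot 2\rho\log n$. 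Conditioned on $\mvp(G)\le k$ the resulting $w$ is dominating, which is the only case we must handle.

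\textbf{Finding $J$.} Assume $\mvp(G)\le k$. An active $S$ at scale $t$ has $w(S)\ge\width(S)/2000=2^{t}/1000$, so either $\gamma(S)\ge 2^{t}/2000$ or $\mathrm{est}(S)\ge 2^{t}/2000$. The number of subgrids in a partition with $\gamma(S)\ge 2^{t}/2000$ is at most $2000L/2^{t}$, since $\sum_S\gamma(S)\le L$ and $L=O(k\log^{5/2}n)$ by hypothesis. For the subgrids with large $\mathrm{est}(S)$ I charge against the per‑partition budget $\sum_S\mathrm{est}(S)\le\approxfactor k$ together with the pointwise bound $\mathrm{est}(S)\le\approxfactor\cdot\mvp(G[\mu(S)])$ and the additivity $\sum_S\mvp(G[\mu(S)])\le k$ of Lemma~\ref{lem:additivity}; combining these at the scale $\asymp\width(S)/\sqrt{\approxfactor}$ bounds both the count of such subgrids at scale $t$ and the largest scale at which any occurs, and makes the total area $\sum_{S\ \text{active}}\Theta(\width(S)^2)$ equal to $O(\approxfactor k^{2}\log^{O(1)}n+Lr)$ rather than $O(\approxfactor^{2}k^{2})$. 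Adding the $O(\rho\log n)$ oversized subgrids (each of area $O(k^{2})$) and the $O(kr+r\log n)$ corner positions of a $(6\log n)\times(6\log n)$ block that fail to be deep, the number of \emph{bad} corner positions is strictly less than $(r-246k-6\log n)^{2}$ once $r\ge c'\sqrt{\approxfactor}\log^{7/2}n\cdot k$ with $c'$ large enough. Hence a valid $J$ is found in polynomial time, and the proof is complete by Lemma~\ref{lem:forgotten_irrelevant}.

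\textbf{Main obstacle.} The delicate point is the last count: a naive use of the factor‑$\approxfactor$ slack in the recursive estimates would inflate both the number and the area of active subgrids by $\approxfactor$ and force $r\gtrsim\approxfactor\,k$; bringing the requirement down to $r\gtrsim\sqrt{\approxfactor}\,k$ requires the two‑level charging sketched above, balancing the scale at which large recursive estimates are declared ``active'' against the additivity bound $\sum_S\mvp(G[\mu(S)])\le k$. Everything else (constructing ${\cal S}$ and $w$, packing into groups, the final search over block placements) is routine polynomial‑time bookkeeping, and the soundness of outcome~(1) follows immediately from the existence guarantee for $J$.
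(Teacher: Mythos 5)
Your overall plan matches the paper's: build the multiscale family ${\cal S}$, compute a dominating $w$ via recursive calls to $\AlgApprox$, bound the total area of active subgrids, and invoke Lemma~\ref{lem:forgotten_irrelevant}. But there is a genuine gap in the area estimate, and it is exactly the step you flag as ``delicate.''

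The key ingredient you omit is the cap on the recursive estimate. The paper sets $w(S)=\gamma(S)+\min\{k,|X_S|\}$ (and $\gamma(S)+k$ for the large $\mu(S)$), not $w(S)=\gamma(S)+|X_S|$. The cap matters because it is what bounds the \emph{largest} scale at which any subgrid can be active: with the cap, $w(S)\le L+k$ for every $S$, so no $S\in{\cal S}_{i,j}$ is active once $2^{i+1}>2000(k+L)$, and the top scale is $O(\log(k+L))=O(\log(k\log n))$. At each scale $i$ the count of active $S$ is at most $\rho+O(\approxfactor k/2^{i})$ (via $|X_S|\le\approxfactor\,\mvp(G[\mu(S)])$ and Lemma~\ref{lem:additivity}), and summing $\width(S)^2\cdot\text{count}$ up to the capped top scale yields area $O(\approxfactor k^2\log^{O(1)}n)$, which against $(r')^2/\log^2 n$ forces only $r\gtrsim\sqrt{\approxfactor}\,k\,\log^{O(1)}n$. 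Without the cap, $\mathrm{est}(S)$ can be as large as $\approxfactor k$, the top active scale becomes $O(\log(\approxfactor k))$, and the same geometric sum gives area $\Theta(\approxfactor^2 k^2)$, i.e.\ $r\gtrsim\approxfactor k$ — precisely the naive bound you say you must avoid. Your proposed fix, ``combining these at the scale $\asymp\width(S)/\sqrt{\approxfactor}$,'' is not spelled out, and I do not see how any split of the active $S$ at a $\sqrt{\approxfactor}$ threshold controls the top scale for those $S$ in the branch where $\mathrm{est}(S)$ is large relative to $\mvp(G[\mu(S)])$: for those you still only have $\mathrm{est}(S)\le\approxfactor k$, hence top scale $\log(\approxfactor k)$, hence $\approxfactor^2k^2$. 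The cap (valid because $G[\mu(S)]\subseteq G$ implies $\mvp(G[\mu(S)])\le\mvp(G)\le k$) is the clean way to collapse that scale, and it is not a charging trick but a modification of $w$ itself; note that $w$ remains dominating with the cap since $\min\{k,|X_S|\}\ge\mvp(G[\mu(S)])$ whenever $\mvp(G)\le k$, which is the only case Lemma~\ref{lem:forgotten_irrelevant} needs. Your stated area bound $O(\approxfactor k^2\log^{O(1)}n+Lr)$ is not derived, and in particular the $Lr$ term suggests an uncontrolled top scale of $\log r$ for the $\gamma$-active subgrids, which again does not happen once the top scale is capped at $\log(k+L)$.

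A secondary issue: your batching (running $\AlgApprox$ once on $\bigsqcup_{S\in B}G[\mu(S)]$ and taking the trace) destroys the \emph{pointwise} bound $\mathrm{est}(S)\le\approxfactor\,\mvp(G[\mu(S)])$ that you then explicitly invoke, since the approximation guarantee is only global over the disjoint union. The paper avoids this by calling $\AlgApprox$ separately on each $G[\mu(S)]$ with $|\mu(S)|\le n/\rho$ and controlling the total time purely via the convexity of $T_\myapprox$ and the per-partition size bound $\sum_S|\mu(S)|\le n$; the $2\rho\log n$ factor in the running time falls out directly without any bin-packing. Either drop the batching, or argue only through the sum bound $\sum_S\mathrm{est}(S)\le\approxfactor k$ and do not use the pointwise bound — but then you additionally lose the count bound per scale, so this is not a free substitution. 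The cleanest repair is to replace $\mathrm{est}(S)$ by $\min\{k,|X_S|\}$ and to call $\AlgApprox$ per subgrid as the paper does; after that your area computation becomes the paper's and goes through.
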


\begin{proof}
The algorithm $\AlgIrrelevant$ proceeds by computing some dominating weight function $w:{\cal S}\to \mathbb{N}$,
and some $(2\log n\times 2\log n)$-partially triangulated subgrid $J$ of $H$, such that $V(J)$ is forgotten (w.r.t.~$w$).
By Lemma \ref{lem:forgotten_irrelevant}, it follows that $\mu(V(J)$ is irrelevant.
It thus suffices to show how to compute $w$ and $J$ within the desired time bound.

More specifically, the algorithm $\AlgIrrelevant$ proceeds as follows.
For each $S\in {\cal S}$, if $|\mu(S)|\leq n/\rho$, then we run $\AlgApprox$ on $G[\mu(S)]$ and we obtain some $X_S\subseteq \mu(S)$, such that $G[\mu(S)]\setminus X_S$ is planar, and with
\[
|X_S| \leq \approxfactor \mvp(G[\mu(S)]).
\]
For all $S\in {\cal S}$ we set
\[
w(S) = \left\{\begin{array}{ll}
\gamma(S) + \min\{k, |X_S|\} & \text{ if } |\mu(S)|\leq n/\rho\\
\gamma(S) + k & \text{ if } |\mu(S)|> n/\rho
\end{array}\right.
\]

Let $V^{\inner}=\{v\in V(H) : d_H(v,\partial H) > 123 k\}$.
Let $H^{\inner} = H[V^{\inner}]$; that is, $H^{\inner}$ is the $(r'\times r')$-partially triangulated subgrid of $H$, consisting of the central $r'$ rows and central $r'$ columns, where $r'=r-2 \cdot 123 k\geq r/2$.

Let $i\in \{0,\ldots,\log r\}$, and let $j\in \{1,2\}$.
For each $S\in {\cal S}$, in order for $S$ to be active (w.r.t.~$w$), it must be that $w(S)\geq \width(S) / 200 = 2^{i+1}/200$.
Thus, if $2^{i+1} > 200 (k+L)$, then there are no active $S\in {\cal S}_{i,j}$.
Otherwise, if $2^{i+1} \leq 200 (k+L)$, then each active $S\in {\cal S}_{i,j}$ must satisfy either $|\mu(S)|\geq n/\rho$, or
$\mvp(S) \geq 2^{i+1}/(\approxfactor \cdot 200)$.
Thus, the number of active $S\in {\cal S}_{i,j}$ is at most $\rho + k \cdot \approxfactor \cdot 200 / 2^{i+1}$.
Let $i^*=\lfloor \log (123(k+L)/2) \rfloor$.
Since for each $S\in {\cal S}_{i,j}$, we have $|S|\leq 2^{2i+2}$, we get
\begin{align*}
\left| \bigcup_{S\in {\cal S} : S \text{ is active}} S \right| &= \left| \bigcup_{i=0}^{i^*} \bigcup_{j=1}^2 \bigcup_{S\in {\cal S}_{i,j} : S \text{ is active}} S \right| \\
 &\leq  \sum_{i=0}^{i^*} \sum_{j=1}^2 \left| \bigcup_{S\in {\cal S}_{i,j} : S \text{ is active}} S \right| \\
  &= \sum_{i=0}^{i^*} \sum_{j=1}^2 |\{S\in {\cal S}_{i,j} : S \text{ is active}\}| \\
  &\leq \sum_{i=0}^{i^*} \sum_{j=1}^2 2^{2i+2} \left( \rho + k \approxfactor 200 / 2^{i+1}\right) \\
  &< 8 (200^2) \approxfactor (k+L)^2\\
  &< (r')^2/(6\log n)^2,
\end{align*}
where the last inequality holds for some constant
$c' = \Theta(c_{FHL} \cdot c)$.
It follows that there exists some $(6\log n \times 6\log n)$-partially triangulated subgrid $J$ of $H^{\inner}$, such that $V(J)$ does not intersect any active squares, and thus $V(J)$ is forgotten (w.r.t.~$w$).

It thus remains to bound the running time.
The algorithm $\AlgIrrelevant$ recursively calls the algorithm $\AlgApprox$ on some subgraphs $G[\mu(S)]$, for some $S\in {\cal S}$.
Let $i\in \{0,\ldots,\log r\}$, and let $j\in \{1,2\}$.
Suppose that there are recursive calls on the subgraphs $G[\mu(S_{i,j,1})],\ldots,G[\mu(S_{i,j,t})]$, for some $S_{i,j,1},\ldots,S_{i,j,t}\in {\cal S}_{i,j}$.
For each $t'\in \{1,\ldots,t\}$, let $n_{i,j,t'}=|\mu(S_{i,j,t'})|$.
Note that we only recurse on subgraphs with at most $n/\rho$ vertices.
Thus, by the fact that $T_\myapprox$ is increasing and convex, we obtain
$\sum_{t'=1}^t T_\myapprox(n_{i,j,t'}) \leq T_\myapprox(\sum_{t'=1}^t n_{i,j,t'}) \leq  T_\myapprox(n/\rho) \rho$.
Thus, the running time of $\AlgIrrelevant$ is
\begin{align*}
T_\irrelevant(n) &\leq n^{O(1)} + \sum_{i=1}^{\log n} \sum_{j=1}^2 T_\myapprox(n/\rho) \rho\\
 &= n^{O(1)} + T_\myapprox(n/\rho) 2 \rho \log n,
\end{align*}
which concludes the proof.
\end{proof}


\section{Patches and frames}\label{sec:patch}

In this Section we use the algorithm for computing irrelevant vertices to compute a patch that can be used in the computation of a pruning sequence.
A key desired property is that the framing of the patch must be smaller than the graph before the framing.




\begin{lemma}[Computing a patch]\label{lem:patch}
Let $G, n, \rho, k, X, \mu, H$, and $L$ be as in Lemma \ref{lem:computing_dominating}.
Suppose that there exists an algorithm $\AlgApprox$ which for all $n'\in \mathbb{N}$, with $n'<n$, given an $n'$-vertex graph $G'$, outputs some $S'\subset V(G')$, such that $G'\setminus S'$ is planar, with $|S'|\leq \approxfactor \cdot \mvp(G')$,
for some $\approxfactor \geq 2\rho$,
in time $T_\myapprox(n')$, where $T_\myapprox:\mathbb{N}\to\mathbb{N}$ is increasing and convex.
Then, there exists an algorithm $\AlgPatch$,
which given $G, k, X, \mu, H$, and $L$,
terminates with one of the following outcomes:
\begin{description}
\item{(1)}
Correctly decides that $\mvp(G)>k$.

\item{(2)}
Computes some patch $(\Gamma, C)$ of $G$, satisfying the following conditions:
Let $G^\framed$ be the $(\Gamma, C)$-framing of $G$.
Then $|V(G^\framed)| < |V(G)|$.
Moreover, if $\mvp(G)\leq k$, then $\mvp(G^\framed)\leq \mvp(G)$.
\end{description}
Moreover, the running time of $\AlgPatch$ is at most $T_\patch(n) \leq n^{O(1)} + T_\myapprox(n/\rho)  2 \rho \log n$.
\end{lemma}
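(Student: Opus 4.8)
The plan is to realize $\AlgPatch$ as a light wrapper around the irrelevant-subgrid algorithm $\AlgIrrelevant$ of Lemma~\ref{lem:computing_dominating}. First run $\AlgIrrelevant(G,k,X,\mu,H,L)$. If it reports $\mvp(G)>k$, output the same. Otherwise it hands us a $(6\log n\times 6\log n)$-partially triangulated subgrid $J$ of $H$ with the property that, \emph{if} $\mvp(G)\le k$, then $\mu(V(J))$ is irrelevant; that is all we use from it. Inside $J$ there is a family $D_1\supset D_2\supset\dots\supset D_m$ of $m=\Theta(\log n)$ pairwise vertex-disjoint nested \emph{ring} cycles, namely the boundary cycles of the successive central subgrids of $J$. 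Each $D_a$ lifts through $\mu$ to an honest cycle $C_a$ of $G\setminus X$ supported on $\mu(V(D_a))$ and separating $G\setminus X$ exactly as $D_a$ separates $J$ (this uses that the blocks of the contraction $\mu$ are, by the construction in Lemma~\ref{lem:pt_grid}, subpaths of the rows and columns of an underlying subdivided grid, so $C_a$ can be taken with $V(C_a)=\mu(V(D_a))$).

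For every $a\in\{1,\dots,m\}$ I would form a candidate patch $(\Gamma_a,C_a)$ as follows: fix a reference vertex $o\in\mu(\partial H)$ and let $\Gamma_a$ be the subgraph of $G$ on $V(C_a)$ together with all connected components of $(G\setminus X)-V(C_a)$ not containing $o$. The algorithm then tests, in polynomial time, whether (i) $\Gamma_a$ has a planar embedding with $C_a$ as outer face --- equivalently, whether $\Gamma_a$ with an extra vertex joined to all of $V(C_a)$ is planar --- and (ii) the $(\Gamma_a,C_a)$-framing $G^{\framed}_a$ satisfies $|V(G^{\framed}_a)|<|V(G)|$. It outputs the first candidate that passes both tests; if none does, it reports $\mvp(G)>k$. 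Whenever a candidate is output, both unconditional clauses of outcome~(2) have been explicitly verified (and the conditional clause is established below), and whenever the algorithm reports $\mvp(G)>k$ this is correct --- via $\AlgIrrelevant$, or else because, as we now argue, if $\mvp(G)\le k$ some candidate necessarily passes.

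So suppose $\mvp(G)\le k$. By irrelevance of $\mu(V(J))$ there is an optimal planarizing set $X_{\OPT}$, $|X_{\OPT}|=\mvp(G)$, with $X_{\OPT}\cap\mu(V(J))=\emptyset$. Since $G\setminus X$ contracts onto the connected grid $H$ and $J$ lies in the interior of $H$, every vertex of $\mu(V(H)\setminus V(J))$ reaches $o$ by a path of $G\setminus X$ avoiding $\mu(V(D_a))=V(C_a)$; hence each component of $(G\setminus X)-V(C_a)$ missing $o$ lies inside $\mu(V(J))$, so $\Gamma_a\subseteq G[\mu(V(J))]\subseteq G\setminus X_{\OPT}$. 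The graph $G[\mu(V(J))]$ contracts onto the partially triangulated grid $J$, a subdivision of a $3$-connected planar graph, so its planar embedding inherited from any embedding of $G\setminus X_{\OPT}$ is rigid: $C_a$ bounds a disk $\Delta_a$ whose interior, within $\mu(V(J))$, is exactly ``the part inside ring $a$'', and $\Gamma_a$ is drawn inside $\overline{\Delta_a}$. Thus $C_a$ is on the outer face of $\Gamma_a$ and test~(i) succeeds. For the same reason, deleting $V(\Gamma_a)\setminus V(C_a)$ from the embedding of $G\setminus X_{\OPT}$ leaves a face bounded by $C_a$; the width-$3$ frame is planar and attaches only to $C_a$, so it fits in that face --- anything that happened to be drawn inside $\Delta_a$ but is not removed becomes an isolated component and can be pushed into a frame cell --- giving $\mvp(G^{\framed}_a)\le|X_{\OPT}|=\mvp(G)$, which is the conditional clause.

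It remains to see that test~(ii) holds for some $a$. The cycles are pairwise disjoint and, for $a'\ge a$, $V(C_{a'})=\mu(V(D_{a'}))\subseteq\mu(V(J))$ lies strictly inside $C_a$, hence in $\Gamma_a$; writing $S_a=\sum_{a'\ge a}|V(C_{a'})|$, this gives $|V(\Gamma_a)|\ge S_a$, while $S_1\le|\mu(V(J))|\le n$ since each block has at least one vertex. With the accounting $|V(G^{\framed}_a)|=|V(G)|-|V(\Gamma_a)|+4|V(C_a)|-3$, if test~(ii) failed for every $a$ we would have $S_{a+1}\le|V(\Gamma_a)|-|V(C_a)|\le 3|V(C_a)|=3(S_a-S_{a+1})$, so $S_{a+1}\le\tfrac34 S_a$ and $S_m\le(\tfrac34)^{m-1}n<3$ for $m=\Theta(\log n)$ large enough --- which is exactly why $\AlgIrrelevant$ is arranged to return a grid of side a sufficiently large multiple of $\log n$ --- contradicting $S_m=|V(C_m)|\ge 3$. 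Since all computation after the single call to $\AlgIrrelevant$ is polynomial, $T_\patch(n)\le n^{O(1)}+T_\irrelevant(n)\le n^{O(1)}+T_\myapprox(n/\rho)\,2\rho\log n$. The main obstacle is the $\mvp(G)\le k$ analysis: proving that the purely combinatorial region $\Gamma_a$ coincides, up to detachable debris the frame can absorb, with the topological disk that $C_a$ cuts out in an optimal embedding. This rests on combining (a) $J$ being deep inside the connected grid $H$, (b) the precise way $\mu$ was built so that $C_a$ genuinely separates, and (c) the $3$-connectivity --- hence embedding-rigidity --- of partially triangulated grids, which together pin down where every vertex, apex vertices of $X$ included, must lie.
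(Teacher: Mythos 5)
Your proposal follows the same strategy as the paper's proof: call $\AlgIrrelevant$ to get the irrelevant subgrid $J$, extract the $\Theta(\log n)$ nested boundary cycles of $J$, lift them to cycles in $G$, take $\Gamma_a$ to be the region inside the $a$-th lift, find by a pigeonhole/counting argument one ring whose framing actually shrinks the graph, and finally use irrelevance of $\mu(V(J))$ together with the rigidity (3-connectivity, Whitney flips) of the two-ring collar to show that in an optimal embedding $C$ bounds a disk containing exactly $\Gamma$, so that removing the patch's interior and attaching the width-3 frame does not increase $\mvp$. The two cosmetic differences — your $(3/4)$-decay recursion on $S_a$ vs.~the paper's choice of the first $i^*$ with $|V(K_{i^*}^G)|<\tfrac13\sum_{j<i^*}|V(K_j^G)|$, and your explicit planarity-with-outer-face test as a guard (which the paper implicitly assumes) — do not change the substance; the latter is in fact a mild tightening of the paper's write-up, since the patch property must be verified unconditionally whereas only $\mvp(G^\framed)\le\mvp(G)$ is conditional on $\mvp(G)\le k$.

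One technical claim in your proposal is overstated: you assert that each $D_a$ lifts to a cycle $C_a$ with $V(C_a)=\mu(V(D_a))$, on the grounds that the blocks of $\mu$ are subpaths of rows and columns. That is not what the purification guarantees; blocks $\mu(v)$ are connected subgraphs of $G\setminus X$ that contain the relevant row/column subpaths but may also absorb other vertices (including non-$\Psi$ vertices), so they are generally trees or larger, and $\mu(V(D_a))$ is typically not itself a cycle. The paper instead constructs $K_a^G$ by threading one path per block, giving $V(K_a^G)\subseteq\mu(V(D_a))$. Fortunately, the places where you invoke the equality only use the inclusion $V(C_a)\subseteq\mu(V(D_a))$ (to deduce that paths in $H$ avoiding $V(D_a)$ lift to paths avoiding $V(C_a)$), so the argument survives once the claim is weakened. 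You should also state, rather than assume, that the lifted cycle $C_a$ actually separates the interior from $o$; this is where both your write-up and the paper's lean on the row/column structure of the blocks, and it merits a sentence.
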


\begin{proof}
We first run the algorithm $\AlgIrrelevant$ from Lemma \ref{lem:computing_dominating}.
If $\AlgIrrelevant$ decides that $\mvp(G)>k$, then we terminate with outcome (1).
Otherwise,
 we obtain some  $(6\log n \times 6\log n)$-partially triangulated subgrid $J$ of $H$, such that if $\mvp(G)\leq k$, then $\mu(V(\Gamma))$ is irrelevant (w.r.t.~$G$).
The partially triangulated grid $J$ contains $\ell=3\log n$ nested cycles $K^H_1,\ldots,K^H_{\ell}$; that is, for each $i\in \{2,\ldots,\ell-1\}$, we have that $K^H_i$ separates $K^H_{1}\cup \ldots \cup K^H_{i-1}$ from $K^H_{i+1}\cup \ldots\cup K^H_\ell \cup (H\setminus J)$ in $H$.
For each $i\in \{1,\ldots,\ell\}$ we compute some cycle $K_i^G$ in $G$ such that $K_i^H$ is a contraction of $K^G_i$; this can clearly be done in polynomial time by computing, for each $v\in V(K_i^H)$, an arbitrary path in $G[\mu(v)]$ between the endpoints in $\mu(v)$ of the two edges that are incident to $v$ in $K_i^H$.
For each $i\in \{1,\ldots,\ell-1\}$ let
\[
U_i = \{v\in \mu(V(\Gamma)) : K_{i}^G \text{ separates } v \text{ and } \mu(V(K_{\ell}^G)) \text{ in } G\},
\]
and
\[
\Gamma_i = G[U_i].
\]
Since $V(\Gamma)$ is forgotten, we have that $N_{G}(X) \cap \mu(V(\Gamma))=\emptyset$.
It follows that for all $i\in \{1,\ldots,\ell-1\}$, $(\Gamma_i, K_i^G)$ is a patch in $G$.
Since $\ell=3\log n$ and $V(G)=n$ it follows that there exists some $i^*\in \{2,\ldots,\ell-1\}$, such that
\begin{align}
|V(K_{i^*}^G)| < \frac{1}{3}(|V(K_1^G)| + \ldots + |V(K_{i^*-1}^G)|) \label{eq:Gamma_size}
\end{align}
Let $\Gamma=\Gamma_{i^*}$ and $C=K_{i^*}^G$.
Let $G^\framed$ be the $(\Gamma, C)$-framing of $G$.
It follows that by \eqref{eq:Gamma_size} that $|V(\Gamma)|>4|V(C)|$.
Thus $|V(G^\framed)|=|V(G)| + 4|V(C)| - |V(\Gamma)| < |V(G)|$, and thus condition (1) holds.

Let $X_\OPT\subset V(G)$, with $|X_\OPT|=\mvp(G)$, such that $G\setminus X_\OPT$ is planar.
If $\mvp(G)\leq k$, then $\mu(V(J))$ is irrelevant in $G$.
It follows that $X_\OPT \cap \mu(V(J))=\emptyset$.
Thus $X_\OPT \cap (V(K_{i^*}^G) \cup V(K_{i^*+1}^G) = \emptyset$.
Since $H[V(K_{i^*}^H) \cup V(K_{i^*+1}^H)]$ is the subdivision of some 3-connected graph,
it follows that for any planar embedding $\phi$ of $G$,
there exists an embedding $\phi'$ of $G$ obtained from $\phi$ by a sequence of zero or more Whitney flips,
such that
the cycle $C$ bounds a disk ${\cal D}$ with $\phi'(V(G)) \cap {\cal D} = \phi'(V(\Gamma))$.
Let $\phi''$ be the embedding obtained by restricting $\phi'$ on $G \setminus (V(\Gamma)\setminus V(C))$.
It follows that $C$ bounds a face in $\phi''$ (i.e.~it is a facial cycle).
Thus $\phi''$ can be extended to a planar drawing of $G^\framed$.
We thus obtain that $G^\framed\setminus X_{\OPT}$ is planar, and thus $\mvp(G^\framed) \leq \mvp(G)$, which establishes condition (2).

By Lemma \ref{lem:computing_dominating} we get $T_\patch(n) \leq n^{O(1)} + T_{\irrelevant}(n) \leq n^{O(1)} + T_{\myapprox}(n/\rho) 2\rho \log n$, which concludes the proof.
\end{proof}



\section{Computing a pruning sequence}\label{sec:pruning}

In the previous section, we find irrelevant vertices. This allows us to define the 
following ``pruning sequence'' in our algorithm.

The main result of this section is to compute a pruning sequence, which will be presented in the next subsection. 

\subsection{The algorithm for computing a pruning sequence}

We now describe an algorithm for computing a pruning sequence for a given graph $G$.
We give two algorithms: $\AlgPruning$ and $\AlgPruningDecision$.
The algorithm $\AlgPruning$ gets as input some graph $G$ and outputs a pruning sequence for $G$.
The algorithm $\AlgPruningDecision$ gets as input some graph $G$ and some $k\in \mathbb{N}$, and either returns $\nil$, which indicates the fact that $\mvp(G)>k$, or outputs some pruning sequence for $G$.
The algorithm $\AlgPruning$ calls recursively algorithm $\AlgPruningDecision$, and $\AlgPruningDecision$ calls recursively either itself or $\AlgPruning$.

The algorithms use a parameter $\rho>0$, which is as in Lemmas \ref{lem:computing_dominating} and \ref{lem:patch}.
The parameter $\rho$ which allows us to obtain a trade-off between the approximation ratio and the running time.
In particular, the approximation ratio of the final algorithm increases and the running time decreases when $\rho$ increases.
We also use a parameter $\approxfactor>0$ which denotes the target approximation factor.

The formal description of algorithm $\AlgPruning$ is as follows.

\begin{flushleft}
\textbf{Algorithm} $\AlgPruning(G)$:
\end{flushleft}
\begin{description}
\item{\textbf{Step 1: The main loop.}}
We consider all values $k=1,2,4,\ldots,2^i,\ldots,n$, which, intuitively, are used as ``approximate guesses'' for $\mvp(G)$.
For each such value of $k$, we execute
\[
\AlgPruningDecision(G,k).
\]
If the execution returns $\nil$ then repeat Step 1 for the next value of $k$.
Otherwise, we output the resulting pruning sequence found.
\end{description}

This completes the description of the algorithm $\AlgPruning$.
We next give the formal description of the algorithm $\AlgPruningDecision$.

\begin{flushleft}
\textbf{Algorithm} $\AlgPruningDecision(G,k)$:
\end{flushleft}
\begin{description}
\item{\textbf{Step 1:}}
Let $t=c'' \cdot \log^{15/2} n \cdot \sqrt{\approxfactor} \cdot k$, for some sufficiently large constant $c''>0$ to be determined.
We have $t > \alpha' k \log^6 n$, where $\alpha'>0$ is the universal constant in Corrolary \ref{cor:k-apex_grid_minor_general_combed}.
Using Corrolary \ref{cor:k-apex_grid_minor_general_combed}, in polynomial time, we obtain one of the following outcomes:
\begin{description}
\item{\textbf{Case 1:}}
We correctly decide that $\mvp(G)>k$. In this case, we return $\nil$.

\item{\textbf{Case 2: Computing a small balanced separator.}}
We compute a $3/4$-balanced separator $S$ of $G$, with $|S| = O(t \log^{13/2} n)$.
In this case we partition $G\setminus S$ into two vertex-disjoint subgraphs $G_1$ and $G_2$ such that there are no edges between $V(G_1)$ and $V(G_2)$, and with $|V(G_1)|\leq 3n/4$, and $|V(G_2)|\leq 3n/4$.
We recursively call $\AlgPruning(G_1)$ and $\AlgPruning(G_2)$, and we obtain pruning sequences
${\cal G}_1=(G_{1,0},A_{1,0}), \ldots (G_{1,\ell},A_{1,\ell})$ for $G_1$,
and
${\cal G}_2=(G_{2,0},A_{2,0}), \ldots (G_{2,\ell'},A_{2,\ell'})$ for $G_2$.
We output
\[
{\cal G}=
(G_{1,0}\cup G_{2,0},A_{1,0}),
\ldots,
(G_{1,\ell}\cup G_{2,0},A_{1,\ell}),
(G_{1,\ell}\cup G_{2,1},A_{2,0}),
\ldots,
(G_{1,\ell}\cup G_{2,\ell'}, A_{2,\ell'}).
\]
Since $G_{1,\ell}$ and $G_{2,\ell'}$ are planar graphs, it follows that $G_{1,\ell}\cup G_{2,\ell'}$ is planar, and thus ${\cal G}$ is a pruning sequence for $G$.
If $\cost({\cal G}) > c''' k \sqrt{\approxfactor} \log^{15} n$, for some sufficiently large constant $c'''>0$ to be specified, then we return $\nil$, and otherwise we return ${\cal G}$.

\item{\textbf{Case 3: Computing a large grid minor.}}
We compute some combed pseudogrid $({\cal P}, {\cal Q})$ in $G$, with $|{\cal P}|=|{\cal Q}|=r$, for some $r=\Omega(t/\log^4 n)$.
Setting $c''$ to be some sufficiently large constant, we get that $r\geq c k \log n$, where $c$ is the universal constant from Lemma \ref{lem:pt_grid}, and thus the conditions of Lemma \ref{lem:pt_grid} are satisfied.
Using Lemma \ref{lem:pt_grid}, in polynomial time, we obtain one of the following outcomes:
\begin{description}
\item{\textbf{Case 3.1:}}
We can correctly decide that $\mvp(G)>k$. In this case, we return $\nil$.

\item{\textbf{Case 3.2: Computing a partially triangulated grid contraction with a few apices.}}
We compute some $X\subset V(G)$, with $|X|=O(k \log n)$, some $(r'\times r')$-partially triangulated grid $H$, for some $r'\geq r/8$, and some contraction mapping $\mu:V(H)\to 2^{V(G\setminus X)}$.
Let $\mu':V(H')\to 2^{V(G)}$ be the contraction mapping of $G$ induced by $\mu$, where $V(H')=V(H)\cup X$, and $\mu'$ is the identity on $X$.
Let
\[
L = |N_{H'}(X)| = |\{v\in V(H) : d_H(v,\partial H) \geq 3 \text{ and } N_G(X)\cap \mu'(v)\}|.
\]
We consider the following two cases:
\begin{description}
\item{\textbf{Case 3.2.1: Computing a semi-universal set.}}
Suppose that
\[
L>(144|X|+1296k)c_{FHL}\log^{3/2}n.
\]
Then using Lemma \ref{lem:semi-universal} we can compute, in polynomial time, some non-empty $Y\subseteq X$, such that if $\mvp(G)\leq k$, then $Y$ is semi-universal.
We recursively call
\[
\AlgPruningDecision(G\setminus Y,k).
\]
If the recursive call returns $\nil$ then we return $\nil$.
Otherwise, we obtain a pruning sequence
\[
(G_0,A_0),\ldots,(G_\ell,A_\ell)
\]
for $G\setminus Y$.
We define the pruning sequence
\[
{\cal G} = (G, Y),(G_0,A_0),\ldots,(G_\ell,A_\ell)
\]
for $G$.
If $\cost({\cal G}) > c''' k \sqrt{\approxfactor} \log^{15} n$, for some sufficiently large constant $c'''>0$ to be specified, then we return $\nil$, and otherwise we return ${\cal G}$.

\item{\textbf{Case 3.2.2: Computing an irrelevant patch.}}
Suppose that
\[
L \leq (144|X|+1296k)c_{FHL}\log^{3/2}n.
\]
By setting $c''$ to be a sufficiently large constant, we get that
$r'\geq c' (\log^{7/2} n) \sqrt{\approxfactor} k$,
where $c'$ is the universal constant from Lemma \ref{lem:patch},
 and thus the conditions of Lemma \ref{lem:patch} are satisfied.
Using the algorithm from Lemma \ref{lem:patch}, we obtain one of the following two outcomes:
(i) We either decide that $\mvp(G)>k$; in this case, we go to Step 1, and consider the next value for $k$.
(ii) We compute some patch $(\Gamma, C)$ of $G$, satisfying the following conditions:
Let $G^\framed$ be the $(\Gamma, C)$-framing of $G$.
Then $|V(G^\framed)| < |V(G)|$.
Moreover, if $\mvp(G)\leq k$, then $\mvp(G^\framed) \leq \mvp(G)$.
We recursively call
\[
\AlgPruningDecision(G^\framed,k).
\]
If the recursive call returns $\nil$, then we return $\nil$.
Otherwise, we obtain a pruning sequence
\[
(G_0,A_0),\ldots,(G_\ell,A_\ell)
\]
for $G^\framed$.
We return the pruning sequence
\[
{\cal G} = (G, (\Gamma,C)),(G_0,A_0),\ldots,(G_\ell,A_\ell)
\]
for $G$.
\end{description}
\end{description}
\end{description}
\end{description}
This completes the description of the algorithm $\AlgPruningDecision$.

\subsection{Analysis}

We now give analysis of the above algorithm.

\begin{proof}[Proof of Lemma \ref{lem:pruning}]
The fact that the algorithm always outputs some pruning sequence follows by the fact that the algorithms in Corollary \ref{cor:k-apex_grid_minor_general_combed} and Lemmas \ref{lem:pt_grid}, \ref{lem:semi-universal}, and \ref{lem:patch} always terminate with an outcome different than deciding that $\mvp(G)>k$, when $\mvp(G)\geq k$; thus, since the algorithm $\AlgPruning$ eventually considers a large enough value of $k$, all these algorithms output the desired structure, and thus $\AlgPruning$ outputs some pruning sequence.

Next, we bound the cost of the pruning sequence computed by the algorithm $\AlgPruning$.
The proof is via double induction on $|V(G)|$ and $\mvp(G)$.
That is, we assume that the assertion holds for all graphs $G'$ with $|V(G')|\leq |V(G)$, $\mvp(G')\leq \mvp(G)$, and either $|V(G)|<|V(G')|$ or $\mvp(G')<\mvp(G)$, and we prove the assertion for $G$.
The cost of the pruning sequence can increase only in Case 2 and Case 3.2.1.
In Case 2, we compute some $3/4$-balanced vertex separator $S$, with $|S|=O(t \log^{13/2} n) = O(\mvp(G)  \sqrt{\approxfactor} \log^{14} n)$.
The cost of the pruning sequence increases by $|S|$, and we can charge this increase to the $\mvp(G)$ vertices in some optimal solution for $G$, where each vertex receives $O(\sqrt{\approxfactor} \log^{14} n)$ units of charge.
Since we recurse on $G_1$ and $G_2$, the size of the graph decreases by a factor of $3/4$, and thus each vertex can be charged at most $O(\log n)$ times throughout the recursion.
Thus, each vertex of some optimal planarizing set receives at most $O(\sqrt{\approxfactor} \log^{15} n)$ units of charge.
Thus, the total increase in the cost of the resulting pruning sequence due to vertices in balanced separators computed in Case 3, over all recursive executions, is at most $O(\mvp(G) \sqrt{\approxfactor} \log^{15} n)$.
The increase in the cost in Case 3.2.1 is due to the removal of some set $Y$.
By the definition of a semi-universal set, we have that, if $\mvp(G)\leq k$, then by deleting $Y$, decreases the cost of the optimum solution by at least $(2/3)\cdot |Y|$.
Thus, the total increase in the cost the pruning sequence due to vertices deleted in Case 3.2.1, throughout all recursive calls, is at most $(3/2)\cdot \mvp(G)$.
We conclude that the total cost of the resulting pruning sequence is at most $O(\mvp(G) \sqrt{\approxfactor} \log^{15} n) + (3/2) \mvp(G) = O(\mvp(G) \sqrt{\approxfactor} \log^{15} n)$.

Finally, we bound the running time.
Let $T_\pruning(n,k)$ denote the running time of $\AlgPruningDecision$, and let $T_\pruning(n)$ denote the running time of $\AlgPruning$.
We have
\[
T_\pruning(n) \leq n^{O(1)} + \sum_{i=0}^{\log n} T_\pruning(n, 2^i) \leq n^{O(1)} + T_\pruning(n,k) \log n.
\]
In Case 2, we get
\[
T_\pruning(n,k) \leq n^{O(1)} + T_\pruning(n/4)+T_\pruning(3n/4).
\]
In Case 3.2.1, we get
\[
T_\pruning(n,k) \leq n^{O(1)} + T_\pruning(n-1,k).
\]
In Case 3.2.2, we get
\[
T_\pruning(n,k) \leq n^{O(1)} + T_\myapprox(n/\rho)2\rho \log n + T_\pruning(n-1,k).
\]
Thus, in all cases, we get
\[
T_\pruning(n,k) \leq n^{O(1)} + \max\{T_\pruning(n/4)+T_\pruning(3n/4), T_\myapprox(n/\rho)2\rho \log n + T_\pruning(n-1,k)\}.
\]
which concludes the proof.
\end{proof}


\section{Embedding into a higher genus surface}\label{sec:embedding}

We are now given a pruning sequence for $G$. We then add a sequence of handles and crosscaps to obtain a surface $S$ of Euler genus $O(\cost({\cal G}))$ so that $G$ can be embedded into $S$. More precisely we prove the following.

\begin{proof}[Proof of Lemma \ref{lem:embedding}]
Let ${\cal G} = (G_0,A_0),\ldots,(G_\ell,A_\ell)$.
Let $D=\{i\in \{1,\ldots,\ell-1\} : i \text{ is a deletion step}\}$,
and $P=\{i\in \{1,\ldots,\ell-1\} : i \text{ is a patching step}\}$.
Let
\[
X = \bigcup_{i\in D} A_i.
\]
Clearly, we have $|X|=\cost({\cal G})$.
It thus remains to construct an embedding for $G\setminus X$.

For all $i\in \{0,\ldots,\ell\}$, let $G_i'=G_i\setminus X$.
We construct an embedding for $G\setminus X$ via reverse induction on the pruning sequence ${\cal G}$.
Specifically,
for each $i\in \{0,\ldots,\ell\}$ we construct an embedding $\phi_i$ of $G_i'$ into some surface ${\cal S}_i$.
Recall that by the definition of a pruning sequence, $G_{\ell'}=G_\ell\setminus X=G_\ell$, which is a planar graph.
Let $\phi_{\ell}$ be an arbitrary planar embedding of $G_\ell'$.
Let $i\in \{0,\ldots,\ell-1\}$ and suppose we have already constructed the embedding $\phi_{i+1}$; we proceed to construct $\phi_i$.
We consider the following two cases:
\begin{description}
\item{Case 1:}
Suppose that $i$ is a deletion step of ${\cal G}$.
We have $G_{i+1}'=G_{i+1}\setminus X=(G_{i}\setminus A_i)\setminus X = G_i\setminus X = G_i'$.
Thus we may set $\phi_i=\phi_{i+1}$.
\item{Case 2:}
Suppose that $i$ is a framing step of ${\cal G}$.
We have that $A_i=(\Gamma_i,C_i)$ is some patch of $G_i$, and $G_{i+1}$ is the $(\Gamma_i,C_i)$-framing of $G_i$.
By the definition of a framing, we have that $V(G_{i+1})=V(G_i) \setminus (V(\Gamma_i) \setminus V(C_i)) \cup \bigcup_{i=1^3} V(C_i^j)$, where for each $j\in \{1,2,3\}$, $C_i^j$ is a new cycle added to $G_{i+1}$ with $|V(C_i^j)|=|V(C_i)|$.
We define $C_i^0 = C_i$.
For each $j\in \{0,\ldots,3\}$ let
$V(C_i^j)=\{v_1^j,\ldots,v_t^j\}$, where the numbering of the indices agrees with a clockwise traversal of $C_i^j$, and for all $t'\in \{1,\ldots,t\}$, and for all $j\in \{0,\ldots,2\}$, we have $\{v_{t'}^j,v_{t'}^j\}\in E(G_{i+1})$.
Let $F_i$ be the $(4\times t)$-cylinder that is a subgraph of $G_{i+1}$ on the vertex set $V(C_i^0\cup \ldots \cup C_i^3)$.
Let also $F'_i$ be the $(3\times t)$-cylinder subgraph of $G_{i+1}$ on the vertex set $V(C_i^1\cup \ldots C_i^3)$.
We partition $F'_i$ into a collection ${\cal F}_i$ of vertex-disjoint subgraphs as follows.
If $V(F'_i)\cap X=\emptyset$ then we set ${\cal F}_i = \{F'_i\}$.
Otherwise, let ${\cal F'}_i$ be the set of columns of $F'_i$ that intersect $X$.
Let ${\cal C}_i$ be the set of connected components of $F'_i$ obtained after deleting all columns in ${\cal F}_i$.
Each component $C\in {\cal C}_i$ is a $(3\times a)$-grid, for some $a\geq 1$.
If $a\leq 2$, then we further partition  $C$ to $a$ disjoint columns, and we add them to the set ${\cal F}''_i$.
Let ${\cal C}'_i$ be the set of connected components of $F'_i$ after we delete all the rows in ${\cal F'}_i\cap {\cal F}''_i$.
We set ${\cal F}_i = {\cal F}'_i\cup {\cal F}''_i\cup {\cal C}'_i$.
It is immediate that $n_i=|{\cal F}_i| \leq 3|F_i'\cap X|$.
By the definition of a $(\Gamma,C)$-patching, it follows that for all $i\neq i'\in P$, the graphs $F_i'$ and $F_{i'}'$ are vertex-disjoint.
Thus
\[
\sum_{i \in P} n_i \leq 3 \cdot \cost({\cal G}).
\]
We now modify $\phi_{i+1}$ to obtain an embedding $\phi_i$ of $G_i$.
If $V(F_i')\cap X=\emptyset$ then, by reverse induction on $i$, there exists a face $f_i$ of $\phi_{i+1}$ that is bounded by $C_i^3$.
By the definition of a patch, there exists a planar embedding $\psi_i$ of $\Gamma_i$ into some disk ${\cal D}_i$ with $\partial {\cal D}_i=\psi_i(C_i)$.
By identifying $C_i$ in $\psi_i$ with $C_i^3$ in $\phi_i$, we can past the disk ${\cal D}_i$ inside the face $f_i$, thus extending $\phi_i$ to $\Gamma_i(\setminus C_i\setminus X)$.
Finally, for each $v\in C_i\setminus X$, we contract the column of $F_i$ containing $v$ into $v$.
It remains to consider the case $V(F_i')\cap X \neq \emptyset$.
As above, fix some planar drawing $\psi_i$ of $\Gamma_i$ into some disk ${\cal D}_i$ with $\partial {\cal D}_i=\psi_i(C_i)$.
By construction, each graph in $J\in {\cal F}_i$ is either a path, or some $(3\times c)$-grid, for some $a\geq 3$.
Let $K=J\cap C_i^3$.
It follows that, by reverse induction on $i$, there exist a face of $f$ of $\phi_{i+1}$ that contains $K$ as a subpath.
We may thus identify the image of $K$ in $\psi$ with its image in $\psi_{i+1}$ by adding either a handle (if the orientations of the two images are the same) or an antihandle (if the orientations are opposite).
Finally, for each $v\in C_i\setminus X$, we contract the column of $F'$ containing $v$ into $v$; for each $v\in C_i\cap X$, we delete that column.
\end{description}
It is immediate by induction that $\phi_0$ is an embedding of $G\setminus X$ into some surface ${\cal S}_0$.
For each $i\in D$, we have ${\cal S}_i={\cal S}_{i+1}$.
For each $i\in P$ with $F_i'\cap X=\emptyset$ we do not add any handles or antihandles.
Moreover, for each $i\in P$ with $F_i'\cap X\neq\emptyset$ we add at most one handle or antihandle for each graph in ${\cal F}_i$.
Each handle or antihandle added to the embedding increases the Euler genus of the underlying surface by at most 2.
Thus
\[
\eg({\cal S}_0)\leq 2 \sum_{i\in P} n_i \leq 6 \cdot \cost({\cal G}),
\]
which concludes the proof.
\end{proof}


\section{Planarizing a surface-embedded graph}\label{sec:surface_planarization}

It remains to consider the case when a graph $G$ is embedded in a surface of Euler genus $O(\cost({\cal G}))$, where $G$ has a vertex set $X'$ such that $G-X'$ is planar. 
We shall find some planarizing set $X$ for $G$, with $|X| = O(g \log n + \mvp(G) \log^2 n)$. 
To this end, we need some definitions.

For some graph $G$ and some embedding $\phi$ of $G$ into some surface ${\cal S}$, a \emph{$\phi$-noose} is a non-separating cycle in ${\cal S}$ that intersects the image of $G$ only on $\phi(V(G))$.
We recall the following result concerning embeddings of planar graphs into non-planar surfaces \cite{mohar1996planar}.

\begin{lemma}[Mohar \cite{mohar1996planar}]\label{lem:noose2}
Let $\phi$ be an embedding of some planar graph into some non-planar surface.
Then there exists a $\phi$-noose of length at most two.
\end{lemma}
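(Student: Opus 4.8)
The plan is to prove the contrapositive: assuming that \emph{every} $\phi$-noose has length at least $3$, we show that ${\cal S}$ must be planar. The first step is to reduce to the case that $\phi$ is a cellular ($2$-cell) embedding. If some face $f$ of $\phi$ is not an open disk, then, since the complement of $f$ is connected (as $G$ is --- and a disconnected $G$ also forces a non-disk face, so we may assume $G$ connected), a simple closed curve drawn inside $f$ that either encircles a handle of $f$ or separates two boundary circles of $f$ is non-separating in ${\cal S}$, and it is disjoint from $\phi(G)$; this is a $\phi$-noose of length $0\le 2$, contradicting the hypothesis. Hence from now on $\phi$ is cellular and $G$ is connected.

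Next I would reduce to $G$ being $3$-connected. If $v$ is a cut vertex, write $G=G_1\cup G_2$ with $V(G_1)\cap V(G_2)=\{v\}$; then $\phi$ restricts to embeddings of $G_1$ and $G_2$ into subsurfaces of ${\cal S}$. If one of these subsurfaces is non-planar, induct on $|V(G)|+|E(G)|$ with the smaller graph --- a noose found there is still a $\phi$-noose, since the genus it detects persists in ${\cal S}$ --- and otherwise ${\cal S}$ contains a handle or crosscap disjoint from $\phi(G)$, yielding a length-$0$ noose. An analogous argument at a $2$-cut $\{u,v\}$ (routing a closed curve through $u$ and $v$ using two faces that both meet $u$ and $v$, or else contracting one side of the cut down to a $u$--$v$ edge) produces a length-$2$ noose or a strictly smaller instance. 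Thus we may assume $G$ is a $3$-connected planar graph and $\phi$ is a polyhedral (face-width $\ge 3$) embedding of $G$ into the non-planar surface ${\cal S}$.

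To finish I would invoke the rigidity of polyhedral embeddings of $3$-connected graphs: such an embedding is determined up to homeomorphism by the abstract graph together with its set of facial cycles, and this set is itself combinatorially forced (via Tutte-type arguments characterizing facial cycles among the non-separating induced cycles of $G$). Consequently a $3$-connected planar graph admits a polyhedral embedding only into $S^2$; in particular the number of faces of $\phi$ equals $|E(G)|-|V(G)|+2$, so Euler's formula gives $\eg({\cal S})=0$, contradicting the assumption that ${\cal S}$ is non-planar. This completes the proof.

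The step I expect to be the real obstacle is this last one. It is intuitively clear that a planar $3$-connected graph cannot be made to ``wrap tightly'' around the handles of a non-planar surface, but turning this into the precise statement that its polyhedral embedding is unique (and spherical) requires the structure theory of surface embeddings --- the characterization of facial cycles of polyhedral embeddings and the uniqueness theorems of Robertson and Vitray --- which, although standard, is not short; one could alternatively just cite Mohar for this planar case directly. A secondary, more mechanical difficulty is the bookkeeping in the reductions of the first two paragraphs: each time a short candidate curve turns out to be contractible or separating, one must check that the associated surgery strictly simplifies the instance (decreasing $|V(G)|+|E(G)|$, or $\eg({\cal S})$) while preserving both the planarity of $G$ and the non-planarity of the surface, so that the induction is well founded.
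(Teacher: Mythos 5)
The paper offers no proof of this lemma: it is imported verbatim from Mohar's \emph{Planar graphs on nonplanar surfaces} (J.~Combin.\ Theory Ser.~B, 1996), whose main theorem is precisely that every embedding of a planar graph in a non-planar surface has face-width at most two. So there is no in-paper argument for your attempt to be measured against; what follows assesses the attempt on its own terms.

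Your route --- pass to a cellular embedding, reduce along cut vertices and $2$-cuts to the $3$-connected case, then appeal to rigidity of polyhedral embeddings --- is the standard one and is sound in outline. The place where it is genuinely incomplete is the one you flagged, but the worry is sharper than you state. It is \emph{not} true that for an arbitrary $3$-connected graph the facial cycles of a face-width-$\ge 3$ embedding are ``combinatorially forced'': face-width $3$ alone does not give embedding uniqueness, and the non-separating induced cycles of a $3$-connected graph can strictly outnumber the faces of any single polyhedral embedding. What rescues you is planarity. The chain that actually closes is: (i) in any embedding of a $3$-connected graph with face-width $\ge 3$, every face boundary is an induced non-separating cycle (Robertson--Vitray); (ii) by Whitney/Tutte, a $3$-connected \emph{planar} graph has exactly $|E|-|V|+2$ induced non-separating cycles, namely the face cycles of its unique planar embedding, and these cover each edge exactly twice; (iii) since the faces of $\phi$ are a subcollection of that list covering each edge exactly twice, they must be all of it, so $F=|E|-|V|+2$ and Euler's formula gives $\eg({\cal S})=0$. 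Without (ii), which is special to planar $G$, the argument does not close. One smaller point worth recording in the reductions: a curve you exhibit must actually be a legal noose (non-contractible, and for the paper's use in Lemma~\ref{lem:short_noose} non-separating). In the non-cellular step, a curve separating two boundary circles of a face is non-separating in ${\cal S}$ only because $G$ is connected; in the $2$-cut step, the curve through the cut pair may be contractible, in which case you have found no noose and must instead shrink the instance by surgery, as you note. These checks are routine but cannot be skipped, since otherwise the induction may appear to terminate on a curve that is not a noose at all.
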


The proof of the following Lemma is similar to an argument used in \cite{DBLP:journals/dcg/EricksonH04}.

\begin{lemma}[Existence of a short noose]\label{lem:short_noose}
Let $G$ be a graph, and let $\phi$ be a cellular embedding of $G$ into some surface ${\cal S}$ of Euler genus $g>0$.
Then there exists some $\phi$-noose of length $O((1+\mvp(G) /g)\log(k+g))$.
\end{lemma}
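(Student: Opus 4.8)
The plan is to pass to the \emph{radial graph} $R=R(G)$ (the bipartite vertex--face incidence graph of $\phi$), which inherits from $\phi$ a cellular embedding into ${\cal S}$ all of whose faces are quadrilaterals, and to use the standard dictionary: a $\phi$-noose of length $\ell$ is precisely a non-separating cycle of $R$ that alternates between vertices of $G$ and faces of $G$ and visits exactly $\ell$ vertices of $G$ (the arcs of such a cycle run from vertices of $G$ to face-centers, hence meet $\phi(G)$ only at vertices, and any $\phi$-noose can be homotoped into this form). Since $\phi$ is cellular, $G$ (hence $R$) is connected, so there is no component bookkeeping. Fix a planarizing set $X\subseteq V(G)$ with $|X|=k=\mvp(G)$, so $G\setminus X$ is planar; the whole point will be that $X$ forces the topology of ${\cal S}$ to live near $X$. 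Thus it suffices to produce a non-separating cycle of $R$ using $O((1+k/g)\log(k+g))$ vertices of $G$.

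First I would carry out a \emph{localization} step. Grow balls $B_i=N_G(X,i)$ around $X$, and for each $i$ let $N_i$ be a regular neighbourhood in ${\cal S}$ of the drawing $\phi(B_i)$; the Euler genus $\eg(N_i)$ is a non-decreasing step function of $i$ with $\eg(N_0)\le g$ and $\eg(N_i)=g$ for $i$ large. The dichotomy is: either some ball $B_{G}(v,\rho)$ with $v$ far from $X$ has a non-planar regular neighbourhood --- in which case $G[B_G(v,\rho)]$ is a planar subgraph of $G\setminus X$ drawn in a non-planar (capped-off) surface, and Lemma~\ref{lem:noose2} yields a noose of length at most $2$ lying in an arbitrarily small neighbourhood of that drawing, hence disjoint from $X$ and from all but two vertices of $G$ (and one argues it may be taken non-separating in ${\cal S}$, else iterate) --- or else no such far-away ball carries genus, so every increment $\eg(N_{i})>\eg(N_{i-1})$ is witnessed by a short non-separating cycle supported in the frontier annulus $N_i\setminus N_{i-1}$. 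At this point I would run a region-growing argument in the spirit of Garg--Vazirani--Yannakakis (as in Theorem~\ref{thm:NWMWC-approx}) and of Erickson--Har-Peled: over $O(\log(k+g))$ geometrically increasing scales, each of width $O(1+k/g)$, either some scale $j$ already exhibits an increment of $\eg(N_{\cdot})$ --- giving a non-separating cycle of length $O(\rho_j)=O((1+k/g)\log(k+g))$, which we push off onto the incident faces to obtain a $\phi$-noose of the same length --- or the last scale $R=O((1+k/g)\log(k+g))$ has $\eg(N_R)=g$ while, outside $N_R$, $G\setminus X$ (hence $G$) lies in a disjoint union of open disks.

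In that remaining case I would finish by contraction and tree--cotree. For each component $C$ of $G\setminus B_R$ contract the disk it occupies back to its attaching vertex; the result $G'$ is cellularly embedded in ${\cal S}$ with Euler genus still $g$, its non-separating cycles pull back to non-separating cycles of $G$, and it carries a multi-source BFS forest rooted at $X$ of depth $O(R)$ (at most $k$ roots, the layers $1,\dots,R$, and the contracted-region vertices one step further), which the region growing was arranged so as to complete to a spanning tree $T$ of $G'$ of depth $O(R)$. The tree--cotree decomposition of $G'$ with respect to $T$ has exactly $g$ leftover edges, whose fundamental cycles span $H_1({\cal S};\mathbb{Z}/2)$ and each have length $O(R)$; since $g>0$ at least one is non-separating, and pulling it back to $G$ and pushing it off onto faces gives the desired $\phi$-noose of length $O(R)=O((1+k/g)\log(k+g))$.

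The main obstacle is the quantitative part of the localization/region-growing step: making precise how $\eg(N_i)$ grows, showing that each increment is certified by a non-separating cycle of length comparable to the current radius, and --- crucially --- choosing the $O(\log(k+g))$ scales with per-scale width $O(1+k/g)$ (rather than, say, $O(k+g)$) so that both the planarity of $G\setminus X$ and the cellularity of $\phi$ are used, and so that after digging past the $k$ apices the $g$ units of genus become "amortized'' over the frontier. This is exactly where the peculiar dependence $(1+k/g)$ and the logarithmic loss enter. A secondary, routine point is the conversion of a non-separating cycle of $G$ into a $\phi$-noose of the same length, which is immediate for two-sided cycles and requires passing to the orientation double cover of an annular neighbourhood for one-sided ones.
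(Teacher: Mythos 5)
Your approach is genuinely different from the paper's. The paper performs no metric ball growth and no tree--cotree decomposition. Instead it applies Mohar's Lemma~\ref{lem:noose2} $O(g)$ times to nested subgraphs of $G\setminus X$, restricting the embedding each time, to collect a set $Y=X\cup A_0\cup\cdots\cup A_{t-1}$ of $O(k+g)$ vertices through which a cut system for ${\cal S}$ can be routed. It then reroutes that cut system through face representatives so its arcs join vertices of $Y$, obtaining a graph $J$ on $Y$ embedded in ${\cal S}$ with $|V(J)|=O(k+g)$ and, by the Euler bound, $|E(J)|=O(k+g)$. A minimal cut subgraph is contracted to a trivalent graph $J''$ with $O(k+g)$ vertices and at least $g$ edges, and the length bound then comes from pure graph theory: a trivalent graph of girth $\ell$ has $2^{\Omega(\ell)}$ vertices, so $J''$ has a cycle of length $O(\log(k+g))$, and iteratively removing short cycles packs $\Omega(g/\log(k+g))$ disjoint cycles in $J''$, the shortest of which has length $O(|E(J'')|\cdot\log(k+g)/g)=O((1+k/g)\log(k+g))$. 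Pushing that cycle back into ${\cal S}$ gives the $\phi$-noose. No BFS depth, no localization of the genus.

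There is a real gap in your plan, concentrated exactly where you flag ``the main obstacle.'' The dichotomy in the region-growing step is not well-formed: if no scale exhibits a genus increment, then $\eg(N_R)=\eg(N_0)=0$, not $g$, so your two alternatives do not cover the possibilities; and the scales cannot simultaneously be ``geometrically increasing'' and ``each of width $O(1+k/g)$.'' The claim that an increment $\eg(N_j)>\eg(N_{j-1})$ is certified by a non-separating cycle of length $O(\rho_j)$ is asserted without argument --- the genus may be created by curves winding deep inside $N_j$, and the frontier annulus gives you no length control. In the ``far-away ball'' branch, the noose that Lemma~\ref{lem:noose2} provides for $G[B_G(v,\rho)]$ inside its capped-off subsurface avoids vertices of $G[B_G(v,\rho)]$ but can still cross images of edges of $G$ leaving the ball, so it is not yet a $\phi$-noose for $G$; and it may be contractible or separating in ${\cal S}$, so ``else iterate'' is not bounded within the claimed length. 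Finally, the tree--cotree finish presupposes a spanning tree of the contracted graph of depth $O(R)$, which is exactly what the incomplete region-growing step was supposed to supply. To repair this route you would need a precise lemma that genus increments of the regular neighbourhoods of $N_G(X,i)$ are certified by short non-separating curves; I do not see how to prove that without, in effect, redoing the paper's cut-graph argument, which avoids $G$-distances entirely.
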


\begin{proof}
Let $k=\mvp(G)$.
Let $X\subset V(G)$, with $|X|=k$, such that $G\setminus X$ is planar.
We inductively define a sequence of graphs $G_0,\ldots,G_t$, for some $t=O(g)$, with $G_0\supset G_1 \supset \ldots \supset G_t$, and for each $i\in \{0,\ldots,t\}$, we define some embedding of $G_i$ into some surface ${\cal S}_i$.
We will inductively maintain the invariant that for all $i\in \{0,\ldots,t\}$, the surface ${\cal S}_i$ has zero or more punctures, and is obtained from ${\cal S}_{i-1}$ by cutting along some curves.
We also define ${\cal S}_i'$ to be the surface obtained from ${\cal S}_i$ by attaching a disk on each puncture (i.e.~identifying the boundary of each puncture with the boundary of some disk).
Thus, ${\cal S}_i'$ is a surface with no punctures.
We also define $\phi_i'$ to be the embedding of $G_i$ into ${\cal S}_i'$ induced by inclusion.
Moreover, for all $i\in \{1,\ldots,t\}$, $G_i=G_{i-1}\setminus A_{i-1}$, for some $A_{i-1}\subset V(G_{i-1})$.
For the basis of the induction, let $\phi_0$ be the restriction of $\phi$ on $G_0=G\setminus X$, and let ${\cal S}_0={\cal S}$.
Next, let $i\in \{1,\ldots,\ell\}$.
Since $G\setminus X$ is planar, and $G_{i-1}\subset G_0=G\setminus X$, it follows that $G_{i-1}$ is also planar.
By Lemma \ref{lem:noose2} we get that there exists some non-contractible $\phi_{i-1}'$-noose $\gamma_{i-1}'$ of length at most two.
Let $A_{i-1}$ be the set of vertices of $G_{i-1}$ that $\gamma_{i-1}'$ intersects.
Let $\gamma_{i-1}$ be the arrangement of curves obtained by restricting $\gamma_{i-1}'$ on ${\cal S}_{i-1}$.
Let $\delta_{i-1}$ be the arrangement of curves obtained by restricting $\gamma_{i-1}$ on ${\cal S}$.
The curves in $\delta_{i-1}$ can intersect the image of $G$ on edges.
Every such edge must be incident to some vertex in $X\cup A_0\cup \ldots \cup A_{i-1}$.
Thus, by locally modifying $\gamma'_{i-1}$ we can ensure that each such intersection is in $X\cup A_0\cup \ldots \cup A_{i-1}$.
We define ${\cal S}_i$ to be the surface obtained by cutting ${\cal S}_{i-1}$ along $\gamma_{i-1}$.
We also define $G_i=G_{i-1}\setminus A_{i-1}$, and $\phi_i$ the corresponding induced embedding of $G_i$ on ${\cal S}_i$.
Since cutting a surface along a non-contractible curve either increases the number of its non-planar connected components, or it decreases the Euler genus by at most two, it follows that after $t=O(g)$ steps, the surface ${\cal S}_t$ is some punctured sphere.
By induction we have that cutting ${\cal S}$ along $\Gamma=\delta_0\cup\ldots\cup\delta_{t-1}$ we obtain a punctured sphere ${\cal S}_t$.

Let $Y=X\cup A_0\cup \ldots\cup A_{t-1}$.
Since for all $i\in \{0,\ldots,t-1\}$, $|A_i|\leq 2$, it follows that $|Y|\leq k+O(g)$.
For every face $f$ of $\phi$ such that $V(f)\cap Y \neq \emptyset$, we pick some ``representative'' vertex $v(f)\in V(f)\cap Y$.
Each segment in $\Gamma$ that is contained in some face of $\phi$ connects some pair of vertices $x$ and $y$ in $Y$; we replace this segment by a segment between $x$ and $v(f)$ and a segment between $y$ and $v(f)$.
Let $\Gamma'$ be the resulting arrangement of curves.
It is immediate that $\Gamma'$ also cuts ${\cal S}$ into some punctured sphere.
Let $J$ be the graph with $V(J)=Y$ and with each edge of $J$ corresponding to a segment in $\Gamma'$ contained in some face of $\phi$ and connecting two vertices in $Y$.
Let $J'$ be a minimal subgraph of $J$ such that cutting ${\cal S}$ along $J'$ results into a punctured sphere.
It follows that each cycle in $J'$ is non-contractible in ${\cal S}$.
By the bound on Euler's characteristic, we have that $|E(J)| \leq 3 |V(J)| - 6 + 6g = O(k+g)$.
Let $J''$ be the graph obtained from $J'$ by replacing each maximal induced path by a single edge.
We also delete from $J''$ all vertices of degree 1.
Finally, we replace each vertex of $J''$ of degree greater than 3 by a tree where each vertex has degree exactly 3.
Since $|E(J)|=O(k+g)$, it follows that $|V(J'')|=O(k+g)$.

The graph $J''$ has at least $g$ edges; this is because cutting ${\cal S}$ along $J''$ results into a punctured sphere, which is a surface of genus 0, and glueing along a single edge of $J''$ can increase the genus by at most 1.
Arguing as in \cite{DBLP:journals/dcg/EricksonH04}, we have that each graph trivalent graph of girth $\ell$ must have at least $2^{\Omega(\ell)}$ vertices.
It follows that there exists some cycle $C$ in $J''$ containing at most $O(\log (k+g))$ edges.
We can delete from $J''$ all the vertices in $C$ and all their incident edges; this creates $|V(C)|$ paths of length 2 which we contract into single edges.
The resulting graph has at least $|E(J'')| - 4|E(C)|\geq |E(J'')| - O(\log(k+g))$ edges.
Thus we may repeat this process at least $\Omega(g/\log(k+g))$ times, thus obtaining a set of $\Omega(g/\log(k+g))$ disjoint cycles in $J''$.
Thus the shortest cycle $C^*$ in this set contains at most $O(|E(J'')| / (g/\log(k+g))) = O((1+k/g)\log(k+g))$ edges.
The cycle $C^*$ corresponds to a $\phi$-noose with at most $O((1+k/g)\log(k+g))$ vertices, which concludes the proof.
\end{proof}

We are now ready to obtain the main result of this section using Lemma \ref{lem:short_noose}. 

\begin{proof}[Proof of Lemma \ref{lem:planarize_surface}]
We inductively compute a sequence of graphs $G_0,\ldots,G_{t}$, with $G_0\supset \ldots \supset G_t$, and for each $i\in \{0,\ldots,t\}$, some embedding $\phi_i$ of $G_i$ into some surface ${\cal S}_i$.
We will inductively maintain the invariant that $\eg({\cal S}_i)\leq \eg({\cal S}_{i-1})-1$.
We set $G_0=G$, ${\cal S}_0={\cal S}$, and $\phi_0=\phi$.
Let $i\in \{1,\ldots,t\}$, and suppose we have already computed $G_{i-1}$ and $\phi_{i-1}$.

Using the algorithm from \cite{DBLP:conf/compgeom/CabelloVL10a},
we compute the shortest $\phi_{i-1}$-noose $\gamma_{i-1}$.
Let $X_{i-1}$ be the set of vertices of $G_{i-1}$ that $\gamma_{i-1}$ intersects.
By Lemma \ref{lem:short_noose} we have
\begin{align*}
|X_{i-1}| &= O\left(\left(1+\frac{\mvp(G_{i-1})}{\eg({\cal S}_{i-1})}\right) \log(\mvp(G_{i-1}) + \eg({\cal S}_{i-1}))\right) \\
 &= O\left(\left(1+\frac{\mvp(G)}{\eg({\cal S}_{i-1})}\right) \log n\right)
\end{align*}
We set $G_i=G_{i-1}\setminus X_{i-1}$, we let ${\cal S}_i$ be the surface obtained by cutting ${\cal S}_{i-1}$ along $\gamma_{i-1}$, and we set $\phi_i$ be the induced embedding of $G_i$ into ${\cal S}_i$.
If ${\cal S}_i$ is planar, then we terminate the sequence at $t=i$.
This completes the definition of the sequence $G_0,\ldots,G_t$.

We let $X=X_0\cup \ldots \cup X_{t-1}$.
Since ${\cal S}_t$ is planar, and $G_t$ is embedded into ${\cal S}_t$, it follows that $G_t$ is planar.
This $X$ is a valid planarizing set for $G$.
It remains to bound $|X|$.
Since for all $i\in \{1,\ldots,t\}$ we have $\eg({\cal S}_i)\leq \eg({\cal S}_{i-1})$, it follows that $t\leq g$.
We have
\begin{align*}
|X| &= \sum_{i=0}^{t-1} |X_i| \\
 &= \sum_{i=0}^{t-1} O\left(\left(1+\frac{\mvp(G)}{\eg({\cal S}_{i})}\right) \log n\right)\\
 &\leq \sum_{i=0}^{g-1} O\left(\left(1+\frac{\mvp(G)}{g-i}\right) \log n\right)\\
 &= O(g \log n + \mvp(G) \log^2 n),
\end{align*}
which concludes the proof.
\end{proof}

\bibliographystyle{alpha}
\bibliography{bibfile-tasos}

\end{document}